\documentclass[11pt,a4paper]{article}
\usepackage[T1]{fontenc}
\usepackage{lmodern}
\usepackage[margin=27mm]{geometry}
\usepackage{amsmath,amssymb,amsthm,mathtools,booktabs,array,microtype}
\usepackage{tikz}
\usepackage{enumitem}
\usepackage{float}
\usepackage[hidelinks]{hyperref}
\usepackage{algorithm,algorithmic}
\setlist{itemsep=3pt,topsep=5pt}
\newtheorem{theorem}{Theorem}[section]
\newtheorem{lemma}[theorem]{Lemma}
\newtheorem{proposition}[theorem]{Proposition}
\newtheorem{corollary}[theorem]{Corollary}
\theoremstyle{definition}
\newtheorem{definition}[theorem]{Definition}

\newtheorem{target}[theorem]{Desired theorem}
\theoremstyle{remark}

\newcommand{\Ostar}{O^{*}}
\newcommand{\E}{\mathbb E}
\newcommand{\Prb}{\mathbb P}
\newcommand{\OPT}{\operatorname{OPT}}

\newcommand{\ind}{\mathbf 1}
\newcommand{\bits}{\{0,1\}}

\DeclareMathOperator{\rank}{rank}
\hypersetup{pdftitle={Accumulating Pruning Constraints: New Exact Polynomial-Space Algorithms for Max 2-SAT}}
\title{Accumulating Pruning Constraints: \\
	New Exact Polynomial-Space Algorithms for Max 2-SAT\\[3pt]}
\author{Konstantin Kutzkov \\\url{kutzkov@gmail.com}}
\date{}

\begin{document}
\maketitle

\begin{abstract}
We present deterministic polynomial-space algorithms that solve Max 2-SAT
on $n$ variables exactly in time $\Ostar(2^{(1-\gamma)n})$, for a fixed
$\gamma>0$, on several classes of instances whose clause density may grow
with $n$. Our main tool extends the classic DPLL branching algorithm. When
the search gives a variable its less promising value, comparing the two
values of the variable yields a pruning constraint that is satisfied by
some optimal assignment. Accumulating these constraints along each branch
prunes all but an exponentially small fraction of the assignments. The classes are formulas with many biased variables,
semirandom 2-CNF formulas, and Max-Cut on regular expander graphs and on
graphs generated by a stochastic block model. In a semirandom formula, the
pairs of variables that may share a clause are arbitrary, and the clauses
on these pairs are random. For semirandom formulas and block-model graphs,
the algorithms are still exact on every input, and the time bound holds
with high probability over the input.
To our knowledge, these are the first polynomial-space algorithms with such
a running time for nontrivial classes of dense Max 2-SAT instances. Beyond these
classes, the pruning constraints can be added to any branching algorithm
for Max 2-SAT, and we expect them to be of practical relevance for general
formulas, in particular for sparse ones, where each constraint involves only
a few variables.
\end{abstract}

\section{Introduction}

Max-SAT is the optimization version of the fundamental NP-complete SAT problem. Instead of checking whether a Boolean formula over $n$ variables in conjunctive normal form (CNF) is satisfiable, that is, whether some assignment satisfies all its clauses, Max-SAT asks for an assignment satisfying the largest number of
clauses. Exhaustive search solves both problems in time $\Ostar(2^n)$, where the notation $\Ostar$ suppresses factors polynomial in the input size. For formulas whose clauses may be arbitrarily long, no algorithm with running time $\Ostar(2^{(1-\gamma)n})$, for a constant $\gamma>0$, is known for either problem. The $k$-SAT problem, where each clause has at most $k$ literals, is easily solved in time $\Ostar(c_k^n)$ for a constant $c_k<2$. For Max-SAT, no comparable polynomial-space result is known even for $k=2$; note that Max 2-SAT is NP-hard, although 2-SAT is solvable in polynomial time. The only known algorithm that improves on the $\Ostar(2^n)$ bound on all Max 2-SAT instances is the elegant algorithm of Williams~\cite{Williams}, with running time $\Ostar(2^{\omega n/3})$, where $\omega$
is the matrix multiplication exponent. However, it uses space $\Ostar(2^{2n/3})$, which makes it only of theoretical interest. Designing an algorithm with running time $\Ostar(2^{(1-\gamma)n})$ that uses only polynomial space has been an open problem since 2004~\cite{Williams,WOEGINGER}.

\subsection{Previous work}\label{sec:previous}

Exact algorithms for Max 2-SAT, which contains the Max-Cut problem on graphs as a special case, have been analyzed in two ways:
\begin{itemize}
	\item In terms of the number $m$ of clauses, with running time $\Ostar(2^{m/c})$ for a constant $c$. Such bounds improve on $\Ostar(2^n)$ only for sparse instances with $m<cn$. There is a long list of incremental improvements~\cite{SS07,KulikovK07,EdwardsM15,GaspersS17}; the currently best known bound is $\Ostar(2^{m/6.321})$~\cite{GaspersS12}.
	\item In terms of the maximum or average degree $d$ of the formula, with running time $\Ostar(2^{(1-f(d))n})$, where $f(d)\in(0,1)$ and $f(d)\to0$ as $d\to\infty$. Several works give $f(d) = c/d$ for a small constant $c$~\cite{SS07,Marcin07,best_sparse_maxsat}, and Golovnev and Kutzkov~\cite{GK14} and Hastings~\cite{Hastings} give the asymptotically better $f(d) = \log d/d$. These bounds yield $\Ostar(2^{(1-\gamma)n})$ for a fixed $\gamma >0$ only when the number of clauses is linear in $n$.
\end{itemize}

\paragraph{The exponential time hypothesis.}
The exponential time hypothesis (ETH) and its strong version
(SETH)~\cite{impagliazzo2001problems,SETH} assert that 3-SAT requires time
$2^{\Omega(n)}$ and that the savings for $k$-SAT vanish as $k$ grows. Two
recent preprints from frontier AI labs~\cite{Anthropic26ETH,OpenAI26ETH} claim to refute
both hypotheses with a polynomial-space algorithm that solves $k$-SAT in
subexponential time for every fixed $k$. If confirmed, this is a theoretical
breakthrough, but it has little bearing on the results of this paper.
Applied to Max 2-SAT through standard reductions, the algorithm improves on
exhaustive search only at very low density, when a variable occurs on
average in $O(\sqrt{\log n})$ clauses, and the dense instances in our
classes are not affected. Moreover, the current version of the
algorithm has no practical value, because of astronomical constants hidden in the big-Oh notation which make the new upper bound purely asymptotic.

\subsection{Our contribution}
We present a technique that extends the classic DPLL branching algorithm of Davis, Putnam, Logemann, and Loveland~\cite{davis1962machine,davis1960computing}. Consider a branching variable $v$, and suppose that we first explore $v=1$ and then backtrack to $v=0$. Setting $v$ affects only the clauses containing $v$, so for a given assignment of the other variables, which of the two values is better depends only on the values of the neighbors of $v$. After backtracking, the branch $v=0$ therefore only needs the assignments of the neighbors for which $v=0$ is strictly better. We describe them by a linear inequality over the literals of the neighbors, a hard \emph{pruning constraint}. Since the constraint depends only on the neighborhood of $v$, we can efficiently decide which of the two branches would receive the more restrictive constraint. Along a branch, the search creates many such constraints. Each of them alone prunes only part of the search space, but for each of our classes, only an exponentially small fraction of all assignments satisfies the sum of the constraints of its own branch, and listing only these assignments yields the saving.

Our contributions are of two kinds:
\begin{enumerate}
	\item We present new algorithms for several classes of instances, including dense ones with clause density $m/n=\omega(1)$, that run in time  $\Ostar(2^{(1-\gamma)n})$ using only polynomial space.  To the best of our knowledge these are the first results that show such running time bounds for nontrivial classes of dense Max 2-SAT instances, thus making an important step towards resolving the open problem from~\cite{Williams,WOEGINGER}.
	\item The key idea of creating pruning constraints during search is easy to implement and likely to find practical applications. Some of our classes model practical inputs; for example, the stochastic block model is the standard model of networks with communities. The savings that we prove are small, since they result from a conservative analysis of a deliberately simple version of the algorithm. On practical instances we expect much larger savings: after some branching steps, the remaining variables have lower degrees and, relative to their degrees, larger differences between their positive and negative occurrences, which yields more restrictive constraints. Similarly, SAT solvers routinely solve instances with thousands of variables, although their worst-case running time is exponential. In future work, we plan to obtain such practical gains with heuristics based on our constraints and to evaluate them experimentally. The present paper focuses on the theoretical results.
\end{enumerate}

\subsection{Main results}\label{sec:results}

We give polynomial-space algorithms with running time
$\Ostar(2^{(1-\gamma)n})$, for a constant $\gamma>0$, on four classes of
instances. In every class the number of clauses may grow faster than
linearly in $n$, and every class contains instances with a clause on a
constant fraction of all pairs of variables. To our knowledge, no
polynomial-space algorithm with such a running time was previously known
for any of these classes once the density grows with $n$.

All our algorithms are deterministic and return an optimal assignment on
every input. They do not check the conditions that define a class; these
conditions are used only to bound the running time. (The two Max-Cut
algorithms take the constants of their class as parameters.) Two of the
classes consist of semirandom or random instances. For them, the
randomness lies in the input, and the time bound holds in expectation over
the input or with high probability (w.h.p.), that is, with probability
tending to $1$ as $n$ grows. As in the average-case analysis of QuickSort
with a fixed pivot rule, the algorithm is always correct and fast on most
inputs.

\begin{enumerate}
\item \textbf{Formulas with linearly many biased variables.}
A variable is \emph{biased} if the numbers of its positive and negative
occurrences differ by at least a constant fraction of its number of
occurrences, for example if it occurs positively in $3/4$ of its
clauses. Making its more frequent literal true satisfies more clauses on
average over the values of the other variables, although an optimal
assignment may still choose the other value. In the branch where a biased
variable takes its other value, the pruning constraint requires its
neighbors to deviate from their typical values by a constant fraction of
its degree.

Our first result applies when a constant fraction of the variables are
biased. Its simplest form is Corollary~\ref{cor:lineargap}. Consider a
formula in which no clause occurs twice and no clause contains a variable
twice. If, for some constants $\varepsilon,\eta>0$, at least $\eta n$ of
its variables each occur at least $\varepsilon n$ more often with one sign
than with the other, then an optimal assignment can be found in time
$\Ostar(2^{(1-\gamma)n})$ for a constant $\gamma>0$.
Theorem~\ref{thm:bias} covers formulas of every density. It requires
linearly many biased variables and, in addition, a bound on the sum of the
squared numbers of occurrences, so that a few variables cannot occur much
more often than the others.

\item \textbf{Semirandom formulas.}
In a \emph{semirandom} formula, the pairs of variables that may share a
clause are arbitrary, and the clauses on these pairs are random. Such
formulas lie between worst-case and fully random instances;
Feige~\cite{feige2007refuting} formalized this model for 3-SAT. We use two
semirandom models. Fix any graph $G$ whose vertices are the variables. In
the first model, every edge $\{u,v\}$ of $G$ receives one of the four
clauses $(u\vee v)$, $(u\vee\neg v)$, $(\neg u\vee v)$, $(\neg u\vee\neg v)$,
chosen uniformly at random, independently for different edges. In the
second model, each of these four clauses is included independently with
probability $1/4$. In both models, unit clauses may be added arbitrarily,
as long as they do not depend on the random choices.

Theorem~\ref{thm:random_main} shows that on such formulas our algorithm
runs in expected time $\Ostar(2^{(1-\gamma)n})$, and within a bound of
the same form with high probability, for a constant $\gamma>0$ that does
not depend on $G$. If $G$ is itself random, with
each pair of variables joined independently with probability $p$, we
obtain random 2-CNF formulas with about $pn^2/2$ clauses, for every $p$.
Random formulas are the standard test case for the average-case behavior
of satisfiability algorithms, and this result covers all densities at once.
With random signs, no variable is biased and each pruning constraint is
weak. The algorithm adds up many weak constraints after scaling them to a
common size, and the random signs keep a few shared variables from
satisfying many constraints at once.

\item \textbf{Max-Cut on regular expander graphs.}
The Max-Cut problem asks to divide the vertices of a graph into two sides
so that as many edges as possible join vertices on different sides. It is
a special case of Max 2-SAT: introduce a variable for each vertex,
indicating its side, and the two clauses $(u\vee v)$ and
$(\neg u\vee\neg v)$ for each edge $\{u,v\}$. Both clauses are satisfied if
$u$ and $v$ are on different sides; otherwise exactly one is. A graph is
\emph{$d$-regular} if every vertex has exactly $d$ neighbors. Its
\emph{adjacency matrix} $A$ is the symmetric $n\times n$ matrix with
$A_{uv}=1$ if $\{u,v\}$ is an edge and $A_{uv}=0$ otherwise. For a
$d$-regular graph, the largest eigenvalue of $A$ is $d$. We consider
graphs in which all other eigenvalues have absolute value at most
$\lambda_0\sqrt d$, for a constant $\lambda_0$. Such graphs are called
\emph{spectral expanders}. Informally, they look random: any two large
sets of vertices are joined by roughly as many edges as in a random graph
with the same number of edges.

Corollary~\ref{cor:maxcut} shows that if, in addition,
$d\le(1-\varepsilon)n$ for a constant $\varepsilon>0$, then Max-Cut can be
solved in polynomial space and time $\Ostar(2^{(1-\gamma)n})$ for a
constant $\gamma>0$. This is a worst-case
guarantee: it holds for every graph with these properties. Explicit
examples are the Paley graphs, in which every vertex is adjacent to half
of the other vertices, and non-bipartite Ramanujan graphs with
$d\le(1-\varepsilon)n$. Random regular graphs qualify as well: by known
spectral-gap theorems, a uniformly random $d$-regular graph satisfies the
eigenvalue condition with high probability for every
$d\le(1-\varepsilon)n$ (Corollary~\ref{cor:random_regular}). In Max-Cut,
no variable is biased and no sign is random. Instead, all pruning
constraints depend on one global number, the number of vertices on one
side. The algorithm guesses this number, and the expander property makes
the remaining parts of the constraints nearly independent.

\item \textbf{Max-Cut on random graphs with communities.}
The stochastic block model is the most widely studied random model of
networks with community structure~\cite{Abbe18}. The vertices are divided
into $k$ groups, called communities. Every pair of vertices is joined by an
edge independently, with a probability $\pi_{ab}$ that depends only on the
communities $a$ and $b$ of its endpoints. For $k=1$ this is the
Erd\H{o}s--R\'enyi random graph $G(n,p)$, in which every edge is present
independently with probability $p$.

Theorem~\ref{thm:sbm} assumes that the number $k$ of communities is a
constant, that the variances $\pi_{ab}(1-\pi_{ab})$ of the edge indicators
differ by at most a constant factor $\chi$, and that
$n\pi_{ab}(1-\pi_{ab})\ge C\log n$ for a sufficiently large constant $C$.
Under these assumptions, with high probability over the random graph, our
algorithm solves Max-Cut in polynomial space and time
$\Ostar(2^{(1-\gamma)n})$, for a constant $\gamma>0$. The algorithm needs
to know only the constants $k$ and $\chi$, not the communities or the edge
probabilities. For $G(n,p)$ the condition is $np(1-p)\ge C\log n$; roughly,
both the graph and its complement must have expected degree at least a
large multiple of $\log n$. This includes dense random graphs such as
$G(n,1/2)$. Instead of one number, the algorithm guesses the projections
of the assignment onto a few approximate eigenvectors of the graph. It
never needs to find the communities, which need not be easy to detect:
the probabilities inside and between communities may even be equal.
\end{enumerate}

Both Max-Cut results remain valid with arbitrary additional unit clauses,
which give each vertex an integer bonus for one of the two sides. In the
stochastic block model, the unit clauses may even be chosen after the graph
is drawn. In the language of statistical
physics, this is the problem of finding a ground state of an
antiferromagnetic Ising model on the graph, with an external field at
every vertex. Table~\ref{tab:results} summarizes the four results.

\begin{table}[ht]
\centering
\small
\setlength{\tabcolsep}{4pt}
\begin{tabular}{@{}llll@{}}
\toprule
Instances & Density & Time bound holds & Result\\
\midrule
Many biased variables & any & for every instance
  & Thm.~\ref{thm:bias}, Cor.~\ref{cor:lineargap}\\
Semirandom 2-CNF & any & in expectation and w.h.p.
  & Thm.~\ref{thm:random_main}\\
Max-Cut, regular expanders & $d\le(1-\varepsilon)n$ & for every instance
  & Cor.~\ref{cor:maxcut}\\
Max-Cut, block model & $n\pi_{ab}(1-\pi_{ab})\ge C\log n$ & w.h.p.\ over the graph
  & Thm.~\ref{thm:sbm}\\
\bottomrule
\end{tabular}
\caption{The four classes on which our algorithms run in polynomial space
and time $\Ostar(2^{(1-\gamma)n})$.}
\label{tab:results}
\end{table}

\paragraph{Organization.}
Section~\ref{sec:framework} introduces the notation and the simplification
rules, and it explains how pruning constraints arise during branching and
how our algorithms accumulate them.
Sections~\ref{sec:bias}--\ref{sec:sbm} treat the four classes in the order
above. Each of these sections starts with an informal overview, followed
by the algorithm, its correctness proof, and its running-time analysis.
Section~\ref{sec:verification} summarizes a Lean~4 formalization of the
main arguments~\footnote{See \url{https://github.com/konstantinkutzkov/max-2-sat-arxiv}}. The
appendices contain the longer proofs and estimates, the spectral
preprocessing used when the communities are unknown, additional
simplification rules, details of the formal verification, and a proposed
extension of the first result.

\subsection{Relation to earlier methods}

Our pruning constraints continue the clause learning approach of Kulikov
and Kutzkov~\cite{KulikovK07}, which rests on the same observation: one of
the two branches on a variable only needs to keep the assignments that are
strictly better than their counterparts in the other branch. Kulikov and
Kutzkov derived new clauses from this observation and used them to
simplify the formula. This gave polynomial-space algorithms with a fixed
exponential saving for every fixed clause density, a saving that tends to
$0$ as the density grows. We keep the comparisons as linear constraints,
add up the constraints created along a branch, and enumerate only the
completions that satisfy the sum. For the classes of
Section~\ref{sec:results}, only an exponentially small fraction of all
assignments satisfies the sum of its own branch, whatever the density.

Hastings~\cite{Hastings} bounded the number of local maxima of Max 2-SAT
instances, that is, of assignments that no single flip improves. He used
the technique behind these bounds to obtain a faster polynomial-space
algorithm for formulas of large degree. Our approach is related to his
only in spirit. Every optimal assignment is a local maximum, and our
pruning constraints express local optimality at some of the branching
variables. However, we do not count local maxima, and we
use the local conditions only through their sum. Local optimality alone
cannot give an exponential saving on all dense instances: for Max-Cut on
the complete graph, every balanced cut is a local maximum, and there are
about $2^n/\sqrt n$ of them.

Our running-time analyses follow the path of a uniformly random
assignment through the branching tree of a deterministic algorithm. The
number of enumerated assignments equals $2^n$ times the probability that
a random assignment satisfies the summed constraint of its own branch.
Arguments of this type were used by Santhanam~\cite{shrinkage} for the
satisfiability of de Morgan formulas, and for sparse Max-SAT by Sakai,
Seto, and Tamaki~\cite{max_sat_width_reduction} and by Chen and
Santhanam~\cite{best_sparse_maxsat}. There, a concentration argument
along the random path shows that greedily restricting variables shrinks
the instance with high probability. In our setting, the random path
determines which pruning constraints are created, and a concentration
argument shows that the random assignment satisfies their sum only with
exponentially small probability.

\section{Branching with pruning constraints}\label{sec:framework}

Let $F$ be a Max 2-SAT formula on a set $V$ of $n$ variables. In this
section and in Section~\ref{sec:bias}, the formula is \emph{simple}: no
clause occurs twice, and no variable occurs twice in a clause. We write
$\mathbf v$ for a variable and $v,\neg v$ for its two literals. The
\emph{degree} $d^F(v)$ of a literal $v$ is the number of clauses containing
$v$; it is the sum of the numbers $d_1^F(v)$ and $d_2^F(v)$ of unit and
binary clauses containing $v$. We omit the superscript when the formula is
clear. The \emph{neighbors} of a literal $v$ are the literals $x$ with
$(v\vee x)\in F$; they form the multiset $N(v)$, of size $d_2(v)$. Two
variables are \emph{neighbors} if some binary clause contains both. The
\emph{degree} of a variable $\mathbf v$ is $d(v)+d(\neg v)$, and its
\emph{binary degree} is $d_2(v)+d_2(\neg v)$. The number of binary clauses
is $m_2$, and their density is $m_2/n$. The notation $\Ostar$ suppresses
factors polynomial in the input size.

A \emph{partial assignment} $\alpha$ assigns values to a subset of $V$. A
\emph{complete} assignment assigns all variables, and a complete assignment
extending $\alpha$ is a \emph{completion} of $\alpha$. The \emph{score}
$F(x)$ of a complete assignment $x$ is its number of satisfied clauses, and
$\OPT(F)=\max_xF(x)$. Setting a literal $v$ to $1$ satisfies the $d(v)$
clauses containing $v$ and removes $\neg v$ from the clauses containing it:
a binary clause $(\neg v\vee y)$ becomes the unit clause $(y)$, and a unit
clause $(\neg v)$ becomes empty and stays unsatisfied. We write $F[v=1]$
for the resulting formula and include the $d(v)$ satisfied clauses in its
score; $F[v=0]$ is defined in the same way, with $\neg v$ set to $1$. Then
\[
 \OPT(F)=\max\bigl(\OPT(F[v=0]),\OPT(F[v=1])\bigr).
\]
More generally, the \emph{residual formula} $F|_\alpha$ of a partial
assignment $\alpha$ results from setting all its literals. We record the
number of clauses satisfied by $\alpha$ in a separate score $s$, so that the
score of every completion is $s$ plus its score in $F|_\alpha$. The
residual formula keeps all unassigned variables, including those that no
longer occur in any clause. Later sections allow repeated unit clauses, and
Appendix~\ref{app:random_signs} boundedly many binary clauses on a pair;
all statements below hold for each copy.

\paragraph{Soft clauses and hard pruning constraints.}
Our algorithms are DPLL branching
algorithms~\cite{davis1960computing,davis1962machine}: they select a
variable $\mathbf v$ and recursively solve $F[v=0]$ and $F[v=1]$. The
clauses of $F$ are soft constraints: an optimal assignment need not satisfy
all of them. In addition, a branch may carry \emph{pruning constraints},
linear inequalities $\sum_i\ell_i\le b$ over literals $\ell_i$, with
repeated occurrences counted separately. They are hard: the branch
considers only assignments that satisfy all of its pruning constraints,
and this prunes the search space. We write $\mathcal C$ for a collection of
pruning constraints. (Sections~\ref{sec:maxcut} and~\ref{sec:sbm} also use
inequalities with rational coefficients.)

\subsection{Simplification rules and preprocessing}

The first two rules below preserve the score of every assignment up to a
recorded constant; our algorithms apply them once, before branching. The
third rule, unit clause domination, is the starting point of our pruning
constraints. Appendix~\ref{app:simplify} shows how such rules can be
combined with pruning constraints; our algorithms do not need this.

\begin{lemma}[Unit clause cancellation]\label{lem:unit_canc}
Removing one copy of $(v)$ and one copy of $(\neg v)$ decreases the score
of every complete assignment by exactly $1$.
\end{lemma}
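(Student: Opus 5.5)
The plan is a direct check, clause by clause, of how the score of a fixed complete assignment changes. Fix an arbitrary complete assignment $x$ of the variables of $F$. Its score $F(x)$ is the number of satisfied clauses, counted with multiplicity since repeated unit clauses are allowed. Write $F'$ for the formula obtained by deleting one copy of $(v)$ and one copy of $(\neg v)$. Every other clause, with its multiplicity, is the same in $F$ and $F'$. Hence
\[
F(x)-F'(x)=[x\text{ satisfies }(v)]+[x\text{ satisfies }(\neg v)],
\]
where $[\cdot]$ equals $1$ if the condition holds and $0$ otherwise.

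I would then split into the two cases $x_{\mathbf v}=1$ and $x_{\mathbf v}=0$. In the first case $(v)$ is satisfied and $(\neg v)$ is not. In the second case the roles are exchanged. In either case exactly one of the two removed clauses is satisfied, so $F(x)-F'(x)=1$. Since $x$ was arbitrary, the claim holds for every complete assignment.

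I would also note, as a consequence used later, that the same argument works in a residual formula $F|_\alpha$ with a recorded score $s$. The scores of all completions shift by the same constant $1$, so $\OPT$ and the set of optimal assignments are preserved once the constant is added to $s$. There is no real obstacle here. The only point needing care is that scores count clause copies separately, and the lemma removes exactly one copy of each clause.
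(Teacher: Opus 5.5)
Your proof is correct and follows the paper's argument: the paper's one-line proof is exactly your observation that every complete assignment satisfies exactly one of $(v)$ and $(\neg v)$, while all other clauses, with their multiplicities, are unchanged. Your case split on $x_{\mathbf v}$ and your remark about counting copies separately just spell this out.
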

\begin{proof}
Exactly one of the two unit clauses is satisfied by every assignment.
\end{proof}

\begin{lemma}[Simplifying two clauses on the same variable pair]\label{lem:simplify}
Let $a,b$ be literals on distinct variables. Replacing
$(a\vee b),(a\vee\neg b)$ by the unit clause $(a)$ decreases the score of
every complete assignment by exactly $1$.
\end{lemma}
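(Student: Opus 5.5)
The plan is a direct case analysis on the value of $a$ under an arbitrary complete assignment $x$, in the spirit of the proof of Lemma~\ref{lem:unit_canc}. Since the replacement touches only the two clauses $(a\vee b)$ and $(a\vee\neg b)$, every other clause contributes the same amount to the score before and after. It therefore suffices to show that, for every $x$, the number of satisfied clauses among $(a\vee b),(a\vee\neg b)$ is exactly one more than the number of satisfied clauses among the single clause $(a)$.

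I will split into the two cases according to $x$. If $x$ makes $a$ true, both binary clauses are satisfied (contributing $2$) and the unit clause $(a)$ is satisfied (contributing $1$). If $x$ makes $a$ false, then exactly one of $b,\neg b$ is true, since $a$ and $b$ lie on distinct variables and $b$'s value is not tied to $a$'s. Hence exactly one of the two binary clauses is satisfied (contributing $1$), while $(a)$ is unsatisfied (contributing $0$). In both cases the difference is exactly $1$.

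The only point needing care is the hypothesis that $a$ and $b$ are on distinct variables. It guarantees that the two clauses are genuine binary clauses and that, when $a$ is false, the literal $b$ still takes a definite value making exactly one of $b,\neg b$ true. This is immediate from $b$ and $\neg b$ being complementary literals of a single variable. Finally, the paper notes that later sections allow repeated clauses and apply statements ``for each copy''. The argument concerns only one specific copy of each of the two clauses being replaced, so it holds verbatim in that setting. I expect no real obstacle; the proof is a two-line verification.
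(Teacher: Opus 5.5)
Your proposal is correct and matches the paper's proof: it is the same two-case check on the value of $a$ ($2$ versus $1$ satisfied clauses when $a=1$, and $1$ versus $0$ when $a=0$). The remarks on distinct variables and on repeated copies are harmless additions. In fact, exactly one of $b,\neg b$ is true for any literal $b$, so that step does not need the distinctness hypothesis.
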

\begin{proof}
If $a=1$, the two original clauses contribute $2$ and the new unit contributes
$1$. If $a=0$, exactly one original clause is satisfied and the new unit
contributes $0$. In both cases the difference is $1$.
\end{proof}

The initial procedure \textnormal{\textsc{Preprocess}} makes no assignments.
A pair of clauses is \emph{eligible} if one of the two reductions applies
to it: two opposite units, or two binary clauses that share one literal
and are opposite in the other. As long as an eligible pair exists, the
procedure cancels the two opposite units or replaces the two binary
clauses by their shared unit. Each reduction lowers the
score of every assignment by $1$, which is recorded in $s$. The procedure returns the
reduced formula and $s$. It applies the reductions in a fixed order, so
its output is deterministic; the order does not affect the score
identities.\footnote{Store the clauses as an ordered list, retaining
multiplicities. At each step, scan pairs of clause positions in
lexicographic order and apply the first available rule. When two binary
clauses are replaced, place the new unit at the later clause's position;
otherwise retain the order of surviving clauses.}

\begin{algorithm}[tp]
\caption{\textnormal{\textsc{Preprocess}}$(F)$}\label{alg:preprocess}
\begin{algorithmic}[1]
\STATE $s\gets0$; retain the clauses of $F$ in their input order.
\WHILE{an eligible pair of clauses exists}
  \STATE Select the first eligible pair in the order described above.
  \IF{the pair is $(v),(\neg v)$}
    \STATE Remove both clauses using Lemma~\ref{lem:unit_canc}.
  \ELSE
    \STATE Replace $(a\vee b),(a\vee\neg b)$ by $(a)$ using Lemma~\ref{lem:simplify}.
  \ENDIF
  \STATE $s\gets s+1$.
\ENDWHILE
\STATE \textbf{return} $(F,s)$.
\end{algorithmic}
\end{algorithm}

The score identities justify \textnormal{\textsc{Preprocess}} for every assignment,
not only for an optimum. For distinct input binary clauses, after it terminates a variable pair occurs in
at most two binary clauses. If there are two, their literals are
complementary in both positions: otherwise the two clauses would share a
literal and the loop could apply again. We call this form \emph{canonical}.
Assigning variables creates only units, so the binary clauses do not need
to be preprocessed again at later nodes.

\begin{lemma}[Unit clause domination]\label{lem:unit_dom}
Suppose $d_1(\neg v)=0$ and $d_1(v)\ge d_2(\neg v)$. For every fixed
assignment of the other variables, setting $v=1$ satisfies at least as many
clauses as setting $v=0$.
\end{lemma}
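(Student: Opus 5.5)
The plan is a direct clause-by-clause comparison. Fix an assignment of all variables other than $\mathbf v$. Then compare the number of satisfied clauses under $v=1$ and under $v=0$. Clauses not containing $\mathbf v$ contribute equally to both counts and can be ignored. So it suffices to compare the contributions of the clauses that contain $v$ or $\neg v$. By simplicity of the formula, no clause contains both literals of $\mathbf v$, so these clauses split into four groups: unit clauses $(v)$, unit clauses $(\neg v)$, binary clauses $(v\vee x)$, and binary clauses $(\neg v\vee y)$.

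Next, count each group's contribution under both values.
\begin{itemize}
\item Under $v=1$, all $d_1(v)$ units $(v)$ and all $d_2(v)$ clauses $(v\vee x)$ are satisfied. The units $(\neg v)$ contribute nothing; there are none anyway, since $d_1(\neg v)=0$. Each clause $(\neg v\vee y)$ is satisfied exactly when $y=1$. Writing $k$ for the number of $y\in N(\neg v)$ that are true under the fixed assignment (counted with multiplicity), the score from these clauses is $d_1(v)+d_2(v)+k$.
\item Under $v=0$, all $d_2(\neg v)$ clauses $(\neg v\vee y)$ are satisfied, and the units $(v)$ are unsatisfied. Each $(v\vee x)$ is satisfied exactly when $x=1$. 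Writing $j$ for the number of true $x\in N(v)$, the score is $d_2(\neg v)+j$, using $d_1(\neg v)=0$ once more.
\end{itemize}

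The difference, $v=1$ minus $v=0$, is therefore
\[
 d_1(v)-d_2(\neg v)+\bigl(d_2(v)-j\bigr)+k .
\]
This is nonnegative: $d_1(v)\ge d_2(\neg v)$ by hypothesis, $j\le d_2(v)$ because $j$ counts a subset of the $d_2(v)$ neighbors, and $k\ge0$. This proves the claim.

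There is no real obstacle. The only care needed is in the multiset conventions. Repeated unit clauses and repeated neighbors must be counted with multiplicity; this is consistent with the paper's remark that statements hold for each copy. One also has to check that a neighbor literal $x$ or $y$ is never a literal of $\mathbf v$ itself, which is guaranteed because no variable occurs twice in a clause.
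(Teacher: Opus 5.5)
Your proof is correct and takes the same approach as the paper: you fix the other variables, compare only the clauses containing $\mathbf v$ under the two values, and bound the difference from below by $d_1(v)-d_2(\neg v)\ge0$. The paper states this more tersely, as the guaranteed gains and possible losses when $v$ is flipped from $0$ to $1$, whereas you write out the exact difference, which is the gain~\eqref{eq:gain} with $d_1(\neg v)=0$.
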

\begin{proof}
Changing $v$ from $0$ to $1$ gains all $d_1(v)$ positive units. Clauses
containing $v$ cannot lose satisfaction. The only possible losses are among
the $d_2(\neg v)$ binary clauses containing $\neg v$, since there are no
units $(\neg v)$. The net gain is therefore at least
$d_1(v)-d_2(\neg v)\ge0$.
\end{proof}

Under the assumptions of the lemma, $\OPT(F)=\OPT(F[v=1])$. In
particular, a pure literal $v$, whose negation does not occur, can be set
to $1$.

\subsection{Pruning constraints}\label{sec:pruning}

\paragraph{What we learn when backtracking.}
Suppose the algorithm branches on $\mathbf v$, explores the branch $v=1$
first, and then backtracks to the branch $v=0$. Setting $\mathbf v$ changes
only the clauses that contain $\mathbf v$; all other clauses are the same in
both branches. Fix the values of all other variables. A clause $(v\vee u)$
with $u$ true, or $(\neg v\vee w)$ with $w$ true, is then satisfied whatever
value $\mathbf v$ takes. Every other clause containing $\mathbf v$ acts as a
unit clause: $(v\vee u)$ with $u$ false acts as $(v)$, and $(\neg v\vee w)$
with $w$ false acts as $(\neg v)$. The two branches therefore differ only by
a contest between these units and the original unit clauses on
$\mathbf v$. If the units $(v)$ are at least as many as the units
$(\neg v)$, the value $v=1$ is at least as good, as in unit clause
domination (Lemma~\ref{lem:unit_dom}), and the branch $v=1$ already covers
this assignment. After backtracking, the branch $v=0$ therefore only needs
the assignments for which the units $(\neg v)$ win, and whether they win
depends only on the values of the neighbors of $v$.

For example, let $\mathbf v$ occur only in the clauses $(v\vee a)$,
$(v\vee b)$, $(v\vee c)$, $(\neg v\vee d)$, and $(\neg v\vee e)$. If $a$ is
false and $d,e$ are true, then $(v\vee a)$ acts as $(v)$, while
$(\neg v\vee d)$ and $(\neg v\vee e)$ are satisfied anyway. The value $v=1$
wins whatever $b$ and $c$ are, so the branch $v=0$ can skip every assignment
with $a=0$ and $d=e=1$. In general, the branch $v=0$ needs only the
assignments in which more of $d,e$ than of $a,b,c$ are false.

\paragraph{The pruning lemma.}
To state this observation precisely, fix the values of all variables other
than $\mathbf v$. The number of clauses satisfied with $v=1$ minus the
number satisfied with $v=0$, the \emph{gain} of $v$, is the number of units
$(v)$ minus the number of units $(\neg v)$ in the contest above:
\begin{equation}
 \mathrm{gain}_v
 =\bigl(d_1(v)+\#\{u\in N(v):u=0\}\bigr)
 -\bigl(d_1(\neg v)+\#\{w\in N(\neg v):w=0\}\bigr).
 \label{eq:gain}
\end{equation}
For a complete assignment $x$ we write $\mathrm{gain}_v(x)$; it does not
depend on the value of $\mathbf v$ in $x$.

\begin{lemma}[Pruning constraints from almost-domination]\label{lem:pruning}
Let $v$ be a literal of an unassigned variable of the residual formula $F$,
and define
\begin{equation}
 C(v):\qquad \sum_{u\in N(v)}\neg u+\sum_{w\in N(\neg v)}w
 \;\le\;d(\neg v)-d_1(v)-1.
 \label{eq:gap_pruning}
\end{equation}
Consider an assignment to the variables of the neighbors of $v$. If it
satisfies $C(v)$, then setting $v=0$ satisfies strictly more clauses than
setting $v=1$, whatever the values of the remaining variables. If it
violates $C(v)$, then setting $v=1$ satisfies at least as many clauses.
Consequently,
\[
 \OPT(F)=\max\bigl(\OPT(F[v=1]),\,\OPT(F[v=0]\wedge C(v))\bigr),
\]
where $F[v=0]\wedge C(v)$ restricts the branch $v=0$ to the assignments that
satisfy $C(v)$.
\end{lemma}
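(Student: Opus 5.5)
The plan is to rewrite the left-hand side of $C(v)$ in terms of the gain~\eqref{eq:gain} and reduce everything to the sign of $\mathrm{gain}_v$. Fix an assignment to all variables other than $\mathbf v$. For $u\in N(v)$ the literal $\neg u$ evaluates to $1$ exactly when $u=0$, so $\sum_{u\in N(v)}\neg u=\#\{u\in N(v):u=0\}$. For $w\in N(\neg v)$ we have $\sum_{w\in N(\neg v)}w=d_2(\neg v)-\#\{w\in N(\neg v):w=0\}$. Denote the left-hand side of $C(v)$ by $L$. Using $d(\neg v)=d_1(\neg v)+d_2(\neg v)$, I will check that
\[
 \mathrm{gain}_v \;=\; d_1(v)+L-d_2(\neg v)-d_1(\neg v) \;=\; L-\bigl(d(\neg v)-d_1(v)\bigr).
\]
Hence $C(v)$, namely $L\le d(\neg v)-d_1(v)-1$, holds if and only if $\mathrm{gain}_v\le -1$, that is, $\mathrm{gain}_v<0$ since the gain is an integer. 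Violating $C(v)$ is equivalent to $\mathrm{gain}_v\ge 0$. Since $L$ depends only on the neighbors of $v$, so does the gain.

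Next I will justify the contest description behind~\eqref{eq:gain}, which is the one substantive step. Every clause not containing $\mathbf v$ has the same value under $v=0$ and $v=1$. A unit $(v)$ or $(\neg v)$ contributes $+1$ or $-1$ to the difference. A binary clause $(v\vee u)$ contributes $+1$ when $u=0$ and $0$ otherwise. A binary clause $(\neg v\vee w)$ contributes $-1$ when $w=0$ and $0$ otherwise. Repeated unit clauses and the canonical pairs of binary clauses are counted copy by copy, since $N(v)$ is a multiset. Because the formula is simple or canonical, no clause contains both $v$ and $\neg v$, so no clause is counted twice. Summing these contributions gives exactly~\eqref{eq:gain}. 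This yields both the strict and the weak claims of the lemma, whatever the values of the non-neighbor variables.

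For the identity for $\OPT(F)$, the inequality $\ge$ is immediate, because both branches consider only genuine assignments of $F$. For $\le$, take an optimal assignment $x$. If $x$ has $v=1$, it is counted in $\OPT(F[v=1])$. If $x$ has $v=0$ and satisfies $C(v)$, it is counted in the restricted branch. If $x$ has $v=0$ and violates $C(v)$, then flipping $v$ to $1$ does not lower the score, by the gain computation. The flipped assignment is therefore also optimal and lies in the branch $v=1$.

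I do not expect a real obstacle. The only point needing care is the bookkeeping of multiplicities, so that $d_2(\neg v)=|N(\neg v)|$ and the unit counts match the degrees used in~\eqref{eq:gap_pruning}.
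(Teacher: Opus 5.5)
Your proposal is correct and follows essentially the same route as the paper: substitute $\#\{u\in N(v):u=0\}=\sum_{u\in N(v)}\neg u$ and $\#\{w\in N(\neg v):w=0\}=d_2(\neg v)-\sum_{w\in N(\neg v)}w$ into \eqref{eq:gain} to see that $C(v)$ is exactly $\mathrm{gain}_v\le-1$. Then derive the $\OPT$ identity by moving an assignment with $v=0$ that violates $C(v)$ to the branch $v=1$. The only addition is that you re-derive \eqref{eq:gain} clause by clause with multiplicities, which the paper does in its informal discussion just before the lemma.
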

\begin{proof}
By~\eqref{eq:gain}, setting $v=0$ is strictly better if and only if
\[
 d_1(\neg v)+\#\{w\in N(\neg v):w=0\}\ \ge\ d_1(v)+\#\{u\in N(v):u=0\}+1 .
\]
Substituting $\#\{w\in N(\neg v):w=0\}=d_2(\neg v)-\sum_{w\in N(\neg v)}w$,
$\#\{u\in N(v):u=0\}=\sum_{u\in N(v)}\neg u$, and
$d_1(\neg v)+d_2(\neg v)=d(\neg v)$ turns this inequality into $C(v)$. It
involves only the neighbors of $v$, which proves the first two claims. For
the identity, every assignment of the branch $v=0$ that violates $C(v)$ is
matched by the same assignment with $v=1$, which is at least as good.
\end{proof}

\paragraph{A compact representation.}
The lemma says which partial assignments the branch $v=0$ does not need:
the assignments to the neighbors of $v$ that violate $C(v)$. There can be
exponentially many of them, and the single linear inequality $C(v)$
describes them all; the algorithms store and combine this inequality
instead of a list of partial assignments. Its left-hand side counts the neighbors that favor
$v=1$: a false neighbor $u$ of $v$ turns $(v\vee u)$ into a unit $(v)$, and
a true neighbor $w$ of $\neg v$ satisfies $(\neg v\vee w)$, so this clause
cannot become a unit $(\neg v)$. The constraint allows only few of them. In
the example, $C(v)$ reads
\[
 \neg a+\neg b+\neg c+d+e\le1 .
\]
The simplification rules are special cases: if $\neg v$ does not occur, or
if $d_1(\neg v)=0$ and $d_1(v)\ge d_2(\neg v)$, the bound of $C(v)$ is
negative, no assignment satisfies $C(v)$, and the branch $v=0$ can be
skipped altogether.

\paragraph{Using the constraint during branching.}
When the algorithm branches on $\mathbf v$, it substitutes into $C(v)$ the
values of the neighbors that are already assigned: a false literal is
removed, and a true literal is removed while the bound decreases by $1$. A
complementary pair $z,\neg z$ can also be removed while decreasing the
bound by $1$, since $z+\neg z=1$. The remaining literals belong to
unassigned variables, which the algorithm assigns later in the subtree of
the branch $v=0$, so the constraint restricts these later choices. For
example, a negative bound proves that the subtree contains no assignment
satisfying the constraint, and a literal that occurs more often than the
bound must be false in every such assignment. Our algorithms enforce the
constraints of a branch through their sum (Section~\ref{sec:accumulate}).

\subsection{Which branch receives the constraint}\label{sec:preference}

The lemma applies to either literal of the branching variable: $C(v)$
restricts the branch $v=0$, and $C(\neg v)$ restricts the branch $v=1$. A
constraint prunes much if the other value is the better one for most
assignments of the neighbors. In the example, $v$ occurs three times and
$\neg v$ twice. If the neighbors take uniformly random values, the
left-hand side of $C(v)$ has mean $5/2$ and must stay at most $1$, which
only $6$ of the $32$ assignments to $a,\ldots,e$ achieve. The opposite
constraint $C(\neg v)$ reads $a+b+c+\neg d+\neg e\le2$ and keeps $16$ of
them. Since both constraints depend only on the neighborhood of
$\mathbf v$, the better choice can be computed by counting occurrences.

We call the mean of the left-hand side under uniform values minus the bound
the \emph{requirement} of a constraint: how far the remaining assignment
must push the left-hand side below its typical value. For $C(v)$ it equals
$g_{\mathbf v}(F)/2+1$, and for $C(\neg v)$ it equals
$-g_{\mathbf v}(F)/2+1$, where the \emph{effective gap} of a variable
$\mathbf v$ in a formula $H$ is
\begin{equation}
 g_{\mathbf v}(H)=d_2^H(v)-d_2^H(\neg v)
                   +2\bigl(d_1^H(v)-d_1^H(\neg v)\bigr).
 \label{eq:effectivegap}
\end{equation}
Equivalently, $g_{\mathbf v}(H)/2$ is the expected gain of $v$ when the
other variables are uniform: in~\eqref{eq:gain}, an original unit clause
always counts, and a binary clause counts only if its other literal is
false, which happens with probability $1/2$. We call the literal with
nonnegative gap,
\begin{equation}
 \ell_{\mathbf v}(H)=
 \begin{cases}v,&g_{\mathbf v}(H)\ge0,\\
               \neg v,&g_{\mathbf v}(H)<0,
 \end{cases}
 \label{eq:bias_preference}
\end{equation}
the \emph{preferred literal}, and the value of $\mathbf v$ that makes it true
the \emph{preferred value}. Our algorithms add the constraint of the
preferred literal to the branch of the nonpreferred value; the order in
which the two branches are explored does not matter. We call this
constraint the \emph{comparison} of $\mathbf v$, since it compares the two
values of $\mathbf v$. Computing it takes time linear in the degree of
$\mathbf v$. The algorithm of Section~\ref{sec:random} computes the gaps in
its current residual formula, and the algorithm of Section~\ref{sec:bias}
computes them once, in the formula after preprocessing.
Sections~\ref{sec:maxcut} and~\ref{sec:sbm} use a related rule and a relaxed
comparison, defined there.

\paragraph{Max-Cut as a special case.}
Recall from Section~\ref{sec:results} that Max-Cut is the special case of
Max 2-SAT with the clauses $(u\vee v)$ and $(\neg u\vee\neg v)$ for every
edge $\{u,v\}$, where the value of a variable indicates the side of its
vertex. For this encoding, the pruning constraint of
Lemma~\ref{lem:pruning} is a statement about the neighbors of a vertex.
Moving a vertex $v$ to the other side changes the number of cut edges by
the number of its neighbors on its old side minus the number on its new
side. With the placed vertices fixed and the unplaced ones uniform, the
preferred side of $v$ is therefore the side opposite to most of its
already placed neighbors. Placing $v$ on the other side pays off only if,
in the end, at most half of its neighbors lie on its own side. Otherwise,
moving $v$ back cuts more edges, and that cut is examined in the branch on
the preferred side. The branch against the preference therefore keeps only
the completions in which most of the unplaced neighbors of $v$ go to the
other side. Figure~\ref{fig:maxcut} shows an example.

\begin{figure}[t]
\centering
\begin{tikzpicture}[
  vtx/.style={circle,draw,fill=white,inner sep=0pt,minimum size=5.5mm,font=\small},
  br/.style={vtx,fill=black!80,text=white},
  free/.style={vtx,draw=black!45,fill=black!8,text=black!70},
  side/.style={rounded corners=3mm,draw=black!35,fill=black!3},
  cut/.style={very thick},
  uncut/.style={densely dashed,black!60},
  open/.style={dotted,thick,black!45}]
\foreach \dx/\vx/\pref/\lab in {0/4.15/1/{(a) $v$ on its preferred side},
                                7.2/1.65/0/{(b) $v$ on the other side}}{
  \begin{scope}[xshift=\dx cm]
    \draw[side] (0,0) rectangle (2.2,3);
    \draw[side] (3.6,0) rectangle (5.8,3);
    \node[font=\small,black!60] at (1.1,3.3) {side 0};
    \node[font=\small,black!60] at (4.7,3.3) {side 1};
    \node[vtx] (a) at (0.6,2.35) {$a$};
    \node[vtx] (b) at (0.6,0.65) {$b$};
    \node[vtx] (c) at (5.2,2.35) {$c$};
    \node[br] (v) at (\vx,1.3) {$v$};
    \foreach \i/\x in {1/0.95,2/2.25,3/3.55,4/4.85}{
      \node[free] (u\i) at (\x,-1.25) {$u_{\i}$};
      \draw[open] (v) -- (u\i);
    }
    \ifnum\pref=1
      \draw[cut] (v) -- (a); \draw[cut] (v) -- (b); \draw[uncut] (v) -- (c);
    \else
      \draw[uncut] (v) -- (a); \draw[uncut] (v) -- (b); \draw[cut] (v) -- (c);
      \node[font=\footnotesize] at (2.9,-2.05)
        {at most one of $u_1,\ldots,u_4$ may join side 0};
    \fi
    \node[font=\small] at (2.9,-2.75) {\lab};
  \end{scope}
}
\end{tikzpicture}
\caption{Branching on a vertex $v$ of degree~$7$ in Max-Cut. The neighbors
$a,b,c$ of $v$ are already placed, and $u_1,\ldots,u_4$ are not. Thick
edges are cut, dashed edges are not, and dotted edges are still open.
(a)~On side~1, $v$ cuts two of its three placed edges. This is its
preferred side, and the branch keeps all placements of $u_1,\ldots,u_4$.
(b)~Placing $v$ on side~0 pays off only if at least four of its seven
neighbors end on side~1. Otherwise, moving $v$ to side~1 cuts more edges,
and that cut is examined in branch~(a). Since $a$ and $b$ are on side~0,
at most one of $u_1,\ldots,u_4$ may join them there, and only $5$ of the
$16$ placements of $u_1,\ldots,u_4$ remain.}
\label{fig:maxcut}
\end{figure}

\subsection{Accumulating constraints}\label{sec:accumulate}\label{sec:enumeration}

All four algorithms fix an order of the variables and branch on all
assignments to the first half of them, the \emph{prefix}. At every prefix
variable, the branch of the nonpreferred value receives the comparison.
Along a branch, the comparisons accumulate. Each one alone excludes only a
part of the assignments, as in the example, but together they can exclude
almost all of them. After the prefix, the algorithm adds up the comparisons
of the branch into a single inequality over the remaining variables, lists
exactly the assignments that satisfy it, and scores each of them. We call a
complete assignment a \emph{candidate} if it satisfies the summed
comparisons of its own branch. Lemma~\ref{lem:pruning} justifies each
comparison on its own. For a whole branch, the correctness proofs show that
some optimal assignment satisfies all comparisons on its own branch, hence
also their sum, so it is a candidate. The running time is governed by the
number of candidates. For each input class, the analysis shows that only an
exponentially small fraction of all assignments are candidates, which gives
running time $\Ostar(2^{(1-\gamma)n})$ for a constant $\gamma>0$; we call
$\gamma$ the \emph{saving}. The rest of this section provides the tools for
adding up constraints and listing the solutions of their sum.

\begin{lemma}[Infeasibility by summation]\label{lem:constraint_sum}
Let $\mathcal C$ consist of constraints $C:\sum_{i=1}^{r_C}\ell_i\le b_C$,
with literal occurrences counted separately. Add the constraints, and
repeatedly remove complementary literal pairs from the sum, decreasing
its bound by $1$ per pair. If the resulting bound is negative, no
assignment satisfies all constraints in $\mathcal C$.
\end{lemma}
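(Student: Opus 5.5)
The plan is to argue by contraposition. Suppose some assignment $x$ satisfies every constraint $C\in\mathcal C$; we show that the final bound is nonnegative. It is enough to show that the described operations preserve a single invariant. That invariant is: $x$ satisfies the current inequality $S\le B$, where $S$ is a multiset of literals, counted with multiplicity, and $B$ is the current bound. Since the left-hand side is a sum of $0/1$ values, it is always nonnegative. So once the invariant holds at the end, $0\le S(x)\le B$, and the bound cannot be negative.

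\textbf{Initialization.} Adding the constraints gives the multiset union $S_0$ of all literal occurrences and the bound $B_0=\sum_{C\in\mathcal C}b_C$. Evaluating at $x$ gives $S_0(x)=\sum_C\sum_{i}\ell_i(x)\le\sum_C b_C=B_0$, because each constraint holds at $x$ separately. This is the only place where the hypothesis on $x$ is used.

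\textbf{Removal step.} Suppose the current sum contains an occurrence of $z$ and an occurrence of $\neg z$ for some variable. Remove one copy of each and lower the bound by $1$, obtaining $S'\le B-1$. Since $x$ assigns $z$ some Boolean value, $z(x)+\neg z(x)=1$. Hence $S'(x)=S(x)-1\le B-1$, and the invariant is preserved. This is exactly the identity $z+\neg z=1$ already used in Section~\ref{sec:pruning}. By induction on the number of removals, which is finite since each removal shrinks the multiset, the invariant holds for the final inequality. Note that the argument does not depend on the order in which pairs are removed, nor on whether all pairs are removed.

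\textbf{Conclusion and main point of care.} At the end, $B_{\text{final}}\ge S_{\text{final}}(x)\ge0$, which contradicts a negative bound. The proof has no real obstacle. The one point to get right is to track multiplicities: each removal must take one occurrence of each of $z$ and $\neg z$, so that the identity $z(x)+\neg z(x)=1$ accounts for exactly the decrease of $1$ in the bound. The nonnegativity of the remaining left-hand side, rather than any cancellation, is what produces the contradiction.
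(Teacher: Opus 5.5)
Your proof is correct and is essentially the paper's argument: a common satisfying assignment satisfies the sum, each removal of a pair $z,\neg z$ preserves the inequality because $z+\neg z=1$, and a nonnegative left-hand side cannot satisfy a negative bound. Your explicit invariant and the induction over removals simply spell out the paper's three-line proof in more detail.
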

\begin{proof}
An assignment satisfying every constraint also satisfies their sum.
Cancelling a pair preserves the inequality because its contribution is
exactly one for every assignment. The remaining left-hand side is
nonnegative, so it cannot satisfy a negative bound.
\end{proof}

This is a sufficient infeasibility test. Passing it does not imply that
the collection has a common satisfying assignment. The summed constraint
is temporary; it is not added to $\mathcal C$.

Several algorithms will branch on a fixed set $P$ and then complete the
remaining variables. A \emph{prefix assignment} $\alpha$ specifies all
variables of $P$; a \emph{suffix assignment} $y$ specifies $V\setminus P$.
These names refer to the order that lists $P$ first, not necessarily to
consecutive original labels. Within each set we use the original input
order. The complete assignment is $\alpha\cup y$.

For a collection $\mathcal C$ of pruning constraints, its ordinary
sum is
\begin{equation}
 C_\Sigma:\qquad
 \sum_{C\in\mathcal C}\sum_{i=1}^{r_C}\ell_{C,i}
       \le \sum_{C\in\mathcal C}b_C.
 \label{eq:ordinary_sum}
\end{equation}
Every assignment satisfying the individual constraints satisfies this sum.
The sum can also admit other assignments; we may enumerate and score them
without affecting correctness. An empty collection gives $0\le0$.

After substituting a prefix, collect the occurrences of each unassigned
variable. If its contribution is $aw+b\neg w$, replace it by
$\min(a,b)+|a-b|z$, where $z=w$ when $a\ge b$ and $z=\neg w$ otherwise.
Move the constant to the right-hand side. Thus the sum has the form
\begin{equation}
 \sum_{i=1}^k w_i z_i\le B,\qquad w_i\ge0,\quad z_i\in\bits,
 \label{eq:budget}
\end{equation}
with one oriented literal $z_i$ for each remaining variable. Include
variables of coefficient $0$, so every enumerated assignment is complete.
For occurrence counts, this is precisely complementary cancellation from
Lemma~\ref{lem:constraint_sum}. The same operation works with nonnegative
rational coefficients, which the later sections will use.

For this single inequality, the minimum left-hand side is $0$: set all
remaining $z_i$ to $0$. Hence a negative bound is an exact infeasibility
test for the sum. This does not decide joint feasibility of the individual
constraints. The distinction matters for the following enumeration bound:
every retained partial assignment has a completion satisfying the sum.
In the lemma, a node of the depth-first search is \emph{active} while its
subtree is being explored, and the \emph{prefix tree} is the binary tree of
all assignments to the branching variables, in the branching order.

\begin{lemma}[Enumerating the summed constraint]\label{lem:enumeration}
Let $\sum_{i=1}^k w_i z_i\le B$ have nonnegative rational weights and $N$
satisfying Boolean assignments. Depth-first enumeration in a fixed order,
rejecting a partial assignment as soon as its assigned contribution exceeds
$B$, lists each satisfying assignment once and visits at most $1+2kN$
recursion nodes, including rejected nodes. Its recursion depth is at most
$k+1$, and its time and space per active node are polynomial in the encoding
length of the inequality and the assignment.

Consequently, suppose branching assigns a fixed set of $T$ variables,
and $N(\alpha)$ counts the suffix assignments satisfying the sum for
prefix $\alpha$. If constructing each summed inequality takes polynomial
time and its encoding length is polynomial in the input size, the whole
search uses polynomial space and time
\begin{equation}
 \Ostar(2^T+W(F)),\qquad W(F)=\sum_{\alpha\in\bits^T}N(\alpha).
 \label{eq:enumerationcost}
\end{equation}
More precisely, the full prefix tree followed by suffix enumeration visits
at most $2^{T+1}-1+2kW(F)$ nodes when $k$ variables remain after each prefix.
\end{lemma}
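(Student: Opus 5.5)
The proof is a counting argument on the depth-first search tree. Its key property is that every non-rejected node has at least one satisfying completion. I would first describe the procedure precisely. The search fixes $z_1,\dots,z_k$ in order. At a node at depth $j$, the values of $z_1,\dots,z_j$ are fixed, and the procedure maintains the partial sum $S_j=\sum_{i\le j}w_iz_i$. A node is rejected, with no children, if $S_j>B$. An accepted node at depth $k$ is output. An accepted node at depth $j<k$ creates its two children.

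Since the weights are nonnegative, the minimum of the left-hand side over completions of a partial assignment is exactly $S_j$, attained by setting all remaining $z_i$ to $0$. Hence a node is accepted if and only if it has at least one satisfying completion. Two consequences follow. First, every output is a satisfying assignment, and every satisfying assignment has all of its prefixes accepted, so it is output. Since the tree has a single path per assignment, each one is output exactly once. Second, for the count, every accepted node lies on the root-to-leaf path of some satisfying assignment. Each such path contains $k+1$ accepted nodes, so there are at most $(k+1)N$ accepted nodes, or $1$ if $N=0$ (the root only). Each rejected node is the child of an accepted internal node at depth less than $k$, and there are at most $kN$ of those internal nodes on the paths. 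Counting more carefully, each satisfying path has $k$ internal nodes, each contributing at most one sibling off the path; the on-path children are counted among the accepted nodes. Together with the $k$ internal nodes on each path and the root, this gives at most $1+2kN$ nodes in total. I would verify this bookkeeping carefully, charging each non-root node to its parent's path. This counting step is the main place where care is needed.

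The recursion depth is at most $k+1$. Each active node stores only its partial assignment and $S_j$, which is a sum of input rationals with polynomial encoding length. Comparing it with $B$ therefore takes polynomial time.

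For the consequence, the outer search enumerates the full binary prefix tree on the $T$ variables, which has $2^{T+1}-1$ nodes. At each leaf $\alpha$, it builds the summed inequality, of polynomial size and in polynomial time by hypothesis, and runs the enumeration above. That enumeration has $1+2kN(\alpha)$ nodes, and each suffix assignment is scored in polynomial time. To obtain the stated total $2^{T+1}-1+2kW(F)$, I would identify the root of each suffix search with the prefix leaf itself, which absorbs the additive $1$. Summing over $\alpha$ then gives $\Ostar(2^T+W(F))$. Space stays polynomial because only one root-to-node path of the combined tree is active at a time, with depth at most $T+k+1$.
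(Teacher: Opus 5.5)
Your proof is correct and follows the paper's argument. Both rest on the all-zero completion of a retained node, and both charge every non-root node to one of the two children of a retained node at depths $0,\ldots,k-1$. You organize the retained nodes as the union of the $N$ satisfying root-to-leaf paths, while the paper bounds them by $N$ per depth through disjointness of completions, which gives the same count. The prefix-tree accounting, with each suffix root identified with its prefix leaf, is exactly the paper's.

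One small slip: when $N=0$ the root is visited but rejected, so there are no accepted nodes and one visited node, not one accepted node.
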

\begin{proof}
If $B<0$, the root is rejected and is the only visited node. Otherwise,
every retained partial assignment extends to a solution by setting all
unassigned $z_i$ to $0$. At a fixed depth, distinct partial assignments
have disjoint sets of completions. There are therefore at most $N$
retained nodes at each depth. There are $k$ depths at which children can
be created. Apart from the root, every visited node is one of the at most
two children of a retained node. This proves the bound $1+2kN$, including
children that are immediately rejected. Each complete assignment follows
one path, so no output is repeated.

Each recursive call assigns one variable. The depth is therefore at most
$k+1$, and the algorithm stores the active path rather than all outputs.
Subtraction and comparison of the rational coefficients have polynomial
bit cost in the input encoding length.

The full prefix tree has $2^T-1$ internal nodes; its leaves are the roots
of the suffix enumerations. Summing their bounds gives
\[
 (2^T-1)+\sum_\alpha(1+2kN(\alpha))
 =2^{T+1}-1+2kW(F).
\]
Constructing and processing the sums adds polynomial work per prefix and
per suffix node. Process prefixes and outputs consecutively to obtain
the stated space bound.
\end{proof}

\section{Formulas with many biased variables}\label{sec:bias}

\subsection{Intuition}\label{sec:bias_intuition}

\paragraph{One biased variable.}
Consider a variable $\mathbf v$ that occurs in $D$ binary clauses, in
$3D/4$ of them as $v$ and in $D/4$ as $\neg v$, and in no unit clause. Its preferred literal is $v$, and the branch $v=0$ receives the
comparison $C(v)$ of Lemma~\ref{lem:pruning}. It allows at most $D/4-1$ of
the $D$ neighbor literals to favor $v=1$, that is, to be a false neighbor
of $v$ or a true neighbor of $\neg v$. If the neighbors took uniformly
random values, about $D/2$ of these literals would favor $v=1$. The
comparison therefore asks for a deviation of $D/4+1$ below the typical
value, a constant fraction of $D$. Only a fraction exponentially small in $D$
of the assignments to the neighbors satisfies it. We call a variable
\emph{biased} if, as here, the occurrence counts of its two literals differ
by a constant fraction of its degree.

\paragraph{Many biased variables.}
A single comparison restricts only the neighbors of its variable. A saving
on all $n$ variables must come from many comparisons together. The
algorithm branches on about half of the variables. On each branch, every
branching variable with its nonpreferred value contributes its comparison,
and after the branching the algorithm lists only the completions that
satisfy the sum of these comparisons. To count the listed assignments,
follow a uniformly random complete assignment through the branching. It
takes the nonpreferred value at about half of the branching variables. If
many of them are biased, the requirements of their comparisons add up to a
large total, which the random assignment would have to compensate.

\paragraph{Overlapping neighborhoods.}
The comparisons are not independent. Biased variables may share many
neighbors, and one could fear that a good assignment to a few shared
variables satisfies many comparisons at once. This cannot happen as long as
no variable occurs much more often than the others. A binary clause on
$\mathbf v$ and $\mathbf w$ contributes one literal of $\mathbf w$ to the
comparison of $\mathbf v$ and one literal of $\mathbf v$ to the comparison
of $\mathbf w$. Hence $\mathbf w$ occurs in all comparisons together at most
as often as in the formula, however the neighborhoods overlap, and changing
the value of $\mathbf w$ moves the sum of the comparisons by at most about
twice its number of occurrences. When these numbers are comparable, a
bounded-differences inequality shows that the random assignment satisfies
the summed comparisons only with exponentially small probability.

\paragraph{Fixed preferences.}
The algorithm computes all preferred values once, in the formula after
preprocessing, before any variable is assigned, and it branches on a set
of about $n/2$ variables that carries at least half of the total gap.
Fixed preferences make each comparison a fixed function of the assignment,
so the analysis needs a single concentration inequality.
Section~\ref{sec:random} handles preferences that change along the branch.

\paragraph{Formal setup.}
Let $F_0$ be the formula after preprocessing, let $D_{\mathbf v}$ be the
binary degree of a variable $\mathbf v$ in $F_0$, and let
$g_{\mathbf v}=g_{\mathbf v}(F_0)$ be its effective
gap~\eqref{eq:effectivegap}. We call $\mathbf v$
\emph{$\varepsilon$-biased} if $|g_{\mathbf v}|\ge\varepsilon D_{\mathbf v}$;
the binary degree need not be linear in $n$. The comparison of $\mathbf v$
has $D_{\mathbf v}$ literals, and by Section~\ref{sec:preference} its
requirement is $|g_{\mathbf v}|/2+1$, at least a fraction $\varepsilon/2$ of
its length. In the example above, $g_{\mathbf v}=D/2$, so $\mathbf v$ is
$1/2$-biased.

\subsection{The algorithm}\label{sec:bias_algorithm}

The literal-gap algorithm, \textnormal{\textsc{LiteralGap}}, follows the
common scheme of Section~\ref{sec:framework}.

\paragraph{Setup.}
Apply \textnormal{\textsc{Preprocess}} to the input, obtaining $(F_0,s_0)$; a
complete assignment $x$ then has original score $s_0+F_0(x)$. The formula
$F_0$ and its variable set stay fixed during the search. Compute the gaps
$g_{\mathbf v}=g_{\mathbf v}(F_0)$ by~\eqref{eq:effectivegap} and the
preferred values $p_{\mathbf v}$, which are $1$ exactly when
$g_{\mathbf v}\ge0$. Let $A$ contain the first
$\lfloor n/2\rfloor$ variables in the input order, let $A'=V\setminus A$, and
branch on the half with the larger total gap:
\begin{equation}
 P=\begin{cases}
 A',&\displaystyle\sum_{\mathbf v\in A}|g_{\mathbf v}|
          \le\sum_{\mathbf v\in A'}|g_{\mathbf v}|,\\
 A,&\text{otherwise}.
 \end{cases}
 \label{eq:bias_prefix}
\end{equation}
This deterministic choice gives
\begin{equation}
 |P|\le\lceil n/2\rceil,\qquad
 \sum_{\mathbf v\in P}|g_{\mathbf v}|
       \ge\frac12\sum_{\mathbf v\in V}|g_{\mathbf v}|.
 \label{eq:bias_half}
\end{equation}
For each $\mathbf v\in P$, let $C_{\mathbf v}=C(\ell_{\mathbf v}(F_0))$ be the
comparison of $\mathbf v$: the constraint of Lemma~\ref{lem:pruning} for
its preferred literal, constructed in $F_0$.

\paragraph{Search.}
Enumerate the assignments $\alpha$ to $P$ by depth-first branching in the
input order. For each prefix, collect the constraints of the variables that
take their nonpreferred values,
\[
 \mathcal C(\alpha)=
 \{C_{\mathbf v}:\mathbf v\in P,\ \alpha_{\mathbf v}\ne p_{\mathbf v}\},
\]
form their ordinary sum~\eqref{eq:ordinary_sum}, substitute $\alpha$, and
cancel complementary occurrences. If the remaining bound is negative, reject
the prefix. Otherwise enumerate all suffix assignments satisfying the sum by
Lemma~\ref{lem:enumeration}, score each complete assignment, and keep the
best one over all prefixes.

\begin{algorithm}[tp]
\caption{\textnormal{\textsc{LiteralGap}}$(F)$}
\label{alg:literal_gap}\label{alg:bias}
\begin{algorithmic}[1]
\STATE $(F_0,s_0)\gets\textnormal{\textsc{Preprocess}}(F)$.
\STATE For every $\mathbf v$, compute $g_{\mathbf v}(F_0)$; set $p_{\mathbf v}\gets1$ if $g_{\mathbf v}(F_0)\ge0$ and $p_{\mathbf v}\gets0$ otherwise.
\STATE Choose the branching set $P$ by~\eqref{eq:bias_prefix}.
\STATE For each $\mathbf v\in P$, construct $C_{\mathbf v}=C(\ell_{\mathbf v}(F_0))$ in $F_0$ by Lemma~\ref{lem:pruning}.
\STATE $M\gets-\infty$; $\alpha^\star\gets\emptyset$.
\FOR{each assignment $\alpha$ to $P$, by depth-first branching in the input order}
  \STATE $\mathcal C\gets\{C_{\mathbf v}:\mathbf v\in P,\ \alpha_{\mathbf v}\ne p_{\mathbf v}\}$.
  \STATE Form $C_\Sigma$ by~\eqref{eq:ordinary_sum}; substitute $\alpha$ and cancel complementary occurrences.
  \COMMENT{The remaining inequality concerns exactly the suffix variables.}
  \IF{the remaining bound is nonnegative}
    \FOR{each suffix $y$ satisfying $C_\Sigma$, by Lemma~\ref{lem:enumeration}}
      \STATE $q\gets s_0+F_0(\alpha\cup y)$.
      \STATE If $q>M$, set $M\gets q$ and $\alpha^\star\gets\alpha\cup y$.
    \ENDFOR
  \ENDIF
\ENDFOR
\STATE \textbf{return} $(M,\alpha^\star)$.
\end{algorithmic}
\end{algorithm}

The algorithm receives only $F$; it needs none of the parameters of
Theorem~\ref{thm:bias} below.

\subsection{Correctness}

\begin{lemma}[Correctness of the literal gap algorithm]\label{lem:gap_correctness}
For every simple input formula, \textnormal{\textsc{LiteralGap}} returns
its optimum score and an assignment attaining that score.
\end{lemma}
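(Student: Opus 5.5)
The plan is to show that some optimal assignment of $F_0$ is a candidate, that is, it satisfies the summed comparisons of its own prefix. Since every listed assignment is scored exactly by $s_0+F_0(\alpha\cup y)$, which equals its original score by Lemmas~\ref{lem:unit_canc} and~\ref{lem:simplify}, this shows that the returned maximum is the optimum. It also shows that $\alpha^\star$ attains it. The key difficulty is that each comparison is justified by Lemma~\ref{lem:pruning} only in isolation. Flipping one variable may break the constraints of others, so we need a global choice of optimum rather than a local exchange argument.

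First I would record a fact that holds for all comparisons simultaneously. The comparisons $C_{\mathbf v}$ are built once in $F_0$, and $F_0$ is fixed. By Lemma~\ref{lem:pruning} applied in $F_0$ with all other variables fixed, if $x$ is any complete assignment with $x_{\mathbf v}\ne p_{\mathbf v}$ and $x$ violates $C_{\mathbf v}$, then flipping $\mathbf v$ to $p_{\mathbf v}$ gives an assignment $x'$ with $F_0(x')\ge F_0(x)$. The orientation needs care here. The constraint is $C(\ell)$ for the preferred literal $\ell$, and it restricts the branch where $\ell$ is false, which matches the algorithm's rule.

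Second, I would choose the optimum by a potential argument. Among all optimal assignments of $F_0$, pick $x^\ast$ that maximizes the number of prefix variables $\mathbf v\in P$ with $x^\ast_{\mathbf v}=p_{\mathbf v}$. Suppose some $\mathbf v\in P$ with $x^\ast_{\mathbf v}\ne p_{\mathbf v}$ has $x^\ast$ violating $C_{\mathbf v}$. Then the flip above yields another optimal assignment with strictly more preferred values on $P$, a contradiction. Hence $x^\ast$ satisfies every $C_{\mathbf v}$ in $\mathcal C(\alpha)$, where $\alpha=x^\ast|_P$. Note that whether $C_{\mathbf v}$ holds depends only on the neighbours of $\mathbf v$, so it is legitimate to evaluate it at the full assignment $x^\ast$.

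Finally, I would connect this to the enumeration. Since $x^\ast$ satisfies each constraint of $\mathcal C(\alpha)$, it satisfies their ordinary sum~\eqref{eq:ordinary_sum}. Substituting $\alpha$ and cancelling complementary pairs preserves the set of satisfying completions, as in Lemma~\ref{lem:constraint_sum} and the reduction to~\eqref{eq:budget}. So the remaining bound is nonnegative, and the suffix $y=x^\ast|_{V\setminus P}$ satisfies the resulting inequality. By Lemma~\ref{lem:enumeration}, $y$ is listed and scored, so $M\ge s_0+F_0(x^\ast)=\OPT(F)$. Conversely, every scored assignment is a genuine complete assignment, so $M\le\OPT(F)$, and the stored $\alpha^\star$ attains $M$.

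The main obstacle is the second step: ensuring that repeated flips terminate without invalidating the other comparisons. The extremal choice of $x^\ast$ handles this, because it avoids having to track how flipping one variable affects the others' constraints. Fixed preferences, computed once in $F_0$, are what make the potential well defined.
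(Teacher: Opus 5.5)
Your proposal is correct and follows essentially the same route as the paper. Both choose an optimum of $F_0$ that maximizes the number of preferred values, show via Lemma~\ref{lem:pruning} that a nonpreferred value violating its comparison would allow an optimal flip contradicting this choice, and conclude that the summed comparisons of its prefix hold, so its suffix is enumerated. The only difference is cosmetic: you count preferred values only on $P$, whereas the paper counts them over all variables, and either count serves because the preferences are fixed in $F_0$.
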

\begin{proof}
Preprocessing subtracts the same constant $s_0$ from every assignment's
score, so it preserves the optimal assignments. Among the optimal
assignments for $F_0$, choose $x^*$ with the largest number of variables
set to their fixed preferred values. Such an assignment exists because
there are finitely many complete assignments.

Suppose $x^*_{\mathbf v}\ne p_{\mathbf v}$. Changing this one variable
to $p_{\mathbf v}$ cannot increase the score, by optimality. It cannot
leave the score unchanged either: it would then give another optimum
with one more preferred value. Its score is therefore strictly smaller.
By Lemma~\ref{lem:pruning}, $x^*$ satisfies $C_{\mathbf v}$.

In particular, $x^*$ satisfies every constraint in
$\mathcal C(x^*|_P)$, and hence their sum. Substitution and complementary
cancellation preserve this property. Its prefix is not rejected and its
suffix is enumerated. The algorithm evaluates that complete assignment
with the offset $s_0$, so it records the optimum original score. Every
other score considered also belongs to a complete assignment of $F$ and
cannot exceed the optimum.
\end{proof}

\subsection{Running-time guarantee}

The theorem requires enough total initial gap and a bound on how strongly
a few variables can influence the comparisons. Both conditions concern
the formula immediately after preprocessing. The average degree may grow
with $n$; it does not enter the saving in the exponent.

\begin{theorem}[Many biased variables]\label{thm:bias}
Fix constants $\varepsilon,\eta\in(0,1]$ and $R\ge1$.
Let $F$ be a simple unweighted Max 2-SAT formula on a set $V$ of $n$ variables,
and let $(F_0,s_0)=\textnormal{\textsc{Preprocess}}(F)$. For each variable
$\mathbf v$, define
\[
 D_{\mathbf v}=d_2^{F_0}(v)+d_2^{F_0}(\neg v),\qquad
 u_{\mathbf v}=d_1^{F_0}(v)+d_1^{F_0}(\neg v),
\]
\[
 g_{\mathbf v}=d_2^{F_0}(v)-d_2^{F_0}(\neg v)
                    +2\bigl(d_1^{F_0}(v)-d_1^{F_0}(\neg v)\bigr).
\]
Suppose there is a set $S\subseteq V$ with $|S|\ge\eta n$ such that
$|g_{\mathbf v}|\ge\varepsilon D_{\mathbf v}$ for every $\mathbf v\in S$.
Writing $d=|S|^{-1}\sum_{\mathbf v\in S}D_{\mathbf v}$, suppose also that
\begin{equation}
 d\ge1,\qquad
 \sum_{\mathbf v\in V}(D_{\mathbf v}+u_{\mathbf v})^2\le Rnd^2.
 \label{eq:biasregularity}
\end{equation}
Then \textnormal{\textsc{LiteralGap}}$(F)$ returns an optimal assignment for
$F$ in polynomial space and time $\Ostar(2^{(1-\gamma)n})$, where
$\gamma=\varepsilon^2\eta^2/(3000R)$.
\end{theorem}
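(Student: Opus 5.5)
Correctness is Lemma~\ref{lem:gap_correctness}, and polynomial space follows from Lemma~\ref{lem:enumeration}, since each summed inequality has polynomial size and takes polynomial time to build. The plan for the time bound is as follows. By~\eqref{eq:enumerationcost} with $T=|P|\le\lceil n/2\rceil$, it suffices to show that the number $W(F)$ of candidates is at most $2^{(1-\gamma)n}$ up to polynomial factors. Equivalently, a uniformly random complete assignment $x$ is a candidate with probability at most $2^{-\gamma n}$. Here $x$ is a candidate if it satisfies the ordinary sum
\[
 \Sigma(x)=\sum_{\mathbf v\in P}\ind[x_{\mathbf v}\ne p_{\mathbf v}]\,\bigl(L_{\mathbf v}(x)-b_{\mathbf v}\bigr)\le 0,
\]
where $C_{\mathbf v}:L_{\mathbf v}\le b_{\mathbf v}$. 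Substitution and cancellation do not change the set of satisfying complete assignments, so it is enough to bound $\Prb[\Sigma(x)\le0]$. The key fact is that $L_{\mathbf v}$ does not depend on $x_{\mathbf v}$, because a simple formula has no clause $(v\vee\neg v)$ and preprocessing keeps it simple in the relevant sense. The indicator and the bracket are therefore independent, and
\[
 \E[\Sigma]=\tfrac12\sum_{\mathbf v\in P}\bigl(|g_{\mathbf v}|/2+1\bigr)\ \ge\ \tfrac18\sum_{\mathbf v\in V}|g_{\mathbf v}|\ \ge\ \tfrac18\varepsilon\sum_{\mathbf v\in S}D_{\mathbf v}=\tfrac18\varepsilon|S|d\ge\tfrac18\varepsilon\eta nd .
\]
This uses that the requirement equals $|g_{\mathbf v}|/2+1$ (Section~\ref{sec:preference}) together with~\eqref{eq:bias_half}.

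The next step is to apply McDiarmid's bounded-differences inequality to $\Sigma$ as a function of the $n$ independent bits of $x$. Flipping $x_{\mathbf w}$ changes $\Sigma$ in two ways. First, if $\mathbf w\in P$, it toggles whether $C_{\mathbf w}$ is included, which changes $\Sigma$ by $|L_{\mathbf w}-b_{\mathbf w}|\le D_{\mathbf w}+u_{\mathbf w}+1$. Here I would bound $b_{\mathbf w}$ by $d(\neg\ell)\le D_{\mathbf w}+u_{\mathbf w}$. Second, it changes the value of every literal of $\mathbf w$ occurring in the included comparisons. Each binary clause of $F_0$ on $\mathbf w$ contributes exactly one literal of $\mathbf w$ to exactly one comparison, namely that of its other variable, so this change is at most $D_{\mathbf w}$. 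The Lipschitz constant is thus $c_{\mathbf w}\le 2(D_{\mathbf w}+u_{\mathbf w})+1\le 3(D_{\mathbf w}+u_{\mathbf w}+1)$. Using~\eqref{eq:biasregularity} and $d\ge1$,
\[
 \sum_{\mathbf w} c_{\mathbf w}^2\le 9\sum_{\mathbf w}\bigl(2(D_{\mathbf w}+u_{\mathbf w})^2+2\bigr)\le 18Rnd^2+18n\le 36Rnd^2 .
\]

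McDiarmid then gives
\[
 \Prb[\Sigma\le0]\le\exp\bigl(-2\E[\Sigma]^2/\textstyle\sum c_{\mathbf w}^2\bigr)\le\exp\bigl(-2\varepsilon^2\eta^2n^2d^2/(64\cdot36Rnd^2)\bigr)=\exp\bigl(-\varepsilon^2\eta^2n/(1152R)\bigr).
\]
Converting to base $2$, this is at most $2^{-\varepsilon^2\eta^2n/(800R)}$, which is comfortably below the target $\gamma=\varepsilon^2\eta^2/(3000R)$. The slack absorbs looser constants, for instance in the bound on $b$. We then get $W(F)\le 2^{(1-\gamma)n}$. Together with $2^T\le2^{n/2+1}$, and since $\gamma<1/2$, this gives the claimed time.

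The main obstacle is the bounded-differences step. It needs the careful accounting that the total occurrence count of $\mathbf w$ across all comparisons is at most $D_{\mathbf w}$, plus the toggling term, even when the neighborhoods overlap arbitrarily. One also has to check that preprocessing leaves no binary clause containing both literals of one variable, so that $L_{\mathbf v}$ is independent of $x_{\mathbf v}$ and the expectation computation is valid. I would also double-check that multiplicities created by preprocessing, such as repeated units or canonical pairs, are covered by counting each copy, as the paper stipulates.
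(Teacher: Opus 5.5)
Your proof is correct and follows essentially the paper's route. The paper also bounds the probability that a uniform assignment is a candidate with a bounded-differences inequality. It gets the mean of at least $G/8$ from the half-gap choice of $P$, and it controls each variable's influence by counting every binary clause once, in its partner's comparison, with the preferences fixed in $F_0$. The differences are cosmetic: you concentrate $\Sigma=L_P+|I_P(x)|$ directly with McDiarmid, whereas the paper concentrates the score-difference statistic $L_P=\sum_{\mathbf v\in P}H_{\mathbf v}$ (influence at most $2B_{\mathbf w}$) with its own weaker inequality; both resulting constants fit within $3000$.
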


The first hypothesis asks for at least $\eta n$ variables that are
$\varepsilon$-biased in the sense of Section~\ref{sec:bias_intuition}. The
second hypothesis~\eqref{eq:biasregularity} asks that the occurrence counts
are not too uneven. It allows unequal degrees and overlapping
neighborhoods, and it holds with $R=c^2$ whenever every variable occurs in
at most $cd$ clauses of $F_0$. The proof gives the explicit bound
$(4+2n)\exp((\log2-\varepsilon^2\eta^2/(512R))n)$ on the number of search
nodes; the theorem states a smaller $\gamma$ in a simpler form.

\subsection{Proof of the running-time bound}

By Lemma~\ref{lem:enumeration}, it suffices to bound the number $W(F_0)$
of candidates, and $W(F_0)/2^n$ is the probability that a uniformly random
complete assignment is a candidate. The algorithm itself is deterministic;
the random assignment only serves to count. Throughout, scores refer to the
preprocessed formula $F_0$.

\paragraph{The statistic.}
Let $x[\mathbf v\gets p_{\mathbf v}]$ denote the assignment obtained from
$x$ by changing only $\mathbf v$ to its preferred value. Define
\begin{equation}
 H_{\mathbf v}(x)=F_0(x[\mathbf v\gets p_{\mathbf v}])-F_0(x),
 \qquad L_P(x)=\sum_{\mathbf v\in P}H_{\mathbf v}(x).
 \label{eq:bias_statistic}
\end{equation}
Thus $H_{\mathbf v}(x)$ is the change in score when $\mathbf v$ is switched
to its preferred value; it is $0$ when $x$ already takes that value. If
$x_{\mathbf v}\ne p_{\mathbf v}$, the comparison $C_{\mathbf v}$ is the
inequality $H_{\mathbf v}(x)\le-1$, written over literals, by
Lemma~\ref{lem:pruning}. We write
$B_{\mathbf v}=D_{\mathbf v}+u_{\mathbf v}$ for the total number of
occurrences of $\mathbf v$ in $F_0$. The set $P$ and all preferences are
fixed before $x$ is drawn.

\begin{lemma}[The combined comparison statistic]\label{lem:bias_statistics}
For any fixed set $P$ and preferences chosen from the gaps of $F_0$,
\[
 \E L_P(x)=\frac14\sum_{\mathbf v\in P}|g_{\mathbf v}|.
\]
Changing only variable $\mathbf w$ changes $L_P$ by at most
$B_{\mathbf w}+D_{\mathbf w}\le2B_{\mathbf w}$.
If $x$ satisfies the summed constraint associated with its prefix $x|_P$,
then $L_P(x)\le0$.
\end{lemma}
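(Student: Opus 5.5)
The plan is to write each $H_{\mathbf v}$ in terms of the gain~\eqref{eq:gain} and then verify the three claims directly. Let $\ell=\ell_{\mathbf v}(F_0)$ be the preferred literal. Applying~\eqref{eq:gain} to $\ell$ in $F_0$ gives
\[
 H_{\mathbf v}(x)=\ind[x_{\mathbf v}\ne p_{\mathbf v}]\cdot\mathrm{gain}_\ell(x).
\]
Here $\mathrm{gain}_\ell(x)$ does not depend on $x_{\mathbf v}$. This uses that $F_0$ is simple, so no clause contains $\mathbf v$ twice. Substituting as in the proof of Lemma~\ref{lem:pruning}, the gain equals the left-hand side of $C(\ell)$ minus $d(\neg\ell)-d_1(\ell)$. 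Hence
\[
 \mathrm{LHS}_{C_{\mathbf v}}(x)-b_{C_{\mathbf v}}=\mathrm{gain}_\ell(x)+1 .
\]
These two identities drive everything.

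\textbf{Expectation.} Under a uniform $x$, the indicator depends only on $x_{\mathbf v}$, while the gain depends only on the other variables, so the two are independent. By the remark after~\eqref{eq:effectivegap}, the gain of $v$ has expectation $g_{\mathbf v}/2$. Units count always, and a binary clause counts exactly when its other literal is false, which has probability $1/2$. The gain of $\neg v$ has expectation $-g_{\mathbf v}/2$. By the choice~\eqref{eq:bias_preference}, the preferred literal therefore has expected gain $|g_{\mathbf v}|/2$. Multiplying by $\Prb[x_{\mathbf v}\ne p_{\mathbf v}]=1/2$ and summing over $P$ gives $\E L_P=\frac14\sum_{\mathbf v\in P}|g_{\mathbf v}|$.

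\textbf{Bounded differences.} Suppose only $x_{\mathbf w}$ changes, and split $L_P$ into the term $H_{\mathbf w}$ (present if $\mathbf w\in P$) and the terms $H_{\mathbf v}$ with $\mathbf v\ne\mathbf w$.
\begin{itemize}
\item In $H_{\mathbf w}$ only the indicator changes, since the gain of $\mathbf w$ ignores $x_{\mathbf w}$. So $H_{\mathbf w}$ moves by at most $|\mathrm{gain}_\ell|$. This is at most the number of clauses containing $\mathbf w$, which is $B_{\mathbf w}$.
\item For $\mathbf v\ne\mathbf w$, the indicator is fixed. Only the binary clauses containing both $\mathbf v$ and $\mathbf w$ can change $\mathrm{gain}_{\ell_{\mathbf v}}$, each by at most $1$.
\end{itemize}
Summing over $\mathbf v$ counts each binary clause on $\mathbf w$ once, for a total of at most $D_{\mathbf w}$. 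The total change is therefore at most $B_{\mathbf w}+D_{\mathbf w}\le2B_{\mathbf w}$.

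\textbf{Candidates.} Substituting $\alpha=x|_P$ and cancelling complementary pairs are identities, valid for every assignment extending $\alpha$. So $x$ satisfies the processed sum if and only if it satisfies the ordinary sum~\eqref{eq:ordinary_sum} of $\mathcal C(x|_P)$. By the second identity, that sum reads
\[
 \sum_{\mathbf v\in P,\,x_{\mathbf v}\ne p_{\mathbf v}}\bigl(H_{\mathbf v}(x)+1\bigr)\le0 .
\]
The terms with $x_{\mathbf v}=p_{\mathbf v}$ vanish, so $L_P(x)\le-|\mathcal C(x|_P)|\le0$.

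The only delicate point is bookkeeping in the bounded-differences step. One must check that unit clauses of $\mathbf w$ affect only $H_{\mathbf w}$. One must also check that a binary clause on $\mathbf v,\mathbf w$ is counted once per endpoint, which the canonical form after \textsc{Preprocess} makes clean. The factor $B_{\mathbf w}$ versus $D_{\mathbf w}$ reflects exactly this split.
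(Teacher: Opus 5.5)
Your proposal is correct and follows essentially the same argument as the paper: you write $H_{\mathbf v}$ as the nonpreferred-value indicator times the preferred literal's gain, get the expectation from the independence of $x_{\mathbf v}$ and that gain, bound the influence of $\mathbf w$ by $B_{\mathbf w}$ in its own term (the paper phrases this change as $F_0(x)-F_0(x')$, the same quantity) plus one per binary clause in its partners' comparisons, and use the identity $\operatorname{lhs}(C_{\mathbf v},x)-b_{\mathbf v}=H_{\mathbf v}(x)+1$ to obtain $L_P(x)\le-|I_P(x)|\le0$. One small precision: $F_0$ need not be simple, since preprocessing can create repeated unit clauses, but the property you actually use, that no clause contains a variable twice, does hold.
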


The three parts have simple reasons. A variable takes its nonpreferred
value with probability $1/2$, and switching it to the preferred value
then gains $|g_{\mathbf v}|/2$ on average; this gives the expectation. A binary
clause on $\mathbf v$ and $\mathbf w$ enters only the comparisons of
$\mathbf v$ and $\mathbf w$. Changing $\mathbf w$ therefore affects $L_P$
only through the occurrences of $\mathbf w$, however much the
neighborhoods of different variables overlap. Finally, the summed
constraint of the prefix of $x$ is equivalent to $L_P(x)+|I_P(x)|\le0$,
where $I_P(x)$ is the set of prefix variables with nonpreferred values.
Appendix~\ref{app:bias_statistics_proof} gives the details.

The lemma reduces the count of candidates to a concentration estimate:
$L_P$ has a large mean, no single variable can change it much, and every
candidate has $L_P\le0$.

\begin{lemma}[Counting candidates from the initial gaps]\label{lem:bias_count}
For the preprocessed formula $F_0$, put
\[
 G=\sum_{\mathbf v\in V}|g_{\mathbf v}|,\qquad
 Q=\sum_{\mathbf v\in V}B_{\mathbf v}^2.
\]
Suppose $Q>0$, and choose $P$ by~\eqref{eq:bias_prefix}. Let $W(F_0)$ be
the total number of complete assignments satisfying their prefix's
summed constraint. Then
\begin{equation}
 W(F_0)\le2^n\exp\left(-\frac{G^2}{512Q}\right).
 \label{eq:bias_count}
\end{equation}
\end{lemma}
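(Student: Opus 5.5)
The count $W(F_0)/2^n$ is the probability that a uniformly random complete assignment $x$ is a candidate. By the third part of Lemma~\ref{lem:bias_statistics}, every candidate satisfies $L_P(x)\le0$. It therefore suffices to bound $\Prb[L_P(x)\le0]$ by the lower tail of a bounded-differences (McDiarmid) inequality.

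The inequality needs the mean of $L_P$ and its coordinate sensitivities, and both come from Lemma~\ref{lem:bias_statistics}. By its first part and~\eqref{eq:bias_half}, the mean satisfies
\[
\mu:=\E L_P(x)=\tfrac14\sum_{\mathbf v\in P}|g_{\mathbf v}|\ge \tfrac18 G.
\]
By its second part, changing one coordinate $\mathbf w$ changes $L_P$ by at most $c_{\mathbf w}:=2B_{\mathbf w}$, so
\[
\sum_{\mathbf w}c_{\mathbf w}^2\le 4Q.
\]
These are exactly the inputs McDiarmid's inequality needs, since the coordinates of $x$ are independent uniform bits.

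McDiarmid's inequality then gives
\[
\Prb[L_P\le0]=\Prb[L_P-\mu\le-\mu]\le\exp\!\Bigl(-\frac{2\mu^2}{\sum_{\mathbf w}c_{\mathbf w}^2}\Bigr)\le\exp\!\Bigl(-\frac{2(G/8)^2}{4Q}\Bigr)=\exp\!\Bigl(-\frac{G^2}{128Q}\Bigr).
\]
This is stronger than~\eqref{eq:bias_count}, which allows the exponent $G^2/(512Q)$. The slack leaves room for a more conservative constant, for instance one-sided Lipschitz constants or the cruder bound $B_{\mathbf w}+D_{\mathbf w}\le 2B_{\mathbf w}$ used with a weaker form of the inequality. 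Multiplying by $2^n$ gives~\eqref{eq:bias_count}. The hypothesis $Q>0$ is needed only so that the exponent is well defined. If $G=0$, the bound reads $W(F_0)\le2^n$ and is trivial.

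I expect the only real care to lie in applying the three properties of Lemma~\ref{lem:bias_statistics} exactly as stated, since the rest is a direct tail bound. The dependence among comparisons with overlapping neighborhoods is fully absorbed into the per-variable sensitivity $2B_{\mathbf w}$. There is also a subtlety: $P$ and the preferences must be fixed before $x$ is drawn. They are, because they depend only on $F_0$, so $L_P$ is a fixed function of independent bits and McDiarmid's inequality applies directly.
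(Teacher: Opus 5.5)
Your proof is correct and follows the paper's argument: bound $W(F_0)\le 2^n\Prb(L_P(x)\le0)$ via Lemma~\ref{lem:bias_statistics}, use $\mu\ge G/8$ and the sensitivities $c_{\mathbf w}=2B_{\mathbf w}$, and apply a bounded-differences inequality with deviation $\mu$. The only difference is the constant. You use the sharp McDiarmid form $\exp(-2a^2/\sum_{\mathbf w}c_{\mathbf w}^2)$, essentially the second part of Lemma~\ref{lem:random_ranges}, and obtain $G^2/(128Q)$; the paper applies its self-contained but lossier Lemma~\ref{lem:bias_concentration}, with $\exp(-a^2/(2V))$, which gives exactly the stated $G^2/(512Q)$.
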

\begin{proof}
Let $x$ be a uniformly random complete assignment. Every candidate
satisfies $L_P(x)\le0$ by Lemma~\ref{lem:bias_statistics}, so
$W(F_0)\le2^n\Prb(L_P(x)\le0)$. By~\eqref{eq:bias_half} and
Lemma~\ref{lem:bias_statistics}, the mean $\mu=\E L_P(x)$ satisfies
$\mu\ge G/8$, and changing coordinate $\mathbf w$ changes $L_P$ by at most
$c_{\mathbf w}=2B_{\mathbf w}$, so $\sum_{\mathbf w}c_{\mathbf w}^2=4Q>0$.
The bounded-differences inequality of
Lemma~\ref{lem:bias_concentration}, applied with deviation $a=\mu$, gives
\[
 \Prb(L_P(x)\le0)\le\exp\left(-\frac{\mu^2}{8Q}\right)
 \le\exp\left(-\frac{G^2}{512Q}\right).\qedhere
\]
\end{proof}

\begin{proof}[Proof of Theorem~\ref{thm:bias}]
The algorithm is correct by Lemma~\ref{lem:gap_correctness}; we bound its
running time. The empty variable set is trivial, so assume $n\ge1$, and
put $\theta=\varepsilon\eta$. The hypotheses give
\[
 G\ge\sum_{\mathbf v\in S}|g_{\mathbf v}|
   \ge\varepsilon\sum_{\mathbf v\in S}D_{\mathbf v}
   =\varepsilon|S|d\ge\theta nd,
 \qquad
 Q=\sum_{\mathbf v\in V}B_{\mathbf v}^2\le Rnd^2,
\]
and $Q>0$ because $d\ge1$. Lemma~\ref{lem:bias_count} therefore gives
$W(F_0)\le2^n e^{-\theta^2n/(512R)}$. Since $|P|\le\lceil n/2\rceil$,
Lemma~\ref{lem:enumeration} bounds the number of search nodes, including
rejected ones, by
\[
 2^{|P|+1}-1+2nW(F_0)
 \le(4+2n)\exp\bigl((\log2-\theta^2/(512R))n\bigr),
\]
where we used $\theta^2/(512R)\le1/512<(\log2)/2$. Finally,
$\gamma=\theta^2/(3000R)$ satisfies $\gamma\log2\le\theta^2/(512R)$, so
the bound is at most $(4+2n)2^{(1-\gamma)n}$. Each search node takes
polynomial time, and the search uses polynomial space; see
Appendix~\ref{app:bias_implementation}.
\end{proof}

\subsection{Linear gaps in simple formulas}

In a simple formula, every occurrence count is $O(n)$. Linearly many
variables with gaps linear in $n$ therefore suffice, without a condition on
the degrees.

\begin{corollary}[Linear effective gaps in a simple formula]\label{cor:lineargap}
Fix $\varepsilon,\eta\in(0,1]$. Let $F$ be a simple unweighted Max 2-SAT
formula on $n$ variables. Suppose at least $\eta n$ variables satisfy
\[
 |d_2^F(v)-d_2^F(\neg v)+2(d_1^F(v)-d_1^F(\neg v))|
 \ge\varepsilon n.
\]
Then \textnormal{\textsc{LiteralGap}}$(F)$ returns an optimal assignment for
$F$ in polynomial space and time
$\Ostar(2^{(1-\varepsilon^2\eta^2/48000)n})$.
\end{corollary}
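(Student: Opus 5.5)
The plan is to apply Lemma~\ref{lem:bias_count} directly rather than Theorem~\ref{thm:bias}. The simple formula gives a crude bound $B_{\mathbf v}=O(n)$ on every occurrence count, and this already makes $G^2/Q$ linear in $n$. Correctness is Lemma~\ref{lem:gap_correctness}, so only the running time needs work. The hypothesis concerns the gaps in $F$, while the lemma uses the gaps in $F_0$, so the first step is to show that \textsc{Preprocess} does not change any effective gap~\eqref{eq:effectivegap}.

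\emph{Gaps are invariant under preprocessing.} A unit cancellation (Lemma~\ref{lem:unit_canc}) lowers both $d_1(v)$ and $d_1(\neg v)$ by one, so $g_{\mathbf v}$ is unchanged. A replacement of $(a\vee b),(a\vee\neg b)$ by $(a)$ (Lemma~\ref{lem:simplify}) affects only the two variables involved.
\begin{itemize}
\item For the variable of $b$, it removes one occurrence of $b$ and one of $\neg b$ from binary clauses, so $d_2(b)-d_2(\neg b)$ is unchanged.
\item For the variable of $a$, it lowers $d_2(a)$ by $2$ and raises $d_1(a)$ by $1$, so the contribution $d_2(a)+2d_1(a)$ to the gap is unchanged.
\end{itemize}
Hence $|g_{\mathbf v}(F_0)|=|g_{\mathbf v}(F)|\ge\varepsilon n$ for at least $\eta n$ variables, and $G\ge\varepsilon\eta n^2$.

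\emph{Occurrence bound.} In the simple formula $F$, a variable $\mathbf v$ lies in at most four distinct binary clauses with each other variable, so in at most $4(n-1)$ binary clauses. It also lies in at most two unit clauses, $(v)$ and $(\neg v)$. Preprocessing never increases $B_{\mathbf v}$:
\begin{itemize}
\item a unit cancellation lowers $B_{\mathbf v}$ by $2$;
\item a replacement lowers $B$ by $1$ for the variable of $a$ and by $2$ for the variable of $b$.
\end{itemize}
So $B_{\mathbf v}\le 4n$ in $F_0$ and $Q\le16n^3$. Moreover $Q>0$: some variable has $|g_{\mathbf v}|\ge\varepsilon n>0$, so it occurs in $F_0$.

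\emph{Conclusion.} Lemma~\ref{lem:bias_count} now gives
\[
 W(F_0)\le2^n\exp\Bigl(-\frac{\varepsilon^2\eta^2n^4}{512\cdot16n^3}\Bigr)=2^n\exp\Bigl(-\frac{\varepsilon^2\eta^2}{8192}\,n\Bigr).
\]
We then proceed exactly as in the proof of Theorem~\ref{thm:bias}. Put $c=\varepsilon^2\eta^2/8192\le1/8192<(\log2)/2$. By Lemma~\ref{lem:enumeration}, with $|P|\le\lceil n/2\rceil$ from~\eqref{eq:bias_half}, the number of search nodes is at most
\[
 (4+2n)\exp\bigl((\log2-c)n\bigr).
\]
Finally, $\gamma=\varepsilon^2\eta^2/48000$ satisfies $\gamma\log2\le c$, because $8192\log2<5700<48000$. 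The implementation remarks of Appendix~\ref{app:bias_implementation} give polynomial time per node and polynomial space.

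The only delicate point is the first step, the invariance of the gaps under \textsc{Preprocess}. Everything else is arithmetic and a direct substitution into Lemma~\ref{lem:bias_count}.
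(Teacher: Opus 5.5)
Your proof is correct and follows the paper's own argument essentially step by step. The paper also shows that preprocessing preserves effective gaps and does not increase occurrence counts, deduces $G\ge\varepsilon\eta n^2$ and $Q\le16n^3$, applies Lemma~\ref{lem:bias_count}, and then counts nodes as in the proof of Theorem~\ref{thm:bias}. The one thing to add is the trivial case $n=0$, which the paper handles separately: there $Q=0$, and your argument for $Q>0$, which needs $\eta n>0$, does not apply.
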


The proof applies Lemma~\ref{lem:bias_count} directly. In a simple
formula, every variable has at most $4(n-1)$ binary and two unit
occurrences, so $Q\le16n^3$, while the $\eta n$ variables with linear
gaps give $G\ge\varepsilon\eta n^2$. Preprocessing keeps both bounds,
because it preserves effective gaps and never increases occurrence
counts. In particular, a variable keeps its gap even if preprocessing
turns some of its binary clauses into unit clauses.
Appendix~\ref{app:lineargap_proof} gives the details.

Counting unit clauses once instead of twice changes the gap of a simple
formula by at most $1$, so the corollary also holds for that count with a
slightly smaller $\varepsilon$.

\section{Semirandom formulas}\label{sec:random}

\subsection{Intuition}\label{sec:random_intuition}

In a \emph{semirandom} formula~\cite{feige2007refuting}, a set $E$ of pairs
of variables and the unit clauses are fixed arbitrarily, and the binary
clauses, all on pairs in $E$, are random. We use two models. In the
\emph{first model}, each pair in $E$ receives one of its four possible
clauses, chosen uniformly and independently. In the \emph{second model},
each of the four clauses on each pair in $E$ is included independently
with probability $1/4$. In both models the signs of the literals are
random.

\paragraph{Why Section~\ref{sec:bias} does not apply.}
With random signs, a variable is typically not biased. A variable with $d$
binary clauses occurs positively in about $d/2$ of them, with a deviation
of order $\sqrt d$, so its gap is of order $\sqrt d$ rather than a constant
fraction of $d$. For instance, a variable with $100$ binary clauses may
occur $55$ times as $v$ and $45$ times as $\neg v$. Its comparison $C(v)$
then allows at most $44$ of its $100$ literals to favor $v=1$, while about
$50$ do so on average, with standard deviation $5$. About $14\%$ of the
assignments to the neighbors still satisfy it. A single comparison now
excludes only a constant fraction of these assignments, however large $d$
is, and a fixed saving must come from many such weak comparisons together.

\paragraph{Weighting comparisons of different lengths.}
When weak comparisons are added up, their lengths matter. Consider the two
constraints
\[
 \sum_{i=1}^{100}z_i\le40,\qquad
 \sum_{i=1}^{400}z_i\le180,
\]
with distinct Boolean variables in each sum. Their requirements are $10$
and $20$, and the standard deviations of their left-hand sides under
uniform values are $5$ and $10$. Each asks for a deviation of two standard
deviations, so the two constraints are about equally restrictive. Adding
them unscaled would nevertheless give the second twice the weight of the
first. Dividing its two sides by $2$ puts both on the same scale without
changing its satisfying assignments. The algorithm therefore multiplies
the comparison of a variable with $d$ neighbors by
$1/\lceil\sqrt d\rceil$ before adding it, since with random signs both its
requirement and the standard deviation of its left-hand side are typically
of order $\sqrt d$. The weights are fixed from the input, and they are all
equal when all variables have the same number of neighbors.

\paragraph{Overlapping neighborhoods.}
In Section~\ref{sec:bias}, a bound on the occurrence counts kept a few
shared variables from satisfying many comparisons at once. Here the degrees
are arbitrary, and the random signs take over this role. A shared variable
occurs positively in some comparisons and negatively in others, and
opposite occurrences cancel in the sum: for sums $S,S'$ of other literals,
the constraints $y+S\le b$ and $\neg y+S'\le c$ add up to $S+S'\le b+c-1$, so
$y$ no longer influences the sum. This holds even when two comparisons
contain exactly the same variables.

\paragraph{Preferences along the branch.}
As in Section~\ref{sec:bias}, the algorithm branches on half of the
variables, here the first half in the input order, and records the
comparison of every branching variable that takes its nonpreferred value.
Unlike there, it computes the preferred value of the next variable in the
current residual formula, from the clause counts that remain after the
earlier branching variables have been assigned. The reason lies in the
analysis. Follow a uniformly random complete assignment. When the search
reaches a variable, its requirement is measured against its expected gain
given the values already fixed. The later values are still uniform, so
they move the gain away from this expectation by a random amount with mean
$0$. The total compensation supplied by the later values is therefore a sum
of terms with conditional mean $0$, and a martingale inequality bounds it.
The proof reveals the random signs at a variable only when the search
reaches it, so it never needs the comparisons to be independent.

\paragraph{The resulting bound.}
On the branch of a uniformly random assignment, about half of the
branching variables take their nonpreferred values. Except with
exponentially small probability, their weighted requirements add up to a
constant times the number of branching steps, while the later values
compensate only a smaller amount. The random assignment then violates the
summed comparisons. This holds for all but an exponentially small fraction
of formula--assignment pairs, which bounds the expected number of
candidates.

\subsection{The algorithm}\label{sec:random_algorithm}\label{sec:random_notation}

The algorithm for semirandom formulas, \textnormal{\textsc{RandomFormula}},
follows the common scheme of Section~\ref{sec:framework}. Unlike
\textnormal{\textsc{LiteralGap}}, it branches on the first half of its
variables in the input order, computes preferred values in the current
residual formula, and weights the comparisons before adding them.

\paragraph{Weighted sums.}
Give each constraint
$C:\sum_i\ell_i\le b_C$ a positive rational weight $\omega_C$, fixed when
it is created, multiply both sides by it, and add:
\begin{equation}
 C_\Sigma:\qquad
 \sum_{C\in\mathcal C}\omega_C\sum_i\ell_i
 \le \sum_{C\in\mathcal C}\omega_C b_C .
\label{eq:weighted}
\end{equation}
Every assignment satisfying all individual constraints satisfies this sum,
and with all weights equal to $1$ it is the ordinary
sum~\eqref{eq:ordinary_sum}. Collecting complementary occurrences as in
Section~\ref{sec:enumeration} brings it to the form~\eqref{eq:budget}, so
Lemma~\ref{lem:enumeration} enumerates precisely the completions that
satisfy it.

\paragraph{Setup.}
The algorithm first constructs the \emph{observed graph} $G_F=(V_F,E_F)$ of
the binary clauses, before preprocessing:
\begin{equation}
 E_F=\bigl\{\{u,v\}:F\text{ contains a binary clause on }u,v\bigr\},
\label{eq:observed_support}
\end{equation}
and $V_F$ is the set of endpoints of these edges. A
variable outside $V_F$ occurs only in unit clauses. It is set to the value
that satisfies more of its units, with ties broken towards $0$, and removed
from the search. We write $\alpha_0$ for these assignments, $s_0$ for the
number of clauses they satisfy, and $F_{\rm core}$ for the clauses on the
remaining variables. If $V_F$ is empty, this solves the instance. Otherwise,
let $\mathbf u_1,\ldots,\mathbf u_r$ list the variables of $V_F$ in their
input order, write $x_t$ for the value of $\mathbf u_t$, put
$T=\lceil r/2\rceil$, and record the weights
\begin{equation}
 \omega_t=\frac{1}{\lceil\sqrt{\deg_{G_F}(\mathbf u_t)}\rceil}.
 \label{eq:random_weights}
\end{equation}
Finally, $F_{\rm core}$ is preprocessed.

\paragraph{Search.}
The search branches on $\mathbf u_1,\ldots,\mathbf u_T$ in this order, depth
first. Before branching on $\mathbf u_t$, it computes the preferred literal
of $\mathbf u_t$ in the current residual formula
by~\eqref{eq:bias_preference} and constructs the constraint of
Lemma~\ref{lem:pruning} for it. This constraint, with weight $\omega_t$, is
added only in the child that takes the nonpreferred value. At every node the
search forms the weighted sum~\eqref{eq:weighted} of the constraints created
so far, brings it to the form~\eqref{eq:budget}, and rejects the node if the
bound is negative. At depth $T$ it enumerates the assignments to
$\mathbf u_{T+1},\ldots,\mathbf u_r$ that satisfy the weighted sum, using
Lemma~\ref{lem:enumeration}. It scores each completion exactly and keeps the
best one. Finally, the best assignment is extended by $\alpha_0$, and $s_0$
is added to its score. Appendix~\ref{app:random_implementation} specifies
how the procedures keep track of scores, constraints, and the best
solution.

\begin{algorithm}[tp]
\caption{\textnormal{\textsc{RandomFormula}}$(F)$}\label{alg:weighted}
\begin{algorithmic}[1]
\STATE Construct $G_F=(V_F,E_F)$ from $F$ by~\eqref{eq:observed_support}, before preprocessing.
\STATE Optimize variables outside $V_F$, obtaining $F_{\rm core},s_0,\alpha_0$ as above.
\IF{$V_F=\emptyset$}
  \STATE \textbf{return} $(s_0,\alpha_0)$.
\ENDIF
\STATE Initialize shared $U=(\mathbf u_1,\ldots,\mathbf u_r)$ in the original input order, where $r=|V_F|$.
\STATE $T\gets\lceil r/2\rceil$.
\COMMENT{Branch on the first $T$ variables, then enumerate completions.}
\STATE Initialize the shared weights $\omega_t$ from $G_F$ by~\eqref{eq:random_weights}.
\STATE Initialize shared $M\gets-\infty$ and $\alpha^\star\gets\emptyset$.
\STATE $(F,s)\gets\textnormal{\textsc{Preprocess}}(F_{\rm core})$.
\STATE $\textnormal{\textsc{RandomSearch}}(F,\emptyset,\emptyset,s,T)$.
\STATE \textbf{return} $(M+s_0,\alpha^\star\cup\alpha_0)$.
\end{algorithmic}
\end{algorithm}

\begin{algorithm}[tp]
\caption{\textnormal{\textsc{RandomSearch}}$(F,\mathcal C,\alpha,s,T)$}
\label{alg:random_search}
\begin{algorithmic}[1]
\REQUIRE Fixed shared data $U,(\omega_t)_{t=1}^r$ and shared best solution $M,\alpha^\star$, initialized above.
\STATE $t\gets|\alpha|$.
\COMMENT{The variables $\mathbf u_1,\ldots,\mathbf u_t$ have been assigned.}
\STATE Form $C_\Sigma$ by~\eqref{eq:weighted} and simplify it to~\eqref{eq:budget}.
\IF{its bound $B$ is negative}
  \STATE \textbf{return}
\ENDIF
\IF{$t=T$}
  \FOR{each suffix assignment $y$ satisfying $C_\Sigma$, by Lemma~\ref{lem:enumeration}}
    \STATE Let $q$ be $s$ plus the number of clauses of $F$ satisfied by $y$.
    \STATE If $q>M$, update shared $M\gets q$ and $\alpha^\star\gets\alpha\cup y$.
  \ENDFOR
  \STATE \textbf{return}
\ENDIF
\STATE Let $\mathbf u\gets\mathbf u_{t+1}$.
\COMMENT{The next variable in the fixed list $U$.}
\STATE $\ell\gets\ell_{\mathbf u}(F)$, the preferred literal in the current $F$, by~\eqref{eq:bias_preference}.
\STATE Construct $C$ from $F$ and $\ell$ by Lemma~\ref{lem:pruning}; give it weight $\omega_{t+1}$.
\FOR{$b=1,0$}
  \STATE Make separate copies $(F_b,\mathcal C_b,\alpha_b,s_b)$ of the current state.
  \IF{$b=0$}
    \STATE $\mathcal C_b\gets\mathcal C_b\cup\{C\}$.
  \ENDIF
  \STATE Assign $\ell=b$, updating the copied formula, constraints, assignment, and score.
  \STATE $\textnormal{\textsc{RandomSearch}}(F_b,\mathcal C_b,\alpha_b,s_b,T)$.
\ENDFOR
\end{algorithmic}
\end{algorithm}

\subsection{Correctness}

Since the preferred values now depend on the path, the proof breaks ties
between optima by their position in the search tree. It does not use
randomness.

\begin{lemma}[Correctness of weighted enumeration]\label{lem:exactA}
For every formula $F$, including formulas with repeated
clauses, \textnormal{\textsc{RandomFormula}} returns an optimum score and
an assignment attaining it.
\end{lemma}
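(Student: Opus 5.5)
The plan is to adapt the argument of Lemma~\ref{lem:gap_correctness}. Since the preferred values now depend on the path, the tie-breaking rule has to be defined through the search tree. The preliminary steps are routine. Variables outside $V_F$ occur only in unit clauses, and they share no clause with other variables. Fixing each of them to the value that satisfies more of its units is therefore optimal independently of the rest, and it contributes exactly $s_0$. \textsc{Preprocess} lowers every assignment's score by the recorded constant $s$, including for formulas with repeated clauses: Lemmas~\ref{lem:unit_canc} and~\ref{lem:simplify} hold for each copy. Hence it suffices to show that \textsc{RandomSearch}, started on the preprocessed core formula $F$, scores some optimal assignment of $F$. Every scored assignment is a genuine complete assignment scored exactly, so the maximum it records is then the optimum.

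To pick the optimal assignment, I associate with each complete assignment $x$ of $\mathbf u_1,\ldots,\mathbf u_r$ its path in the full (unpruned) search tree. At step $t\le T$ the residual formula is determined by $x_1,\ldots,x_{t-1}$, and so are the preferred literal $\ell_t(x)$ and the constraint $C_t(x)$ built at that node. Let $\pi(x)\in\bits^T$ record, for each $t$, whether $x_t$ equals the preferred value at its node. Among the optimal assignments, choose $x^*$ with $\pi(x^*)$ lexicographically largest, with $1>0$ and priority to earlier positions. Equivalently, $x^*$ follows the preferred child as early and as often as possible.

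The key step is to show that $x^*$ satisfies $C_t(x^*)$ at every $t\le T$ where $x^*_t$ is nonpreferred. Consider $y=x^*[\mathbf u_t\gets\text{preferred}]$. It agrees with $x^*$ on $\mathbf u_1,\ldots,\mathbf u_{t-1}$, so it reaches the same node and $\pi(y)$ agrees with $\pi(x^*)$ before position $t$. At position $t$, however, $\pi(y)$ has a $1$ where $\pi(x^*)$ has a $0$. The entries after $t$ may differ arbitrarily, but this does not matter, because the comparison is lexicographic. Hence $\pi(y)>\pi(x^*)$, and by the choice of $x^*$ the assignment $y$ is not optimal, that is, $F(y)<F(x^*)$.

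Next I apply Lemma~\ref{lem:pruning} in the residual formula at that node, where the constraint was constructed. The residual score differs from $F$ by the same constant for both $x^*$ and $y$, since they agree on the assigned variables. Strict superiority of the nonpreferred value therefore means that the residual assignment satisfies $C_t(x^*)$. Positive weighting preserves each satisfied inequality, and so does summing. Complementary cancellation and substitution of the prefix also preserve it, as in Section~\ref{sec:enumeration}.

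It follows that at every node on the path of $x^*$, the weighted sum~\eqref{eq:weighted} is satisfied by $x^*$. Its reduced bound is therefore nonnegative, since the left-hand side of~\eqref{eq:budget} is nonnegative, and the node is not rejected. At depth $T$, Lemma~\ref{lem:enumeration} lists exactly the suffixes that satisfy the sum, so the suffix of $x^*$ is scored.

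The main obstacle is the tie-breaking argument. A naive choice, such as maximizing the number of preferred values as in Lemma~\ref{lem:gap_correctness}, fails here, because flipping $\mathbf u_t$ changes the residual formulas, and with them the preferences at all later nodes. The lexicographic order on $\pi$ avoids this, since only the prefix up to $t$ matters, and this is exactly where care is needed. I also need to confirm that Lemma~\ref{lem:pruning} applies to residual formulas with repeated unit clauses, which holds because it treats each copy separately.
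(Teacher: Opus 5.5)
Your proposal is correct and follows the paper's proof. Choosing the optimum $x^*$ with lexicographically largest $\pi(x^*)$ is exactly the paper's choice of the earliest optimum in the full decision tree with the preferred child placed first at each of the first $T$ levels; the flip-to-preferred step, the appeal to Lemma~\ref{lem:pruning} in the residual formula at the node (which, like the paper, you rightly apply to each copy of a repeated clause, binary as well as unit), and the survival of the positively weighted sum through the rejection test and the suffix enumeration then go through identically.
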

\begin{proof}
Variables outside $V_F$ can be optimized independently from their units.
It therefore suffices to prove that the recursive search finds an optimum
for $F_{\rm core}$. Preprocessing changes the score of every assignment
by the same recorded constant. It therefore preserves both optimal assignments and all
comparisons between the two values of a variable.

Consider the full decision tree in the fixed variable order, restoring
any prefixes rejected by the algorithm. At each of the first $T$ levels,
place the preferred child first. Complete the rest of the tree in any
fixed order, and choose the earliest optimal complete assignment $x^*$.
If its path takes a nonpreferred child at variable $t$, changing only
that value to the preferred one produces an earlier assignment. Its
score must be strictly smaller: a larger score contradicts optimality,
and an equal score contradicts the choice of $x^*$.
Lemma~\ref{lem:pruning} therefore implies that $x^*$ satisfies the
constraint added at that step.

Thus $x^*$ satisfies every constraint created on its prefix, and hence
their positively weighted sum at every earlier depth. The sum test never
rejects that prefix, and enumeration at depth $T$ visits its suffix.
Every enumerated candidate is scored in $F_{\rm core}$, including
the preprocessing and assignment offsets. Adding the separately optimized
score $s_0$ and extending by $\alpha_0$ therefore returns an optimum for $F$.
\end{proof}

\subsection{Semirandom inputs and the running-time guarantee}\label{sec:random_inputs}

The following theorem states the two models of
Section~\ref{sec:random_intuition} formally. The set $E$ is arbitrary and
is not given to the algorithm. In the second model, a pair in $E$ receives
no clause with probability $(3/4)^4$, so the observed graph $G_F$ may be a
proper subgraph of $(V,E)$.

\begin{theorem}[Semirandom formulas]\label{thm:random_main}
There is an absolute constant $c_0\in(0,(\ln2)/2)$ with the following
property. Fix $n$ variables and any set $E$ of unordered variable pairs.
Generate the binary clauses by either of the following two models:
\begin{enumerate}
\item On each pair in $E$, choose exactly one of the four possible
clauses uniformly, independently across pairs.
\item Include each of the four possible clauses on each pair in $E$
independently with probability $1/4$, with all choices independent.
\end{enumerate}
Pairs outside $E$ have no binary clauses.
Additional unit clauses may be any multiset fixed independently of
these draws.

Then \textnormal{\textsc{RandomFormula}}, applied to the resulting formula
alone, uses polynomial space and has expected running time
$\Ostar(2^n e^{-c_0n})$. With probability at least $1-3e^{-c_0n/2}$,
its running time is $\Ostar(2^n e^{-c_0n/2})$.
It returns an optimum on every input.
\end{theorem}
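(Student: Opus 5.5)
Correctness is Lemma~\ref{lem:exactA}, and polynomial space follows from Lemma~\ref{lem:enumeration}, since the weights~\eqref{eq:random_weights} have polynomial encoding length. For the running time, Lemma~\ref{lem:enumeration} reduces everything to bounding $W(F)=\sum_\alpha N(\alpha)$. Here $W(F)$ is $2^r$ times the probability that a uniformly random assignment $x$ to $V_F$ satisfies the weighted sum of the constraints created along its own prefix path. Taking expectation over the formula as well, I would bound the probability over the joint draw of $(F,x)$ that $x$ is a candidate by $e^{-c_0 n}$, up to a factor $2^{n-r}$ accounting for the removed variables. The expected bound then follows directly. The high-probability bound follows by Markov's inequality applied to $W(F)/(2^ne^{-c_0n})$ with threshold $e^{c_0n/2}$. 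The extra $2e^{-c_0n/2}$ slack in $1-3e^{-c_0n/2}$ absorbs the failure event of a concentration step for the requirements, discussed below.

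The core is a martingale formulation. Reveal the process step by step. At step $t\le T$, reveal the random signs of the clauses joining $\mathbf u_t$ to not-yet-revealed neighbors, then the value $x_t$. At a nonpreferred step, Lemma~\ref{lem:pruning} says the constraint is $\mathrm{gain}\le -1$ in residual terms. Write its left-hand side minus bound as $R_t+M_t$. The term $R_t=|g_t|/2+1$ is the requirement, computed in the current residual formula. The term $M_t$ is the deviation of the contributions of unassigned neighbors from their uniform mean, and it has conditional mean $0$ given the history, because later values are still uniform. A candidate must then satisfy $\sum_{t\in I}\omega_t(R_t+M_t)\le 0$, where $I$ is the set of nonpreferred steps. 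Before collecting terms, I would rewrite $\sum_t\omega_t M_t$ variable by variable, as a sum over suffix and later-prefix variables $\mathbf w$ of $\mathbf w$'s centered value times its signed weighted occurrence count $c_{\mathbf w}=\sum_{t\in I,\,\mathbf u_t\sim\mathbf w}\pm\omega_t$. The signs here are the random clause signs. Given the signs and $I$, this is a sum of independent centered terms, so Hoeffding gives a tail of $\exp(-a^2/(2\sum_{\mathbf w}c_{\mathbf w}^2))$. Each $c_{\mathbf w}$ is a sum of $\deg(\mathbf w)$-many signed terms of size $\le 1/\sqrt{\deg}$ with independent random signs, so $\E c_{\mathbf w}^2\le\sum_{t\sim\mathbf w}\omega_t^2$. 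Summing over $\mathbf w$ gives $\E\sum_{\mathbf w} c_{\mathbf w}^2\le\sum_t\omega_t^2\deg(\mathbf u_t)\le T$. A Bernstein-type bound on the sign sums makes this hold with exponentially high probability. This is exactly the role of the random signs described in Section~\ref{sec:random_intuition}.

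It remains to show $\sum_{t\in I}\omega_tR_t\ge cT$ except with probability $e^{-\Omega(n)}$; I expect this to be the main obstacle. The residual gap $g_t$ at step $t$ is determined by signs not revealed before step $t$: those of binary clauses to unassigned neighbors, plus units created by earlier assignments. The latter depend on earlier signs and values, but the fresh signs alone already give $\E|g_t|\gtrsim\sqrt{d_t}$ conditionally. The precise claim is that, conditionally on the past, $|g_t|$ is anticoncentrated, with $\Pr(|g_t|\ge c\sqrt{d_t}\mid\text{past})\ge c'$. This needs care when most neighbors of $\mathbf u_t$ are already assigned. In that case the units are functions of random signs paired with uniform values, so they are again uniform $\pm1$ contributions, and anticoncentration of a sum of $d_t$ independent symmetric $\pm1$ or $\pm2$ terms, shifted by a past-measurable constant, still holds. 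Multiplying by $\omega_t\approx 1/\sqrt{d_t}$ makes each term $\omega_tR_t\,\mathbf 1[t\in I]$ bounded below in conditional expectation by a constant. The value $x_t$ is independent of the preference, so $t\in I$ has conditional probability $1/2$. The step sizes are not bounded, but I would truncate $\omega_tR_t$ at a constant, and Azuma or Freedman then gives a linear lower bound outside probability $e^{-\Omega(T)}$. Finally, unit clauses on $\mathbf u_t$ that were fixed in advance only shift $g_t$ deterministically, and anticoncentration is shift-invariant. The same steps apply to the second model, where each variable's sign contributions are differences of independent Bernoulli counts. Combining the three exponential bounds, with constants chosen so that $c_0<(\ln 2)/2$ and the $2^T$ prefix term is dominated, yields the theorem.
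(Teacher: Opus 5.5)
Your decomposition into accumulated requirements plus a mean-zero compensation matches the paper's architecture, as do the weights $\omega_t\approx1/\sqrt{d_t}$, the martingale tail bounds, and Markov's inequality. The compensation step, however, fails as you set it up. Your two-stage argument runs Hoeffding over the values given the signs, then uses a Bernstein bound to make $\sum_{\mathbf w}c_{\mathbf w}^2=O(T)$ with probability $1-e^{-\Omega(n)}$. This treats each later value $x_{\mathbf w}$ as one bit multiplying a random coefficient $c_{\mathbf w}$, and the second stage is false for general $E$.

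Take the first model on a star without unit clauses, with the center $\mathbf w$ last in the input order, so the prefix consists of leaves. Let $\sigma_t,\sigma'_t$ be the endpoint signs at $\mathbf u_t$ and at $\mathbf w$ of the clause on $\{\mathbf u_t,\mathbf w\}$. Then $\omega_t=1$ and $2p_t-1=\sigma_t$, so $c_{\mathbf w}=-\sum_{t\in I}\sigma'_t$ (with centered value $x_{\mathbf w}-\frac12$). This is a sum of about $T/2$ independent signs, independent of $I$. Hence $c_{\mathbf w}^2>\beta T$ has probability at least $e^{-O(\beta)}$ for $\beta$ up to order $T$. With $a=\Theta(T)$, the bound $e^{-a^2/(2\beta T)}+\Prb(\sum_{\mathbf w}c_{\mathbf w}^2>\beta T)$ is then at best $e^{-\Theta(\sqrt n)}$ for every $\beta$. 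The theorem itself holds here, since $A_T+Q_T\ge|I|$, but your method cannot show it. Moreover, you may not condition on $I$: given the signs and $I$, every prefix value is determined inductively by $[t\in I]$ and $p_t$. So the later-prefix part of your sum is not a sum of independent centered terms.

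The missing idea is the exposure of Lemma~\ref{lem:random_exposure}. Split each clause multiset into a base multiset and independent endpoint flips $\phi_{u,e}$. At step $t$, reveal the flips at $\mathbf u_t$ on all incident edges, then $x_t$. With $\theta_{v,e}=\phi_{v,e}(2x_v-1)$, the deviation on an edge $\{t,v\}$ with $t<v$ is $\phi_{t,e}\eta_{tv}\theta_{v,e}$. Given $\mathcal H_{v-1}$, the signs $\theta_{v,e}$ are independent and uniform, and their coefficients are predictable with absolute value at most $\omega_tm_0/2$. The exponential-moment bound then has the deterministic budget $\sum_vB_v\le m_0^2T$, and no quadratic form in the signs needs to be concentrated (Lemma~\ref{lem:random_compensation}).

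The same filtration repairs your requirement bound. In your exposure, the units created at $\mathbf u_t$ by earlier neighbors are past-measurable. A variable whose neighbors mostly precede it therefore has no conditional anticoncentration, and the randomness of already revealed signs and values is not available to Azuma or Freedman. In the paper's exposure, $\Delta_t=b_t+\sum_{v\sim t}\phi_{t,\{t,v\}}a_{tv}$ is a fresh symmetric sum over all $d_t$ edges. Even so, its conditional variance $Y_t$ depends on the path, so the conditional mean of the truncated term is only bounded below by $\omega_t\sqrt{Y_t}/13$, not by a constant. A second bounded-differences step over the independent edge data, using $\E\sqrt{Y_t}\ge c_1\sqrt{d_t}$, is therefore needed (Lemma~\ref{lem:random_requirements}). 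Finally, in the second model the observed graph is random, and with it the order, $T$, and the weights. Lemma~\ref{lem:observed_pairs} is what allows you to condition on that graph while keeping independence and sign symmetry.
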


Since $c_0$ does not depend on $E$, the bound holds at every clause
density. By averaging, it also holds when $E$ is random and drawn
independently of the clauses, for example a block-model graph as in
Section~\ref{sec:sbm}. Max-Cut on such a graph is a different problem: its
two clauses on each edge are fixed, not random, and Section~\ref{sec:sbm}
treats it.

\subsection{Proof of the running-time bound}

We first condition on the observed graph $G_F$. By
Lemma~\ref{lem:observed_pairs}, the clauses on its edges keep the
independence and sign symmetry that the proof uses, so the graph can be
treated as fixed although the algorithm constructs it from $F$. The proof
then has three steps. Step~1 writes the summed constraint of a uniformly
random assignment as the requirements created along its path plus the
compensation supplied by the later values. Step~2 shows that the
requirements are large and the compensation is small, except with
exponentially small probability. Step~3 turns this into a time bound.
Appendix~\ref{app:random_signs} proves the supporting lemmas.

\begin{proof}[Proof of Theorem~\ref{thm:random_main}]
Both models satisfy the assumptions of Definition~\ref{def:invariant}
in Appendix~\ref{app:random_signs} with $m_0=4$ and $\kappa=3/4$: a pair
carries at most four clauses, and the signed occurrence count at an
endpoint has second moment at least $3/4$. We use the constants
$c_1,c_0>0$ of Theorem~\ref{thm:weighted} for these values;
the end of Appendix~\ref{app:random_signs} checks the assumptions and
shows that $c_0<(\ln2)/2$.

Use \textnormal{\textsc{RandomFormula}} on $F$ alone.
Correctness is Lemma~\ref{lem:exactA}. Depth-first
search and suffix enumeration use polynomial space. Constructing $G_F$
and optimizing variables outside $V_F$ take polynomial time.

Condition on $G_F=H$, where $H$ is any possible nonempty observed graph,
and write $r=|V(H)|$. By Lemma~\ref{lem:observed_pairs}, its pair laws
satisfy the assumptions of Definition~\ref{def:invariant} in Appendix~\ref{app:random_signs},
with the same $m_0,\kappa$. Its vertex order, inherited from the input indices, and its
degrees are now fixed. All probabilities in Steps~1 and~2 refer to this
conditional model. The weights are rational with polynomial encoding
length, and satisfy
\begin{equation}
 \frac{1}{2\sqrt{\deg_H(\mathbf u_t)}}\le\omega_t
 \le\frac{1}{\sqrt{\deg_H(\mathbf u_t)}}.
 \label{eq:random_weight_bounds}
\end{equation}
Thus all arithmetic per search state has polynomial cost.

\par\medskip\noindent\textbf{1. Express the summed requirement.}
Choose a uniform assignment $x\in\bits^r$, independently of the remaining
random choices in $F$. For the analysis only, construct its first $T=\lceil r/2\rceil$
comparisons even if an earlier sum test would have rejected its prefix.
The gain $\mathrm{gain}_t(x)$ of $\mathbf u_t$, defined in
Section~\ref{sec:framework}, does not depend on $x_t$. Let $\Delta_t$ be its
expected value when $x_1,\ldots,x_{t-1}$ are fixed and the later values are
uniform. Then $p_t=1$ if $\Delta_t\ge0$ and $p_t=0$ otherwise is the
preferred value of $\mathbf u_t$ computed by the algorithm,
by~\eqref{eq:bias_preference} in the current residual formula.

Put $I=\{t\le T:x_t\ne p_t\}$, the indices at which a constraint is
created. Lemma~\ref{lem:pruning} expresses each such constraint as
$(2p_t-1)\mathrm{gain}_t(x)+1\le0$. Hence its weighted sum is
\begin{equation}
 A_T+Q_T\le0,
 \label{eq:random_sum}
\end{equation}
where
\begin{align*}
 A_T&=\sum_{t\in I}\omega_t(|\Delta_t|+1),\\
 Q_T&=\sum_{t\in I}\omega_t(2p_t-1)
                       \bigl(\mathrm{gain}_t(x)-\Delta_t\bigr).
\end{align*}
Here $A_T$ records the requirements at creation, while $Q_T$ records the
deviation supplied by the subsequent values. The identity follows from
$(2p_t-1)\Delta_t=|\Delta_t|$.
Preprocessing preserves each score difference, so these expressions
computed from the original clauses are exactly the ones used by the search.

\par\medskip\noindent\textbf{2. Bound the fraction that satisfies the sum.}
Lemma~\ref{lem:random_requirements} shows that the requirements are
large on most paths: $A_T\ge c_1T/52$ except with probability
$e^{-\Omega(r)}$. Lemma~\ref{lem:random_compensation} shows that the
later values rarely compensate much: $Q_T\ge-c_1T/104$ except with
probability $e^{-\Omega(r)}$. On the remaining event,
$A_T+Q_T\ge c_1T/104>0$, so the summed constraint fails. The calculation
in Appendix~\ref{app:random_main_completion} makes the exceptional
probabilities explicit and gives
\[
 \Prb_{F,x}(A_T+Q_T\le0\mid G_F=H)\le3e^{-c_0r}.
\]

\par\medskip\noindent\textbf{3. Convert the count to running time.}
Summing over the $2^r$ assignments to the searched variables, the
expected number of candidates on $H$ is at most $3\cdot2^re^{-c_0r}$,
which is at most $3\cdot2^ne^{-c_0n}$ because $r\le n$ and $c_0<\ln2$.
Lemma~\ref{lem:enumeration} turns this count into the expected time bound,
and Markov's inequality gives the high-probability bound.
Appendix~\ref{app:random_main_completion} gives the details.
\end{proof}

\section{Max-Cut on regular expander graphs}\label{sec:maxcut}

\subsection{Intuition}\label{sec:maxcut_intuition}

Recall from Section~\ref{sec:framework} that Max-Cut encodes every edge
$\{u,v\}$ of a graph $G$ by $(u\vee v)$ and $(\neg u\vee\neg v)$. The pair
contributes $1+[x_u\ne x_v]$, so maximizing the number of satisfied
clauses maximizes the cut. Without unit clauses, every gap is $0$ and no
sign is random, so neither Section~\ref{sec:bias} nor
Section~\ref{sec:random} applies. Unit clauses only add a constant to each
gain; we omit them in this overview. Throughout, $G$ is $d$-regular.

\paragraph{The comparison of a vertex.}
An assignment $x$ is a cut. We call the vertices $v$ with $x_v=1$ the
\emph{ones}; they form side~$1$. For a vertex $t$, let $k_t(x)$ be the
number of its neighbors among the ones. Moving $t$ from side~$0$ to
side~$1$ cuts its $d-k_t(x)$ edges to side~$0$ and uncuts its $k_t(x)$
edges to side~$1$, so
\[
 \mathrm{gain}_t(x)=d-2k_t(x).
\]
When a branch places $t$ on the side opposite to its preferred value, we
keep, as in Lemma~\ref{lem:pruning} but including ties, only the
assignments in which moving $t$ to the preferred side does not increase
the score. We call this constraint the \emph{standard comparison} of $t$.
For $t$ on side~$0$ it reads $k_t(x)\ge d/2$, and for $t$ on side~$1$ it
reads $k_t(x)\le d/2$: at least half of the neighbors of $t$ lie on the
other side. This is the constraint of Figure~\ref{fig:maxcut}.

\paragraph{One count moves all comparisons.}
Each standard comparison involves only $t$ and its neighbors, yet all of
them depend on one global number, the count of ones. If a completion puts
more vertices on side~$1$ than usual, every vertex sees more neighbors on
side~$1$, and the comparisons of all vertices placed on side~$0$ become
easier at the same time. For example, in a graph of edge density $1/2$, a
completion with $20$ more ones than average gives every vertex about $10$
more neighbors among the ones. In the complete graph, $k_t(x)$ is the
number of ones other than $t$, so every comparison depends only on this
count, just as the size of the cut does, and the comparisons cannot
accumulate: one suitable count satisfies all of them.

To separate this global effect, put $p=d/(n-1)$, the edge density of $G$,
and let $m_t(x)=\sum_{w\ne t}x_w$ be the number of ones other than $t$. On
average, $m_t$ other vertices contain $p\,m_t$ neighbors of $t$, so we split
\[
 k_t(x)=p\,m_t(x)+e_t(x),\qquad
 e_t(x)=\sum_{w\ne t}\bigl((A_G)_{tw}-p\bigr)\,x_w,
\]
where $A_G$ is the adjacency matrix of $G$. The first term, the
\emph{common term}, rises for all vertices whenever a one is added
anywhere. The second term, the \emph{excess} of $t$, has coefficients
$1-p$ on the $d$ neighbors of $t$ and $-p$ on the $n-1-d$ other vertices,
which sum to $0$. In the complete graph, $p=1$ and every excess vanishes.
This is why our result assumes $d\le(1-\varepsilon)n$ for a constant
$\varepsilon>0$.

\paragraph{Why expanders help.}
Apart from the common term, the comparisons depend on the excesses, and
how strongly the excesses are correlated depends on the graph. If the
neighborhoods of different vertices overlap roughly like random sets, the
excesses behave almost independently. We capture this by the eigenvalues
of $A_G$. List them as
$d=\lambda_1(A_G)\ge\lambda_2(A_G)\ge\dots\ge\lambda_n(A_G)$, and let
$\lambda(G)=\max_{i\ge2}|\lambda_i(A_G)|$ (and $\lambda(G)=0$ if $n\le1$).
Our result assumes the \emph{eigenvalue condition}
$\lambda(G)\le\lambda_0\sqrt d$ for a constant $\lambda_0$. Graphs that
satisfy it are called \emph{spectral expanders}. Under this condition, for
every unit vector $c$, the variance of $\sum_tc_te_t(x)$ under a uniformly
random assignment is at most $O(\lambda_0^2/\varepsilon)$ times the
variance of a single excess: no combination of excesses fluctuates much
more than one excess. The common term escapes this control. It corresponds
to the all-ones eigenvector of $A_G$, whose eigenvalue $d$ is excluded from
$\lambda(G)$.

\paragraph{Guessing the count of ones.}
Suppose we knew the number $h$ of ones in some optimal cut. Searching only
the \emph{slice} of assignments with exactly $h$ ones, with the standard
comparisons, would be exact: an optimum with $h$ ones lies in the slice
and satisfies every standard comparison. We do not know $h$, but it has
only $n+1$ possible values, so the algorithm makes one \emph{run} for each
guess $h\in\{0,\ldots,n\}$. Runs with a wrong guess need not retain an
optimum; they only cost time, which the analysis bounds for every run.

On the slice the common term no longer varies. Once $t$ is placed,
$m_t(x)=h-x_t$, and the standard comparison becomes a condition on the
excess of $t$ alone, with a threshold fixed by the run:
\begin{align*}
 t\text{ on side }0:&\qquad k_t(x)\ge\tfrac d2
   \iff e_t(x)\ge\tfrac d2-p\,h,\\
 t\text{ on side }1:&\qquad k_t(x)\le\tfrac d2
   \iff e_t(x)\le\tfrac d2-p\,h+p.
\end{align*}
The algorithm uses these \emph{count comparisons}
(Section~\ref{sec:countalgorithm}), with the first threshold relaxed by an
\emph{allowance} $p\le1$.

\paragraph{Leaving the slice.}
The count comparisons are linear inequalities, and the algorithm applies
them to all of $\bits^n$ without checking that a candidate has $h$ ones.
This only adds candidates, all of which are scored, and it simplifies the
analysis, because the bits of a uniformly random assignment are
independent on the whole cube. The rewriting makes this harmless: since
the coefficients of the excess sum to $0$, the count comparisons are
uncorrelated with the number of ones. The standard comparisons
$k_t(x)\ge d/2$ on the whole cube would instead bring the common term back.

\paragraph{Rounds and preferences.}
As in the previous sections, the algorithm branches on the first half of
the vertices, the prefix. It processes the prefix in \emph{rounds}: a first
round of about $n/4$ vertices and a second round with the rest of the
prefix. The preferred side of $t$ is given by the sign of the expected
\emph{adjusted gain} $d-2(p\,h+e_t(x))$, the gain with the number $m_t(x)$
of other ones replaced by $h$, when the values in earlier rounds are fixed
and all other values are uniform. Thus the preferences within a round are
fixed before the values of that round are examined, which keeps the
analysis in terms of a symmetric coefficient matrix. In run $h$, the
expected adjusted gain estimates the gain of $t$ in a cut with $h$ ones,
and a comparison asks the completion to reverse its sign. The expected
original gain would instead describe completions in which about half of
the remaining vertices are ones; if this disagreed with $h$, the comparison
of $t$ could already hold on average and impose no requirement.

\paragraph{The picture with a guessed count.}
Compared with Figure~\ref{fig:maxcut}, a run with guess $h$ changes the
preferred side: $t$ prefers side~$1$ when its expected adjusted gain is
nonnegative, not when most of its placed neighbors are on side~$0$. On cuts
with exactly $h$ ones, the recorded constraint is that of the figure, up to
the allowance: at least half of the neighbors of $t$ lie on the other side.
The algorithm only states it in terms of the excess of $t$, the number of
neighbors of $t$ among the ones minus the average $p\,h$. Adding ones at
random positions raises the neighbor counts of all vertices together but
leaves the excesses unchanged on average, so the constraints of different
vertices no longer move together.

\paragraph{Accumulating requirements.}
For a vertex $t$ in the second round, the expected adjusted gain depends on
the values in the first round: on how many of its first-round neighbors
are ones, compared with an average vertex. The eigenvalue condition gives
many edges between the rounds, so these expectations are typically of
order $\sqrt d$. When $t$ is placed against its preference, its comparison
requires the later values to compensate this amount. The compensation is a
linear function of the later values with coefficients
$-2((A_G)_{tw}-p)$, the deviations of the pair coefficients from their
common part (Section~\ref{sec:countalgorithm} defines both for general
formulas), and the eigenvalue condition bounds the largest absolute
eigenvalue of this matrix. A uniformly random completion therefore
satisfies the summed comparisons only with exponentially small
probability, and enumerating the solutions of the sum turns this into a
running-time saving.

\subsection{Notation and the count adjustment}\label{sec:countalgorithm}

Expanding the truth indicator of each clause writes the score as
\begin{equation}
 F(x)=c+\sum_v a_vx_v+\sum_{u<v}J_{uv}x_ux_v,
 \qquad x\in\bits^n,
 \label{eq:quadratic}
\end{equation}
where $J$ is symmetric with zero diagonal. We call the $J_{uv}$ the
\emph{pair coefficients}. A binary clause on $u,v$ with equal literal
polarities contributes $-1$ to $J_{uv}$, one with opposite polarities
contributes $+1$, and unit clauses change only $c$ and the $a_v$. The gain
of $t$ is therefore
\[
 \mathrm{gain}_t(x)=a_t+\sum_{w\ne t}J_{tw}x_w,
\]
so $J_{tw}$ is the change of $\mathrm{gain}_t$ when $x_w$ goes from $0$
to~$1$.

Fix a rational number $\beta$, the \emph{common part} that is subtracted
from every pair coefficient; for Max-Cut, $\beta$ is the average pair
coefficient. For a guess $h\in\{0,\ldots,n\}$, define the \emph{adjusted
gain}
\begin{equation}
 H_t^{h,\beta}(x)=\mathrm{gain}_t(x)
          +\beta\left(h-\sum_{w\ne t}x_w\right)
       =a_t+\beta h+\sum_{w\ne t}(J_{tw}-\beta)x_w.
 \label{eq:countgain}
\end{equation}
It replaces the term $\beta\sum_{w\ne t}x_w$ of the gain by its value
$\beta h$ on assignments with $h$ ones. The remaining coefficients are the
\emph{deviations} $J_{tw}-\beta$.

\begin{lemma}[Gains on assignments with $h$ ones]\label{lem:slice}
Let $x\in\bits^n$ satisfy $\sum_wx_w=h$. Then for every variable $t$,
\[
 H_t^{h,\beta}(x)=\mathrm{gain}_t(x)+\beta x_t.
\]
Consequently, for every variable $t$ and every $p_t\in\bits$,
\[
 (2p_t-1)H_t^{h,\beta}(x)-|\beta|\le(2p_t-1)\,\mathrm{gain}_t(x).
\]
\end{lemma}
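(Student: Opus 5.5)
The first identity is a direct substitution into the definition~\eqref{eq:countgain}, and the inequality then follows from a case split on $p_t$.

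Fix $x$ with $\sum_w x_w=h$. Then $\sum_{w\ne t}x_w=h-x_t$, so the bracket in~\eqref{eq:countgain} equals
\[
 h-\sum_{w\ne t}x_w=x_t .
\]
This gives $H_t^{h,\beta}(x)=\mathrm{gain}_t(x)+\beta x_t$ at once. Here I use that $\mathrm{gain}_t(x)=a_t+\sum_{w\ne t}J_{tw}x_w$ does not depend on $x_t$, by the expansion~\eqref{eq:quadratic} with zero diagonal of $J$. The identity therefore holds for either value of $x_t$.

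For the inequality, multiply the identity by $s=2p_t-1\in\{-1,1\}$:
\[
 sH_t^{h,\beta}(x)=s\,\mathrm{gain}_t(x)+s\beta x_t .
\]
Since $x_t\in\{0,1\}$ and $|s|=1$, we have $s\beta x_t\le|\beta|x_t\le|\beta|$. Subtracting $|\beta|$ from both sides yields
\[
 sH_t^{h,\beta}(x)-|\beta|\le s\,\mathrm{gain}_t(x),
\]
which is the claim. No case split on $p_t$ beyond the sign $s$ is needed, and the bound does not require $\beta\ge0$.

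I do not expect a real obstacle. The only point to check is that $\mathrm{gain}_t$ excludes the $w=t$ term, so that replacing $\sum_{w\ne t}x_w$ by $h$ introduces exactly the error $\beta x_t$. This is immediate from the zero diagonal of $J$ in~\eqref{eq:quadratic}.
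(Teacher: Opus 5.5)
Your proof is correct and is essentially the paper's argument: substitute $\sum_{w\ne t}x_w=h-x_t$ into the first expression of~\eqref{eq:countgain}, then use $(2p_t-1)\beta x_t\le|\beta|$ for $x_t\in\bits$. Your remark that $\mathrm{gain}_t$ does not depend on $x_t$ is harmless but not needed, since the definition already gives $\mathrm{gain}_t(x)+\beta x_t$ directly; likewise your opening mention of a case split on $p_t$ is unnecessary, as you note yourself at the end.
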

\begin{proof}
Substitute $\sum_{w\ne t}x_w=h-x_t$ into the first expression
in~\eqref{eq:countgain}. The inequality follows since
$(2p_t-1)\beta x_t\le|\beta|$ for $x_t\in\bits$.
\end{proof}

When $t$ is placed against its preferred value $p_t$, its \emph{standard
comparison} is $(2p_t-1)\mathrm{gain}_t(x)\le0$: moving $t$ to $p_t$ does not
increase the score. Algorithm~\ref{alg:counts} uses instead the
\emph{count comparison} $(2p_t-1)H_t^{h,\beta}(x)\le|\beta|$, with
\emph{allowance} $|\beta|$. By the lemma, on assignments with exactly $h$
ones, every standard comparison implies the corresponding count
comparison, so run $h$ keeps every optimum with $h$ ones
(Lemma~\ref{lem:count_correctness}).

For the Max-Cut encoding of a $d$-regular graph without unit clauses,
$J_{uv}=-2(A_G)_{uv}$, and we take $\beta=-2p$, the average pair
coefficient, with $p=d/(n-1)$. In the notation of
Section~\ref{sec:maxcut_intuition},
\[
 \mathrm{gain}_t(x)=d-2\bigl(p\,m_t(x)+e_t(x)\bigr),\qquad
 H_t^{h,\beta}(x)=d-2\bigl(p\,h+e_t(x)\bigr),
\]
so the adjusted gain and the count comparisons are those of
Section~\ref{sec:maxcut_intuition}. The allowance $|\beta|=2p$ corresponds
to the allowance $p$ on the excess there, since
$\mathrm{gain}_t=d-2k_t$.

\subsection{The algorithm}

The procedure \textnormal{\textsc{CountComparisons}} follows the common scheme
of Section~\ref{sec:framework}, with three changes: it makes one run for
every guess $h$ of the number of ones, it uses count comparisons, and it
computes preferred values from the expected adjusted gain in rounds, so
that the preferred value of a variable depends only on the values in
earlier rounds.

\paragraph{Input.}
The procedure receives $F$, the rational common part $\beta$, and ordered,
nonempty rounds $B_1,\ldots,B_\ell$ that partition a prefix of $T$ variables
in a fixed input order; the remaining variables form the suffix. For Max-Cut
on a $d$-regular graph, Section~\ref{sec:maxcut_result} uses
$\beta=-2d/(n-1)$ and two rounds of sizes $\lfloor n/4\rfloor$ and
$\lfloor n/2\rfloor-\lfloor n/4\rfloor$.

\paragraph{Setup.}
Apply \textnormal{\textsc{Preprocess}}, record its score offset $s_0$, and
compute the coefficients $a,J$ of the preprocessed formula
by~\eqref{eq:quadratic}. Preprocessing changes only the constant part of the
objective, so it preserves all gains.

\paragraph{Search.}
For each guess $h=0,\ldots,n$, enumerate all assignments $\alpha$ to the
prefix and process its rounds in order. (When $\beta=0$, the comparisons do
not depend on $h$, and the single guess $h=0$ suffices.) For a variable $t$
in round $B_j$, compute the expected adjusted gain $\Delta_t$, that is,
$H_t^{h,\beta}$ with the variables of $B_{<j}=B_1\cup\dots\cup B_{j-1}$
fixed to $\alpha$ and every other variable replaced by $1/2$
in~\eqref{eq:countgain}. Set $p_t=1$ if $\Delta_t\ge0$ and $p_t=0$
otherwise. If $\alpha_t\ne p_t$, add the count comparison
\[
 (2p_t-1)H_t^{h,\beta}(x)\le|\beta|
\]
to a running sum $L$, which records left-hand sides minus right-hand sides.
After the prefix, enumerate the suffix assignments $y$ with
$L(\alpha,y)\le0$ by Lemma~\ref{lem:enumeration}, score each candidate in the
actual formula, and keep the best one over all guesses and prefixes.

\begin{algorithm}[tp]
\caption{\textnormal{\textsc{CountComparisons}}$(F,\beta,B_1,\ldots,B_\ell)$}
\label{alg:counts}
\begin{algorithmic}[1]
\STATE $M\gets-\infty$; $\alpha^\star\gets\emptyset$.
\STATE $(F,s_0)\gets\textnormal{\textsc{Preprocess}}(F)$.
\STATE Compute $a,J$ from $F$ by~\eqref{eq:quadratic}.
\STATE Set $\mathcal H\gets\{0,\ldots,n\}$ if $\beta\ne0$, and $\mathcal H\gets\{0\}$ otherwise.
\COMMENT{One run suffices when the comparison does not depend on $h$.}
\FOR{each $h\in\mathcal H$}
  \FOR{each prefix assignment $\alpha$ on $B_1\cup\cdots\cup B_\ell$}
    \STATE $L(x)\gets0$.
    \FOR{each round $B_j$, and each variable $t\in B_j$}
      \STATE Compute the average $\Delta_t$ of $H_t^{h,\beta}$ with $x_{B_{<j}}=\alpha_{B_{<j}}$ and all other $x_w=1/2$.
      \STATE Set $p_t\gets1$ if $\Delta_t\ge0$, and $p_t\gets0$ otherwise.
      \IF{$\alpha_t\ne p_t$}
        \STATE $L(x)\gets L(x)+(2p_t-1)H_t^{h,\beta}(x)-|\beta|$.
      \ENDIF
    \ENDFOR
    \FOR{each suffix $y$ with $L(\alpha,y)\le0$, by Lemma~\ref{lem:enumeration}}
      \STATE Let $q\gets s_0+F(\alpha,y)$.
      \STATE If $q>M$, set $M\gets q$ and $\alpha^\star\gets\alpha\cup y$.
    \ENDFOR
  \ENDFOR
\ENDFOR
\STATE \textbf{return} $(M,\alpha^\star)$.
\end{algorithmic}
\end{algorithm}

\subsection{Correctness}

\begin{lemma}[Correctness of count comparisons]\label{lem:count_correctness}
For every $F$, every rational $\beta$, and every ordered prefix partition,
\textnormal{\textsc{CountComparisons}} returns an optimum score and an
assignment attaining it.
\end{lemma}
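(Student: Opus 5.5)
We follow the pattern of Lemmas~\ref{lem:gap_correctness} and~\ref{lem:exactA}. Preprocessing subtracts the same constant $s_0$ from every score and leaves all gains unchanged, so it suffices to find an optimum of the preprocessed formula. Let $h^*$ be the number of ones of some optimal assignment. Among all optimal assignments with exactly $h^*$ ones, we pick a specific one, $x^*$. We then show that in run $h=h^*$, with prefix $\alpha=x^*|_P$, the suffix $x^*|_{V\setminus P}$ satisfies $L(\alpha,y)\le0$. If $\beta=0$, the adjusted gain does not depend on $h$, so the single run $h=0$ plays the role of $h^*$ and the argument is the same.

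The choice of $x^*$ needs care. In the earlier lemmas, ties were broken toward preferred values, and flipping a variable gave a strictly worse assignment. Here flipping $t$ changes the number of ones, so the flipped assignment leaves the slice. We avoid the tie-breaking entirely, because count comparisons admit ties. Let $x^*$ be any optimal assignment with $h^*$ ones, and let $t$ be a prefix variable with $x^*_t\ne p_t$, where $p_t$ is computed in run $h^*$ from $\alpha$. Note that $p_t$ depends only on $\alpha$ restricted to earlier rounds, so it is the value the algorithm actually uses. Optimality of $x^*$ over all of $\bits^n$ means that moving $t$ to $p_t$ does not increase the score, and this holds even though the moved assignment has a different count. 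Moving $t$ from $x^*_t$ to $p_t=1-x^*_t$ changes the score by $(2p_t-1)\,\mathrm{gain}_t(x^*)$, so we obtain the standard comparison $(2p_t-1)\,\mathrm{gain}_t(x^*)\le0$.

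Since $\sum_w x^*_w=h^*$, Lemma~\ref{lem:slice} gives
\[
(2p_t-1)H_t^{h^*,\beta}(x^*)-|\beta|\le(2p_t-1)\,\mathrm{gain}_t(x^*)\le0.
\]
So every term added to $L$ is nonpositive at $x^*$, and hence $L(x^*)\le0$. Lemma~\ref{lem:enumeration} lists exactly the suffixes with $L(\alpha,y)\le0$, with rational weights after the reduction to form~\eqref{eq:budget}. Therefore $x^*$ is enumerated in run $h^*$ and scored as $s_0+F(x^*)$, which is the optimum.

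Conversely, every recorded value is the true score of some complete assignment, so the maximum $M$ equals the optimum and $\alpha^\star$ attains it. The only delicate step is confirming that the rounds structure makes $p_t$ a function of $\alpha$ alone, so that the comparison built by the algorithm is the one we verified. This is immediate, since $\Delta_t$ uses only $\alpha_{B_{<j}}$ and the fixed substitution $x_w=1/2$ for all other variables.
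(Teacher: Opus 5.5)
Your proof is correct and takes the same route as the paper. You fix any optimum $x^*$, work in the run with its true count (or the single run when $\beta=0$), and obtain $(2p_t-1)\mathrm{gain}_t(x^*)\le 0$ from optimality over the whole cube; the slice identity then turns this into the count comparison with allowance $|\beta|$, so the non-strict summed test keeps $x^*$ without any tie-breaking.
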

\begin{proof}
Fix an optimal complete assignment $x^*$ and its correct count
$h=\sum_wx_w^*$. If $x_t^*\ne p_t$, changing just this variable to $p_t$
cannot increase the score, so
$(2p_t-1)\mathrm{gain}_t(x^*)\le0$.
For the correct count,~\eqref{eq:countgain} therefore gives
\[
 (2p_t-1)H_t^{h,\beta}(x^*)
 =(2p_t-1)\mathrm{gain}_t(x^*)+(2p_t-1)\beta x_t^*
 \le|\beta|.
\]
Every comparison on its prefix holds, and hence so does their sum.
The corresponding suffix is enumerated in this run.
When $\beta=0$, the comparisons do not depend on $h$, so the single run
also retains $x^*$. Preprocessing preserves all gains and records its
constant score change. Scoring candidates with that constant included
proves the claim.
\end{proof}

\subsection{Running-time guarantee}\label{sec:maxcut_result}

We use $A_G$ and $\lambda(G)$ as defined in Section~\ref{sec:maxcut_intuition}.

\begin{corollary}[Max-Cut on regular expanders]\label{cor:maxcut}
Fix $\lambda_0\ge1$ and $0<\varepsilon\le1/2$. There are
$c=c(\lambda_0,\varepsilon)<2$ and a deterministic polynomial-space
algorithm that solves Max-Cut exactly on every graph and runs in time
$\Ostar(c^n)$ on every $d$-regular graph $G$ with
\[
 d\le(1-\varepsilon)n,\qquad \lambda(G)\le\lambda_0\sqrt d.
\]
The same holds after adding any multiset of unit clauses to the Max 2-SAT
encoding with $(u\lor v)$ and $(\neg u\lor\neg v)$ on each edge.
\end{corollary}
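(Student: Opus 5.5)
The algorithm is \textnormal{\textsc{CountComparisons}} with $\beta=-2d/(n-1)$ and two rounds $B_1,B_2$ of sizes $\lfloor n/4\rfloor$ and $\lfloor n/2\rfloor-\lfloor n/4\rfloor$. It receives $\lambda_0,\varepsilon$ only to fix $c$. Correctness on every graph, and with arbitrary unit clauses, is Lemma~\ref{lem:count_correctness}. By Lemma~\ref{lem:enumeration}, applied once for each of the $n+1$ runs, it then suffices to show that for every fixed guess $h$ a uniformly random $x\in\bits^n$ satisfies $L(x)\le0$ with probability $e^{-\Omega(n)}$, where the constant depends only on $\lambda_0,\varepsilon$.

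\textbf{Decomposition.} Write $M_{tw}=J_{tw}-\beta=-2((A_G)_{tw}-p)$ for $t\ne w$, so $H_t=a_t+\beta h+\sum_{w\ne t}M_{tw}x_w$. Unit clauses only shift $a_t$. Put $\tilde x_w=x_w-1/2$. For $t\in B_j$, the quantity $\Delta_t$ fixes the variables of $B_{<j}$ and averages the rest, so $H_t-\Delta_t=\sum_{w\notin B_{<j},\,w\ne t}M_{tw}\tilde x_w$.

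Let $I$ be the prefix variables that take their nonpreferred value. Then $L=\sum_{t\in I}\bigl(|\Delta_t|-|\beta|\bigr)+\sum_{t\in I}(2p_t-1)(H_t-\Delta_t)$. Since $|\beta|\le2$, the first sum is a requirement term and the second is a compensation term $Q$.

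\textbf{Requirements.} Fix $x_{B_1}$. The first-round contributions to $\Delta_t$, for $t\in B_2$, form the vector $M_{B_2B_1}\tilde x_{B_1}$. Its entries are sums of independent $\pm1/2$ terms, and each has variance $\frac14\sum_{w\in B_1}M_{tw}^2$.

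By the expander mixing lemma, $t$ has about $d/4$ neighbours and $(n-d)/4$ non-neighbours in $B_1$ for most $t$. Since $d\le(1-\varepsilon)n$, this variance is $\Omega(\varepsilon d)$.

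The term $a_t+\beta h$ plus the constant part of $\Delta_t$ coming from $B_2$ and the suffix is a fixed shift $s_t$. Anti-concentration then gives $\Pr(|\Delta_t|\ge c_1\sqrt d)\ge c_2$ for each $t$, whatever $s_t$ is. It remains to show that these events hold for a constant fraction of $B_2$ with probability $1-e^{-\Omega(n)}$.

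My first attempt: use the second-moment bound $\|M_{B_2B_1}\|\le2\lambda(G)+O(p)\le O(\lambda_0\sqrt d)$, since the $-p$ shift removes the all-ones direction. This shows no large set of $t$ can have small $|\Delta_t|$ simultaneously without $\|M\tilde x\|$ being atypically concentrated. Turning it into an exponential bound may require a Paley--Zygmund argument on $\sum_t\min(\Delta_t^2,Cd)$ plus bounded differences in $x_{B_1}$. Each coordinate changes this sum by $O(d)$, which gives a deviation of order $nd$ with variance proxy $n\cdot d^2$, hence $e^{-\Omega(n)}$.

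Independently of $x_{B_1}$, the values $x_t$ for $t\in B_2$ are uniform, and $p_t$ is fixed before them. So about half of the vertices with large $|\Delta_t|$ land in $I$ by Chernoff. This gives a requirement of $\Omega(n\sqrt d)$ once $c_1\sqrt d$ exceeds $2|\beta|$. Small $d$ is a bounded case: when $d=O(1)$ we still need $c_1\sqrt d>2$, so I would shrink $|\beta|$'s effect by noting $|\beta|=2p=O(1/n)$ there.

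\textbf{Compensation (main obstacle).} For $t\in B_2$, $Q$ involves the suffix and the other $B_2$ values. Condition on $x_{B_1}$, so the signs $\sigma_t=(2p_t-1)$ are fixed. Then $Q_2=\sum_{t\in B_2}\mathbf 1[x_t\ne p_t]\,\sigma_t\sum_{w\in B_2\cup S,\,w\ne t}M_{tw}\tilde x_w$. This is a quadratic form in the independent bits of $B_2\cup S$, because the indicator depends on $x_t$.

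I would write it as $\tilde x^{\top}N\tilde x$ plus a linear part, with $\|N\|\le\|M\|=O(\lambda_0\sqrt d)$ and Frobenius norm $O(\sqrt{n}\cdot\sqrt d)$. Hanson--Wright then gives $\Pr(Q_2\le-\delta n\sqrt d)\le\exp\bigl(-\Omega(\min(\delta^2n^2d/(nd),\,\delta n\sqrt d/(\lambda_0\sqrt d)))\bigr)=e^{-\Omega(n)}$. The linear part has variance $O(nd)$ and is handled by Hoeffding.

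The first-round comparisons have $\Delta_t$ depending only on $a_t+\beta h$ and a fixed average. Their requirement is nonnegative minus $|\beta|$, and their compensation term is a similar Hanson--Wright form over all later variables. So they only subtract $\delta n\sqrt d$ with exponentially small failure probability.

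Choosing $\delta$ below half the requirement constant gives $L>0$ except with probability $e^{-c'n}$, uniformly in $h$. The count of candidates is then at most $(n+1)2^ne^{-c'n}$, and the prefix tree costs $2^{n/2}$, which yields $c<2$. The delicate point is the requirement lower bound for graphs with small $d$, together with making the expander-mixing anti-concentration hold for a linear fraction with exponentially small failure probability.
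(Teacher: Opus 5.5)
There are two genuine gaps. The rest of your outline is sound: the split into requirements and compensation, the Chernoff count of creators, and Hanson--Wright for the compensation. Hanson--Wright is a valid substitute for the paper's predictable-sign estimate and random-partition decoupling.

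\emph{Small degrees.} You run \textsc{CountComparisons} on every graph, but the condition $\lambda(G)\le\lambda_0\sqrt d$ is vacuous for $d\le\lambda_0^2$. The mixing bound $e(B_1,B_2)\ge d|B_1||B_2|/n-\lambda(G)\sqrt{|B_1||B_2|}$ is informative only when $\lambda(G)<d/4$, which the hypothesis guarantees only for $d>16\lambda_0^2$.

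This breaks the algorithm, not just the proof. On the empty graph ($d=0$, $\lambda(G)=0$, no units) you get $\beta=0$ and $H_t=\Delta_t=0$ for all $t$. Every comparison then reads $0\le0$, so all $2^n$ assignments are enumerated. Take instead a disjoint union of copies of $K_{d+1}$ with $d\le\lambda_0^2$, labelled clique by clique. Almost no edge joins $B_1$ to $B_2$, so the first-round variance of $\Delta_t$ is about $p^2|B_1|=O(d^2/n)$, not $\Omega(\varepsilon d)$. Noting $|\beta|=O(1/n)$ only shrinks the allowance; the requirement itself vanishes.

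The paper adds a separate branch for maximum degree below a threshold $D\ge64\lambda_0^2$. It branches on the complement of a greedy independent set $S$ with $|S|\ge n/(d+1)$ and optimizes each vertex of $S$ separately. This takes time $\Ostar(2^{(1-1/D)n})$ (Lemma~\ref{lem:bounded_degree}). \textsc{CountComparisons} is used only for $d\ge D$. There $d\ge64\lambda_0^2$ gives $\lambda(G)\le d/8$, $\|N\|\le3\lambda_0\sqrt d$ and $e(B_1,B_2)\ge dn/40$.

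\emph{Concentration of the requirement.} Your bounded-differences step does not yield $e^{-\Omega(n)}$.
\begin{itemize}
\item Flipping one $x_w$ with $w\in B_1$ shifts every $\Delta_t$, $t\in B_2$, by $M_{tw}$, and $\sum_{t\in B_2}|M_{tw}|=\Theta(d)$. So $\sum_t\min(\Delta_t^2,Cd)$ can change by up to $2\sqrt{Cd}\sum_t|M_{tw}|=\Theta(d^{3/2})$ per coordinate, not $O(d)$.
\item Arbitrary unit clauses can align all $\Delta_t$ with a given column, so no better uniform bound exists.
\item The bound $O(d)$ does hold for $\sum_t\min(|\Delta_t|,\sqrt{Cd})$, but that sum has mean only $\Theta(n\sqrt d)$.
\end{itemize}
In both versions McDiarmid gives only $\exp(-\Omega(n/d))$, which is $e^{-O(1)}$ for Paley graphs.

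Your variance proxy $nd^2$ is correct only as a \emph{pointwise} budget. At each $x$, the vector of single-flip changes is, to first order, $M_{B_2B_1}^{\mathsf T}\psi$ with $\|\psi\|_2^2=O(nd)$. Its squared length is therefore at most $\|M_{B_2B_1}\|^2\|\psi\|_2^2=O(\lambda_0^2nd^2)$. This is where the operator-norm bound from your first attempt has to enter. Using it needs a concentration inequality that requires only $\sum_i(f(x)-f(x^{(i)}))^2\le B$ at every $x$, such as the entropy-method bound of Lemma~\ref{lem:counts-concentration}. You also need to smooth the kink at the clipping level; that combination is exactly Lemma~\ref{lem:counts-profile}.

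For the mean, the paper also avoids your claim that most $t\in B_2$ have about $d/4$ neighbours in $B_1$, which needs $d\ge C\lambda_0^2$ with a large $C$. It uses $\E|\Delta_t|\ge\frac12\sqrt{W_t/3}$ and $\sum_{t\in B_2}\sqrt{W_t}\ge\varepsilon\,e(B_1,B_2)/\sqrt d$ instead.
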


The algorithm is as follows. Fix an integer degree threshold
$D\ge64\lambda_0^2$ and an integer size threshold $N_0\ge n_0(\varepsilon)$,
where $n_0(\varepsilon)$ is the threshold of
Lemma~\ref{lem:expander_conditions} below. If $n<N_0$, enumerate all
assignments. Otherwise, let $d$ be the maximum degree of the input graph.
If $d<D$, greedily construct an independent set, branch on its complement,
and optimize the remaining vertices separately, including any unit clauses.
If $d\ge D$, use Algorithm~\ref{alg:counts} with $\beta=-2d/(n-1)$,
$T=\lfloor n/2\rfloor$, and two prefix rounds of sizes $\lfloor n/4\rfloor$
and $T-\lfloor n/4\rfloor$ in a fixed arbitrary vertex order. The algorithm
computes no spectral quantities; the eigenvalue condition is used only in
the analysis.

\subsection{Proof of the running-time bound}

The proof combines three lemmas. Lemma~\ref{lem:count_accumulation} gives
three conditions on the deviations $J_{uv}-\beta$ under which
\textnormal{\textsc{CountComparisons}} lists only an exponentially small
fraction of all assignments. It is a special case of the general
accumulation theorem proved in Appendix~\ref{app:accumulation}.
Lemma~\ref{lem:expander_conditions} shows that the Max-Cut encoding of a
regular graph with small nontrivial eigenvalues satisfies these
conditions. Lemma~\ref{lem:bounded_degree} handles graphs of bounded
degree, for which a simpler argument suffices.

For the chosen $\beta$, let $N$ be the matrix of deviations,
$N_{uv}=J_{uv}-\beta$ for $u\ne v$ and $N_{uu}=0$. Let $D_N$ be the mean
squared row norm of $N$, and $W_t$ the squared norm of the part of row $t$
in the first round:
\[
 D_N=\frac1n\sum_{u\ne v}N_{uv}^2,\qquad
 W_t=\sum_{w\in B_1}N_{tw}^2\quad(t\in B_2).
\]
We write $\|N\|=\sup_{z\ne0}\|Nz\|_2/\|z\|_2$ for the operator norm,
and use the same notation for other matrices; for a symmetric matrix it is
the largest absolute eigenvalue. We also use the Frobenius norm
$\|N\|_F=(\sum_{u,v}N_{uv}^2)^{1/2}$. The symbols $\ind$ and $I$
denote the all-ones vector and the identity matrix, respectively.

\begin{lemma}[Accumulation with one guessed count]\label{lem:count_accumulation}
Fix $g,K>0$. Use two prefix rounds of sizes $\lfloor n/4\rfloor$ and
$T-\lfloor n/4\rfloor$, where $T=\lfloor n/2\rfloor$. Suppose $D_N>0$ and
\[
 \|N\|^2\le KD_N,\qquad
 \sum_{t\in B_2}\sqrt{W_t}\ge4\sqrt3\,gn\sqrt{D_N},\qquad
 T|\beta|\le\frac18gn\sqrt{D_N}.
\]
For sufficiently large $n$ in terms of $g,K$,
\textnormal{\textsc{CountComparisons}} takes polynomial space and
$\Ostar((n+1)(2^{n/2}+2^ne^{-cn}))$ time, for a constant $c=c(g,K)>0$.
The linear coefficients of the objective may be arbitrary.
\end{lemma}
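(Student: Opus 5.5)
By Lemma~\ref{lem:enumeration}, applied once per guess $h$, it suffices to bound, for each fixed $h$, the number of candidates by $2^ne^{-cn}$. Equivalently, we bound the probability that a uniformly random $x\in\bits^n$ satisfies $L(x)\le0$, where $L$ is the sum built on the prefix $x|_P$. The enumeration of each sum is polynomial per node, and there are $n+1$ runs. This gives the stated time $\Ostar((n+1)(2^{n/2}+2^ne^{-cn}))$ in polynomial space. Only the first-round values are needed to fix the preferences, so we argue round by round.

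\textbf{Decomposition.} Write $y_w=x_w-\tfrac12$, which is uniform on $\{\pm\tfrac12\}$. For $t\in B_1$, the expected adjusted gain $\Delta_t$ is a constant independent of $x$. For $t\in B_2$, it is $\Delta_t=c_t+\sum_{w\in B_1}N_{tw}y_w$ with a constant $c_t$, and $H_t^{h,\beta}(x)=\Delta_t+\sum_{w\notin B_1,w\ne t}N_{tw}y_w$ for $t\in B_2$. Let $I$ be the prefix variables placed against their preference. Then
\[
 L(x)=\sum_{t\in I}\bigl(|\Delta_t|-|\beta|\bigr)+\sum_{t\in I}(2p_t-1)\sum_{w\in R_t}N_{tw}y_w ,
\]
where $R_t$ is the set of variables outside $B_{\le j(t)}$, plus within-round terms. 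We simply lower-bound the first-round part of the requirement by $0$, keeping the first-round comparisons only through their compensation terms. For the second round, condition on $y_{B_1}$. The preferences $p_t$ are then fixed, and each $t\in B_2$ lies in $I$ independently with probability $1/2$, according to $x_t$.

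\textbf{Step 1: requirement.} Let $A=\sum_{t\in I\cap B_2}|\Delta_t|$. Since $\Delta_t$ is a linear form in $y_{B_1}$ with coefficient vector of squared norm $W_t/4$, anticoncentration gives $\E|\Delta_t|\ge c'\sqrt{W_t}$. I would use the Paley--Zygmund bound via the fourth moment, which is at most $3$ times the squared second moment; this is where the $4\sqrt3$ comes from. Hence $\E A\ge\tfrac12c'\sum_t\sqrt{W_t}\gtrsim gn\sqrt{D_N}$. The function $A$ is Lipschitz in $y_{B_1}$ in the Euclidean sense with constant at most $\|N\|\sqrt{n}\le\sqrt{KD_Nn}$, by Cauchy--Schwarz over $t$. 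Its dependence on $x_{B_2}$ has bounded differences $|\Delta_t|$. Concentration, via Gaussian-type concentration for convex Lipschitz functions on the cube (Talagrand) or bounded differences, gives $A\ge c''gn\sqrt{D_N}$ except with probability $e^{-\Omega(n)}$. The allowances sum to at most $T|\beta|\le\tfrac18gn\sqrt{D_N}$, which is absorbed.

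\textbf{Step 2: compensation.} The remaining term is $Q=\sum_{t\in I}\sigma_t\sum_{w\in R_t}N_{tw}y_w$, with signs $\sigma_t=\pm1$ fixed given the earlier rounds. Splitting by round and conditioning on the variables that determine $I$ and $\sigma$, $Q$ is, up to the diagonal, a quadratic form $y^\top Mx$-type in the later values, with $M$ a signed, masked submatrix of $N$. Its masking patterns are rectangular blocks, $B_1\times(V\setminus B_1)$, $B_2\times\text{suffix}$, and the within-$B_2$ part. Each block has operator norm at most $\|N\|$ after restriction, and the diagonal sign matrix does not change the norm. The within-$B_2$ interaction is the delicate piece: there $y_t$ both selects $t\in I$ and enters other rows. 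I would handle it with a decoupling or Hanson--Wright inequality, noting $\|M\|_F^2\le nD_N$ and $\|M\|\le\|N\|\le\sqrt{KD_N}$. Hanson--Wright then gives $\Prb(Q\le-\tfrac{c''}{2}gn\sqrt{D_N})\le\exp(-c\min(g^2n^2D_N/(nD_N),\,gn\sqrt{D_N}/\sqrt{KD_N}))=e^{-\Omega_{g,K}(n)}$, with the linear part treated by Hoeffding.

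On the complement of both bad events, $L(x)>0$, so $x$ is not a candidate. This bounds the fraction of candidates by $e^{-cn}$ for large $n$. I expect the main obstacle to be Step 2's dependence structure: $I$ and the signs depend on the same variables that appear in the compensation. The round structure makes this manageable only after splitting into the rectangular blocks and applying Hanson--Wright separately to each.
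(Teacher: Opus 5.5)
Your architecture is the same as the one behind the paper's proof, which reduces the lemma to the two-round case of Theorem~\ref{res:counts}. The shared ingredients are: requirements $\sum_{t\in I}|\Delta_t|$ with $\E|\Delta_t|\ge\frac12\sqrt{W_t/3}$; creators that are Bernoulli$(1/2)$ given the first round; later-round compensation as linear forms with predictable coefficients; and within-round quadratic forms. Where the paper uses a random-partition decoupling bound for the quadratic forms, you use Hanson--Wright, which is fine. Note that the within-$B_1$ block, missing from your list, must also be treated. It has the same form $\frac14\langle N\sigma,u\rangle-\frac14u^{\mathsf T}Nu$, with $u=2x-\ind$, as the within-$B_2$ block.

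\textbf{The gap is the concentration of $A$ in Step~1.} The differences $|\Delta_t|$ that you attach to $x_{B_2}$ are not controlled by the hypotheses. The reason is that $\Delta_t$ contains the shift $c_t=a_t+\beta h+\frac12\sum_{w\ne t}N_{tw}$, and the linear coefficients $a_t$ are arbitrary. Hoeffding over the creator choices then has variance proxy $\sum_{t\in B_2}\Delta_t^2$, and a single huge $a_{t_0}$ reduces the bound to $e^{-O(1)}$. Clipping at $M\asymp\sqrt{KD_N}$ fixes the ranges, but it destroys the convexity in $y_{B_1}$ that Talagrand needs. Plain McDiarmid in $y_{B_1}$ also fails: one first-round bit moves $A$ by up to $\sum_t|N_{tw}|$, which gives a variance proxy of order $n^2D_N$ and only a constant exponent. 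This is exactly why Lemma~\ref{lem:counts-profile} proceeds as it does. It clips at $M$, keeps $|H_t+z_t/2|$ only on rows with $|H_t|\le M/2$, and replaces the other rows by $M/4$ at the deterministic cost $\sum_tz_t^2/M\le VT/M$. It then smooths with $h_\delta$ and applies the entropy estimate of Lemma~\ref{lem:counts-concentration}.

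For two rounds, your framework admits a shorter repair. The indicators $\epsilon_t=[x_t\ne p_t]$, $t\in B_2$, are i.i.d.\ uniform and independent of $x_{B_1}$ jointly, not just conditionally, so you can condition on $\epsilon$ first. Then $A=\sum_t\epsilon_t|\Delta_t|$ is convex and $\sqrt n\|N\|$-Lipschitz in $y_{B_1}$. Talagrand concentrates it around $\sum_t\epsilon_t\E|\Delta_t|$, and the mean--median shift $O(\sqrt{nKD_N})$ is negligible for large $n$. The deterministic lower bounds $b_t=\sqrt{W_t}/(2\sqrt3)\le\E|\Delta_t|$ satisfy $b_t\le\sqrt{KD_N}/(2\sqrt3)$, because $W_t\le\|N\|^2$. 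So Hoeffding over $\epsilon$ with these bounded ranges finishes the argument.

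For the expectation, use the sharp bound $\E|c+Z|\ge\E|Z|\ge\sqrt{\operatorname{Var}Z/3}$, which follows from symmetry plus $(\E Z^2)^3\le(\E|Z|)^2\E Z^4$. The constants are tight against the allowance $\frac18gn\sqrt{D_N}$, and Paley--Zygmund leaves little room. With these repairs you get a valid proof that relies on heavier off-the-shelf inequalities. The paper's route is elementary, self-contained, formalized, and works for any fixed number of rounds; there, conditioning on $\epsilon$ no longer keeps $A$ convex.
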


The three conditions have a direct meaning. The first says that no
direction dominates the deviations: $\|N\|\le\sqrt{KD_N}$, where
$\sqrt{D_N}$ is the root-mean-square row norm. A common coefficient shared
by all pairs would create such a direction, which is why $\beta$ is removed
first. The second says that the first-round parts of the rows of the
second round have total norm $\Omega(n\sqrt{D_N})$. A uniformly random
first round then gives the vertices of the second round expected adjusted
gains of total absolute value $\Omega(n\sqrt{D_N})$, and hence large
requirements whenever they take their nonpreferred values. The third says
that the total allowance is a small fraction of this total.

\begin{proof}
Apply the two-round case of Theorem~\ref{res:counts} with $\tau=1/2$,
$P=\beta\ind\ind^{\mathsf T}$, and the exact guesses
$h\in\{0,\ldots,n\}$.
The comparison allowance there is $s_t=|\beta|$.
For vertices in the first round, the earlier-round quantity $W_t$ is $0$,
so the sum in that theorem's gap hypothesis is precisely the sum over
$B_2$ displayed here. The other two displayed assumptions give its norm
and allowance bounds. There are at most $n+1$ guesses. Its enumeration
and space conclusions prove the lemma.
\end{proof}

\begin{lemma}[Regular expanders satisfy the conditions]\label{lem:expander_conditions}
Fix $\lambda_0\ge1$ and $0<\varepsilon\le1/2$, and put
$K=9\lambda_0^2/(2\varepsilon)$ and $g=\varepsilon/(320\sqrt3)$. There is
$n_0=n_0(\varepsilon)$ with the following property. Let $G$ be a
$d$-regular graph on $n\ge n_0$ vertices with
\[
 64\lambda_0^2\le d\le(1-\varepsilon)n,\qquad \lambda(G)\le\lambda_0\sqrt d.
\]
Consider its Max-Cut encoding, possibly with unit clauses, with
$\beta=-2d/(n-1)$ and any two disjoint prefix rounds of sizes
$\lfloor n/4\rfloor$ and $\lfloor n/2\rfloor-\lfloor n/4\rfloor$. Then
$D_N>0$, and the three conditions of Lemma~\ref{lem:count_accumulation}
hold with these $K$ and $g$.
\end{lemma}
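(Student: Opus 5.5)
The plan is to compute everything explicitly for the Max-Cut encoding and then check the three conditions one by one. For the encoding, $J_{uv}=-2(A_G)_{uv}$, and unit clauses change only $c$ and $a_v$. With $\beta=-2p$ and $p=d/(n-1)$,
\[
N=-2\bigl(A_G-p(\ind\ind^{\mathsf T}-I)\bigr).
\]
Row $t$ therefore has entry $-2(1-p)$ on the $d$ neighbors of $t$ and $2p$ on the $n-1-d$ non-neighbors. Since $(n-1-d)p=d(1-p)$, every squared row norm equals $4[d(1-p)^2+dp(1-p)]=4d(1-p)$, so $D_N=4d(1-p)$. From $d\le(1-\varepsilon)n$ and $n\ge n_0(\varepsilon)$ we get $1-p\ge\varepsilon/2$, hence $D_N>0$.

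\textbf{Norm condition.} The matrix $M=A_G-p\ind\ind^{\mathsf T}+pI$ satisfies $M\ind=(d-pn+p)\ind=0$. On $\ind^\perp$ it acts as $A_G+pI$, with eigenvalues $\lambda_i+p$ for $i\ge2$. Hence $\|N\|\le2(\lambda_0\sqrt d+1)$. Because $\sqrt d\ge8\lambda_0\ge8$, this gives $\|N\|^2\le4(9/8)^2\lambda_0^2d<9\lambda_0^2d$. On the other hand, $KD_N=18\lambda_0^2d(1-p)/\varepsilon\ge9\lambda_0^2d$.

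\textbf{Allowance condition.} We have $T|\beta|\le(n/2)\cdot2d/(n-1)\le2d$, while $\frac18gn\sqrt{D_N}=\frac14gn\sqrt{d(1-p)}\ge\frac14gn\sqrt{\varepsilon d/2}$. The required inequality reduces to $\sqrt d\le c(\varepsilon)n$. This holds since $d\le n$, once $n\ge n_0(\varepsilon)$.

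\textbf{Gap condition.} This is the main obstacle. Simply averaging via the expander mixing lemma loses: its error $\lambda_0\sqrt d\,n$ is not small compared with $nd(1-p)$ when $d$ is only of order $\lambda_0^2$. Instead I would argue per vertex. For $t\in B_2$, let $a_t$ be the number of neighbors of $t$ in $B_1$. Then
\[
W_t=4\bigl[(1-p)^2a_t+p^2(|B_1|-a_t)\bigr]\ge4(1-p)^2a_t.
\]
To control the $a_t$, apply the eigenvalue condition to $\ind_{B_1}$. Write $\ind_{B_1}=(|B_1|/n)\ind+z$ with $z\perp\ind$ and $\|z\|^2\le|B_1|$. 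The vector $(a_t-d|B_1|/n)_t$ equals $A_Gz$ on every coordinate, so
\[
\sum_t\bigl(a_t-d|B_1|/n\bigr)^2\le\lambda(G)^2|B_1|\le\lambda_0^2d\,n/4.
\]
The mean $d|B_1|/n$ is about $d/4$, so a vertex with $a_t<d/16$ deviates by at least roughly $3d/16$. By Chebyshev, the number of such vertices is at most about $(256/9)\lambda_0^2n/(4d)\le n/36$, using $d\ge64\lambda_0^2$. Since $|B_2|\ge n/4-1$, at least about $2n/9$ vertices $t\in B_2$ have $a_t\ge d/16$. For each of them,
\[
\sqrt{W_t}\ge2(1-p)\sqrt{d/16}\ge\tfrac12\sqrt{\varepsilon/2}\,\sqrt{d(1-p)}\ge\tfrac14\sqrt{\varepsilon\,d(1-p)},
\]
using $1-p\ge\varepsilon/2$. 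Summing gives $\sum_{t\in B_2}\sqrt{W_t}\ge c\sqrt\varepsilon\,n\sqrt{d(1-p)}$ for an absolute $c>0$ (about $1/18$). The required right-hand side is $4\sqrt3\,gn\sqrt{D_N}=\varepsilon n\sqrt{d(1-p)}/40$. Since $\sqrt\varepsilon\ge\varepsilon$, the bound holds with room to spare; the choice $g=\varepsilon/(320\sqrt3)$ is loose enough to absorb the rounding in $|B_1|,|B_2|$ and the $O(1/n)$ corrections, which are handled by taking $n_0(\varepsilon)$ large. The only delicate bookkeeping is tracking the exact sizes $\lfloor n/4\rfloor$ and $\lfloor n/2\rfloor-\lfloor n/4\rfloor$ of the rounds and the mean $d|B_1|/n$ in the Chebyshev step.
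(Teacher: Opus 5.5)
Your proof is correct. The treatment of $D_N$, the norm condition and the allowance condition is essentially the paper's. The paper likewise notes that $N$ annihilates $\mathbf 1$ and acts as $-2A_G+\beta I$ on $\mathbf 1^\perp$, and it compares $T|\beta|\le 2d$ with $gn\sqrt{D_N}\ge gn\sqrt{2\varepsilon d}$ using $d\le n$.

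The real difference is the gap condition.

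\emph{The paper's route.} It counts cross edges. Since $d\ge64\lambda_0^2$ gives $\lambda(G)\le\lambda_0\sqrt d\le d/8$, the expander mixing inequality yields $e(B_1,B_2)\ge dn/32-d/2\ge dn/40$ for $n\ge80$. Each cross edge contributes weight $2(1-p)\ge\varepsilon$, so $\sqrt{W_t}\ge\varepsilon\sqrt{\deg_{B_1}(t)}$. Then $\sqrt a\ge a/\sqrt d$ converts the edge count into $\sum_{t\in B_2}\sqrt{W_t}\ge\varepsilon n\sqrt d/40$.

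\emph{Your route.} You bound $\|A_Gz\|^2\le\lambda(G)^2\|z\|^2$ for the centred indicator of $B_1$. Chebyshev then shows that most of $B_2$ has at least $d/16$ first-round neighbours.

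Both routes rest on the same spectral fact and the same hypothesis $d\ge64\lambda_0^2$. Yours gives a per-vertex statement; the paper's is shorter and has better constants.

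\emph{Your stated reason for avoiding mixing is mistaken.} The mixing error $\lambda(G)\sqrt{|B_1||B_2|}\le dn/32$ is at most half the main term, which is about $dn/16$. The factor $1-p$ enters only through the per-edge weight, never through the edge count. If anything, with your threshold $d/16$ the Chebyshev step is the more demanding one: it needs roughly $d>256\lambda_0^2/9$, whereas the mixing bound is positive once $d>16\lambda_0^2$.

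\emph{Fix the arithmetic in your Chebyshev step.} First, $(\lambda_0^2dn/4)/(3d/16)^2=64\lambda_0^2n/(9d)\le n/9$, not $n/36$. Second, with $|B_1|=\lfloor n/4\rfloor$ the mean $d|B_1|/n$ is slightly below $d/4$, so for $n\ge80$ you can only guarantee a deviation of $7d/40$. That allows up to $400\lambda_0^2n/(49d)\le25n/196$ exceptional vertices.

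You still keep at least $n/4-1/2-25n/196=6n/49-1/2$ good vertices in $B_2$. For large $n$ this gives $\sum_{t\in B_2}\sqrt{W_t}\ge0.03\sqrt\varepsilon\,n\sqrt{d(1-p)}$. That exceeds the required $\varepsilon n\sqrt{d(1-p)}/40$, since $\sqrt\varepsilon\ge\varepsilon$. So the conclusion survives, but your figures ``$2n/9$'' and ``$1/18$'' are wrong.
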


Each condition reflects a property of $G$. On vectors orthogonal to
$\ind$, the matrix $N$ acts as $-2A_G$ shifted by $\beta$, and $N\ind=0$.
The eigenvalue bound on $G$ therefore gives $\|N\|\le3\lambda_0\sqrt d$.
Every row of $N$ has squared length $D_N=4d(1-d/(n-1))$, which the density
restriction keeps above $2\varepsilon d$. By the expander mixing
inequality, the two rounds are joined by at least $dn/40$ edges, and each
such edge contributes to $W_t$. Finally, the total allowance $T|\beta|$ is
at most $2d$, which is of lower order than $n\sqrt{D_N}$.
Appendix~\ref{app:expander_conditions} gives the details.

Graphs of small degree are excluded by
Lemma~\ref{lem:expander_conditions}, but for them a direct argument gives
a fixed saving.

\begin{lemma}[Graphs of bounded degree]\label{lem:bounded_degree}
For every graph $G$ of maximum degree at most $d$, branching on the
complement of a greedy independent set solves Max-Cut on $G$, also after
adding any multiset of unit clauses to its Max 2-SAT encoding, in
polynomial space and time $\Ostar(2^{(1-1/(d+1))n})$.
\end{lemma}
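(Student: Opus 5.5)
The plan is to greedily pick an independent set $I$ that is large, branch exhaustively on $V\setminus I$, and then optimize each vertex of $I$ independently. The greedy construction processes the vertices in a fixed order. Each time it picks a vertex, it removes that vertex and its at most $d$ neighbors. Every step therefore removes at most $d+1$ vertices, so the set $I$ has size
\[
 |I|\ge \lceil n/(d+1)\rceil .
\]
The algorithm enumerates all $2^{n-|I|}\le 2^{(1-1/(d+1))n}$ assignments $\alpha$ to $V\setminus I$, one at a time by depth-first branching, which keeps the space polynomial.

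For a fixed $\alpha$, consider the residual formula $F|_\alpha$ of the Max 2-SAT encoding, including any unit clauses. Its only remaining variables are those of $I$. Every binary clause of the encoding lies on an edge of $G$, and no edge joins two vertices of $I$. So each clause of $F|_\alpha$ is either satisfied or empty, or it is a unit clause on a single vertex of $I$. The original unit clauses fall into the last category as well. The score of a completion is therefore $s$ plus a sum of independent contributions, one from each $v\in I$. Each contribution depends only on $x_v$ and equals the number of unit clauses $(v)$ or $(\neg v)$ that the value satisfies.

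Setting each $v\in I$ to the value satisfying more of its units, with a fixed tie-breaking rule, gives the best completion of $\alpha$. Scoring that completion takes polynomial time. The algorithm keeps the best score over all $\alpha$. This is exact, because an optimal assignment $x^*$ has some prefix $\alpha=x^*|_{V\setminus I}$. For that $\alpha$ the best completion scores at least $F(x^*)$, and every score the algorithm computes belongs to a genuine assignment, so none exceeds the optimum.

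The running time is $2^{n-|I|}$ times a polynomial, which gives $\Ostar(2^{(1-1/(d+1))n})$, and only the current path and the best assignment need to be stored. There is no real obstacle here. The one point to check carefully is the claim that assigning $V\setminus I$ leaves no binary clause inside $I$, including duplicate clauses or unit clauses added arbitrarily. This follows because every binary clause of the encoding lies on an edge of $G$, and $I$ is independent.
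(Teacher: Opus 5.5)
Your proposal is correct and follows essentially the same argument as the paper's proof: build a greedy independent set of size at least $n/(d+1)$, branch on its complement, and then set each independent vertex optimally from its residual unit clauses. You spell out the residual-formula bookkeeping and the exactness argument in more detail than the paper, which simply counts $2^{n-|S|+1}-1$ recursive calls.
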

\begin{proof}
Greedily choose an independent set $S$: repeatedly select a vertex and
remove it together with its neighbors. Each selection removes at most
$d+1$ vertices, so $|S|\ge n/(d+1)$. Branch on all assignments to the
vertices outside $S$. Since $S$ is independent, every remaining clause
then contains a single vertex of $S$, so each vertex of $S$ can be set
optimally on its own. The search has $2^{n-|S|+1}-1$ recursive calls.
\end{proof}

\begin{proof}[Proof of Corollary~\ref{cor:maxcut}]
The algorithm described after the corollary returns an optimum on every
input: exhaustive search and the bounded-degree branch are exact, and
\textnormal{\textsc{CountComparisons}} is exact by
Lemma~\ref{lem:count_correctness}. Unit clauses change only the linear
coefficients, which all three lemmas allow to be arbitrary. It remains to
bound the running time. Let $D\ge64\lambda_0^2$ be the degree threshold of
the algorithm. If $d<D$, Lemma~\ref{lem:bounded_degree} gives time
$\Ostar(2^{(1-1/D)n})$. If $d\ge D$ and $n\ge n_0(\varepsilon)$,
Lemma~\ref{lem:expander_conditions} verifies the conditions of
Lemma~\ref{lem:count_accumulation} with $K=9\lambda_0^2/(2\varepsilon)$
and $g=\varepsilon/(320\sqrt3)$. For all $n$ above a threshold depending
on $g$ and $K$, that lemma bounds the time by
$\Ostar((n+1)(2^{n/2}+2^ne^{-c_1n}))$ for a constant
$c_1=c_1(\lambda_0,\varepsilon)>0$. The finitely many remaining sizes
contribute a constant factor. Altogether the running time is
$\Ostar(c^n)$ with $c=\max\{2^{1-1/D},\sqrt2,2e^{-c_1}\}<2$, which depends
only on $\lambda_0$ and $\varepsilon$, and all procedures use polynomial
space.
\end{proof}

\subsection{Concrete families and the transition to random graphs}
Paley graphs on $q\equiv1\pmod4$ vertices, for prime powers $q$, have
degree $(q-1)/2$ and nontrivial eigenvalues $(-1\pm\sqrt q)/2$
\cite[Section~9.1.2]{BH12}. Thus $\lambda_0=1$, $\varepsilon=1/2$
suffice for all sufficiently large $q$. Connected non-bipartite Ramanujan
graphs satisfy $\lambda(G)\le2\sqrt{d-1}$, so they give another family
when the density restriction holds. Bipartite Ramanujan graphs have the
additional eigenvalue $-d$ and do not satisfy this hypothesis at growing
degree; ordinary Max-Cut on bipartite graphs is of course immediate.

Random regular graphs also satisfy the eigenvalue condition with high
probability.

\begin{corollary}[Max-Cut on random regular graphs]\label{cor:random_regular}
Fix $0<\varepsilon\le1/2$. There are $c=c(\varepsilon)<2$ and a
deterministic polynomial-space algorithm that solves Max-Cut exactly on
every graph, with the following property. Let $d=d(n)$ satisfy
$1\le d\le(1-\varepsilon)n$, with $dn$ even, and let $G$ be a uniformly
random $d$-regular graph on $n$ vertices. With probability tending to $1$
as $n\to\infty$, the algorithm runs on $G$ in time $\Ostar(c^n)$. The same
holds after adding any multiset of unit clauses to the Max 2-SAT encoding,
even if they are chosen after $G$ is drawn.
\end{corollary}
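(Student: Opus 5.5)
The plan is to reduce the statement to Corollary~\ref{cor:maxcut}. The algorithm is the one described after that corollary, with constants chosen from $\varepsilon$ alone: take $\lambda_0$ to be an absolute constant, say $\lambda_0=3$. The degree threshold $D\ge64\lambda_0^2$ and the size threshold $N_0\ge n_0(\varepsilon)$ are then also fixed in terms of $\varepsilon$. This algorithm is deterministic, uses polynomial space and is exact on every graph, with or without unit clauses, by the proof of Corollary~\ref{cor:maxcut}. Its running-time analysis has two cases.
\begin{itemize}
\item If the maximum degree satisfies $d<D$, Lemma~\ref{lem:bounded_degree} gives time $\Ostar(2^{(1-1/D)n})$ deterministically, so no probability is needed.
\item If $d\ge D$, the graph is $d$-regular with $d\le(1-\varepsilon)n$. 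The proof of Corollary~\ref{cor:maxcut} then gives time $\Ostar(c^n)$ whenever in addition $\lambda(G)\le\lambda_0\sqrt d$, and the constant $c=c(\lambda_0,\varepsilon)<2$ depends only on $\varepsilon$ once $\lambda_0$ is fixed.
\end{itemize}
Unit clauses enter only through the linear coefficients $a_v$, and every deterministic lemma used allows these to be arbitrary. Hence the event on which the bound holds depends on $G$ alone, and adversarial units chosen after drawing $G$ are harmless.

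It therefore remains to show that, for $d=d(n)$ with $D\le d\le(1-\varepsilon)n$, a uniformly random $d$-regular graph satisfies $\lambda(G)\le\lambda_0\sqrt d$ with probability tending to $1$. This is exactly where the known spectral-gap theorems for random regular graphs are invoked, covering the full range of degrees in pieces.
\begin{itemize}
\item For bounded and moderately growing $d$, use Friedman's theorem ($\lambda\le2\sqrt{d-1}+o(1)$ for fixed $d$) together with the result of Broder--Frieze--Upfal--Kahn--Szemer\'edi, extended by Cook--Goldstein--Johnson and by Tikhomirov--Youssef. These give $\lambda(G)=O(\sqrt d)$ w.h.p.\ for $d\le n^{2/3}$, and for $d\le n/2$ respectively.
\item For the dense range $n/2\le d\le(1-\varepsilon)n$, pass to the complement. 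It is a uniformly random $(n-1-d)$-regular graph with $n-1-d\ge\varepsilon n-1$. Since $A_{\bar G}=J-I-A_G$, the nontrivial eigenvalues satisfy $\lambda(G)\le\lambda(\bar G)+1$.
\item Apply the sparse-range bound to $\bar G$. This gives $O(\sqrt{n-1-d})+1$, which is $O(\sqrt d)$ with a constant depending only on $\varepsilon$, or even absolutely since $n-1-d\le d$ in this range.
\end{itemize}
So $\lambda_0$ is chosen as the largest absolute constant from these theorems, plus one. Along a sequence $d(n)$ the relevant theorem may switch between ranges, but each range has failure probability $o(1)$ uniformly, so the failure probability still tends to $1$.

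The main obstacle is citing a spectral bound that is uniform over the whole range $1\le d\le(1-\varepsilon)n$ with a single constant. Degrees near $n/2$ and the intermediate ranges are handled by different papers with different hypotheses, for example $d\le n/2$ versus $d=o(n^{1/2})$. I would first try to rely on the Tikhomirov--Youssef bound $\lambda=O(\sqrt d)$ for $C\le d\le n/2$, and on Friedman or Bordenave for bounded $d$. These are combined with the complement trick above. If $d$ lies below the threshold $D$ anyway, no spectral claim is needed, because the deterministic case above already covers it.
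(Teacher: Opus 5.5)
Your proposal is correct and follows the paper's proof. Like the paper, you cite Cook--Goldstein--Johnson and Tikhomirov--Youssef for $d\le n/2$, pass to the complement via $\lambda(G)\le\lambda(\bar G)+1$ for $n/2<d\le(1-\varepsilon)n$, and apply Corollary~\ref{cor:maxcut} with $\lambda_0=\lambda_*+1$, where $\lambda_*$ is the absolute constant of those theorems.

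Drop the early ``say $\lambda_0=3$'': the constants in the cited theorems are not explicit, so only your later choice (the largest constant plus one) is justified. Also, you mean that the success probability, not the failure probability, tends to $1$.
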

\begin{proof}
For $d\le n/2$, the theorems of Cook, Goldstein, and
Johnson~\cite{CGJ18}, for $d=O(n^{2/3})$, and of Tikhomirov and
Youssef~\cite{TY19}, for $n^{1/2}\le d\le n/2$, give an absolute constant
$\lambda_*\ge1$ such that $\lambda(G)\le\lambda_*\sqrt d$ with probability
tending to $1$. If $n/2<d\le(1-\varepsilon)n$, the complement $\bar G$ is a
uniformly random $(n-1-d)$-regular graph with $n-1-d<n/2$. Its adjacency
matrix is $\ind\ind^{\mathsf T}-I-A_G$. Hence every eigenvector of $A_G$
that is orthogonal to $\ind$, with eigenvalue $\mu$, is an eigenvector of
$A_{\bar G}$ with eigenvalue $-1-\mu$, and so
$\lambda(G)\le\lambda(\bar G)+1$. With probability tending to $1$, this
gives $\lambda(G)\le\lambda_*\sqrt{n-1-d}+1\le(\lambda_*+1)\sqrt d$.

In both cases $G$ satisfies the hypotheses of Corollary~\ref{cor:maxcut}
with $\lambda_0=\lambda_*+1$ and the given $\varepsilon$. We use the
algorithm of that corollary for these parameters. It is exact on every
input, and on graphs satisfying the hypotheses it runs in time
$\Ostar(c^n)$ with $c=c(\lambda_*+1,\varepsilon)$.
\end{proof}

For the random graph $G(n,p)$, in which every pair of vertices is joined
independently with probability $p$, the pair coefficients
$J_{uv}=-2A_{uv}$ have mean $-2p$. The deviations $-2(A_{uv}-p)$ of present
and absent edges have opposite signs; for $p=1/2$ they are independent
uniform signs. Random edges can thus play the role of the random literal
signs of Section~\ref{sec:random}. The next section develops this for
graphs with communities, with $G(n,p)$ as the one-community case.

\section{Max-Cut on random graphs with communities}\label{sec:sbm}

\subsection{Intuition}\label{sec:sbm_intuition}

\paragraph{One count per community.}
In Section~\ref{sec:maxcut}, a single number $\beta$ described the common
part of all pair coefficients, and the count of ones summarized the common
term of all comparisons. With communities, one count is not enough.
Consider two communities of equal size, with edge probability $0.6$ inside
a community and $0.4$ between communities. A vertex of the first community
has, on average, $0.6$ times the number of ones in its own community plus
$0.4$ times the number of ones in the other community as neighbors among
the ones. Moving a one from the second community to the first, without
changing the total, raises the expected neighbor count of every vertex of
the first community by $0.2$ and lowers that of every vertex of the second
community by $0.2$. In general, the vertices belong to $k$ communities, and
the probability $\pi_{ab}$ of an edge depends on the communities $a$ and $b$
of its endpoints. A vertex $t$ of community $a$ is adjacent, on average, to
a fraction $\pi_{ab}$ of the ones of community $b$, so the common term of
its neighbors among the ones depends on the $k$ counts of ones in the
communities. If the communities were known, we could guess all $k$ counts.
There are at most $(n+1)^k$ guesses, polynomially many for fixed $k$. In
the run with the right guess, each comparison is rewritten with the actual
counts replaced by the guessed ones, exactly as in
Section~\ref{sec:maxcut}.

\paragraph{The randomness of the edges.}
What remains after this rewriting is the deviation of each edge indicator
from its expected value. These deviations are independent across pairs, and
their signs are random: a present edge deviates upwards and an absent edge
downwards. They play the role that random literal signs played in
Section~\ref{sec:random}, and that the eigenvalue condition played in
Section~\ref{sec:maxcut}. With high probability, they satisfy the
conditions of the accumulation theorem of Appendix~\ref{app:accumulation},
the analogues of those of Lemma~\ref{lem:count_accumulation}: no direction
dominates the deviations, and the deviations between the two prefix rounds
create large requirements.

\paragraph{Hidden communities.}
The algorithm is not told the communities, and it does not try to recover
them. It only needs to remove the common part of the pair coefficients,
their expectation: the matrix with entries $-2\pi_{ab}$, which has rank at
most $k$. The matrix $J$ of pair coefficients, which the algorithm computes
from the input, is this low-rank matrix plus the random deviations. The
algorithm computes approximate eigenvectors $v_1,\ldots,v_r$, with
$r\le k$, for the eigenvalues of $J$ of largest absolute value, and guesses
the projections $\langle v_i,x\rangle$ of the assignment on a polynomial
grid. These projections play the role of the community counts: the
projection of an assignment onto the indicator vector of a community is
exactly its count of ones there. The eigenvectors need not reveal the
communities. Removing the directions with the largest eigenvalues leaves a
remainder whose norm is at most that of the random deviations, up to
rounding, whatever the expected matrix is (Appendix~\ref{app:discovery}).
A part of the expectation that is not removed is therefore no larger than
the random deviations and does no harm. This is why the probabilities
inside and between communities may even be equal. Each guess gives
comparisons with an explicit allowance, which covers the grid error and
the variable's own contribution to the projection, and at least one guess
retains an optimum. The search itself is the two-round search of
Section~\ref{sec:maxcut}. Unlike in Section~\ref{sec:maxcut}, a spectral
computation is part of the algorithm, not only a tool in its analysis.

\subsection{The input model}\label{sec:sbm_model}

Fix a partition $V=V_1\cup\cdots\cup V_k$ and a symmetric probability
matrix $\Pi=(\pi_{ab})_{a,b\le k}$. Include each unordered pair of distinct
vertices independently, with probability $\pi_{ab}$ when its endpoints
belong to $V_a,V_b$. This is the stochastic block model~\cite{Abbe18}.
The partition can be arbitrary and need not be balanced. Alternatively,
it can be drawn first; the result below holds uniformly conditional on it.
The number $k$ is fixed, while the probabilities may depend on $n$.

Define the smallest and largest edge variances by
\[
 v_{\min}=\min_{a,b}\pi_{ab}(1-\pi_{ab}),\qquad
 v_{\max}=\max_{a,b}\pi_{ab}(1-\pi_{ab}).
\]
Our result assumes that $v_{\min}>0$, that $v_{\max}\le\chi v_{\min}$ for a
fixed $\chi\ge1$, and that $nv_{\min}\ge C\log n$ for a sufficiently large
constant $C$. These conditions measure the amount of random variation, not
how well the communities can be distinguished.

\subsection{The algorithm for hidden communities}

For a graph input, first form its Max-Cut encoding; additional units, if
present, are retained. The following procedure takes that formula $F$, a
fixed bound $k$ on the number of communities, and a fixed bound $\chi\ge1$
on the ratio of the largest to the smallest edge variance; $\chi$ only sets
the grid spacing in Appendix~\ref{app:discovery}. The algorithm has three
phases.

\paragraph{Spectral step.}
If $n<4$, solve the instance by exhaustive search. Otherwise, compute the
coefficients of the objective in~\eqref{eq:quadratic}; the matrix
$J=-2A_G$ records the graph edges, and units affect only the constant and
linear coefficients. Compute a rational approximate diagonalization of $J$
(Appendix~\ref{app:discovery}), and retain the $r\le k$ approximate
eigenvalues $\lambda_i$ of largest absolute value, with their approximate
eigenvectors $v_i$, which are rational and orthonormal. This gives
\[
 P=\sum_{i=1}^{r}\lambda_i v_i v_i^{\mathsf T},
 \qquad N=\operatorname{offdiag}(J-P),
\]
where $\operatorname{offdiag}$ sets the diagonal entries to $0$. Compute
$D_N=n^{-1}\sum_{u\ne v}N_{uv}^2$. If $D_N=0$, use exhaustive search.

\paragraph{Projection guesses.}
Construct rational grids $\mathcal Y_i$ with spacings
$\delta_i\ge n^{-2}$ (Appendix~\ref{app:discovery}) such that every
projection $\langle v_i,x\rangle=\sum_t v_i(t)x_t$ of a Boolean assignment
is within $\delta_i$ of a point of $\mathcal Y_i$. The lower bound $n^{-2}$
on the spacing keeps the number of grid tuples polynomial for fixed $k$.
Compute the comparison allowances
\begin{equation}
 s_t=\sum_{i=1}^{r}|\lambda_i|\,|v_i(t)|
                       \bigl(\delta_i+|v_i(t)|\bigr),
                       \label{eq:community_slack}
\end{equation}
which remain fixed across all guesses.

\paragraph{Comparison search.}
Let $B_1$ contain the first $\lfloor n/4\rfloor$ variables in the input
order and $B_2$ the next $T-\lfloor n/4\rfloor$, where
$T=\lfloor n/2\rfloor$; the remaining variables form the suffix. For each
projection tuple $\hat y\in\mathcal Y_1\times\cdots\times\mathcal Y_r$, use
the adjusted gain
\begin{equation}
 H_t^{\hat y}(x)
 =a_t+\sum_{i=1}^{r}\lambda_i\hat y_i v_i(t)
                 +\sum_{w\ne t}N_{tw}x_w,
                 \label{eq:community_gain}
\end{equation}
the projection counterpart of~\eqref{eq:countgain}, and run the search of
\textnormal{\textsc{CountComparisons}} with the rounds $B_1,B_2$, the
adjusted gain $H_t^{\hat y}$, and the allowance $s_t$ in place of $|\beta|$.
Every candidate is scored in the original formula, and the best one over
all tuples and prefixes is returned.

\begin{algorithm}[tp]
\caption{\textnormal{\textsc{CommunityComparisons}}$(F,k,\chi)$}
\label{alg:communities}
\begin{algorithmic}[1]
\STATE $M\gets-\infty$; $\alpha^\star\gets\emptyset$.
\IF{$n<4$}
  \STATE \textbf{return} the best score and assignment found by exhaustive search.
\ENDIF
\STATE Compute $a,J$ in~\eqref{eq:quadratic}.
\STATE Construct $P,N$ by the approximate diagonalization in Appendix~\ref{app:discovery}.
\COMMENT{This polynomial computation uses only $J$ and $k$.}
\STATE Compute $D_N=n^{-1}\sum_{u\ne v}N_{uv}^2$.
\IF{$D_N=0$}
  \STATE \textbf{return} the best score and assignment found by exhaustive search.
\ENDIF
\STATE Construct the projection grids and allowances by Appendix~\ref{app:discovery}.
\STATE Fix the two prefix rounds $B_1,B_2$ described above.
\FOR{each projection tuple $\hat y\in\mathcal Y_1\times\cdots\times\mathcal Y_r$}
  \FOR{each prefix assignment $\alpha$ on $B_1\cup B_2$}
    \STATE $L(x)\gets0$.
    \FOR{each round $B_j$ and each $t\in B_j$}
      \STATE Compute the average $\Delta_t$ of $H_t^{\hat y}$ using $\alpha$ on earlier rounds and $1/2$ for all other variables.
      \STATE Set $p_t\gets1$ if $\Delta_t\ge0$, and $p_t\gets0$ otherwise.
      \IF{$\alpha_t\ne p_t$}
        \STATE $L(x)\gets L(x)+(2p_t-1)H_t^{\hat y}(x)-s_t$.
      \ENDIF
    \ENDFOR
    \FOR{each suffix $y$ satisfying $L(\alpha,y)\le0$, by Lemma~\ref{lem:enumeration}}
      \STATE Evaluate $F(\alpha,y)$; update $M,\alpha^\star$ if this score is larger.
    \ENDFOR
  \ENDFOR
\ENDFOR
\STATE \textbf{return} $(M,\alpha^\star)$.
\end{algorithmic}
\end{algorithm}

The comparisons use the rational directions actually computed, so no
exact-real arithmetic is assumed.

\subsection{Correctness}

\begin{lemma}[Correctness with hidden communities]\label{lem:community_correctness}
\textnormal{\textsc{CommunityComparisons}} returns an optimum on every
input, independently of whether it was generated by the stated model.
\end{lemma}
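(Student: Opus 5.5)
The proof follows Lemma~\ref{lem:count_correctness}, with the exact count replaced by a grid point near the true projections. The exhaustive-search branches ($n<4$ or $D_N=0$) are trivially exact, so we assume the comparison search runs. Fix an optimal complete assignment $x^*$ for the encoded formula $F$. Let $y_i^*=\langle v_i,x^*\rangle$, and let $\hat y\in\mathcal Y_1\times\cdots\times\mathcal Y_r$ be a tuple with $|\hat y_i-y_i^*|\le\delta_i$ for every $i$. Such a tuple exists by the covering property of the grids constructed in Appendix~\ref{app:discovery}. We show that in the run for this $\hat y$, the prefix $x^*|_{B_1\cup B_2}$ produces a summed constraint $L$ with $L(x^*)\le0$. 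Then Lemma~\ref{lem:enumeration} lists the suffix of $x^*$, and scoring it in the original formula records the optimum. Every other recorded score belongs to some complete assignment, so the returned value is exactly $\OPT(F)$.

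The key identity compares $H_t^{\hat y}(x^*)$ with $\mathrm{gain}_t(x^*)=a_t+\sum_{w\ne t}J_{tw}x^*_w$. Write $J=P+(J-P)$. Since $J$ has zero diagonal, $N_{tw}=J_{tw}-P_{tw}$ for $w\ne t$. This gives
\[
 \sum_{w\ne t}J_{tw}x^*_w=\sum_{w\ne t}N_{tw}x^*_w+\sum_{w\ne t}P_{tw}x^*_w
 =\sum_{w\ne t}N_{tw}x^*_w+\sum_{i=1}^r\lambda_iv_i(t)\bigl(y_i^*-v_i(t)x^*_t\bigr).
\]
Hence
\[
 H_t^{\hat y}(x^*)-\mathrm{gain}_t(x^*)=\sum_{i=1}^r\lambda_iv_i(t)\bigl(\hat y_i-y_i^*+v_i(t)x_t^*\bigr),
\]
and its absolute value is at most $\sum_i|\lambda_i||v_i(t)|(\delta_i+|v_i(t)|)=s_t$ by~\eqref{eq:community_slack}. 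This is the analogue of Lemma~\ref{lem:slice}. The $-v_i(t)x_t^*$ term accounts for the variable's own contribution to the projection, and $\delta_i$ accounts for the grid error. The same computation holds with the rational $\lambda_i,v_i$ actually computed, because no property of the $v_i$ is used beyond the definitions of $P$ and $N$.

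Next we handle the comparisons themselves. Whatever preferences $p_t$ the run computes (they depend only on $\hat y$ and on $x^*$ restricted to earlier rounds, but that does not matter here), suppose $x^*_t\ne p_t$. By optimality of $x^*$, flipping $t$ to $p_t$ does not increase the score, so $(2p_t-1)\mathrm{gain}_t(x^*)\le0$. Combining this with the bound above gives $(2p_t-1)H_t^{\hat y}(x^*)-s_t\le0$. Summing over the added comparisons gives $L(x^*)\le0$. Since $L$ is linear in the suffix after substituting the prefix, the suffix of $x^*$ satisfies the inequality enumerated by Lemma~\ref{lem:enumeration}. No tie-breaking among optima is needed, because the comparisons include ties, exactly as in Lemma~\ref{lem:count_correctness}.

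The only step that needs care is the existence of the covering grid point $\hat y$ for every Boolean assignment. This rests entirely on the grid construction in Appendix~\ref{app:discovery}, which is stated there to cover every projection within $\delta_i$. Apart from checking that the paper's grid indeed spans the full range of $\langle v_i,x\rangle$ over $\bits^n$ (for example, the interval between the sum of the negative entries of $v_i$ and the sum of the positive entries), the argument is routine. It uses no randomness from the input model, which is why the lemma holds on every input.
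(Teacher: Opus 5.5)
Your proof is correct and follows the paper's own argument. You fix an optimum $x^*$, pick a grid tuple $\hat y$ covering its projections, and bound $|\mathrm{gain}_t(x^*)-H_t^{\hat y}(x^*)|\le s_t$ by the same subtraction identity (yours is its negative); optimality then gives every comparison at $x^*$, hence their sum, so the suffix of $x^*$ is enumerated. The grid coverage you flag is guaranteed by the paper's construction $\mathcal Y_k=\delta\mathbb Z\cap[-\|v_k\|_1-\delta,\|v_k\|_1+\delta]$, since $|\langle v_k,x\rangle|\le\|v_k\|_1$ for every Boolean $x$.
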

\begin{proof}
The exhaustive-search cases are exact. Otherwise fix an optimal
assignment $x^*$ and choose a tuple $\hat y$ within the grid errors of
its projections. For any variable $t$, direct subtraction of the
original and adjusted gains gives
\[
 \mathrm{gain}_t(x^*)-H_t^{\hat y}(x^*)
 =\sum_i\lambda_i v_i(t)
       \bigl(\langle v_i,x^*\rangle-\hat y_i-v_i(t)x_t^*\bigr).
\]
Its absolute value is at most $s_t$ by~\eqref{eq:community_slack}.
If the optimum takes the nonpreferred value at $t$, changing it to
$p_t$ cannot improve its score, so
$(2p_t-1)\mathrm{gain}_t(x^*)\le0$. Therefore
$(2p_t-1)H_t^{\hat y}(x^*)\le s_t$.
Every comparison at its prefix, and hence their sum, is satisfied.
The suffix of $x^*$ is enumerated in this run. Every reported score
is evaluated in the original formula, which proves the claim.
\end{proof}

\subsection{Running-time guarantee}

A binary probability matrix with singleton communities can represent any
graph, but it violates the assumptions below: the number of communities
must be fixed, and the edge variances must be nonzero and comparable.

\begin{theorem}[Exact Max-Cut on the stochastic block model]\label{thm:sbm}
Fix $k\ge1$ and $\chi\ge1$. There are constants $C=C(k,\chi)$ and
$\gamma=\gamma(k,\chi)>0$ and a deterministic polynomial-space algorithm
that receives a graph together with $k$ and $\chi$, but neither the
partition nor the probabilities $\pi_{ab}$, and solves Max-Cut exactly on
every graph, with the following property. Whenever
\begin{equation}
 0<v_{\min},\qquad v_{\max}\le\chi v_{\min},\qquad
 n v_{\min}\ge C\log n,                       \label{eq:sbmvariance}
\end{equation}
the algorithm runs on a graph from this model in time
$\Ostar(2^{(1-\gamma)n})$ with probability $1-o(1)$. The same holds with
arbitrary additional unit clauses, even chosen after the graph is drawn.
\end{theorem}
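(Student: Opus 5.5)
Correctness on every input is Lemma~\ref{lem:community_correctness}, so only the running time needs an argument. I would follow the template of Corollary~\ref{cor:maxcut}. The plan is to verify, with probability $1-o(1)$ over the graph, the three hypotheses of the two-round accumulation theorem (Theorem~\ref{res:counts} in Appendix~\ref{app:accumulation}), now with the projection guesses $\hat y$ in place of the counts $h$. Three facts make this work. The number of grid tuples is polynomial for fixed $k$, since every $\delta_i\ge n^{-2}$ and $|\langle v_i,x\rangle|\le\sqrt n$. The prefix rounds are fixed by the input order. Unit clauses change only the linear coefficients $a_t$, which the accumulation theorem allows to be arbitrary. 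Hence it suffices to show that, w.h.p., the deviation matrix $N=\operatorname{offdiag}(J-P)$ satisfies $\|N\|^2\le KD_N$, $\sum_{t\in B_2}\sqrt{W_t}\ge c\,gn\sqrt{D_N}$, and $\sum_{t\le T}s_t\le\frac18 gn\sqrt{D_N}$, with $K,g$ depending only on $k,\chi$. The bound then has the form $\Ostar(\mathrm{poly}(n)(2^{n/2}+2^ne^{-cn}))$.

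The first step is to write $J=\E J+Z$. Here $\E J$ has off-diagonal entries $-2\pi_{ab}$, so it has rank at most $k$ up to a diagonal correction of norm at most $2$. The matrix $Z=-2(A_G-\E A_G)$ has independent centered entries of variance $4\pi_{ab}(1-\pi_{ab})\le4v_{\max}$. By a standard spectral norm bound for random symmetric matrices, which needs $nv_{\max}\ge C\log n$, we get $\|Z\|\le c_1\sqrt{nv_{\max}}$ w.h.p. The approximate diagonalization of Appendix~\ref{app:discovery} removes the top $r\le k$ directions. By the result cited there, $\|J-P\|\le\|Z\|+O(1)+\text{rounding}$, so $\|N\|\le c_2\sqrt{nv_{\max}}$. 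For the lower bound on $D_N$, note that $\|N\|_F^2\ge\|Z\|_F^2-O(k)\|J-P-Z\|^2$, because $J-P-Z$ has rank at most $2k+1$ up to the diagonal. Also $\|Z\|_F^2\ge2n^2v_{\min}$ w.h.p. by Chernoff bounds. Together these give $D_N\ge nv_{\min}$, and therefore $\|N\|^2\le c_2^2\chi D_N$, which is the first condition with $K=O(\chi)$.

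The second condition is the one I expect to be the main obstacle. The reason is that $N$ depends on the whole graph, so $W_t=\sum_{w\in B_1}N_{tw}^2$ is not simply a sum of independent edge terms. I would first try to compare $N$ with $Z$ restricted to the block $B_2\times B_1$. Write $N=Z+E$ with $E$ of rank $O(k)$ and $\|E\|=O(\sqrt{nv_{\max}})$. Then $\|E\|_F^2=O(k\,nv_{\max})$, which is much smaller than $n^2v_{\min}$. By Cauchy--Schwarz, $\sum_t\sqrt{W_t}$ for $N$ differs from the same sum for $Z$ by at most $\sqrt{|B_2|}\,\|E\|_F=O(n\sqrt{kv_{\max}})$, which is $o(n\sqrt{nv_{\min}})$. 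For $Z$, each row restricted to $B_1$ has squared norm concentrated around at least $4|B_1|v_{\min}\approx nv_{\min}$. A union bound with Bernstein's inequality works here because $nv_{\min}\ge C\log n$. This gives $\sum_{t\in B_2}\sqrt{W_t}\ge c_3 n\sqrt{nv_{\min}}\ge c_3 n\sqrt{D_N/(4\chi)}$, which fixes $g$.

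For the allowances, we have $\sum_t s_t\le\sum_i|\lambda_i|(\delta_i\|v_i\|_1+\|v_i\|_2^2)\le k\|J\|(\delta\sqrt n+1)$. The grid spacing $\delta$ was chosen in Appendix~\ref{app:discovery}, using $\chi$, precisely so that this is a small fraction of $n\sqrt{D_N}$. Here $\|J\|=O(n)$, so the spacing must be about $\delta\le c\sqrt{v_{\min}}$, up to the $\chi$-dependent factor. Since $\delta\ge n^{-2}$, the grid stays polynomial. Collecting the failure probabilities, which are all $o(1)$, completes the proof; the constants $C,\gamma$ come from Theorem~\ref{res:counts}.
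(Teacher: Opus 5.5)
Your proposal is correct and is essentially the paper's argument. The paper proves Theorem~\ref{thm:sbm} by specializing Lemma~\ref{cor:planted} with $r=k$, $\nu=4v_{\min}$, $B=B_0=\mu=2$, and that lemma's proof is your chain feeding the round case of Theorem~\ref{res:counts}: the truncation bound~\eqref{eq:spectral-tail}, a Bandeira--van Handel bound on the noise, Frobenius perturbation through the rank-$\le 2k$ matrix $\bar A-P$ (with $\bar A$ agreeing with $\E J$ off the diagonal) for $D_N$ and, via Cauchy--Schwarz, for the cross-round rows, Bernstein for the row energies, and Rayleigh quotients for $\Lambda$.

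Two small repairs are needed. First, $\operatorname{offdiag}(\bar A-P)$ is not itself low rank, and your squared-Frobenius inequality does not follow from the rank bound as written. Instead get $D_N$ from $\|N\|_F\ge\|Z\|_F-\|\bar A-P\|_F$ together with $\|\bar A-P\|_F\le\sqrt{2k}\,\|\bar A-P\|$, as the paper does. Second, the ``$+1$'' term of the allowance is not controlled by $\delta$ but by $k\|J\|\le 2kn=o(gn\sqrt{D_N})$, and this is where $nv_{\min}\ge C\log n$ is used. The same estimate shows that the $n^{-2}$ cap on $\delta$ is inactive, so the cap does not spoil the allowance bound.
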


\subsection{Proof of the running-time bound}

The algorithm is exact by Lemma~\ref{lem:community_correctness}. Its
running time follows from the discovery lemma of
Appendix~\ref{app:discovery} (Lemma~\ref{cor:planted}), which concerns
independent pair coefficients whose expected values form a matrix of
bounded rank. In the Max 2-SAT encoding of a block-model graph, the
coefficient of a pair is $-2$ times its edge indicator. Given the
partition, these coefficients are independent, their means $-2\pi_{ab}$
form a matrix of rank at most $k$, and their variances are
$4\pi_{ab}(1-\pi_{ab})$. Appendix~\ref{app:sbm_proof} applies the lemma with
$r=k$, $\nu=4v_{\min}$, and $B=B_0=\mu=2$; condition~\eqref{eq:sbmvariance}
gives its variance hypotheses. The probability event depends only on the
graph, and the bound is uniform over all linear coefficients. This is why
additional unit clauses may be chosen after the graph is known.

\subsection{The usual within-community and between-community probabilities}
When $\pi_{aa}=p$ and $\pi_{ab}=q$ for $a\ne b$, the assumptions reduce to
\[
 \max\{p(1-p),q(1-q)\}\le\chi\min\{p(1-p),q(1-q)\},\qquad
 n\min\{p(1-p),q(1-q)\}\ge C\log n.
\]
No separation between $p$ and $q$ is required; in particular $p=q$ is
allowed. Concrete regimes are:
\begin{itemize}
\item \emph{Dense blocks:} all $\pi_{ab}\in[\varepsilon,1-\varepsilon]$ for
a fixed $0<\varepsilon\le1/2$. The hypotheses hold for sufficiently large
$n$, with constants depending only on $k,\varepsilon$.
\item \emph{Sparse blocks:} $\pi_{ab}=c_{ab}\rho$ with fixed
$0<c_-\le c_{ab}\le c_+$, $\rho=o(1)$, and $n\rho\ge C'\log n$.
\item \emph{Almost complete blocks:} $\pi_{ab}=1-c_{ab}\rho$ with the
same bounds on $c_{ab}$ and $\rho$.
\end{itemize}
In the last two regimes, $c_-\rho/2\le\pi_{ab}(1-\pi_{ab})\le c_+\rho$ for
large $n$, so $\chi=2c_+/c_-$ and $C'=2C/c_-$ suffice. The saving is fixed
throughout each regime.

\section{Formal verification}\label{sec:verification}

An accompanying Lean~4 artifact formalizes the main arguments of this
paper. For all four classes, it checks that the algorithms return an
optimum on every input. It also checks the bounds on the number of search
nodes behind Theorems~\ref{thm:bias}, \ref{thm:random_main}, and~\ref{thm:sbm}
and Corollaries~\ref{cor:lineargap} and~\ref{cor:maxcut}, together with the
supporting lemmas. Two published estimates on random matrices, used in
Appendices~\ref{app:membership} and~\ref{app:discovery}, enter as explicit
hypotheses. The polynomial bounds on arithmetic cost and space are proved
on paper only. Corollary~\ref{cor:random_regular},
which combines Corollary~\ref{cor:maxcut} with published spectral-gap
theorems, is not formalized. Appendix~\ref{app:verification} describes the
scope of the formalization in detail.

\section{Conclusions and future directions}

All our algorithms use the same mechanism. When the search backtracks over
a variable, the second branch only needs the assignments of the neighbors
for which its value is strictly better, and these local constraints, weak
one by one, add up along a branch. Our four classes show different ways in
which they accumulate. Biased variables give comparisons with large
requirements. In semirandom formulas, the requirements are smaller, but
after weighting they still add up, because a shared variable with random signs helps some comparisons
and hurts others. In the two Max-Cut classes, guessing the number of ones
(Section~\ref{sec:maxcut}) or the projections of the assignment
(Section~\ref{sec:sbm}) removes the common part of the pair coefficients,
which would otherwise move all comparisons together. In each case, a bound
on how much the remaining values can compensate together turns the
accumulated requirements into a fixed exponential saving in polynomial
space.

Finding a polynomial-space algorithm with running time
$\Ostar(2^{(1-\gamma)n})$ for all dense \mbox{Max 2-SAT} instances remains an
important open theoretical problem. A natural next case is formulas whose
pair coefficients keep a dominant direction even after a few statistics of
the assignment have been guessed; the accumulation theorem of
Appendix~\ref{app:accumulation} does not apply to them. 

In future work,
however, our focus will be on practical applications.
The main objective of the paper is to present techniques of practical
importance that yield theoretically provable gains. The pruning constraints are local, cheap to compute, and can be
added to any branching algorithm for Max 2-SAT. The same reasoning can be extended to yield weaker hard
constraints for general Max-SAT, since a longer clause has more literals that need to be falsified in order to create a constraint. 
However, formulas with many short clauses appear in various applications and should still profit from pruning constraints, the more
so as clauses become shorter during branching.

Furthermore, Max-SAT has applications in areas like statistical physics
(spin glasses)~\cite{barahona1988application,liers2004computing}, materials
science~\cite{huang2016finding}, biology~\cite{malikic2019phiscs}, and
protein design~\cite{allouche2014computational}. Public instances exist for
many of these problems~\cite{jabs2026maxsat,wiegele2007biqmac}.

We plan to combine our pruning
constraints with suitable heuristics in a branch-and-bound solver and to
evaluate them on these instances, where we expect savings much larger than
the bounds proven here. These experiments are not restricted to dense
instances; spin glasses on two- and three-dimensional grids, for example,
have bounded degree. We expect the constraints to be even more useful on
sparse instances, where each constraint involves only a few variables.

\section*{Use of AI tools}

This paper was developed together with the large language models Claude
Opus~5.5 and GPT-6 Astra over several weeks. The author set the research
goal and proposed the approach: pruning constraints obtained by comparing
the two values of a variable during branching. The author also provided the
pruning lemma (Lemma~\ref{lem:pruning}) and a simpler version of
Theorem~\ref{thm:bias}, essentially Corollary~\ref{cor:lineargap}, together
with its proof. The other results were developed in discussions with both
models, which also contributed technical ideas within this approach. Almost
all proofs and most of the text were first written by the models; the
author directed the work, improved the presentation, added the intuition,
and consolidated the results. The models were never asked to design an
algorithm for Max 2-SAT from a general prompt. The main arguments are
formalized in Lean~4; Section~\ref{sec:verification} lists what is not
formalized. The author takes full responsibility for the paper.

{\small
	\bibliographystyle{plain}
	\bibliography{pruning_2_sat}
}

\newpage
\appendix

\section{Proofs for the literal-gap analysis}\label{app:bias}

This appendix contains the proofs omitted from Section~\ref{sec:bias}.

\subsection{The combined comparison statistic}\label{app:bias_statistics_proof}

\begin{proof}[Proof of Lemma~\ref{lem:bias_statistics}]
\emph{The expectation.}
Fix $\mathbf v$ and the values of every other variable. The score gain of
its preferred value over its nonpreferred value is independent of
$x_{\mathbf v}$. Averaging over the two equally likely values of
$x_{\mathbf v}$, $H_{\mathbf v}$ is $0$ with
probability $1/2$ and equals that gain with probability $1/2$. Averaging next over the
other variables, the expected preferred-over-nonpreferred gain is
$|g_{\mathbf v}|/2$, by~\eqref{eq:effectivegap} and the preference rule.
Hence $\E H_{\mathbf v}=|g_{\mathbf v}|/4$. Add these expectations;
no independence between different comparisons is needed.

\emph{The effect of one variable.}
Let $x'$ differ from $x$ only at $\mathbf w$. In the summand
$H_{\mathbf w}$, if it is present, the assignments with $\mathbf w$
reset to its preferred value coincide. Consequently
\[
 H_{\mathbf w}(x')-H_{\mathbf w}(x)=F_0(x)-F_0(x').
\]
Only clauses containing $\mathbf w$ can change their satisfaction, each
by at most one. The absolute value of this difference is at most
$B_{\mathbf w}$.

For $\mathbf v\ne\mathbf w$, changing $\mathbf w$ does not change whether
$x_{\mathbf v}$ is preferred. If it is preferred, both values of
$H_{\mathbf v}$ are zero. Otherwise, the gain identity in
Lemma~\ref{lem:pruning} shows that only occurrences of literals of
$\mathbf w$ in $C_{\mathbf v}$ can affect $H_{\mathbf v}$, by at most
one per occurrence. Across the comparisons for all variables other than
$\mathbf w$, these occurrences number exactly $D_{\mathbf w}$: each
binary clause containing $\mathbf w$ contributes once, to its partner's
comparison. Therefore the sum over the subset $P\setminus\{\mathbf w\}$
changes by at most $D_{\mathbf w}$. The triangle inequality gives the
claimed bound. Repeated or overlapping neighborhoods require no separate
assumption.

\emph{The consequence of satisfying the sum.}
Write $C_{\mathbf v}$ as
$\operatorname{lhs}(C_{\mathbf v},x)\le b_{\mathbf v}$, and put
$I_P(x)=\{\mathbf v\in P:x_{\mathbf v}\ne p_{\mathbf v}\}$.
For every $\mathbf v\in I_P(x)$, the offset of minus one in the strict constraint~\eqref{eq:gap_pruning} gives
\[
 \operatorname{lhs}(C_{\mathbf v},x)-b_{\mathbf v}
       =H_{\mathbf v}(x)+1.
\]
All other summands in $L_P$ are zero. Adding the displayed identities yields
\begin{equation}
 \sum_{\mathbf v\in I_P(x)}
       \bigl(\operatorname{lhs}(C_{\mathbf v},x)-b_{\mathbf v}\bigr)
       =L_P(x)+|I_P(x)|.
 \label{eq:bias_surplus}
\end{equation}
Satisfaction of the sum makes the left-hand side nonpositive. Thus
$L_P(x)\le-|I_P(x)|\le0$, as required. Substitution of the prefix does
not change this identity on its completions.
\end{proof}

\subsection{A bounded-differences inequality}\label{app:bias_concentration}

The randomness here is a uniform Boolean assignment used to count
candidates of a deterministic algorithm; the input formula is fixed.
We give a conservative bounded-differences estimate with a short finite
proof. It is the only probability inequality used in
Lemma~\ref{lem:bias_count}.

\begin{lemma}[Bounded differences on uniform Boolean assignments]
\label{lem:bias_concentration}
Let $f:\bits^n\to\mathbb R$, and let $c_1,\ldots,c_n\ge0$ be constants.
Suppose that changing only coordinate $i$ changes $f$ by at most $c_i$.
Put $V=\sum_{i=1}^n c_i^2$, and assume $V>0$. For a uniform Boolean
assignment $x$ and every $a\ge0$,
\[
 \Prb(f(x)\le\E f-a)\le\exp(-a^2/(2V)).
\]
\end{lemma}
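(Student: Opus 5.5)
We prove the lemma by the standard Doob martingale argument together with Hoeffding's lemma; everything is a finite computation on $\bits^n$. Define $f_k(x_1,\ldots,x_k)=\E[f(x)\mid x_1,\ldots,x_k]$, the average of $f$ over the uniformly distributed coordinates $k+1,\ldots,n$. Then $f_0=\E f$, $f_n=f$, and the increments are $X_k=f_k-f_{k-1}$.

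\emph{Bounding each increment.} Fix $x_1,\ldots,x_{k-1}$. The increment $X_k$ takes only the two values
\[
 \tfrac12\bigl(f_k(\ldots,1)-f_k(\ldots,0)\bigr)\quad\text{and}\quad-\tfrac12\bigl(f_k(\ldots,1)-f_k(\ldots,0)\bigr),
\]
each with probability $1/2$. Averaging the bounded-difference hypothesis over the later coordinates gives $|f_k(\ldots,1)-f_k(\ldots,0)|\le c_k$. Hence, conditionally on the past, $X_k=\pm s_k/2$ with $|s_k|\le c_k$ and equal probabilities.

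\emph{Bounding the moment generating function.} For $\lambda\ge0$,
\[
 \E[e^{-\lambda X_k}\mid x_{<k}]=\cosh(\lambda s_k/2)\le e^{\lambda^2c_k^2/8}.
\]
The last step follows from $\cosh y\le e^{y^2/2}$, which one checks by comparing Taylor coefficients, since $(2j)!\ge 2^j j!$. Applying this bound iteratively from $k=n$ down to $k=1$ yields
\[
 \E e^{-\lambda(f-\E f)}\le e^{\lambda^2V/8}.
\]

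\emph{Conclusion.} Markov's inequality gives
\[
 \Prb(f\le\E f-a)\le e^{-\lambda a+\lambda^2V/8}.
\]
Choosing $\lambda=4a/V$, which is admissible since $V>0$, gives the bound $e^{-2a^2/V}$. This is stronger than the claimed $e^{-a^2/(2V)}$, so the stated bound follows.

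No step is a serious obstacle. The only points requiring care are that averaging over the later coordinates preserves the Lipschitz constant $c_k$, and that the two-point conditional law is exactly symmetric because the coordinates are independent and uniform. The elementary inequality $\cosh y\le e^{y^2/2}$ keeps the argument self-contained, so no appeal to Azuma's inequality is needed.
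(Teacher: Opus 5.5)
Your proof is correct and is essentially the paper's argument: the paper proves the moment-generating bound by induction on $n$. It splits off the first bit and applies $(e^z+e^{-z})/2\le e^{z^2/2}$ to the two restricted means, which is exactly your Doob martingale exposed in coordinate order. The only difference is bookkeeping: the paper's inductive hypothesis $\E e^{t(f-\E f)}\le e^{t^2V/2}$ discards the factor $1/8$ on all but one coordinate, whereas you keep it throughout and obtain the sharper bound $e^{-2a^2/V}$.
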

\begin{proof}
We first prove by induction on $n$ that, for every real $t$,
\begin{equation}
 \E e^{t(f-\E f)}\le e^{t^2V/2}.
 \label{eq:bias_mgf}
\end{equation}
The claim also permits $V=0$ at this intermediate step. For $n=0$,
$f$ is constant and both sides are one.

For the induction step, write $f_0,f_1$ for the restrictions obtained by
fixing the first bit to $0$ and $1$. Let $\mu_b=\E f_b$ and
$\mu=(\mu_0+\mu_1)/2=\E f$. Since $|f_0(y)-f_1(y)|\le c_1$ for
every assignment $y$ of the remaining bits, averaging gives
$|\mu_0-\mu_1|\le c_1$. Each restriction satisfies the same coordinate
bounds $c_2,\ldots,c_n$. Applying the induction hypothesis to both gives
\begin{align*}
 \E e^{t(f-\mu)}
 &=\frac12\sum_{b\in\bits}e^{t(\mu_b-\mu)}
                   \E e^{t(f_b-\mu_b)}\\
 &\le\exp\left(\frac{t^2}{2}\sum_{i=2}^n c_i^2\right)
       \frac{e^{t(\mu_0-\mu_1)/2}+e^{-t(\mu_0-\mu_1)/2}}2\\
 &\le\exp\left(\frac{t^2}{2}\sum_{i=2}^n c_i^2+
                       \frac{t^2c_1^2}{8}\right)
 \le e^{t^2V/2}.
\end{align*}
For the second inequality we used
$(e^z+e^{-z})/2\le e^{z^2/2}$. This follows by comparing their power
series, since $(2j)!\ge2^j j!$ for every integer $j\ge0$.
This proves~\eqref{eq:bias_mgf}.

Finally let $t=a/V\ge0$. On the event $f\le\E f-a$,
$e^{-t(f-\E f)}\ge e^{ta}$. Averaging this pointwise bound and applying
\eqref{eq:bias_mgf} with $-t$ yields
\[
 \Prb(f\le\E f-a)e^{ta}
 \le\E e^{-t(f-\E f)}\le e^{t^2V/2}.
\]
Divide by $e^{ta}$ and substitute $t=a/V$ to obtain
$e^{-ta+t^2V/2}=e^{-a^2/(2V)}$. The calculation includes $a=0$.
\end{proof}

\subsection{Implementation costs}\label{app:bias_implementation}

We check that \textnormal{\textsc{LiteralGap}} uses polynomial time per
search node and polynomial space, as used in the proofs of
Theorem~\ref{thm:bias} and Corollary~\ref{cor:lineargap}.
Let $m$ be the number
of clauses of $F_0$, with multiplicity. A pruning constraint has at most
$m$ occurrences and bound magnitude at most $m+1$. A sum of at most
$n$ such constraints therefore has at most $nm$ occurrences and bound
magnitude at most $n(m+1)$. Substitution and complementary cancellation
cannot increase the occurrence list. The magnitude of the resulting bound
is at most its original magnitude plus twice its original occurrence
count. These are polynomial-size integer data, so constructing and testing
each sum takes polynomial time. Every candidate score is also computed
in polynomial time.

Preprocessing, computing the gaps, and choosing $P$ take polynomial time.
The depth-first prefix and suffix traversal has at most $n+1$ levels;
its active states and the cancellation procedure use polynomial space.
Only one best complete assignment is retained. Multiplying the node bound
by the polynomial work per node proves the stated time bound.

\subsection{Linear gaps in simple formulas}\label{app:lineargap_proof}

\begin{proof}[Proof of Corollary~\ref{cor:lineargap}]
The case $n=0$ is trivial. For $n\ge1$, each variable initially has at
most $4(n-1)$ binary occurrences and at most two unit occurrences.
The pair reduction preserves every effective gap. At its shared literal,
two signed binary occurrences are replaced by one unit with coefficient
two; at its other variable, opposite signed binary occurrences cancel.
Opposite-unit cancellation also preserves the gap. Neither reduction
increases a variable's total number of binary and unit occurrences.
Thus, in $F_0$,
\[
 B_{\mathbf v}\le4(n-1)+2\le4n,\qquad Q\le16n^3,
 \qquad G\ge\varepsilon\eta n^2>0.
\]
In particular $Q>0$. Lemma~\ref{lem:bias_count} gives
\[
 W(F_0)\le2^n\exp\left(-\frac{\varepsilon^2\eta^2 n}{8192}\right).
\]
Combining this count with $|P|\le\lceil n/2\rceil$ as in the theorem's
proof bounds the nodes by
\[
 (4+2n)\exp\bigl((\log2-\varepsilon^2\eta^2/8192)n\bigr).
\]
Since $\log2/48000\le1/8192$, this implies the claimed conservative
exponent. The arithmetic and space bounds are the same as for
Theorem~\ref{thm:bias}; see Appendix~\ref{app:bias_implementation}. The
argument does not require a positive residual binary degree on each
biased variable; bias converted to units is included in $G$.
\end{proof}

\section{Proofs for semirandom formulas}\label{app:random_signs}

The main text states the result for two semirandom models.
For the proof, it is convenient to work with a single independent-pair
condition encompassing both. The conditioning, moment, and concentration
lemmas in Appendix~\ref{app:random_estimates} supply the estimates used in
Theorem~\ref{thm:random_main}, Appendix~\ref{app:random_main_completion}
completes its proof, and Appendix~\ref{app:random_implementation} gives
implementation details of the algorithm.

\subsection{A general pair model and supporting estimates}\label{app:random_estimates}

\begin{definition}[General semirandom model]\label{def:invariant}
Fix an integer constant $m_0\ge1$ and a real constant $0<\kappa\le m_0^2$, a set of $n$ input variables,
and any set $E$ of unordered pairs of distinct input variables.
Let $V=\bigcup_{e\in E}e$ and $G=(V,E)$.
For each edge $e=\{u,v\}$ draw a multiset $M_e$ of binary clauses on
$u,v$, independently over the edges. The binary part of $F$ is the
multiset union of these $M_e$; pairs outside $E$ contribute no binary
clauses. A sampled $M_e$ may be empty, so the observed graph $G_F$
in~\eqref{eq:observed_support} may have fewer edges than $G$. Assume that
\begin{enumerate}
\item $|M_e|\le m_0$;
\item complementing the literal of either endpoint in every clause of
$M_e$ leaves its distribution unchanged;
\item for each endpoint $u$,
\[
 \E\left[\left(\sum_{c\in M_e}s_u(c)\right)^2\right]\ge\kappa,
\]
where $s_u(c)=1$ if the literal of $u$ in $c$ is positive and
$s_u(c)=-1$ otherwise.
\end{enumerate}
Add any multiset of unit clauses on the input variables, fixed
independently of these draws.
\end{definition}

We call $s_u(c)$ the \emph{endpoint sign} of $c$ at $u$. The last
condition requires some variation in the difference between positive
and negative occurrences at an endpoint. One uniformly signed
clause per edge has $m_0=\kappa=1$. Including each of the four clauses
independently with probability $q\in(0,1)$ has $m_0=4$ and
$\kappa=4q(1-q)$. In particular, $q=1/4$ gives the second model of
Section~\ref{sec:random_intuition}.

\begin{theorem}[Semirandom formulas in the general model]\label{thm:weighted}
Fix an integer $m_0\ge1$ and a real $0<\kappa\le m_0^2$. For a formula $F$ on $n$ variables drawn
according to Definition~\ref{def:invariant} with any fixed generating
graph $G$, \textnormal{\textsc{RandomFormula}}, applied to $F$ alone, uses
polynomial space and has expected running time
\[
 \Ostar(2^{n/2}+2^n e^{-c_0n}),
 \qquad
 c_0=\frac{c_1^2}{10816m_0^2},\qquad
 c_1=\frac{\sqrt\kappa}{2\sqrt{1+4m_0^4/\kappa^2}}.
\]
With probability at least $1-3e^{-c_0n/2}$, its running time is
$\Ostar(2^ne^{-c_0n/2})$.
It returns an optimum on every input; the expectations and probabilities
refer only to the random input formula.
\end{theorem}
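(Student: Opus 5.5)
The plan is to follow the three-step outline in the proof of Theorem~\ref{thm:random_main}, now for general $m_0,\kappa$. Correctness is Lemma~\ref{lem:exactA}, and polynomial space comes from depth-first search together with Lemma~\ref{lem:enumeration}.

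\textbf{Conditioning on the observed graph.} First condition on $G_F=H$. For a pair carrying a nonempty $M_e$, the conditional law still satisfies the three conditions of Definition~\ref{def:invariant}. Sign symmetry is preserved because emptiness is invariant under complementation. For the second moment, conditioning on nonemptiness can only raise $\E[(\sum s_u)^2]$, since empty multisets contribute $0$. This is the content of Lemma~\ref{lem:observed_pairs}. From here on, $H$, the order $\mathbf u_1,\dots,\mathbf u_r$ and the weights~\eqref{eq:random_weight_bounds} are fixed. Draw a uniform $x\in\bits^r$ and write the weighted sum of the comparisons on its path as $A_T+Q_T\le0$, as in~\eqref{eq:random_sum}.

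\textbf{Bounding the requirements $A_T$.} At step $t$, reveal the clauses on edges from $\mathbf u_t$ to \emph{later} vertices only now. Clauses to earlier vertices are already determined. Then $\Delta_t$ equals a term determined by the past plus $\tfrac12\sum_{e}\sum_{c\in M_e}s_{\mathbf u_t}(c)$ over the new edges, and $\mathrm{gain}_t$ depends on the earlier values through $\sum_{w<t}$ terms.

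By sign symmetry, the new contribution is symmetric. Its second moment is at least $\kappa\cdot\#\{\text{later edges}\}/4$, and the contributions of the earlier edges have conditional variance of the same order through the independent uniform earlier values. Together these give total conditional variance $\Omega(\kappa\deg)$. Fourth moments are at most $O(m_0^4\deg^2)$. Paley--Zygmund then gives
\[
 \Prb\bigl(|\Delta_t|\ge c\sqrt{\kappa\deg}\bigr)\ge\Omega\bigl(\kappa^2/m_0^4\bigr)
\]
conditionally on the past; this is where the constant $c_1$ comes from. Whether $x_t\ne p_t$ is a fresh fair coin independent of $\Delta_t$. Hence $\omega_t(|\Delta_t|+1)\ind[t\in I]$ has conditional mean at least $c_1/2$ and is bounded above in terms of $\omega_t\cdot m_0\deg$. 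The step I expect to be delicate is that these terms are not bounded by a constant. I would truncate at $\omega_t|\Delta_t|\le m_0$, or just use $\min(\cdot,c_1)$, and then apply Azuma or Freedman to the truncated increments. This gives $A_T\ge c_1T/52$ except with probability $e^{-\Omega(T)}$ (Lemma~\ref{lem:random_requirements}).

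\textbf{Bounding the compensation $Q_T$.} The compensation is
\[
 Q_T=\sum_{t\in I}\omega_t(2p_t-1)\bigl(\mathrm{gain}_t-\Delta_t\bigr),
\]
where $\mathrm{gain}_t-\Delta_t=\sum_{w>t}\pm(x_w-\tfrac12)$ over clause occurrences to later vertices. Reorganize it by later vertex: $Q_T=\sum_w (x_w-\tfrac12)\,Y_w$, where $Y_w$ collects the signed weights from the earlier vertices in $I$ adjacent to $w$. Processing $w$ in order, $Y_w$ is determined before $x_w$ is revealed, so this is a martingale with increments bounded by $|Y_w|/2$.

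The key estimate is $\sum_w Y_w^2=O(m_0^2T)$ with high probability. The cross terms of $Y_w^2$ involve products of the random signs of clauses on distinct edges. These signs have mean $0$ and are independent of the preferences $p_t$, because the signs on the edge $tw$ at endpoint $w$ are revealed only at time $w$, while $p_t$ depends only on the signs at endpoint $t$ and complementing at $w$ leaves the law invariant. The diagonal terms are $\sum_t\omega_t^2\cdot O(m_0^2\deg_t)=O(m_0^2r)$ by~\eqref{eq:random_weight_bounds}.

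This decoupling is the main obstacle. I would first try a second-moment computation of $\sum_wY_w^2$ exploiting condition~2 edge by edge, followed by a concentration or Markov-type argument at exponential scale. Failing that, I would fold it into a single exponential supermartingale for $Q_T$ that jointly controls $\sum Y_w^2$. With the quadratic variation bounded, Freedman's inequality gives $Q_T\ge-c_1T/104$ except with probability $e^{-c_1^2T/O(m_0^2)}$ (Lemma~\ref{lem:random_compensation}); the constant $10816=104^2$ in $c_0$ comes from this step.

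\textbf{Conclusion.} Combining the two bounds, $\Prb(A_T+Q_T\le0\mid G_F=H)\le3e^{-c_0r}$. The expected number of candidates is therefore at most $3\cdot2^re^{-c_0r}\le3\cdot2^ne^{-c_0n}$ when $c_0<\ln2$. Lemma~\ref{lem:enumeration} turns this into expected time $\Ostar(2^{n/2}+2^ne^{-c_0n})$, where the $2^{n/2}$ term is the cost of the prefix tree. Markov's inequality with threshold $2^ne^{-c_0n/2}$ then gives the high-probability bound, with failure probability at most $3e^{-c_0n/2}$.
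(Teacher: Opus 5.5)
Your skeleton is the paper's: conditioning on $G_F$ via Lemma~\ref{lem:observed_pairs}, the split $A_T+Q_T$, and the final counting and Markov step. The probabilistic core, however, does not go through as you set it up.

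\textbf{The requirements $A_T$.} You reveal, at step $v$, all clauses on edges from $\mathbf u_v$ to later vertices, and then $x_v$. So at step $t$, every earlier-edge term $G_{tv}(x_v)$ with $v<t$ is already fixed by the history. The ``independent uniform earlier values'' have been revealed and supply no conditional variance; only the later edges of $\mathbf u_t$ do.

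This breaks the argument. Suppose the first $\lfloor\sqrt r\rfloor$ vertices are hubs adjacent to all others, with no further edges. Then every nonhub prefix vertex has only earlier neighbors. Beyond the $O(\sqrt r)$ contributed by the hub steps, your conditional bound certifies only the $+1$ terms of the nonhubs, each of weight $\omega_t\approx r^{-1/4}$. That totals $o(T)$, not $c_1T/52$.

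The missing idea is the exposure of Lemma~\ref{lem:random_exposure}. By condition~2 of Definition~\ref{def:invariant}, write the signs at each endpoint of each edge as a hidden uniform sign $\phi_{u,e}$ times a base sign. Reveal $\phi_{t,e}$ for \emph{all} edges at $t$, earlier ones included, only at step $t$. Then $\Delta_t=b_t+\sum_{v\sim t}\phi_{t,\{t,v\}}a_{tv}$ with predictable $a_{tv}$.

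Even then, the conditional variance $Y_t=\sum_v a_{tv}^2$ is random. Only $\E Y_t\ge\kappa d_t/4$ holds: an earlier edge contributes only through clauses whose partner literal is false, and signed sums can cancel. So there is no pathwise conditional mean $c_1/2$. The paper instead takes the predictable lower bound $\mu_t=\omega_t\sqrt{Y_t}/13$ from Lemma~\ref{lem:random_moments} and applies Lemma~\ref{lem:random_ranges}. It then needs a second bounded-differences argument over the independent edge inputs to get $\sum_{t\le T}\mu_t\ge c_1T/26-b$.

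\textbf{The compensation $Q_T$.} You leave the bound on $\sum_w Y_w^2$ open, and in your formulation it is a real obstacle. A single bit $x_w$ multiplies all of $Y_w$, so $\sum_w Y_w^2$ has cross terms and admits no pathwise bound of order $m_0^2T$. A second-moment computation followed by Markov gives only a polynomially small exceptional probability, not the $e^{-\Omega(r)}$ you need.

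The same deferred signs remove the problem. Put $\theta_{u,e}=\phi_{u,e}(2x_u-1)$. All $\theta$'s are then independent uniform signs, independent of $x$. The later deviation on an edge $e=\{t,v\}$ with $t<v$ equals $\phi_{t,e}\eta_{tv}\theta_{v,e}$, where $|\eta_{tv}|\le m_0/2$ is fixed by the base multiset. At step $v$, the compensation $R_v=\sum_t\beta_{tv}\theta_{v,\{t,v\}}$ is a sum of \emph{independent} signs, one per edge, with coefficients determined by $\mathcal H_{v-1}$. Hence $\E[e^{\lambda R_v}\mid\mathcal H_{v-1}]\le e^{\lambda^2B_v/8}$ with the deterministic $B_v=m_0^2\sum_{t\le T,\,t<v,\,t\sim v}\omega_t^2$. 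Moreover $\sum_vB_v\le m_0^2\sum_{t\le T}\omega_t^2d_t\le m_0^2T$. This gives Lemma~\ref{lem:random_compensation} with no cross terms.

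Your remark that the signs at endpoint $w$ are revealed only at time $w$ is exactly this idea. It contradicts the exposure you used for $A_T$, though. Adopting it consistently for every endpoint of every edge repairs both steps.
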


\begin{proof}
Use the proof of Theorem~\ref{thm:random_main} with the present values of
$m_0,\kappa,c_1,c_0$. The lemmas below are stated and proved for this
general model. They give the same three exceptional-event estimates
and hence $\Prb(A_T+Q_T\le0\mid G_F=H)\le3e^{-c_0|V(H)|}$.
The enumeration count, averaging over the observed graph, and the
final Markov bound are unchanged. The prefix work remains
$\Ostar(2^{n/2})$. All pair-law assumptions used by that argument are
precisely the three conditions in Definition~\ref{def:invariant}.
\end{proof}

The algorithm does not know which pairs were available for sampling. The
next lemma justifies analyzing it after fixing the graph it actually
observes. Crucially, whether a pair is empty does not distinguish between
the literal polarities.

\begin{lemma}[Conditioning on the observed pairs]\label{lem:observed_pairs}
Suppose $F$ is generated according to Definition~\ref{def:invariant}.
Conditional on any observed graph $G_F=H$ of positive probability, the
clause multisets on its edges are independent. Each has its original
distribution conditioned on being nonempty, and satisfies all three
conditions of that definition with the same $m_0$ and $\kappa$.
\end{lemma}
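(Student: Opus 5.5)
The observed graph $G_F$ is a deterministic function of the family $(M_e)_{e\in E}$. Indeed, a pair $\{u,v\}$ is an edge of $G_F$ exactly when $M_e\ne\emptyset$ for $e=\{u,v\}\in E$; pairs outside $E$ never carry binary clauses, and unit clauses do not affect $E_F$. I therefore plan to write the event $\{G_F=H\}$ as a product event over the edges of $E$:
\[
 \{G_F=H\}=\bigcap_{e\in E(H)}\{M_e\ne\emptyset\}\cap\bigcap_{e\in E\setminus E(H)}\{M_e=\emptyset\}.
\]
If this event has positive probability, then $E(H)\subseteq E$ and each factor has positive probability. The unit clauses are fixed independently of the draws, so they play no role.

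\emph{Independence and conditional laws.} Since the $M_e$ are independent across edges and the conditioning event is a product of events, each depending on a single $M_e$, the conditional joint law factorizes. For each $e\in E(H)$ it is the law of $M_e$ conditioned on $M_e\ne\emptyset$, and the multisets on $E\setminus E(H)$ are deterministically empty. This gives independence of the clause multisets on the edges of $H$, with the stated conditional distributions. To check this, I will verify directly that the conditional probability of a product of events $\{M_e\in A_e\}$, $e\in E(H)$, equals $\prod_e\Prb(M_e\in A_e\mid M_e\ne\emptyset)$, using the product structure of both numerator and denominator.

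\emph{The three conditions.} Condition 1 is preserved because the support of the conditional law is contained in the original support. For condition 2, let $\sigma_u$ denote the map that complements the literal of $u$ in every clause. It is a bijection on multisets of clauses on $u,v$ that preserves size, and in particular it maps $\emptyset$ to $\emptyset$ and nonempty multisets to nonempty ones. Hence $\Prb(\sigma_uM_e\in A, M_e\ne\emptyset)=\Prb(\sigma_uM_e\in A,\sigma_uM_e\ne\emptyset)=\Prb(M_e\in A, M_e\ne\emptyset)$, so the invariance survives conditioning. For condition 3, let $X_u=\sum_{c\in M_e}s_u(c)$. Then $X_u=0$ whenever $M_e=\emptyset$, so
\[
\E[X_u^2\mid M_e\ne\emptyset]=\E[X_u^2]/\Prb(M_e\ne\emptyset)\ge\E[X_u^2]\ge\kappa.
\]

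The main (mild) obstacle is the bookkeeping in the first step: the algorithm builds $G_F$ from $F$ itself, so one must confirm that $G_F$ depends on the binary draws only through the emptiness indicators. The identity $X_u=0$ on the empty set is what lets the moment bound improve rather than degrade under conditioning.
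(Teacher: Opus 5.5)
Your proposal is correct and follows the paper's proof: you write $\{G_F=H\}$ as an intersection of single-edge emptiness and nonemptiness events, so that independence factorizes the conditional law. You then check the three conditions in the same way: the size bound is inherited, complementation preserves emptiness, and $X_u=0$ on the empty multiset gives $\E[X_u^2\mid M_e\ne\emptyset]=\E X_u^2/\Prb(M_e\ne\emptyset)\ge\kappa$.
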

\begin{proof}
The event $G_F=H$ requires $M_e\ne\emptyset$ for each edge of $H$ and
$M_e=\emptyset$ for every other edge of the generating graph. These events
concern separate independent draws. Conditioning on them therefore
preserves independence, and conditions each retained pair only on its
own nonemptiness. The clause-count bound remains $m_0$.

Complementing either endpoint does not change whether a multiset is
empty. Its conditional distribution is consequently still invariant
under either complementation. Finally, put
$Z_{u,e}=\sum_{c\in M_e}s_u(c)$ and
$p_e=\Prb(M_e\ne\emptyset)$. An empty multiset contributes $0$ to
$Z_{u,e}^2$, so for every retained edge
\[
 \E[Z_{u,e}^2\mid M_e\ne\emptyset]
 =\frac{\E Z_{u,e}^2}{p_e}\ge\kappa.
\]
The unit clauses were fixed independently of the pair draws and remain
fixed under this conditioning. This proves the claim.
\end{proof}

For the probability estimates below, fix a nonempty observed graph
$G_F=H=(V_H,E_H)$ and put $r=|V_H|$. All probabilities and expectations
refer to the conditional model in Lemma~\ref{lem:observed_pairs}.
We identify the variables with their positions $1,\ldots,r$ in $U$
when writing edges and neighbors, and write $v\sim t$ when
$\{t,v\}\in E_H$.

The next lemma identifies the randomness available when a constraint is
created and when a later variable is assigned. It is the reason that
overlapping constraints can be analyzed without treating them as
independent. We use the original sampled clauses in this analysis, since
preprocessing preserves all gains.

Let $b_t$ be the number of original units $(x_t)$ minus the number of
units $(\neg x_t)$. For an edge $\{t,v\}$, define its contribution to
the gain at $t$ by
\[
 G_{tv}(x_v)=
 \sum_{\substack{c\in M_{\{t,v\}}\\
                 \text{the literal on }v\text{ is false}}}s_t(c).
\]
A clause contributes its sign at $t$ exactly when its partner literal
is false. Thus
\begin{align*}
 \mathrm{gain}_t(x)&=b_t+\sum_{v\sim t}G_{tv}(x_v),\\
 \Delta_t&=b_t+\sum_{\substack{v<t\\v\sim t}}G_{tv}(x_v)
       +\frac12\sum_{\substack{v>t\\v\sim t}}\sum_{c\in M_{\{t,v\}}}s_t(c).
\end{align*}
Write $d_t=\deg_H(\mathbf u_t)$.

\begin{lemma}[Revealing the endpoint signs]\label{lem:random_exposure}
The random formula and the independent uniform assignment can be exposed
in variable order so that, just before the signs at variable $t$ are
revealed,
\[
 \Delta_t=b_t+\sum_{v\sim t}\phi_{t,\{t,v\}}a_{tv},
 \qquad Y_t=\sum_{v\sim t}a_{tv}^2\le m_0^2d_t.
\]
The coefficients $a_{tv}$ are already known, and the signs
$\phi_{t,\{t,v\}}$ are independent and uniform on $\{-1,1\}$. After they
are revealed, $x_t$ is still uniform on $\{0,1\}$ and independent of
$\Delta_t$.
For each edge $\{t,v\}$ with $t<v$, the later deviation has the form
\[
 G_{tv}(x_v)-\frac12\sum_{c\in M_{\{t,v\}}}s_t(c)
   =\phi_{t,\{t,v\}}\eta_{tv}\theta_{v,\{t,v\}},
 \qquad |\eta_{tv}|\le m_0/2.
\]
Here $\eta_{tv}$ is known from the start. The signs
$\theta_{v,\{t,v\}}$ are independent and uniform on $\{-1,1\}$ when
endpoint $v$ is reached, conditional on all earlier exposed information.
\end{lemma}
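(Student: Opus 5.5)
We construct the exposure explicitly, using the complementation symmetry of Definition~\ref{def:invariant} (condition~2), which Lemma~\ref{lem:observed_pairs} preserves under conditioning on $G_F=H$. For each edge $e=\{t,v\}$ we write the multiset $M_e$ as a ``shape'' $\bar M_e$ with two independent uniform endpoint signs. Concretely, draw $\phi_{t,e},\phi_{v,e}\in\{-1,1\}$ uniformly and independently of everything else. Draw an independent copy of $M_e$, and call it $\bar M_e$. Then set $M_e$ to be $\bar M_e$ with the literal of $t$ complemented in every clause if $\phi_{t,e}=-1$, and the literal of $v$ complemented if $\phi_{v,e}=-1$. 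By condition~2, applied twice, this has the law of $M_e$. Independence over edges persists because each edge uses its own $\bar M_e$ and signs. Endpoint signs then factor: $s_t(c)=\phi_{t,e}\bar s_t(c)$, where $\bar s_t$ is the sign in $\bar M_e$. The assignment $x$ is drawn independently; we reparametrize $x_v$ relative to $\phi_{v,e}$ below.

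\textbf{The form of $\Delta_t$.} For $v<t$, the contribution $G_{tv}(x_v)$ is $\phi_{t,e}$ times the sum of $\bar s_t(c)$ over clauses of $\bar M_e$ whose $v$-literal is false. Whether the $v$-literal is false depends on $x_v$ and $\phi_{v,e}$, both exposed before $t$. For $v>t$, the term $\frac12\sum_c s_t(c)$ equals $\phi_{t,e}\cdot\frac12\sum_c\bar s_t(c)$. So define $a_{tv}$ as the sum of $\bar s_t(c)$ over clauses with false $v$-literal when $v<t$, and as $\frac12\sum_c\bar s_t(c)$ when $v>t$. Each $a_{tv}$ is determined by information exposed before $t$'s signs: all shapes $\bar M_e$ (revealed at the start), and the signs and values of earlier variables. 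Since $|M_e|\le m_0$, we have $|a_{tv}|\le m_0$, hence $Y_t\le m_0^2d_t$.

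\textbf{The exposure order.} At step $t$ we reveal the signs $\phi_{t,e}$ over all edges $e\ni t$; these are fresh and uniform. Then $\Delta_t=b_t+\sum_v\phi_{t,\{t,v\}}a_{tv}$. Next we reveal $x_t$, which is independent of all previous information and hence uniform and independent of $\Delta_t$.

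\textbf{The later deviation.} For $t<v$, the $v$-literal of a clause $c$ is false exactly when $x_v$ differs from that literal's polarity. Write the literal's truth value via $\phi_{v,e}$ and $\bar s_v(c)$. Define $\theta_{v,e}=(-1)^{x_v}\phi_{v,e}$ up to a fixed convention. Then $\mathbf 1[\text{$v$-literal false}]=\frac12(1-\theta_{v,e}\bar s_v(c))$. This gives
\[
 G_{tv}(x_v)-\tfrac12\sum_cs_t(c)=\phi_{t,e}\cdot\Bigl(-\tfrac12\sum_c\bar s_t(c)\bar s_v(c)\Bigr)\theta_{v,e},
\]
so $\eta_{tv}=-\frac12\sum_c\bar s_t(c)\bar s_v(c)$. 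This quantity depends only on the shape, which is known from the start, and satisfies $|\eta_{tv}|\le m_0/2$.

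\textbf{The main obstacle.} The delicate point is the joint uniformity at $v$. We need the signs $\theta_{v,e}$ over edges $e\ni v$ with earlier partner to be independent and uniform given the past, while $\Delta_v$ remains expressible through fresh signs, and while $x_v$ stays uniform and independent of $\Delta_v$. The plan is to reveal the block $(\phi_{v,e})_{e\ni v}$ together with $x_v$. The map $(\phi,x_v)\mapsto(\theta,x_v)$ is a bijection preserving the uniform law on the block. Hence the $\theta_{v,e}$ are uniform and independent of $x_v$, and independent of everything earlier, since the block is fresh. One must check that $\Delta_v$ uses only the $\phi_{v,e}$, not $x_v$, so that $x_v$ stays independent of $\Delta_v$. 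One must also check that the later terms at $v$ involving the same $\phi_{v,e}$ are what make the stated form of $\Delta_v$ consistent; this consistency is the bookkeeping that must be done carefully.
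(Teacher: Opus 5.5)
Your proposal is correct and follows the paper's proof essentially step for step. It uses the same base multiset with independent endpoint signs $\phi$ and the same exposure order: all base multisets first, then $\phi_{t,\cdot}$, then $x_t$. Your coefficients $a_{tv}$ are the paper's, and your $\eta_{tv}=-\tfrac12\sum_c\bar s_t(c)\bar s_v(c)$ is exactly the paper's ``half the difference of the signed sums over the two partner polarities.'' The one thing to pin down is the sign convention: the paper fixes $\theta_{v,e}=\phi_{v,e}(2x_v-1)$, which is the negative of your $(-1)^{x_v}\phi_{v,e}$, and your indicator identity $\tfrac12(1-\theta_{v,e}\bar s_v(c))$ holds for that convention. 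With it, the bookkeeping you flag is already settled by your own paragraph on $\Delta_t$, since $\Delta_v$ depends only on earlier information and the fresh $\phi_{v,\cdot}$.
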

\begin{proof}
For each edge $e$, first draw an independent base multiset $K_e$ with
the prescribed law of $M_e$. Write $\sigma_u(c)$ for an endpoint sign
in this base multiset. Independently draw a uniform sign $\phi_{u,e}\in\{-1,1\}$ for
each endpoint of each edge. Obtain $M_e$ by replacing its endpoint signs
by $\phi_{u,e}\sigma_u(c)$. Complementation invariance implies that the
result has the required distribution, independently on different edges.

Draw the uniform assignment independently and put
$\theta_{u,e}=\phi_{u,e}(2x_u-1)$. Conditional on any assignment $x$,
the transformation from the $\phi$ signs to the $\theta$ signs preserves
their independent uniform distribution. Consequently all the $\theta$ signs
are mutually independent and independent of $x$ and of the base multisets.
A partner literal on $v$ is false precisely when
$\sigma_v(c)\theta_{v,e}=-1$.

Reveal all $K_e$ first. At step $t$, reveal the signs $\phi_{t,e}$ on
edges incident to $t$, followed by $x_t$. Let $\mathcal H_t$ be the
information after this step. For $e=\{t,v\}$ set
\[
 a_{tv}=
 \begin{cases}
 \displaystyle\sum_{\substack{c\in K_e\\
                    \sigma_v(c)\theta_{v,e}=-1}}\sigma_t(c),&v<t,\\[6pt]
 \displaystyle\frac12\sum_{c\in K_e}\sigma_t(c),&v>t.
 \end{cases}
\]
These coefficients are determined by $\mathcal H_{t-1}$. Multiplying
them by $\phi_{t,e}$ gives precisely the earlier-pair contributions or
the average later-pair contributions to $\Delta_t$. This proves its
representation. The conditional variance of the signed sum is $Y_t$,
because the new signs are independent and have mean $0$. Also
$|a_{tv}|\le m_0$, proving the upper bound on $Y_t$.

For $t<v$ define
\[
 \eta_{tv}=
 \sum_{\substack{c\in K_e\\\sigma_v(c)=-1}}\sigma_t(c)
       -\frac12\sum_{c\in K_e}\sigma_t(c).
\]
This is half the difference of the signed sums over the two partner
polarities, so $|\eta_{tv}|\le |K_e|/2\le m_0/2$.
If $\theta_{v,e}=1$, the first sum selects the false partner literals;
if $\theta_{v,e}=-1$, their complementary set is selected.
Subtracting the average gives the asserted deviation with opposite signs
in these two cases. The $\theta$ signs at $v$ are jointly independent
and uniform conditional on $\mathcal H_{v-1}$, as required.
\end{proof}

\begin{samepage}
We will apply the following probability estimate both to sequential
choices and to functions of the independent edge data. Its interval
form gives the constants used in Step~2 of the main proof. All sampling
experiments to which we apply it have finite outcome spaces.

\begin{lemma}[Conditional ranges and bounded differences]\label{lem:random_ranges}
On a finite probability space, let $X_1,\ldots,X_k$ be exposed in order,
and let $\mathcal H_j$ denote the information after step $j$. Conditional on the earlier information,
suppose $X_j$ lies in an interval of length $\ell_j$,
where the interval is known before $X_j$ is revealed. If
$\sum_j\ell_j^2\le B$ on every path for a deterministic $B>0$, then,
for $a>0$,
\[
 \Prb\left(\sum_{j=1}^k
       \bigl(X_j-\E[X_j\mid\mathcal H_{j-1}]\bigr)<-a\right)
 \le e^{-2a^2/B}.
\]
In particular, if $f$ is a function of independent inputs and changing
input $j$ changes $f$ by at most $c_j$, with $\sum_jc_j^2>0$, then
\[
\Prb(f<\E f-a)\le e^{-2a^2/\sum_jc_j^2}.
\]
If all $c_j$ are $0$, $f$ is constant and this probability is $0$.
\end{lemma}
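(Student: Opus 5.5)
The plan is to prove the interval form by the usual exponential-moment argument, that is, Azuma--Hoeffding with Hoeffding's lemma applied to each conditional increment. Then derive the bounded-differences form from it through the Doob martingale. Put $Z_j=X_j-\E[X_j\mid\mathcal H_{j-1}]$ and $S_m=\sum_{j\le m}Z_j$. Since the space is finite, all conditional expectations are finite sums and no integrability issues arise.

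\emph{Step 1: Hoeffding's lemma, conditionally.} Conditional on $\mathcal H_{j-1}$, the variable $Z_j$ has mean $0$ and lies in a known interval of length $\ell_j$. Hoeffding's lemma then gives $\E[e^{-sZ_j}\mid\mathcal H_{j-1}]\le e^{s^2\ell_j^2/8}$ for $s\ge0$. I will use the standard proof: convexity of the exponential on the interval, followed by the bound $\log(1-p+pe^u)-pu\le u^2/8$, checked via the second derivative being at most $1/4$.

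\emph{Step 2: peeling, the only delicate point.} The lengths $\ell_j$ may depend on the path, and only $\sum_j\ell_j^2\le B$ holds on every path. So I cannot simply multiply deterministic factors. Instead I will show by backward induction that
\[
 M_m=\exp\Bigl(-sS_m-\tfrac{s^2}{8}\sum_{j\le m}\ell_j^2\Bigr)
\]
is a supermartingale. This works because $\ell_m$ is $\mathcal H_{m-1}$-measurable, so Step~1 gives $\E[M_m\mid\mathcal H_{m-1}]\le M_{m-1}$. Hence $\E M_k\le1$. Using $\sum_j\ell_j^2\le B$ pointwise, this yields $\E e^{-sS_k}\le e^{s^2B/8}$. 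Markov's inequality gives $\Prb(S_k<-a)\le e^{-sa+s^2B/8}$, and the choice $s=4a/B$ gives $e^{-2a^2/B}$.

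\emph{Step 3: bounded differences.} Expose the independent inputs in order and set $X_j=\E[f\mid\mathcal H_j]$ as the Doob martingale, so that the increments telescope to $f-\E f$. By independence, given $\mathcal H_{j-1}$ the variable $X_j$ is a function of input $j$ alone, namely an average of $f$ over the later inputs. Its values for different choices of input $j$ differ by at most $c_j$, so it lies in an interval of length $c_j$ determined by $\mathcal H_{j-1}$. Apply Step~2 with $B=\sum_jc_j^2>0$; the telescoping sum is exactly $f-\E f$. Finally, if all $c_j=0$ then $f$ is constant, so $f<\E f-a$ is impossible.
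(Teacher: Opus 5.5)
Your proposal is correct and follows the paper's proof step for step: the same exponential supermartingale, with $\frac{s^2}{8}\sum_{j\le m}\ell_j^2$ subtracted to absorb the path-dependent lengths, the same Markov step with $s=4a/B$, and the same Doob martingale of successive conditional expectations for the bounded-differences form. The only difference is in how Hoeffding's lemma is proved, and it does not matter. You use the convexity/two-point argument, while the paper bounds the second derivative of $\log\E e^{\lambda Z}$ by the variance under the tilted distribution, which is at most $\ell^2/4$.
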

\end{samepage}
\begin{proof}
A centered random variable $Z$ supported on an interval of length $\ell$
satisfies
$\E e^{\lambda Z}\le e^{\lambda^2\ell^2/8}$.
To see this, let $\psi(\lambda)=\log\E e^{\lambda Z}$.
Then $\psi(0)=\psi'(0)=0$, and $\psi''(\lambda)$ is the variance under
the distribution that reweights each outcome by $e^{\lambda Z}$.
This distribution has the same support. Every random variable in an
interval of length $\ell$ has variance at most $\ell^2/4$: subtract the
midpoint, bound the second moment by $\ell^2/4$, and use that variance
is no larger than this second moment. Integrating the bound on
$\psi''$ twice proves the exponential estimate.

Write $Z_j=X_j-\E[X_j\mid\mathcal H_{j-1}]$. The interval length
$\ell_j$ may depend on the earlier observations, so we retain its actual
value along each path. For a fixed real $\lambda$, put
\[
 M_j=\exp\left(\lambda\sum_{i=1}^j Z_i
                    -\frac{\lambda^2}{8}\sum_{i=1}^j\ell_i^2\right),
 \qquad M_0=1.
\]
Both $M_{j-1}$ and $\ell_j$ are known before step $j$. Applying the
one-step estimate conditionally therefore gives
\[
 \E[M_j\mid\mathcal H_{j-1}]
 =M_{j-1}e^{-\lambda^2\ell_j^2/8}
       \E[e^{\lambda Z_j}\mid\mathcal H_{j-1}]
 \le M_{j-1}.
\]
Taking expectations successively yields $\E M_k\le1$. The pathwise
bound $\sum_j\ell_j^2\le B$ now implies
\[
 \E\exp\left(\lambda\sum_jZ_j\right)
 \le e^{\lambda^2B/8}\E M_k\le e^{\lambda^2B/8}.
\]
Use this estimate with $-\lambda$, where $\lambda>0$. Markov's
inequality gives
\[
 \Prb\left(\sum_jZ_j<-a\right)
 \le \exp(-\lambda a+\lambda^2B/8).
\]
Taking $\lambda=4a/B$ proves the first assertion.
For the second assertion, reveal the independent
inputs one at a time and consider the successive conditional expectations
of $f$. Their increments have conditional mean zero and conditional
range at most $c_j$. Indeed, two possible values of input $j$ change
$f$ by at most $c_j$ for every fixed choice of the remaining inputs,
and hence also after averaging those inputs. The first assertion applies.
\end{proof}

An average gain may include an arbitrary unit-clause contribution.
The next lemma shows that such a shift cannot remove the useful absolute
fluctuations of a symmetric signed sum. Its second assertion converts a
variance estimate into the average gap needed below.

\begin{lemma}[Absolute gains and second moments]\label{lem:random_moments}
Let $Z=\sum_i a_i\phi_i$ for independent uniform signs
$\phi_i\in\{-1,1\}$, and put
$Y=\sum_i a_i^2$. For every real $b$ and every $K>0$,
\[
 \E\min\{|b+Z|,K\}
 \ge\frac12\left(\sqrt{Y/3}-\frac{Y}{4K}\right).
\]
For any nonnegative random variable $U$ with $\E U>0$ and finite second
moment,
\[
 \E\sqrt U\ge\frac{(\E U)^{3/2}}{(\E U^2)^{1/2}}.
\]
\end{lemma}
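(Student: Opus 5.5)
We prove the two assertions separately, starting with the second, since it is a pure moment inequality. For $U\ge0$ we write $U=U^{1/4}\cdot U^{3/4}$ and apply Hölder's inequality with exponents $3/2$ and $3$:
\[
 \E U=\E\bigl[U^{1/3}\cdot U^{2/3}\bigr]
 \le\bigl(\E U^{1/2}\bigr)^{2/3}\bigl(\E U^{2}\bigr)^{1/3}.
\]
Here we used $(U^{1/3})^{3/2}=U^{1/2}$ and $(U^{2/3})^{3}=U^2$. Raising both sides to the power $3/2$ gives $(\E U)^{3/2}\le\E\sqrt U\,(\E U^2)^{1/2}$. Dividing by $(\E U^2)^{1/2}>0$ gives the claim. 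This denominator is positive because $\E U>0$.

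For the first assertion we first remove the shift $b$ by symmetry. Since $Z$ and $-Z$ have the same law, $\E\min\{|b+Z|,K\}=\frac12\E\bigl[\min\{|b+Z|,K\}+\min\{|b-Z|,K\}\bigr]$. For every real $z$ we have $|b+z|+|b-z|\ge2|z|$, and $x\mapsto\min\{x,K\}$ is concave and nondecreasing on $[0,\infty)$ with value $0$ at $0$, hence subadditive. Therefore
\[
 \min\{|b+z|,K\}+\min\{|b-z|,K\}\ge\min\{|b+z|+|b-z|,K\}\ge\min\{2|z|,K\}\ge\min\{|z|,K\}.
\]
So it suffices to show $\E\min\{|Z|,K\}\ge\sqrt{Y/3}-Y/(4K)$, which is stronger than needed by a factor of $2$. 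We may either keep that slack or use the cruder inequality directly.

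To bound $\E\min\{|Z|,K\}$, we use the pointwise inequality $\min\{x,K\}\ge x-x^2/(4K)$ for $x\ge0$. For $x\le K$ it holds trivially. For $x>K$, the right-hand side is a concave quadratic in $x$ with maximum value $K$, attained at $x=2K$, so it is at most $K$. This gives $\E\min\{|Z|,K\}\ge\E|Z|-\E Z^2/(4K)=\E|Z|-Y/(4K)$. It remains to prove the Khintchine-type bound $\E|Z|\ge\sqrt{Y/3}$. We apply the second assertion with $U=Z^2$: it gives $\E|Z|\ge(\E Z^2)^{3/2}/(\E Z^4)^{1/2}$. Expanding the fourth moment of a Rademacher sum gives $\E Z^4=\sum_i a_i^4+3\sum_{i\ne j}a_i^2a_j^2\le3Y^2$. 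Hence $\E|Z|\ge Y^{3/2}/(\sqrt3\,Y)=\sqrt{Y/3}$. The case $Y=0$ is trivial.

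The only delicate step is handling the arbitrary shift $b$, and the symmetrization together with the subadditivity of truncation settles it. The remaining steps are routine moment computations.
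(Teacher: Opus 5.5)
Your proof is correct and is essentially the paper's argument. It uses the same symmetrization via $|b+z|+|b-z|\ge2|z|$ plus truncation for the shift, the same bound $\min\{x,K\}\ge x-x^2/(4K)$, and the same estimate $\E Z^4\le3Y^2$. It also derives $\E|Z|\ge\sqrt{Y/3}$ from the one moment inequality that gives the second assertion, which is the paper's $(\E V^2)^3\le(\E V)^2\E V^4$ applied with $V=|Z|$ and with $V=\sqrt U$.

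The only real difference is that you prove this inequality by one H\"older step with exponents $3/2$ and $3$, whereas the paper uses two Cauchy--Schwarz steps. Two small slips: the split you actually use is $U^{1/3}\cdot U^{2/3}$, not $U^{1/4}\cdot U^{3/4}$. And the reduction to $\E\min\{|Z|,K\}\ge\sqrt{Y/3}-Y/(4K)$ is exactly what is needed, not stronger by a factor of $2$.
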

\begin{proof}
Independence and zero means give $\E Z^2=Y$ and
\[
 \E Z^4=\sum_i a_i^4+6\sum_{i<j}a_i^2a_j^2\le3Y^2.
\]
We first explain the moment inequality that will be used twice in this
proof. Let $V$ be any nonnegative random variable with finite fourth
moment. The Cauchy--Schwarz inequality gives
\begin{align*}
 (\E V^2)^2
 &=\bigl(\E[\sqrt V\,(V\sqrt V)]\bigr)^2
 \le (\E V)(\E V^3),\\
 (\E V^3)^2
 &=\bigl(\E[V\,V^2]\bigr)^2
 \le (\E V^2)(\E V^4).
\end{align*}
Squaring the first inequality and using the second, we obtain
\[
 (\E V^2)^4\le (\E V)^2(\E V^2)(\E V^4).
\]
If $\E V^2>0$, division by $\E V^2$ gives
\[
 (\E V^2)^3\le (\E V)^2\E V^4.
\]
The same inequality holds when $\E V^2=0$, since its right-hand side
is nonnegative. Applying it with $V=|Z|$ yields
$Y^3\le3Y^2(\E|Z|)^2$. Consequently,
$\E|Z|\ge\sqrt{Y/3}$ when $Y>0$; this conclusion is also immediate
when $Y=0$.
Write $r_+=\max\{r,0\}$. Then $(z-K)_+\le z^2/(4K)$ for $z\ge0$,
because for $z>K$
the inequality is equivalent to $(z-2K)^2\ge0$.
It follows that
\[
 \E\min\{|Z|,K\}\ge\sqrt{Y/3}-Y/(4K).
\]
The triangle inequality gives $|b+z|+|b-z|\ge2|z|$.
After truncation this implies
$\min\{|b+z|,K\}+\min\{|b-z|,K\}\ge\min\{|z|,K\}$.
Symmetry of $Z$ makes the two expectations on the left equal,
proving the first assertion.

For the second assertion, apply the same moment inequality with
$V=\sqrt U$. Its fourth moment is $\E U^2$, which is finite by
assumption. We obtain
\[
 (\E U)^3\le(\E\sqrt U)^2\E U^2.
\]
Since $\E U>0$, this inequality also implies $\E U^2>0$.
Taking square roots and dividing by $\sqrt{\E U^2}$ proves the claim.
\end{proof}

We can now bound the accumulated requirements $A_T$ in the main proof.
There are two sources of variation: the requirement at an individual
branching step, and the amount of conditional variance available across
the selected variables. The two terms in the following bound account
for these separately.

\begin{lemma}[Accumulated requirements]\label{lem:random_requirements}
Use $A_T,T,r$ from the proof of Theorem~\ref{thm:random_main}, with
$c_1$ and the pair-law parameters from Theorem~\ref{thm:weighted}. Then
for all $a,b>0$,
\[
 \Prb\left(A_T<\frac{c_1T}{26}-a-b\right)
 \le e^{-2a^2/(Tm_0^2)}+e^{-169b^2/(m_0^2r)}.
\]
\end{lemma}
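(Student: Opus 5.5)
We write $A_T=\sum_{t\le T}\omega_t\,\ind[x_t\ne p_t](|\Delta_t|+1)$ and split it into two martingale-type sums, matching the two exceptional probabilities. We work in the exposure order of Lemma~\ref{lem:random_exposure}. At step $t$ we first reveal the signs $\phi_{t,e}$, which determine $\Delta_t$, and then the uniform bit $x_t$, which is independent of $\Delta_t$. Put
\[
 R_t=\omega_t\min\{|\Delta_t|,K_t\},\qquad K_t=m_0\sqrt{d_t},
\]
and use $A_T\ge\sum_{t\le T}\ind[x_t\ne p_t]R_t$. The truncation keeps $R_t\le m_0$, by~\eqref{eq:random_weight_bounds}.

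\textbf{First sum: the choice of nonpreferred values.} Given the information before $x_t$ is revealed, the indicator $\ind[x_t\ne p_t]$ is a fair coin. Hence $X_t=\ind[x_t\ne p_t]R_t$ has conditional mean $R_t/2$ and lies in the interval $[0,R_t]$, whose length is at most $m_0$ and is known before $x_t$ is revealed. Lemma~\ref{lem:random_ranges} applies with $B=Tm_0^2$. It shows that $\sum_tX_t<\frac12\sum_tR_t-a$ has probability at most $e^{-2a^2/(Tm_0^2)}$, which is the first term.

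\textbf{Second sum: the sizes of the requirements.} Next we bound $\frac12\sum_tR_t$ from below. Conditional on the information before the signs at $t$ are revealed, Lemma~\ref{lem:random_moments} with $Y=Y_t$ and $K=K_t$ gives
\[
 \E[R_t\mid\cdot]\ge\frac{\omega_t}{2}\Bigl(\sqrt{Y_t/3}-\frac{Y_t}{4K_t}\Bigr).
\]
The main obstacle is that $Y_t$ is random: it depends on the earlier assignment through the coefficients $a_{tv}$ with $v<t$. We therefore need a lower bound on $\E\sqrt{Y_t}$ of order $\sqrt{\kappa d_t}$. For this we use the second assertion of Lemma~\ref{lem:random_moments}. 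Each $a_{tv}^2$ has mean at least about $\kappa/4$: for $v>t$ this uses condition~3 of Definition~\ref{def:invariant} and the conditioning of Lemma~\ref{lem:observed_pairs}, and for $v<t$ it uses the uniform $\theta$ signs. Moreover $\E Y_t^2\le m_0^4d_t^2$. Together these give $\E\sqrt{Y_t}\gtrsim\sqrt{d_t}\,\kappa^{3/2}/m_0^2$. This is where the constant $c_1=\sqrt\kappa/(2\sqrt{1+4m_0^4/\kappa^2})$ should emerge, presumably after balancing the truncation term $Y_t/(4K_t)\le m_0\sqrt{d_t}/4$. Choosing $K_t$ to be a suitable multiple of $\sqrt{d_t}$ makes this term a small fraction of the main term. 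Multiplying by $\omega_t\ge1/(2\sqrt{d_t})$, summing over $t\le T$, and halving yields the target mean $c_1T/26$.

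\textbf{Concentration of the second sum.} It remains to show that $\frac12\sum_tR_t$ stays within $b$ of this mean. We cannot use the conditional means step by step, because $Y_t$ itself fluctuates. Instead we treat $\frac12\sum_t R_t$, or the equivalent expression through the quantities $\E[R_t\mid Y_t]$, as a function of the independent edge data and the $\theta$ signs. We then apply the bounded-differences form of Lemma~\ref{lem:random_ranges}. Changing the data at one endpoint $v$ alters the coefficients $a_{tv}$ only for neighbours $t$ of $v$. Each such change moves the corresponding $R_t$ by $O(\omega_t m_0)=O(m_0/\sqrt{d_t})$. Summed over the $d_v$ neighbours, weighted appropriately, the total change is $O(m_0)$ per vertex, and over $r$ vertices the sum of squared differences is $O(m_0^2r)$. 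Tracking the constants should give the denominator $m_0^2r/169$, and hence the second term $e^{-169b^2/(m_0^2r)}$. A union bound over the two events completes the proof. The delicate step is keeping every per-vertex influence bounded by $O(m_0)$ independently of the degrees; the weights $\omega_t$ and the truncation at $K_t$ are designed for exactly this.
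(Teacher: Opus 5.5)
\textbf{Verdict: there is a genuine gap.} The concentration step as you set it up cannot give an exponent independent of the degrees, and the mean bound you propose falls short of $c_1T/26$.

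\textbf{The concentration step.} Your first martingale step is sound: it runs over the bits $x_t$ alone, with intervals $[0,R_t]$ of length at most $m_0$. The trouble is the concentration of $\frac12\sum_tR_t$. You treat all the data at a vertex $v$ as one coordinate (its signs $\phi_{v,e}$ and $x_v$, equivalently all $\theta_{v,e}$). Its influence bound is then of order $m_0\sum_{t\sim v}\omega_t$, and no weighting makes this $O(m_0)$. It is about $m_0\sqrt d$ in a $d$-regular graph, and about $m_0d_v$ at a hub whose neighbours have bounded degree. The squared influences then sum to order $m_0^2dr$. With $b$ of order $T$, the bounded-differences exponent is only of order $r/d$, which is useless in exactly the dense regime the theorem targets.

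The missing idea is to split the randomness by \emph{edges}. Each edge $e=\{u,v\}$ carries $K_e$ and its two signs $\theta_{u,e},\theta_{v,e}$. These inputs are independent of each other and of $x$, and every coefficient $a_{tv}$, hence every $Y_t$, is a function of them. The quantity you concentrate must depend on these inputs only. $R_t$ does not, since it also depends on the fresh signs $\phi_{t,e}=\theta_{t,e}(2x_t-1)$.

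The paper therefore lets a single martingale step average over both the signs at $t$ and $x_t$. The variable $\widetilde X_t=[t\in I]\,\omega_t\min\{|\Delta_t|,m_0\sqrt{d_t}\}\in[0,m_0]$ has $\E[\widetilde X_t\mid\mathcal H_{t-1}]\ge\mu_t=\omega_t\sqrt{Y_t}/13$, which gives the first term with the same budget $Tm_0^2$. The remaining sum $\sum_{t\le T}\mu_t$ depends only on the edge inputs. An edge changes it by at most $m_0(\omega_u+\omega_v)/13$, since $|\sqrt{R+a^2}-\sqrt{R+b^2}|\le|a-b|\le m_0$. Then $\sum_e(\omega_u+\omega_v)^2\le2\sum_v\omega_v^2d_v\le2r$ gives exactly $e^{-169b^2/(m_0^2r)}$. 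Bounded differences for $\frac12\sum_tR_t$, even with edge inputs, would have influences of order $m_0\omega_t$ rather than $m_0\omega_t/13$, plus the inputs $x_t$. So the constant $169$ would not come out.

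\textbf{The mean bound.} The second assertion of Lemma~\ref{lem:random_moments}, with $\E Y_t\ge\kappa d_t/4$ and your bound $\E Y_t^2\le m_0^4d_t^2$, gives only $\E\sqrt{Y_t}\ge\kappa^{3/2}\sqrt{d_t}/(8m_0^2)$. But $\kappa\le m_0^2$ implies $c_1\ge\kappa^{3/2}/(2\sqrt5\,m_0^2)$, so this is too weak. For $m_0=4$ and $\kappa=3/4$ you get about half of $c_1$. The stated form of $c_1$ is what that inequality gives if you use instead $\E Y_t^2\le(\E Y_t)^2+m_0^4d_t/4$, which follows from independence of the $a_{tv}$ over the edges.

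The paper avoids moments of $Y_t$ altogether. From $Y_t\le m_0^2d_t$ it gets $\sqrt{Y_t}\ge Y_t/(m_0\sqrt{d_t})$, hence $\E\sqrt{Y_t}\ge\kappa\sqrt{d_t}/(4m_0)\ge c_1\sqrt{d_t}$. The same inequality controls the truncation term relative to the main term: $Y_t/(4m_0\sqrt{d_t})\le\sqrt{Y_t}/4$, so the bracket in Lemma~\ref{lem:random_moments} is at least $(1/\sqrt3-1/4)\sqrt{Y_t}$.

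Your absolute bound $Y_t/(4K_t)\le m_0\sqrt{d_t}/4$ can exceed $\sqrt{Y_t/3}$, leaving no positive lower bound. Enlarging $K_t$ instead would lengthen the intervals in your first step beyond $m_0$ and spoil the first exponential term.
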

\begin{proof}
Use the exposure of Lemma~\ref{lem:random_exposure}.

\par\medskip\noindent\textbf{1. A lower bound at each step.}
Put
\[
 \widetilde X_t=[t\in I]\,\omega_t
            \min\{|\Delta_t|,m_0\sqrt{d_t}\}.
\]
Here $[t\in I]$ is $1$ when a constraint is created at step $t$,
and zero otherwise. Thus $0\le\widetilde X_t\le m_0$ and
$A_T\ge\sum_{t\le T}\widetilde X_t$.
Conditional on $\mathcal H_{t-1}$, the gain is the fixed shift $b_t$
plus a symmetric signed sum of variance $Y_t$.
The bit $x_t$ is independent of this gain and chooses the nonpreferred
value with probability $1/2$. Lemma~\ref{lem:random_moments} gives
\begin{align*}
 \E[\widetilde X_t\mid\mathcal H_{t-1}]
 &\ge\frac{\omega_t}{4}
       \left(\sqrt{Y_t/3}-\frac{Y_t}{4m_0\sqrt{d_t}}\right)\\
 &\ge\frac{\omega_t\sqrt{Y_t}}4
             \left(\frac1{\sqrt3}-\frac14\right)
 \ge \mu_t:=\frac{\omega_t\sqrt{Y_t}}{13}.
\end{align*}
The middle inequality uses $Y_t\le m_0^2d_t$.
Lemma~\ref{lem:random_ranges}, with interval lengths $m_0$, therefore gives
\[
 \Prb\left(A_T<\sum_{t\le T}\mu_t-a\right)
 \le e^{-2a^2/(Tm_0^2)}.
\]

\par\medskip\noindent\textbf{2. The average available variance.}
For an earlier neighbor $v<t$, condition on the base multiset and let
$q_+$ and $q_-$ be its signed sums at $t$ over the two polarities at $v$.
The coefficient $a_{tv}$ takes these two values with equal probabilities.
Thus
\[
 \E[a_{tv}^2\mid K_{\{t,v\}}]
 =\frac{q_+^2+q_-^2}{2}
 \ge\frac{(q_++q_-)^2}{4}.
\]
Averaging and applying the model's third condition gives
$\E a_{tv}^2\ge\kappa/4$. For a later neighbor, $a_{tv}$ is half
the endpoint signed sum, giving the same lower bound directly.
Summing these lower bounds gives
\[
 \E Y_t\ge\frac{\kappa d_t}{4}.
\]
We can now use the deterministic upper bound $Y_t\le m_0^2d_t$
from Lemma~\ref{lem:random_exposure}. Since $d_t\ge1$ and $m_0>0$,
\[
 Y_t=(\sqrt{Y_t})^2\le m_0\sqrt{d_t}\sqrt{Y_t},
 \qquad
 \sqrt{Y_t}\ge\frac{Y_t}{m_0\sqrt{d_t}}.
\]
Taking expectations therefore gives the direct estimate
\[
 \E\sqrt{Y_t}\ge\frac{\E Y_t}{m_0\sqrt{d_t}}
       \ge\frac{\kappa}{4m_0}\sqrt{d_t}
       \ge c_1\sqrt{d_t}.
\]
For the last inequality, the constant already stated in the theorem
satisfies
\[
 c_1^2=\frac{\kappa^3}{4(\kappa^2+4m_0^4)}
       \le\frac{\kappa^2}{16m_0^2},
\]
because $(\kappa-2m_0^2)^2\ge0$. Both sides of the resulting comparison
$c_1\le\kappa/(4m_0)$ are positive. This argument needs no variance
estimate for $Y_t$ and no independence among its summands.
The lower weight bound~\eqref{eq:random_weight_bounds} now implies
$\E\sum_{t\le T}\mu_t\ge c_1T/26$.

\par\medskip\noindent\textbf{3. Concentration over the edge data.}
Regard each base multiset and its two $\theta$ signs as one independent
edge input. Changing the input on $\{u,v\}$ affects only the terms
$\mu_u,\mu_v$ that belong to the prefix. In each affected square root,
only one squared coefficient changes, with absolute coefficient in
$[0,m_0]$. For $R\ge0$ and $a,b\in[0,m_0]$,
$|\sqrt{R+a^2}-\sqrt{R+b^2}|\le|a-b|\le m_0$.
Thus an edge changes the sum by at most
$m_0(\omega_u+\omega_v)/13$.
The sum of squared change bounds is at most
\[
 \frac{m_0^2}{169}\sum_{\{u,v\}\in E_H}(\omega_u+\omega_v)^2
 \le\frac{2m_0^2}{169}\sum_v\omega_v^2d_v
 \le\frac{2m_0^2r}{169}.
\]
Lemma~\ref{lem:random_ranges} gives
\[
 \Prb\left(\sum_{t\le T}\mu_t<c_1T/26-b\right)
 \le e^{-169b^2/(m_0^2r)}.
\]
Combining this with the bound from the first step proves the lemma.
\end{proof}

The last estimate controls how much the later values can help satisfy
the accumulated constraints. A shared variable may affect many constraints,
but the relevant endpoint signs on its different edges are still
independent when that variable is reached. This is the estimate for
$Q_T$ used in Step~2 of the main proof.

\begin{lemma}[Favorable contributions from later values]\label{lem:random_compensation}
For every $c>0$,
\[
 \Prb(Q_T<-c)\le e^{-2c^2/(m_0^2T)}.
\]
\end{lemma}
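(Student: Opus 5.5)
The plan is to write $Q_T$ as a sum of martingale increments, exposed in variable order as in Lemma~\ref{lem:random_exposure}, and to apply the sequential form of Lemma~\ref{lem:random_ranges}. By definition,
\[
 Q_T=\sum_{t\in I}\omega_t(2p_t-1)\bigl(\mathrm{gain}_t(x)-\Delta_t\bigr),
\]
and $\mathrm{gain}_t(x)-\Delta_t$ is the sum, over the later neighbors $v>t$ of $t$, of the deviations $G_{tv}(x_v)-\frac12\sum_{c\in M_{\{t,v\}}}s_t(c)=\phi_{t,\{t,v\}}\eta_{tv}\theta_{v,\{t,v\}}$. Earlier neighbors contribute exactly to both $\mathrm{gain}_t$ and $\Delta_t$ and cancel. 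Exchanging the order of summation groups the terms by the later endpoint $v$:
\[
 Q_T=\sum_{v}X_v,\qquad
 X_v=\sum_{\substack{t<v,\ t\le T\\ t\sim v}}[t\in I]\,\omega_t(2p_t-1)\phi_{t,\{t,v\}}\eta_{tv}\,\theta_{v,\{t,v\}}.
\]

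First I will check measurability. For each edge $\{t,v\}$ with $t<v$, the factor $[t\in I]\,\omega_t(2p_t-1)\phi_{t,\{t,v\}}\eta_{tv}$ is determined by $\mathcal H_{t}\subseteq\mathcal H_{v-1}$. This holds because $p_t$ depends on $\Delta_t$, the indicator $[t\in I]$ depends on $x_t$ and $p_t$, and $\eta_{tv}$ is known from the start. By Lemma~\ref{lem:random_exposure}, the signs $\theta_{v,e}$ at $v$ are independent and uniform given $\mathcal H_{v-1}$. Hence $X_v$ is a signed sum of known coefficients $c_{tv}$ with $|c_{tv}|\le\omega_t m_0/2$. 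It has conditional mean $0$ and lies in an interval of length $\ell_v=2\sum_{t}|c_{tv}|$.

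Using this bounded range directly would give a constant depending on degrees. Instead I will refine the exposure so that each sign $\theta_{v,\{t,v\}}$ is one step. Each step $X_{tv}=c_{tv}\theta_{v,\{t,v\}}$ then has conditional mean $0$ and range length $2|c_{tv}|\le\omega_t m_0$, and its range is known beforehand because $c_{tv}$ is $\mathcal H_{v-1}$-measurable. Since the $\theta$'s at $v$ are jointly independent given $\mathcal H_{v-1}$, revealing them one at a time keeps each conditionally uniform.

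The sum of squared lengths is at most
\[
 \sum_{t\le T}\omega_t^2m_0^2\,\#\{v>t:v\sim t\}\le m_0^2\sum_{t\le T}\omega_t^2d_t\le m_0^2T,
\]
by the upper bound in \eqref{eq:random_weight_bounds}. This bound is deterministic, so Lemma~\ref{lem:random_ranges} applies with $B=m_0^2T$ and $a=c$. It gives $\Prb(Q_T<-c)\le e^{-2c^2/(m_0^2T)}$, which is the claim.

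The main obstacle is the bookkeeping that justifies the exposure order. The coefficient $c_{tv}$ involves $x_t$ and the signs $\phi$ at $t$, but $x_v$ enters only through $\theta_{v,e}=\phi_{v,e}(2x_v-1)$. I must confirm that, after conditioning on $\mathcal H_{v-1}$, these $\theta$ signs are still fresh uniform signs even though $x_v$ and the $\phi_{v,e}$ are also used in $\Delta_v$ and in $v$'s own constraint. Those quantities are revealed at step $v$ itself, after the coefficients $c_{tv}$ are fixed. Lemma~\ref{lem:random_exposure} asserts exactly this, so the remaining task is to state the step order carefully.
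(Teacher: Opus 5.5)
Your proof is correct and follows the paper's argument: you group the terms of $Q_T$ by the later endpoint $v$, use that the coefficients $[t\in I]\,\omega_t(2p_t-1)\phi_{t,\{t,v\}}\eta_{tv}$ are fixed by $\mathcal H_{v-1}$ while the signs $\theta_{v,\{t,v\}}$ are fresh independent uniform signs, and charge each edge's budget $\omega_t^2m_0^2$ to its earlier endpoint to get the total $m_0^2T$. The difference is only packaging: the paper bounds the conditional moment generating function of each group $R_v$ directly, multiplying over the independent signs at $v$ and then using the tower property, whereas you refine the filtration to one sign per step and invoke Lemma~\ref{lem:random_ranges}, which gives the same constant.
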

\begin{proof}
By Lemma~\ref{lem:random_exposure}, the contribution from $\{t,v\}$,
where $t\le T$ and $t<v$, is
\[
 [t\in I]\,\omega_t(2p_t-1)
      \phi_{t,\{t,v\}}\eta_{tv}\theta_{v,\{t,v\}}.
\]
Group these terms by their later endpoint $v$. Write
\[
 \beta_{tv}=[t\in I]\,\omega_t(2p_t-1)\phi_{t,\{t,v\}}\eta_{tv},
 \qquad
 R_v=\sum_{\substack{t\le T,\ t<v\\\{t,v\}\in E_H}}
          \beta_{tv}\theta_{v,\{t,v\}},
\]
so that $Q_T=\sum_v R_v$. Each coefficient $\beta_{tv}$ is determined by
$\mathcal H_{v-1}$: the comparison at $t$ has already been made and its
endpoint sign has already been exposed. Conditional on this history,
the signs $\theta_{v,\{t,v\}}$ are jointly independent and uniform.

For a fixed coefficient $a$, the variable $a\theta$ is centered and
lies in an interval of length $2|a|$. The one-step exponential estimate
in the proof of Lemma~\ref{lem:random_ranges} gives
$\E e^{\lambda a\theta}\le e^{\lambda^2a^2/2}$ for every real
$\lambda$. Multiplying these estimates for the independent signs at
$v$, and using $|\beta_{tv}|\le\omega_t m_0/2$, yields
\[
 \E[e^{\lambda R_v}\mid\mathcal H_{v-1}]
 \le \exp\left(\frac{\lambda^2}{2}\sum_t \beta_{tv}^2\right)
 \le e^{\lambda^2 B_v/8},
 \qquad
 B_v=m_0^2\sum_{\substack{t\le T,\ t<v\\\{t,v\}\in E_H}}\omega_t^2.
\]
The quantities $B_v$ are deterministic after the observed graph is fixed.
Every edge in this sum is charged once to its earlier endpoint. Hence
\[
 \sum_v B_v
 =m_0^2\sum_{t\le T}\omega_t^2
        |\{v>t:\{t,v\}\in E_H\}|
 \le m_0^2\sum_{t\le T}\omega_t^2d_t
 \le m_0^2T.
\]
Taking conditional expectations successively now gives
$\E e^{\lambda Q_T}\le e^{\lambda^2m_0^2T/8}$.
For $\lambda>0$, Markov's inequality applied with $-\lambda$ gives
\[
 \Prb(Q_T<-c)\le e^{-\lambda c+\lambda^2m_0^2T/8}.
\]
Taking $\lambda=4c/(m_0^2T)$ proves the claim. The argument does not
require independence between $A_T$ and $Q_T$.
\end{proof}

\subsection{Completing the proof for semirandom formulas}\label{app:random_main_completion}

This subsection supplies the parts of the proof of
Theorem~\ref{thm:random_main} that were only summarized in the main text.

\paragraph{The two models and the constants.}
Both models are covered by the supporting lemmas with $m_0=4$ and
$\kappa=3/4$. Here $m_0$ bounds the number of clauses on a pair, and
$\kappa$ bounds below the second moment of the signed occurrence count
at either endpoint. One uniformly signed clause has second moment one.
For independent inclusion of four clauses with probability $1/4$, it is
$4(1/4)(3/4)=3/4$. Complementing either endpoint preserves each law.
Set
\[
 c_1=\frac{\sqrt\kappa}{2\sqrt{1+4m_0^4/\kappa^2}},
 \qquad c_0=\frac{c_1^2}{10816m_0^2}.
\]
These are fixed positive constants for the two stated models.

\paragraph{The calculation in Step 2.}
Lemma~\ref{lem:random_requirements} gives, for any $a,b>0$,
\[
 \Prb\left(A_T<\frac{c_1T}{26}-a-b\right)
 \le e^{-2a^2/(Tm_0^2)}+e^{-169b^2/(m_0^2r)}.
\]
Lemma~\ref{lem:random_compensation} gives, for any $c>0$,
\[
 \Prb(Q_T<-c)\le e^{-2c^2/(m_0^2T)}.
\]
Take $a=b=c=c_1T/104$. Outside these three exceptional events,
\[
 A_T+Q_T\ge\frac{c_1T}{26}
              -3\frac{c_1T}{104}
          =\frac{c_1T}{104}>0.
\]
Thus the summed constraint fails. The first and third exponents equal
$c_1^2T/(5408m_0^2)\ge c_0r$.
The second is
$169c_1^2T^2/(10816m_0^2r)\ge c_0r$, since $T\ge r/2$.
Consequently
\[
 \Prb_{F,x}(A_T+Q_T\le0\mid G_F=H)\le3e^{-c_0r}.
\]

\paragraph{The details of Step 3.}
For a prefix $\alpha$, let $N(\alpha)$ count suffix assignments satisfying
its sum, and put $W(F)=\sum_{\alpha\in\bits^T}N(\alpha)$.
Here we define the sum by constructing the comparisons along the whole
prefix, even if the algorithm would reject that prefix earlier. Every
completion actually enumerated satisfies its whole-prefix sum, and distinct
search paths produce distinct completions. Thus $W(F)$ is an upper bound
on the number enumerated; equality is unnecessary. By
Lemma~\ref{lem:enumeration}, the number of recursive calls, including
calls that immediately reject their assignment, is at most
\[
 \operatorname{Nodes}(F)\le 2^{T+1}-1+2rW(F).
\]
The inactive variables have already been fixed during initialization.
Thus $W(F)$ counts each assignment to the $r$ active variables exactly
once. Equivalently, we may keep the original labels and sample all $n$
bits uniformly: the comparisons depend only on the active bits, and
their probability of being satisfied is unchanged. The factor converting
this probability to the candidate count is still $2^r$, not $2^n$.
Consequently
\[
 \E[W(F)\mid G_F=H]
 =2^r\Prb_{F,x}(A_T+Q_T\le0\mid G_F=H)
 \le3\cdot2^r e^{-c_0r}.
\]
In particular, on this fixed nonempty observed graph,
\[
 \E[\operatorname{Nodes}(F)\mid G_F=H]
 \le 2^{T+1}+6r\,2^r e^{-c_0r}.
\]
If $H$ is empty, the algorithm finishes in polynomial time; define
$W(F)=1$ in this case, so the same bound holds with $r=0$.
Since $\kappa\le m_0^2$, the definition of $c_1$ gives
$c_1\le\kappa/(4m_0)\le m_0/4$. Hence
$c_0\le1/173056<(\ln2)/2$.
The quantity $2^r e^{-c_0r}$ is therefore increasing in $r$.
Using $r\le n$ and averaging over $G_F$ gives
\[
 \E W(F)\le3\cdot2^n e^{-c_0n}.
\]
Lemma~\ref{lem:enumeration} bounds the search work by
$\Ostar(2^T+W(F))$. Since $2^T\le2^{\lceil n/2\rceil}$ and $c_0<(\ln2)/2$, this prefix
cost is absorbed by $\Ostar(2^ne^{-c_0n})$, proving the expectation bound.
Markov's inequality applied to the unconditional expectation also gives
\[
 \Prb_F\left(W(F)>2^n e^{-c_0n/2}\right)
 \le3e^{-c_0n/2}.
\]
The prefix cost $2^T$ is smaller than the claimed high-probability time
bound, up to a constant factor, because $c_0<\ln2$.
This completes the proof.

\subsection{Implementation details of the algorithm}\label{app:random_implementation}

This subsection specifies the bookkeeping of
Algorithms~\ref{alg:weighted} and~\ref{alg:random_search}, which
Section~\ref{sec:random_algorithm} describes only in outline.
The observed graph $G_F$ has no isolated vertices, since its vertices are
exactly the endpoints of its edges, and it remains fixed throughout the
search.

The two procedures use the following data. Shared variables are initialized
once by \textnormal{\textsc{RandomFormula}} and are available to every call
of \textnormal{\textsc{RandomSearch}}.
\begin{description}
\item[Fixed shared data.]
$U=(\mathbf u_1,\ldots,\mathbf u_r)$ is the variable order, and
$(\omega_1,\ldots,\omega_r)$ contains the constraint weights.
Neither changes during the search. All variables remain in $U$, including
variables that disappear from the residual formula.
\item[Shared best solution.]
$M$ is the largest score in $F_{\rm core}$ found so far, and
$\alpha^\star$ assigns all variables of $V_F$ and attains that score.
The outer procedure adds $s_0$ and $\alpha_0$ when returning the solution
for the original formula. Initially
$M=-\infty$ and $\alpha^\star=\emptyset$, indicating that no complete
assignment has yet been evaluated. Every recursive call may update this
same pair. An improvement is retained when the call returns and when
the search backtracks to explore another branch.
\item[Parameters of a recursive call.]
$F$ is the residual soft formula, $\mathcal C$ is the collection of
pruning constraints with their weights, and $\alpha$ is the current
partial assignment. The value $s$ is the score offset accumulated through
preprocessing and assignments: for any completion $y$, its score in
the formula $F_{\rm core}$ is $s$ plus the number of clauses it satisfies in
$F$. These four parameters belong to the current node and are copied
separately for its children. The parameter $T$ is the stopping depth;
it is passed unchanged to every recursive call.
\end{description}
At each call, $\alpha$ assigns exactly the first $|\alpha|$ variables of
$U$. In particular, $M$ and $\alpha^\star$ are shared across branches,
whereas $s$ and $\alpha$ describe only the current branch.
The identities used by \textnormal{\textsc{Preprocess}} apply to copies
of clauses as well. Its score guarantees therefore also apply to the
bounded-multiset extension in Appendix~\ref{app:random_signs}.

After initializing this shared data, apply
\textnormal{\textsc{Preprocess}} to $F_{\rm core}$ and start
\textnormal{\textsc{RandomSearch}} with the resulting formula and score
offset, an empty assignment, and an empty constraint collection.
The recursive search uses substitution, cancellation, and rejection,
but does not call the full \textnormal{\textsc{Simplify}} procedure of
Appendix~\ref{app:simplify} or
make forced assignments. Thus every call at depth $t=|\alpha|$ has
assigned exactly the first $t$ variables of $U$. Created constraints
remain in $\mathcal C$ with their updated bounds, even if they have
become automatically satisfied.

\section{Analysis of accumulated comparisons}\label{sec:counts}\label{app:accumulation}

This appendix supplies the count and projection analysis used in
Sections~\ref{sec:maxcut} and~\ref{sec:sbm}. The direct argument for
semirandom formulas in Section~\ref{sec:random} has its own estimates.
It also specifies the projection comparison routine used for hidden
communities. A few aggregate guesses account for the systematic part of the pair
coefficients; the remaining comparisons accumulate enough gap while a norm bound
controls their combined fluctuation. The matrix formulation is proof
machinery for these input families, rather than an additional class
definition required of the reader.

\subsection{One comparison algorithm}

The comparison routine receives the objective~\eqref{eq:quadratic},
a rational representation
$P=\sum_{k=1}^r\lambda_kv_kv_k^{\mathsf T}$ of polynomial encoding length,
and finite rational grids $\mathcal Y_k$ with stated errors $\delta_k\ge0$.
Each grid must approximate every $\langle v_k,x\rangle$ to within
$\delta_k$. Fix the variable order and partition its first
$T=\lfloor\tau n\rfloor$ variables into consecutive rounds
$B_1,\ldots,B_\ell$, where $0<\tau<1$. Treat the suffix as round
$\ell+1$ and write $\operatorname{rd}(t)$ for the index of the round containing $t$.

Compute $N=\operatorname{offdiag}(J-P)$ and the fixed allowances
\[
 s_t=\sum_k|\lambda_k|\,|v_k(t)|(\delta_k+|v_k(t)|),
 \qquad \mathcal Y=\prod_k\mathcal Y_k.
\]
Initialize the best original score and assignment. Process the tuples
$\hat y\in\mathcal Y$ consecutively; the grid description and each point
have polynomial size, and the tuples are streamed. For the current
tuple, define
\[
 a^{\hat y}=a+\sum_k\lambda_k\hat y_kv_k,
 \qquad F'_{\hat y}(x)=c+\langle a^{\hat y},x\rangle+
                    \sum_{u<v}N_{uv}x_ux_v.
\]
For each tuple, enumerate all assignments to the prefix and process
the rounds of the current prefix in order. For $t\in B_j$, compute
\[
 \Delta_t=a^{\hat y}_t+\sum_{w:\operatorname{rd}(w)<j}N_{tw}x_w
             +\frac12\sum_{w\ne t:\operatorname{rd}(w)\ge j}N_{tw}.
\]
Set $p_t=1$ if $\Delta_t\ge0$ and $p_t=0$ otherwise.
This preference uses only earlier rounds. Call $t$ a creator when
$x_t\ne p_t$, and let $\gamma_t(x)$ be the gain under $F'_{\hat y}$
of flipping it to $p_t$. Accumulate the comparisons
$\gamma_t(x)\le s_t$ for the creators of this prefix.
After all prefix rounds, enumerate the suffixes satisfying
\begin{equation}
 \sum_{t\text{ creator}}(\gamma_t(x)-s_t)\le0. \label{eq:counts-sum}
\end{equation}
This is one affine inequality on the suffix, so Lemma~\ref{lem:enumeration}
applies. Evaluate the original objective on every candidate. Candidates
need not have projections near the guessed grid point. Retain the best
score and assignment across all prefixes and tuples, and return them
after all runs finish.

\begin{lemma}[Sound comparisons with approximate projections]\label{lem:counts-exact}
Every optimum is enumerated in a run that approximates its projections.
The algorithm is exact for every supplied representation and round partition.
\end{lemma}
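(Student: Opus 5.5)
Fix an optimal complete assignment $x^*$ of the original objective. Since every grid $\mathcal Y_k$ approximates every projection $\langle v_k,x\rangle$ of a Boolean assignment to within $\delta_k$, choose a tuple $\hat y\in\mathcal Y$ with $|\langle v_k,x^*\rangle-\hat y_k|\le\delta_k$ for every $k$. The plan is to show that in the run for this tuple, the prefix $x^*|_{B_1\cup\cdots\cup B_\ell}$ is processed and its suffix satisfies~\eqref{eq:counts-sum}. Then it is enumerated by Lemma~\ref{lem:enumeration} and scored in the original objective. Exactness follows because the algorithm returns the maximum over scored candidates, each of which is a genuine complete assignment scored in the original formula.

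The key step compares, for every variable $t$, the original gain $\mathrm{gain}_t(x)=a_t+\sum_{w\ne t}J_{tw}x_w$ with the gain $\gamma'_t(x)$ of $t$ under $F'_{\hat y}$. The latter is $a_t+\sum_k\lambda_k\hat y_kv_k(t)+\sum_{w\ne t}N_{tw}x_w$. Since $N=\operatorname{offdiag}(J-P)$, for $w\ne t$ we have $J_{tw}-N_{tw}=P_{tw}=\sum_k\lambda_kv_k(t)v_k(w)$. Hence
\[
 \mathrm{gain}_t(x)-\gamma'_t(x)=\sum_k\lambda_kv_k(t)\bigl(\langle v_k,x\rangle-v_k(t)x_t-\hat y_k\bigr),
\]
exactly as in the proof of Lemma~\ref{lem:community_correctness}. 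At $x=x^*$ each bracket has absolute value at most $\delta_k+|v_k(t)|$. The difference is therefore bounded by $s_t$. The gain of flipping $t$ to $p_t$ under $F'_{\hat y}$ is $\gamma_t=(2p_t-1)\gamma'_t$, so
\[
 \gamma_t(x^*)\le(2p_t-1)\,\mathrm{gain}_t(x^*)+s_t .
\]

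Now let $t$ be a creator on the path of $x^*$, that is, $x^*_t\ne p_t$. The preferences $p_t$ are whatever the routine computes from earlier rounds; the argument does not depend on how they are chosen. By optimality, flipping $t$ to $p_t$ does not increase the original score, so $(2p_t-1)\mathrm{gain}_t(x^*)\le0$. Therefore $\gamma_t(x^*)\le s_t$. Summing over creators gives~\eqref{eq:counts-sum} at $x^*$. Since~\eqref{eq:counts-sum} is one affine inequality in the suffix once the prefix is fixed, Lemma~\ref{lem:enumeration} lists the suffix of $x^*$.

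The only delicate point is the treatment of the diagonal. $P$ has nonzero diagonal entries, which $N$ discards, and the projection $\langle v_k,x\rangle$ includes $t$'s own term $v_k(t)x_t$. This is exactly why the allowance carries the extra $|v_k(t)|$ beyond $\delta_k$. I expect verifying this bookkeeping carefully to be the main step; everything else is the same tie-free local-optimality argument used for Lemma~\ref{lem:count_correctness}. Rationality of all data ensures that the comparisons are computed exactly, so no rounding issue arises.
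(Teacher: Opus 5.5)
Your proposal is correct and follows the paper's proof essentially step for step. Both arguments pick the grid tuple covering the projections of $x^*$ and bound the difference between the original and auxiliary gains by $s_t$, with the self term $v_k(t)x_t^*$ accounting for the extra $|v_k(t)|$. Both then use $(2p_t-1)\mathrm{gain}_t(x^*)\le0$ at each creator, so the summed inequality holds, the suffix of $x^*$ is enumerated, and it is scored in the original objective.
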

\begin{proof}
Fix an optimal complete assignment $x^*$, and put
$y_k=\langle v_k,x^*\rangle$. By the coverage requirement on the grids,
there is a tuple $\hat y$ with $|y_k-\hat y_k|\le\delta_k$ for every $k$.
Consider this run and the prefix of $x^*$.

For a variable $t$, write $G_t(x)$ for the original $0\to1$ score gain
and $H_t(x)$ for the $0\to1$ gain under the auxiliary objective
$F'_{\hat y}$. Subtracting the two expressions for the gain gives
\[
 G_t(x^*)-H_t(x^*)
 =\sum_k\lambda_kv_k(t)(y_k-\hat y_k-v_k(t)x_t^*).
\]
The term $v_k(t)x_t^*$ is necessary because a variable's gain excludes
its own contribution to the projection. Since $x_t^*\in\{0,1\}$,
the triangle inequality gives
\begin{align*}
 |G_t(x^*)-H_t(x^*)|
 &\le\sum_k|\lambda_k|\,|v_k(t)|
       \bigl(|y_k-\hat y_k|+|v_k(t)|x_t^*\bigr)\\
 &\le\sum_k|\lambda_k|\,|v_k(t)|
       (\delta_k+|v_k(t)|)=s_t.
\end{align*}

If $t$ is a creator, then $x_t^*\ne p_t$. Flipping $t$ to $p_t$
has original gain $(2p_t-1)G_t(x^*)$, which is at most $0$ by optimality.
Its auxiliary gain therefore satisfies
\[
 \gamma_t(x^*)=(2p_t-1)H_t(x^*)
 \le (2p_t-1)G_t(x^*)+|H_t(x^*)-G_t(x^*)|
 \le s_t.
\]
Thus every comparison for this prefix holds, and their sum satisfies
\eqref{eq:counts-sum}. The suffix enumeration visits $x^*$.
The argument applies to every optimum, and weak inequalities retain ties.
Finally, the algorithm scores each visited assignment in the original
objective and keeps a maximum. Since an optimum is visited, the returned
assignment and score are optimal. No test that a candidate matches the
run's guessed projections is needed.
\end{proof}

The accumulation proof below uses only nonnegative comparison allowances,
their total size, and the fact that an optimum survives in at least one
run. Thus a smaller allowance may replace $s_t$ if its validity is proved
directly. The same proof allows a joint finite list of aggregate guesses,
rather than a Cartesian product of independent grids. This includes guessing the counts of
ones in known communities, as sketched in Section~\ref{sec:sbm_intuition}; the running bound uses
the number of joint guesses.

\subsection{A general accumulation theorem}

Let $D_N=n^{-1}\sum_{u\ne v}N_{uv}^2>0$ and
$W_t=\sum_{w:\operatorname{rd}(w)<\operatorname{rd}(t)}N_{tw}^2$. The common assumptions are
\begin{equation}
 \sum_{t\le T}\sqrt{W_t}\ge4\sqrt3\,g n\sqrt{D_N},\qquad
 \sum_{t\le T}s_t\le\tfrac18g n\sqrt{D_N}.       \label{eq:profile-slack}
\end{equation}
We call the first inequality the \emph{gap bound} and the second the
\emph{allowance bound}. All constants below are independent of $n$.

\begin{theorem}[Accumulation after projection guessing]\label{res:counts}
Assume~\eqref{eq:profile-slack}, with $0<\tau<1$ and $g,K>0$.
Either of the following conditions gives a deterministic exact algorithm
in polynomial space and time
\[
 \Ostar\bigl(|\mathcal Y|(2^{\tau n}+2^ne^{-c_*n})\bigr).
\]
\begin{enumerate}
\item \textbf{Sequential comparisons.} Each prefix round is a singleton.
Let $R_{tw}=N_{tw}[w>t]$ for $t\le T$, and assume $\|R\|^2\le KD_N$.
Set $K_L=K$ and
$c_*=g^2/(64\tau\alpha^2K)$.
\item \textbf{A fixed number of rounds.} There are $\ell\ge2$ prefix
rounds, with $\ell$ fixed, and $\|N\|^2\le KD_N$.
Set $K_L=(\ell-1)K$ and
\[
 c_*=\min\left\{
 \frac{g^2}{32\tau\alpha^2K_L},\,
 \frac{g^2}{8192\ell^2\tau K}\right\}.
\]
\end{enumerate}
In both cases, $\alpha=\max(2,9\tau\sqrt{K_L}/(4g))$ and
$n\ge256\pi\tau K_L/g^2$. Linear coefficients and units are arbitrary.
For polynomially many grid points this is a fixed base below two.
\end{theorem}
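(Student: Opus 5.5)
Correctness is Lemma~\ref{lem:counts-exact}, and polynomial space plus the node bound come from Lemma~\ref{lem:enumeration} applied run by run. The plan therefore reduces to one probability estimate. Fix a tuple $\hat y$ and draw a uniform $x\in\bits^n$. We must show that $x$ satisfies \eqref{eq:counts-sum} for its own prefix with probability at most $e^{-c_*n}$. Summing over $\mathcal Y$ then gives the stated time.

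\textbf{Step 1: decomposition.} As in Step~1 of the proof of Theorem~\ref{thm:random_main}, we write
\[
 \gamma_t(x)=|\Delta_t|+(2p_t-1)\bigl(H_t(x)-\Delta_t\bigr),
\]
where $H_t$ is the auxiliary gain. The sum \eqref{eq:counts-sum} then reads
\[
 A+Q\le\sum_{\text{creators}}s_t,
\]
with requirement $A=\sum_{t\in I}|\Delta_t|$ and compensation
\[
 Q=\sum_{t\in I}(2p_t-1)\sum_{w:\operatorname{rd}(w)\ge\operatorname{rd}(t)}N_{tw}\bigl(x_w-\tfrac12\bigr).
\]
The allowance bound caps the right side by $\frac18gn\sqrt{D_N}$.

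\textbf{Step 2: requirements.} For $t$ in round $j$, $\Delta_t$ equals a constant plus $\sum_{\operatorname{rd}(w)<j}N_{tw}(x_w-\frac12)$. This is a symmetric signed sum with variance $W_t/4$, and it is independent of $x_t$. By the first part of Lemma~\ref{lem:random_moments}, truncated at a multiple of $\sqrt{W_t}$, and since $t$ is a creator with probability $1/2$ independently, $\E A$ is at least a constant times $\sum_t\sqrt{W_t}$. The gap bound makes this at least about $gn\sqrt{D_N}$, and the factor $4\sqrt3$ is designed to absorb the $\sqrt3$ and the halvings.

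Concentration of $A$ then uses Lemma~\ref{lem:random_ranges}. With a fixed number $\ell$ of rounds, I would reveal round by round and treat $A$ as Lipschitz in each bit. Flipping $x_w$ changes the truncated requirements by at most $|N_{tw}|$ summed over $t$. The sum of squares is controlled through $\|N\|^2\le KD_N$ together with $\sum_t\min\{|\Delta_t|,K'\}$. This step is where the $\ell^2$ in the second term of $c_*$ should appear. In the sequential case every round is a singleton, and I would instead use the conditional-mean martingale exactly as in Lemma~\ref{lem:random_requirements}.

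\textbf{Step 3: compensation.} This is the main obstacle. The coefficients $c_t=[t\in I](2p_t-1)$ depend on earlier values, so $Q$ is not a fixed linear form. Bilinear terms with $w$ in the same round as $t$ also appear, because $p_t$ is fixed before round $j$ is revealed while $x_t$ is revealed within it.

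First I would try to control $Q$ as a quadratic form. Group it by the later variable: $Q=\sum_w(x_w-\frac12)\,(c^{\mathsf T}N^{(\cdot)})_w$, where $N^{(\cdot)}$ is the appropriate one-sided restriction. In the sequential case this restriction is $R$. With $\ell$ rounds it is a sum of at most $\ell-1$ block restrictions of $N$, plus the same-round blocks. Each block has norm at most $\|N\|$, which gives $\|\cdot\|^2\le K_LD_N$.

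Then I would use an exponential-moment argument over a random sign vector, as in Lemma~\ref{lem:random_compensation}. Conditionally on the coefficients, the increments at $w$ are sub-Gaussian with variance proxy $\frac14(c^{\mathsf T}N^{(\cdot)})_w^2$. Their total is at most $\frac14\|N^{(\cdot)}\|^2\|c\|^2\le\frac14K_LD_N\tau n$. A naive bound therefore gives a tail $\exp(-\Theta(g^2n/(\tau K_L)))$ for deviations of order $gn\sqrt{D_N}$, which matches the first term of $c_*$ up to $\alpha^2$.

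The same-round terms, where $x_t$ and $x_w$ are revealed together, need a decoupling step. I expect the parameter $\alpha$ and the condition $n\ge256\pi\tau K_L/g^2$ to arise here, from a Hanson--Wright-type bound for the diagonal-free quadratic part, or from a Gaussian comparison that produces the factor $\pi$. Finally, a union over the three failure events forces $A+Q>\sum s_t$ except with probability $3e^{-c_*n}$, and the polynomial factor is absorbed.
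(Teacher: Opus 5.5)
\paragraph{A gap in Step~2.} Your Step~1 decomposition and your handling of the cross-round compensation match the paper. Step~2, however, has a genuine gap: you cannot get concentration of the accumulated requirements from Lemma~\ref{lem:random_ranges} in the way you describe.

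That lemma needs, for each bit, an influence bound $c_w$ that holds for every setting of the other bits. With your choice $c_w=\sum_t|N_{tw}|$, the budget $\sum_wc_w^2$ is governed by the norm of the entrywise absolute value of $N$, not by $\|N\|$. The cancellation that $\|N\|^2\le KD_N$ expresses is lost. Clipping at $K'$ does not restore it, since each single change $|N_{tw}|$ is already small.

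Concretely, take Max-Cut on a dense $d$-regular expander. Each column of $N$ has absolute sum $4d(1-p)=D_N$, and a constant fraction of it lies in the later-round rows. Hence $\sum_wc_w^2=\Omega(nD_N^2)$, and a deviation of order $gn\sqrt{D_N}$ only gets the bound $\exp(-O(g^2n/D_N))$. This gives no fixed saving once $d\to\infty$.

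The sequential fallback ``exactly as in Lemma~\ref{lem:random_requirements}'' fails for a related reason. There, the size of $|\Delta_t|$ came from fresh formula signs revealed at step $t$, and the remaining concentration was over edge inputs that each touch only two terms. Here $N$ is fixed and $\Delta_t$ is a function of bits already revealed. The conditional mean of the creator term is therefore just $\frac12\min\{|\Delta_t|,M\}$, and you are back to concentrating a function of shared bits, each of which influences all later rows.

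\paragraph{The missing idea.} What you need is Lemma~\ref{lem:counts-profile}. It rests on the first part of Lemma~\ref{lem:counts-concentration}, an entropy-method inequality that requires only a \emph{pointwise} bound on $\sum_i(f(x)-f(x^{(i)}))^2$, not worst-case influences.

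To make the flip differences nearly linear, the paper makes two substitutions, with $M=\alpha\sqrt{K_LD_N}$:
\begin{itemize}
\item rows whose deterministic part exceeds $M/2$ in absolute value are replaced by the constant $M/4$;
\item $|\cdot|$ is smoothed by a Huber-type function at scale $\sqrt{K_LD_N}/\alpha$.
\end{itemize}
The norm bound and the choice of $\alpha$ ensure these substitutions change the sum by at most a fixed fraction of $gn\sqrt{D_N}$.

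Now fix an assignment and put $u=2x-\ind$. The flip differences are $-u_i(L^{\mathsf T}d)_i$ plus a second-order remainder, where $L$ is the earlier-round part of $N$ and $|d_t|\le1$. Their squared sum is therefore at most $\alpha^2TK_LD_N$, because the operator norm is applied to the single vector $d$ belonging to that assignment.

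This is where $\alpha$ enters $c_*$: through the clipping level and smoothing scale, and through the Azuma step over creator indicators. It does not come from the within-round terms. The $\ell^2$ in $c_*$ comes instead from splitting the within-round compensation into $\ell$ blocks, each with deviation $gn\sqrt{D_N}/(64\ell)$.

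For those blocks your decoupling idea is the right one. Conditioning on the earlier rounds, the paper writes the creators through the independent signs $\beta_t=-(2p_t-1)u_t$. It then expands $Q_j=-\frac14\beta^{\mathsf T}M^{(j)}\beta-\frac14\langle M^{(j)}\ind,\beta\rangle$ with $M^{(j)}_{tw}=(2p_t-1)(2p_w-1)N_{tw}$, so that only the full block norm enters, and handles the quadratic part by a random-partition decoupling.
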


Exact count guessing is a special case: for
$v_k\in\{-1,0,1\}^n$, use all integer counts and $\delta_k=0$.
Then $|\mathcal Y|\le(2n+1)^r$ and $s_t\le\sum_k|\lambda_k|$.
Empirical mean centering takes $r=1$, $v_1=\ind$,
$\lambda_1=\binom n2^{-1}\sum_{u<v}J_{uv}$, and only $n+1$ counts.
If $P=0$, there is one run and no comparison allowance.

The next two estimates supply the concentration needed in the proof.
They concern uniform assignments to a fixed instance; neither estimate
introduces random choices into the algorithm. We include their proofs to
make explicit which properties of the constraints are being used.

\begin{lemma}[Two finite-sign concentration estimates]\label{lem:counts-concentration}
Let $X$ be uniform on $\bits^m$, and let $X^{(i)}$ denote $X$ with
coordinate $i$ flipped.
\begin{enumerate}
\item If $f:\bits^m\to\mathbb R$ and $B>0$ satisfy
$\sum_i(f(x)-f(x^{(i)}))^2\le B$ for every $x$, then, for $a>0$,
\[
 \Prb(f(X)<\E f(X)-a)\le e^{-a^2/(2B)}.
\]
\item Let $W$ be a real $m\times m$ matrix with zero diagonal,
$\|W\|^2\le V$, and $V\ge0$. If $B>0$ and $mV\le B$, then for
independent uniform signs $u_i\in\{-1,1\}$ and $e>0$,
\[
 \Prb\left(-\tfrac14u^{\mathsf T}Wu<-e\right)
 \le e^{-e^2/(2B)}.
\]
\end{enumerate}
\end{lemma}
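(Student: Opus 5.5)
The plan is to prove the first estimate through the logarithmic Sobolev inequality on the hypercube and the Herbst argument. Then derive the second estimate by applying the first to the quadratic form itself.

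\emph{Part 1.} I would not use the martingale route of Lemma~\ref{lem:random_ranges}. The Doob increments are controlled by $\bigl(\E[f(X)-f(X^{(i)})\mid\cdot\,]\bigr)^2$ with different conditionings for different $i$, so their sum is not pointwise bounded by $B$. Instead I use Gross's inequality for the uniform measure on $\bits^m$:
\[
 \operatorname{Ent}(g^2)\le\tfrac12\sum_i\E\bigl(g(X)-g(X^{(i)})\bigr)^2 .
\]
I apply it with $g=e^{\lambda f/2}$ for $\lambda\ge0$. The mean value theorem gives
\[
 \bigl(e^{\lambda f/2}-e^{\lambda f^{(i)}/2}\bigr)^2\le\frac{\lambda^2}{4}\bigl(f-f^{(i)}\bigr)^2\bigl(e^{\lambda f}+e^{\lambda f^{(i)}}\bigr).
\]
The map $X\mapsto X^{(i)}$ preserves the uniform measure and leaves $(f-f^{(i)})^2$ unchanged, so the two terms contribute equally. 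Using the pointwise hypothesis $\sum_i(f-f^{(i)})^2\le B$, this yields
\[
 \operatorname{Ent}(e^{\lambda f})\le\frac{\lambda^2B}{4}\,\E e^{\lambda f}.
\]
The Herbst argument turns this into $\log\E e^{\lambda(f-\E f)}\le\lambda^2B/4$: differentiate $\lambda^{-1}\log\E e^{\lambda f}$ and integrate. The hypothesis is symmetric under $f\mapsto-f$, so the same holds for $-f$. Markov's inequality with the optimal $\lambda=2a/B$ gives
\[
 \Prb\bigl(f(X)<\E f(X)-a\bigr)\le e^{-a^2/B}\le e^{-a^2/(2B)},
\]
which is the claim with room to spare. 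The main point to check carefully is the symmetrization step, which must use only the pointwise bound. I expect this step to be the only delicate one. If the constant in Gross's inequality were off by a factor of $2$, the stated bound $e^{-a^2/(2B)}$ still leaves that slack.

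\emph{Part 2.} Let $S=(W+W^{\mathsf T})/2$, so $u^{\mathsf T}Wu=u^{\mathsf T}Su$. The matrix $S$ has zero diagonal and $\|S\|\le\|W\|$. Identify $u_i=2X_i-1$ and set $f(X)=-\tfrac14u^{\mathsf T}Su$. Since the diagonal vanishes and the signs are independent and centered, $\E f=0$. Flipping coordinate $i$ replaces $u_i$ by $-u_i$. This changes $u^{\mathsf T}Su$ by $-4u_i(Su)_i$, because the diagonal term is absent. Hence $|f-f^{(i)}|=|(Su)_i|$. Therefore, for every $u$,
\[
 \sum_i(f-f^{(i)})^2=\|Su\|_2^2\le\|S\|^2\|u\|_2^2\le mV\le B .
\]
Part~1 applied with $a=e$ gives
\[
 \Prb\bigl(-\tfrac14u^{\mathsf T}Wu<-e\bigr)\le e^{-e^2/(2B)} .
\]
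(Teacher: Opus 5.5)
Your proof is correct. For Part~1 you follow essentially the paper's route: an entropy inequality on the cube applied to $e^{\lambda f/2}$, the same mean-value bound, the flip symmetry to merge the two terms, and the Herbst argument. The only difference is in the entropy inequality. You cite Gross's inequality with its sharp constant $\tfrac12$. The paper instead proves from scratch, by a one-bit estimate and tensorization, the weaker inequality $\operatorname{Ent}(h^2)\le\E\sum_i(h(X)-h(X^{(i)}))^2$. That is exactly the factor-$2$ loss you anticipated: it yields precisely $e^{-a^2/(2B)}$, whereas your cited constant gives $e^{-a^2/B}$.

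Part~2 is where you genuinely diverge. The paper decouples: it draws a random index set $S$ and uses $\E_S\sum_{i\in S,\,j\notin S}W_{ij}u_iu_j=\tfrac14u^{\mathsf T}Wu$ together with Jensen's inequality. This dominates the moment generating function of the quadratic form by that of a bilinear form. It then conditions on the signs in $S$ to obtain a linear Rademacher sum whose coefficient vector $W^{\mathsf T}z$ has squared norm at most $Vm$, and applies the $\cosh$ bound. You instead compute that the discrete derivative of $-\tfrac14u^{\mathsf T}Su$ in coordinate $i$ has absolute value $|(Su)_i|$. The squared-gradient sum is therefore $\|Su\|_2^2\le mV$ pointwise, and Part~1 applies directly.

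Your route is shorter and makes Part~2 a corollary of Part~1. It relies essentially on the pointwise squared-gradient form of Part~1. An ordinary bounded-differences bound would need $\sum_i\sup_u(Su)_i^2=\sum_i\|S_{i\cdot}\|_1^2$, which can be a factor $m$ larger. The paper's route is independent of Part~1, uses only Jensen's inequality and the $\cosh$ bound, and needs no symmetrization of $W$. Both arguments give the stated exponent.
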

\begin{proof}
\par\medskip\noindent\textbf{A bound from the changes under single-coordinate flips.}
For a positive function $h$, write
$\operatorname{Ent}(h)=\E(h\log h)-(\E h)\log(\E h)$.
We first prove
\begin{equation}
 \operatorname{Ent}(h^2)
 \le\E\sum_i(h(X)-h(X^{(i)}))^2.             \label{eq:cube-entropy}
\end{equation}
For one bit, put $a=h(0)$, $b=h(1)$, and $q=(a^2+b^2)/2$.
Using $\log z\le z-1$ in the two terms of the entropy gives
\[
 \operatorname{Ent}(h^2)
 \le\frac{(a^2-b^2)^2}{4q}\le(a-b)^2.
\]
The last inequality follows from $(a+b)^2\le2(a^2+b^2)=4q$.
For the induction step, separate the first bit and write
$h_0(y)=h(0,y)$, $h_1(y)=h(1,y)$, and
$r(y)=\sqrt{(h_0(y)^2+h_1(y)^2)/2}$.
Expanding the entropy gives the exact identity
\[
 \operatorname{Ent}(h^2)
 =\operatorname{Ent}(r^2)+
   \E_y\operatorname{Ent}_{b\in\bits}(h_b(y)^2).
\]
Apply the one-bit bound to the second term and the induction hypothesis
to the first. The reverse triangle inequality for the Euclidean norm gives
\[
 (r(y)-r(y^{(i)}))^2\le
 \tfrac12\sum_{b\in\bits}(h_b(y)-h_b(y^{(i)}))^2.
\]
Adding these bounds proves~\eqref{eq:cube-entropy} in every dimension.

Apply~\eqref{eq:cube-entropy} to $h=e^{tf/2}$. The elementary inequality
$(e^a-e^b)^2\le(a-b)^2(e^{2a}+e^{2b})$, together with invariance of the
uniform distribution under each flip, gives
\[
 \operatorname{Ent}(e^{tf})
 \le\frac{t^2}{2}\E\left[e^{tf(X)}
             \sum_i(f(X)-f(X^{(i)}))^2\right]
 \le\frac{t^2B}{2}\E e^{tf(X)}.
\]
Let $\psi(t)=\log\E e^{tf(X)}$. These are finite sums, so differentiation
is valid and the preceding inequality says
$t\psi'(t)-\psi(t)\le t^2B/2$.
For $t>0$, integrate $(\psi(t)/t)'\le B/2$ from $0$, using
$\lim_{t\downarrow0}\psi(t)/t=\E f(X)$. It follows that
$\E e^{t(f-\E f)}\le e^{t^2B/2}$.
Applying this also to $-f$, followed by the exponential Markov inequality
with $t=a/B$, proves the first assertion.

\par\medskip\noindent\textbf{A quadratic bound from a random partition.}
Choose a subset $S\subseteq[m]$ by including each index independently
with probability $1/2$, solely for this proof. For fixed $u$, let
$Z_S(u)=\sum_{i\in S,j\notin S}W_{ij}u_i u_j$.
The zero diagonal ensures
$\E_S Z_S(u)=u^{\mathsf T}Wu/4$. Convexity of the exponential therefore gives
\[
 e^{t u^{\mathsf T}Wu/4}\le\E_S e^{tZ_S(u)}.
\]
For fixed $S$, condition on the signs in $S$. The remaining expression
is a linear sum of independent signs with coefficients
$a_j=\sum_{i\in S}W_{ij}u_i$ for $j\notin S$.
Its squared coefficient sum is at most
$\|W^{\mathsf T}z\|_2^2\le V\|z\|_2^2\le Vm$, where
$z_i=u_i[i\in S]$. The usual $\cosh$ bound thus gives
$\E e^{tZ_S(u)}\le e^{t^2Vm/2}$.
Average over $S$ and use $Vm\le B$. Exponential Markov with $t=e/B$
now proves the second assertion. Symmetry of $W$ is not needed.
\end{proof}

We next apply the first estimate to the accumulated gaps. Absolute values
have corners, so we replace them temporarily by smooth upper approximations.
This modification is only an analysis device: the algorithm still uses
the original comparisons and their original gaps.

\begin{lemma}[A common gap estimate]\label{lem:counts-profile}
Let $L$ be the $T\times T$ matrix with $L_{tw}=N_{tw}$ when
$\operatorname{rd}(w)<\operatorname{rd}(t)$ and $0$ otherwise. Suppose $\|L\|^2\le K_LD_N$ and
\eqref{eq:profile-slack}'s gap bound holds. With
$M=\alpha\sqrt{K_LD_N}$ and the parameters of Theorem~\ref{res:counts},
\[
 \Prb\left(\sum_{t\le T}\min\{|\Delta_t|,M\}
               <\tfrac34gn\sqrt{D_N}\right)
 \le e^{-g^2n/(8\tau\alpha^2K_L)}.
\]
The probability is over a uniform Boolean assignment, with the formula fixed.
\end{lemma}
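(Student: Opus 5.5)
The plan is to write each $\Delta_t$ as a deterministic shift plus a linear form in independent signs. Then I would bound the mean of the truncated sum with Lemma~\ref{lem:random_moments}, and get concentration from the first part of Lemma~\ref{lem:counts-concentration}, applied to a smoothed version of the truncated absolute value.

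\emph{Step 1 (representation).} Put $x_w=(1+u_w)/2$ with independent uniform signs $u_w$. For $t\in B_j$, the only random part of $\Delta_t$ comes from variables in earlier rounds, which gives
\[
 \Delta_t=c_t+\tfrac12 Z_t,\qquad Z_t=\sum_{w}L_{tw}u_w=(Lu)_t .
\]
Here $c_t$ is deterministic; it collects $a^{\hat y}_t$, the halves of the earlier-round coefficients, and the average over the later rounds. The signed sum $Z_t$ is symmetric with $\E Z_t^2=W_t$.

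\emph{Step 2 (the mean).} Apply the first assertion of Lemma~\ref{lem:random_moments} to $\tfrac12 Z_t$, whose variance is $W_t/4$, with shift $c_t$ and truncation $K=M$. This gives
\[
 \E\min\{|\Delta_t|,M\}\ge\tfrac12\Bigl(\sqrt{W_t/12}-\tfrac{W_t}{16M}\Bigr).
\]
Summing over $t\le T$, the gap bound makes the first part at least $gn\sqrt{D_N}$. For the correction, use $\sum_tW_t=\|L\|_F^2\le T\|L\|^2\le\tau nK_LD_N$. Together with $M=\alpha\sqrt{K_LD_N}$ and $\alpha\ge9\tau\sqrt{K_L}/(4g)$, the correction is at most
\[
 \frac{\tau\sqrt{K_L}\,n\sqrt{D_N}}{32\alpha}\le\frac{g\,n\sqrt{D_N}}{72}.
\]
Hence the mean is at least about $\tfrac{71}{72}gn\sqrt{D_N}$. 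This leaves a deviation budget $a\approx(\tfrac{71}{72}-\tfrac34)gn\sqrt{D_N}\ge\tfrac15 gn\sqrt{D_N}$, so the required deviation is a constant fraction of $gn\sqrt{D_N}$.

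\emph{Step 3 (concentration; the main obstacle).} The function $f(u)=\sum_t\min\{|c_t+Z_t/2|,M\}$ is Lipschitz in each sign. However, the crude bound $\sum_i(\sum_t|L_{ti}|)^2$ on the squared flip differences is far too weak. What I need is the operator-norm bound
\[
 \sum_i\bigl(\textstyle\sum_t\sigma_tL_{ti}\bigr)^2=\|L^{\mathsf T}\sigma\|^2\le K_LD_N\|\sigma\|^2\le TK_LD_N,
\]
with $|\sigma_t|\le1$. To get it, I would replace $z\mapsto\min\{|z|,M\}$ by a smooth upper approximation $\varphi$ that has $|\varphi'|\le1$ and $|\varphi''|\lesssim1/M$, with the approximation error absorbed into the $\tfrac1{72}$-type slack. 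A flip of $u_i$ then changes $\varphi(\Delta_t)$ by $\varphi'(\Delta_t)L_{ti}$ plus a second-order error of size $O(L_{ti}^2/M)$. Squaring and summing, the linear parts contribute at most $TK_LD_N$ by the display above. The error parts contribute at most $O(\sum_i(\sum_tL_{ti}^2)^2/M^2)$, which is again controlled by $\|L\|_F^2\le\tau nK_LD_N$ and $M^2=\alpha^2K_LD_N$. Here I would use the size condition $n\ge256\pi\tau K_L/g^2$, and the cap $\alpha\ge2$, to keep these error terms lower order. That yields a pathwise bound $B\le2\tau nK_LD_N$.

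\emph{Step 4 (conclusion).} With $a\ge\tfrac15gn\sqrt{D_N}$ (after the smoothing loss) and $B\le2\tau nK_LD_N$, the first part of Lemma~\ref{lem:counts-concentration} gives a failure probability of at most $\exp(-c\,g^2n/(\tau K_L))$ for an absolute constant $c$. I would then choose the smoothing so that $c\ge1/(8\alpha^2)$. This is possible because $\alpha^2\ge4$, so the target exponent $g^2n/(8\tau\alpha^2K_L)$ is at most $g^2n/(32\tau K_L)$. Since the smoothed function dominates the original truncated sum, the stated probability bound follows.
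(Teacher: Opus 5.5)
Your Steps 1 and 2 are correct. With $A=gn\sqrt{D_N}$ and $V=K_LD_N$, the clipped sum has mean at least $\tfrac{71}{72}A$. The gap is in Step 3, where you claim that $\min\{|z|,M\}$ can be smoothed with curvature $O(1/M)$ at negligible cost. That is impossible. Suppose $|\varphi'|\le1$, $|\varphi''|\le c/M$ and $\varphi\ge\min\{|z|,M\}$. Expanding at $0$ in both directions gives $2\varphi(0)\ge 2z-cz^2/M$ for $0\le z\le M$, so $\varphi(0)\ge M/(2\max\{c,1\})$. On the event you must control, many $\Delta_t$ can lie near $0$, so the pathwise loss is of order $TM/c$.

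Take $\alpha=9\tau\sqrt{K_L}/(4g)$ and $T\approx\tau n$. Then $TM\approx\tfrac49\alpha^2A$, so the loss is absorbed only if $c\gtrsim\alpha^2$. But then the second-order flip terms contribute about $c^2TV^2/M^2\approx\alpha^2TV$. So $B\le2\tau nK_LD_N$ is unattainable; this trade-off is exactly where the $\alpha^2$ in the exponent comes from.

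With $B$ of that order, your budget of $0.236A$ is too small. Smoothing at scale $\delta=\theta A/T$ costs $\theta A/2$, and your flip-sum estimate becomes $TV(1+\sqrt V/(2\delta))^2=TV(1+\tfrac{2\alpha}{9\theta})^2$ in this regime. Reaching $e^{-A^2/(8\alpha^2TV)}$ would then require $\theta(0.236-\theta/2)>1/9$, which no $\theta$ satisfies. Step 4 also fails on your own numbers. The values $a=A/5$ and $B=2TV$ give the exponent $g^2n/(100\tau K_L)$, which is weaker than the required $g^2n/(32\tau K_L)$ at $\alpha=2$. Smoothing can only lower $a$ and raise $B$.

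The missing idea is to remove the clipping deterministically rather than smooth it. This also recovers the factor $2$ that the truncated form of Lemma~\ref{lem:random_moments} discards.

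\begin{enumerate}
\item Write $\Delta_t=H_t+z_t/2$ with $z=Lu$, so that $\sum_t z_t^2\le\|L\|^2T\le VT$.
\item For rows with $|H_t|\le M/2$, one has $\min\{|\Delta_t|,M\}\ge|\Delta_t|-z_t^2/(8M)$. For the other rows, $\min\{|\Delta_t|,M\}\ge M/4-z_t^2/M$.
\item Hence the clipped sum is at least $G(u)-VT/M\ge G(u)-A/2$. Here $G$ keeps the unclipped $|\Delta_t|$ on the first group of rows and the constant $M/4$ on the rest.
\item Now $\E G\ge2A$. This holds because $\E|\Delta_t|\ge\sqrt{W_t/12}$, and $M/4\ge\sqrt{W_t/12}$ follows from $W_t\le V$ and $\alpha\ge2$.
\item Only the kink at $0$ remains. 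Smooth it with a Huber function at scale $\delta=\sqrt V/\alpha=M/\alpha^2$. This costs at most $T\delta/2\le A/4$, and it bounds the flip sum by $2TV+TV^2/(2\delta^2)\le\alpha^2TV$.
\item The bad event then forces $f<3A/2\le\E f-A/2$. The first part of Lemma~\ref{lem:counts-concentration} gives the bound $e^{-A^2/(8\alpha^2TV)}\le e^{-g^2n/(8\tau\alpha^2K_L)}$.
\end{enumerate}

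All the losses in these steps are controlled by $2T\sqrt V\le\alpha A$ and $T\le\tau n$.
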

\begin{proof}
Put $A=gn\sqrt{D_N}$, $V=K_LD_N$, $u=2x-\ind$, and
$H_t=a^{\hat y}_t+\frac12\sum_{w\ne t}N_{tw}$.
Write $z=Lu$, with $u$ restricted to the prefix, so
$\Delta_t=H_t+z_t/2$. The norm bound gives $\sum_tz_t^2\le VT$.
Partition the rows into $\mathcal S_0=\{t:|H_t|\le M/2\}$ and its complement $\mathcal S_1$.
For $t\in \mathcal S_0$, the triangle inequality and $(|z_t|-2M)^2\ge0$ give
\[
 |H_t+z_t/2|-M\le |z_t|/2-M/2\le z_t^2/(8M),
\]
so clipping at $M$ loses at most $z_t^2/(8M)$.
For $t\in \mathcal S_1$, if $|z_t|\le M/2$ then $|\Delta_t|\ge M/4$;
otherwise $M/4-z_t^2/M<0$. In either case,
$\min\{|\Delta_t|,M\}\ge M/4-z_t^2/M$.
Consequently, with
\[
 G(u)=\sum_{t\in \mathcal S_0}|H_t+z_t/2|+\frac M4|\mathcal S_1|,
\]
we have
\begin{equation}
 \sum_{t\le T}\min\{|\Delta_t|,M\}\ge G(u)-VT/M.
                                                        \label{eq:gap-clipping}
\end{equation}
A symmetric sign sum of variance $W_t$ has first absolute moment at least
$\sqrt{W_t/3}$, and a deterministic shift cannot decrease its expected
absolute value. Thus $\E|\Delta_t|\ge\frac12\sqrt{W_t/3}$.
Each $W_t\le V$, and $\alpha\ge2$ gives
$M/4\ge\frac12\sqrt{W_t/3}$. The gap hypothesis therefore implies
$\E G\ge2A$.

Set $\delta=\sqrt V/\alpha$, and define
\[
 h_\delta(z)=
 \begin{cases}(z^2+\delta^2)/(2\delta),&|z|\le\delta,\\
 |z|,&|z|>\delta.
 \end{cases}
\]
Its derivative lies in $[-1,1]$, is $1/\delta$-Lipschitz, and satisfies
\[
 |z|\le h_\delta(z)\le |z|+\delta/2,\qquad
 0\le h_\delta(y)-h_\delta(z)-h_\delta'(z)(y-z)
       \le (y-z)^2/(2\delta).
\]
Replace the absolute values in $G$ by $h_\delta$, obtaining a function $f$.
Then $G\le f\le G+T\delta/2$ and $\E f\ge2A$.

We bound the total squared change of $f$ when individual bits are flipped.
At a fixed assignment, let $d_t=h_\delta'(\Delta_t)$ for $t\in \mathcal S_0$
and $d_t=0$ otherwise, and put $C_i=\sum_tL_{ti}^2$.
Flipping bit $i$ changes $\Delta_t$ by $-u_iL_{ti}$, so the preceding
remainder bound gives
\[
 f(x^{(i)})-f(x)=-u_i(L^{\mathsf T}d)_i+r_i,
 \qquad 0\le r_i\le C_i/(2\delta).
\]
Since $\sum_td_t^2\le T$, we have
$\|L^{\mathsf T}d\|_2^2\le VT$.
Also $C_i\le V$ and $\sum_iC_i\le TV$, so
$\sum_iC_i^2\le TV^2$.
Using $(a+b)^2\le2a^2+2b^2$ now gives
\[
 \sum_i(f(x^{(i)})-f(x))^2
 \le2TV+\frac{TV^2}{2\delta^2}
 =TV(2+\alpha^2/2)\le\alpha^2TV.
\]
The definition of $\alpha$ and $T\le\tau n$ imply
$2T\sqrt V\le\alpha A$. Hence $VT/M\le A/2$ and
$T\delta/2\le A/4$.
By~\eqref{eq:gap-clipping}, an assignment with clipped gap below $3A/4$
has $f<3A/2\le\E f-A/2$.
The first part of Lemma~\ref{lem:counts-concentration} therefore bounds
its probability by
\[
 \exp\left(-\frac{A^2}{8\alpha^2TV}\right)
 \le\exp\left(-\frac{g^2n}{8\tau\alpha^2K_L}\right),
\]
as required.
\end{proof}

\begin{proof}[Proof of Theorem~\ref{res:counts}]
Correctness is Lemma~\ref{lem:counts-exact}. Fix a run and a uniform $x$,
and write $u=2x-\ind$ and $I$ for its creators. In the sequential case,
$L$ is the transpose of the leading square block of $R$, so $\|L\|\le\|R\|$.
In the round case, each row block is $N_{B_j\times B_{<j}}$, whence
$\|L\|^2\le(\ell-1)\|N\|^2$. The common gap estimate applies in both cases.

\paragraph{Accumulated gap.}
Given the earlier variables, each creator indicator is uniform on $\{0,1\}$ and
$\Delta_t$ is fixed; within a round all preferences were fixed before it
started. For $Y_t=[t\in I]\min(|\Delta_t|,M)$, Azuma--Hoeffding gives
\[
 \Prb\left(\sum_tY_t<\tfrac12\sum_{t\le T}\min(|\Delta_t|,M)-a\right)
 \le e^{-2a^2/(TM^2)}.
\]
Use $a=gn\sqrt{D_N}/8$. The gap lemma and the allowance bound imply
$S'=\sum_{t\in I}(|\Delta_t|-s_t)\ge gn\sqrt{D_N}/8$
outside these two exceptional events.

\paragraph{Sequential comparisons.}
Here the sum tested by the algorithm is
\begin{equation}
 S'+Q,\qquad Q=-\tfrac12\sum_wu_w\sum_{t\in I,t<w}u_tN_{tw}.
                                                        \label{eq:counts-race}
\end{equation}
The inner coefficient is predictable, and its squared sum is at most
$\|R^{\mathsf T}(u_t[t\in I])_{t\le T}\|^2\le KD_NT$.
The conditional $\cosh$ bound gives
$\Prb(Q\le-b)\le e^{-2b^2/(KD_NT)}$.
For $b=gn\sqrt{D_N}/16$, a passing assignment belongs to an exceptional
event. Their total probability is at most $3e^{-c_*n}$ with the stated
sequential constant.

\paragraph{Comparisons within rounds.}
The tested sum is $S'+Q_{\rm later}+\sum_{j\le\ell}Q_j$, where
\[
 Q_{\rm later}=-\tfrac12\sum_wu_w
        \sum_{t\in I:\operatorname{rd}(t)<\operatorname{rd}(w)}u_tN_{tw},\qquad
 Q_j=-\tfrac12\sum_{t\in I\cap B_j}\sum_{w\in B_j\setminus\{t\}}u_tN_{tw}u_w.
\]
To bound $Q_{\rm later}$, expose the variables one at a time in their
fixed order. Before exposing $u_w$, its inner coefficient is already
known: it involves only creators in strictly earlier rounds, whose
preferences were fixed before those rounds began. Thus the same
predictable-sign estimate used for sequential comparisons applies.
For a destination block $B_j$, let $z_t=u_t$ when $t$ is a creator in an
earlier round, and let $z_t=0$ otherwise. Its vector of incoming
coefficients is the restriction of $N^{\mathsf T}z$ to $B_j$. Since
$\|z\|^2\le T$, its squared norm is at most $KD_NT$, even though $z$
depends on the assignment. The first prefix round has no incoming
coefficients. There are therefore at most $\ell$ nonzero blocks of
coefficients, including the suffix, and the total squared coefficient
budget is at most $\ell KD_NT$. The conditional $\cosh$ estimate gives
\[
 \Prb(Q_{\rm later}\le-b)\le e^{-2b^2/(\ell KD_NT)}.
\]
Take $b=gn\sqrt{D_N}/32$.

Condition on the rounds preceding $B_j$. Put $c_t=2p_t-1$ and
$\beta_t=-c_tu_t$. These are independent and uniform on $\{-1,1\}$, and
$t\in I$ precisely when $\beta_t=1$. The matrix
$M^{(j)}_{tw}=c_tc_wN_{tw}$ on $B_j$ has zero diagonal and norm at most
$\sqrt{KD_N}$. Direct expansion gives
\[
 Q_j=-\tfrac14\beta^{\mathsf T}M^{(j)}\beta
          -\tfrac14\langle M^{(j)}\ind,\beta\rangle.
\]
The vector $M^{(j)}\ind$ has squared norm at most $|B_j|KD_N$.
Put $B=TKD_N>0$ to use a common bound for every block, including empty
blocks. For $e=gn\sqrt{D_N}/(64\ell)$, the linear term is below $-e$ with
probability at most $e^{-8e^2/B}$. The second part of
Lemma~\ref{lem:counts-concentration} bounds the corresponding quadratic
probability by $e^{-e^2/(2B)}$. These bounds are uniform over the fixed
preferences, so averaging over the earlier-round history preserves them.
A union bound therefore gives $\sum_jQ_j\ge-gn\sqrt{D_N}/32$ outside
these events. On the remaining assignments the tested sum is at least
$(1/8-1/32-1/32)gn\sqrt{D_N}>0$.
Using $|B_j|\le T\le\tau n$, the total passing probability is at most
$(3+2\ell)e^{-c_*n}$. Lemma~\ref{lem:enumeration} finishes both time
bounds. Runs and candidates are streamed, so space is polynomial.
\end{proof}

\subsection{Projection grids and discovering the low-rank part}

\begin{corollary}[Polynomial projection grids]\label{cor:spectral}
Suppose $\|v_k\|_2\le1$ and use
$\mathcal Y_k=\delta\mathbb Z\cap[-\|v_k\|_1-\delta,\|v_k\|_1+\delta]$,
where $\delta>0$ is rational. Then
\[
 |\mathcal Y|\le(2\sqrt n/\delta+3)^r,\qquad
 \sum_{t\le T}s_t\le\sum_k|\lambda_k|(\delta\sqrt n+1).
\]
Thus Theorem~\ref{res:counts} applies whenever its norm and gap conditions
hold and the last expression is at most $gn\sqrt{D_N}/8$.
\end{corollary}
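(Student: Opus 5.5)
The grid size and the allowance bound are both direct estimates; I treat them in that order and then combine them with Theorem~\ref{res:counts}.

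\textbf{Size of the grid.} By Cauchy--Schwarz, $\|v_k\|_1\le\sqrt n\,\|v_k\|_2\le\sqrt n$. The interval $[-\|v_k\|_1-\delta,\|v_k\|_1+\delta]$ has length at most $2\sqrt n+2\delta$. The number of points of $\delta\mathbb Z$ it contains is at most its length divided by $\delta$, plus one. This gives $|\mathcal Y_k|\le 2\sqrt n/\delta+3$, and taking the product over $k\le r$ proves the first bound.

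\textbf{Coverage with error $\delta_k=\delta$.} For a Boolean $x$, the projection $\langle v_k,x\rangle$ is a sum of a subset of the entries $v_k(t)$. Hence $|\langle v_k,x\rangle|\le\|v_k\|_1$. Every real number in $[-\|v_k\|_1,\|v_k\|_1]$ lies within $\delta/2\le\delta$ of a multiple of $\delta$. That multiple lies in the enlarged interval, so it belongs to $\mathcal Y_k$. Thus the grids satisfy the coverage requirement of the comparison routine with $\delta_k=\delta$.

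\textbf{Allowance bound.} With $\delta_k=\delta$, the definition of $s_t$ and summation over all $t$, not just $t\le T$, give
\[
 \sum_{t\le T}s_t\le\sum_k|\lambda_k|\Bigl(\delta\sum_t|v_k(t)|+\sum_t v_k(t)^2\Bigr)
 \le\sum_k|\lambda_k|\bigl(\delta\|v_k\|_1+\|v_k\|_2^2\bigr).
\]
Using $\|v_k\|_1\le\sqrt n$ and $\|v_k\|_2^2\le1$ yields $\sum_k|\lambda_k|(\delta\sqrt n+1)$.

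\textbf{Conclusion.} Suppose this last expression is at most $gn\sqrt{D_N}/8$. Then the allowance bound of~\eqref{eq:profile-slack} holds. If in addition the norm and gap conditions hold, Theorem~\ref{res:counts} applies with $|\mathcal Y|$ bounded as above. For a rational $\delta\ge n^{-2}$ and fixed $r$, this bound is polynomial, so the running time keeps a fixed base below two.

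There is no real obstacle. The only point needing care is that $\sum_t|v_k(t)|$ is controlled by $\|v_k\|_1\le\sqrt n$, and not by a naive factor of $n$.
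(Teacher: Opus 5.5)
Your proof is correct and follows the paper's argument: Cauchy--Schwarz gives $\|v_k\|_1\le\sqrt n$, and summing the allowances gives $\sum_k|\lambda_k|(\delta\|v_k\|_1+\|v_k\|_2^2)$. You also spell out two steps the paper leaves implicit, the lattice-point count and the coverage check with $\delta_k=\delta$.
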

\begin{proof}
The Cauchy--Schwarz inequality gives
$\|v_k\|_1=\sum_t|v_k(t)|\le\sqrt{\sum_t1^2}\sqrt{\sum_tv_k(t)^2}
\le\sqrt n$. Also
$\sum_ts_t=\sum_k|\lambda_k|(\delta\|v_k\|_1+\|v_k\|_2^2)$.
\end{proof}

The vectors need not be exact eigenvectors. One deterministic construction
uses a rational, approximately diagonalizing orthogonal matrix, takes its
$r$ largest diagonal entries in absolute value, and uses the corresponding
columns. Appendix~\ref{app:discovery} gives a polynomial-bit construction
and a random-model guarantee. All pruning comparisons use the rational
vectors actually computed, so numerical approximation affects the running
bound, not exactness.

\section{Proofs for the Max-Cut analysis}\label{app:maxcut}

This appendix contains the proof omitted from Section~\ref{sec:maxcut}.

\subsection{Regular expanders satisfy the accumulation conditions}\label{app:expander_conditions}

\begin{proof}[Proof of Lemma~\ref{lem:expander_conditions}]
Choose an integer $n_0$ with
\[
 n_0\ge80,\qquad \varepsilon n_0\ge2,\qquad g^2\varepsilon n_0\ge128;
\]
it depends only on $\varepsilon$, since $g=\varepsilon/(320\sqrt3)$.
Let $n\ge n_0$, and put $p=d/(n-1)$, so that $\beta=-2p$.

\emph{The norm condition.} The Max-Cut encoding has $J=-2A_G$; unit clauses
change only the linear coefficients. Hence
\[
 N=-2A_G-\beta(\ind\ind^{\mathsf T}-I).
\]
The matrix $N$ annihilates $\ind$, and its other eigenvalues are
$-2\lambda_i(A_G)+\beta$ for $i\ge2$. Since $|\beta|\le2$, and
$2\le\lambda_0\sqrt d/4$ because $d\ge64\lambda_0^2$,
\[
 \|N\|\le2\lambda_0\sqrt d+2\le3\lambda_0\sqrt d.
\]
A row of $N$ has $d$ entries $-2(1-p)$ and $n-1-d$ entries $2p$. Its
squared Euclidean length is therefore $4d(1-p)$ for every row, and this
also equals their average $D_N$. Since $n\ge2/\varepsilon$ and
$d\le(1-\varepsilon)n$, we have $1-p\ge\varepsilon/2$, and so
\[
 D_N=4d(1-p)\ge2\varepsilon d>0,\qquad
 \|N\|^2\le9\lambda_0^2d\le\frac{9\lambda_0^2}{2\varepsilon}D_N=KD_N.
\]

\emph{The deviations between the rounds.} An edge between the two prefix rounds contributes at least
$\varepsilon^2$ to $W_t$. Indeed, its coefficient has absolute value
$2(1-d/(n-1))\ge\varepsilon$ when $n\ge2/\varepsilon$.
Write $e(B_1,B_2)$ for the number of edges between the rounds, and
$\deg_{B_1}(t)$ for the number of neighbors of $t$ in $B_1$.
The expander mixing inequality gives, for $n\ge80$,
\[
 e(B_1,B_2)\ge \frac{d|B_1||B_2|}{n}
           -\lambda(G)\sqrt{|B_1||B_2|}
 \ge\frac{dn}{32}-\frac d2\ge\frac{dn}{40}.
\]
Indeed $|B_1||B_2|\ge(n/4-1)^2$,
$\sqrt{|B_1||B_2|}\le n/4$, and $\lambda(G)\le d/8$.
The mixing inequality itself follows by writing each indicator vector
as its constant component plus a vector perpendicular to $\ind$.
Specifically, if
$z_i=\mathbf 1_{B_i}-(|B_i|/n)\ind$, then
\[
 e(B_1,B_2)-\frac{d|B_1||B_2|}{n}=z_1^{\mathsf T}A_Gz_2,
 \qquad
 |z_1^{\mathsf T}A_Gz_2|
 \le\lambda(G)\|z_1\|_2\|z_2\|_2
 \le\lambda(G)\sqrt{|B_1||B_2|}.
\]
The middle inequality uses the spectral bound on the subspace
perpendicular to $\ind$ and the Cauchy--Schwarz inequality.
Consequently
\[
 \sum_{t\in B_2}\sqrt{W_t}
 \ge\varepsilon\sum_{t\in B_2}\sqrt{\deg_{B_1}(t)}
 \ge\frac{\varepsilon e(B_1,B_2)}{\sqrt d}
 \ge\frac{\varepsilon n\sqrt d}{40}.
\]
The middle step uses $\sqrt a\ge a/\sqrt d$ for $0\le a\le d$
and $\sum_{t\in B_2}\deg_{B_1}(t)=e(B_1,B_2)$.
Since $\sqrt{D_N}\le2\sqrt d$, this gives
$\sum_{t\in B_2}\sqrt{W_t}\ge4\sqrt3\,gn\sqrt{D_N}$ for
$g=\varepsilon/(320\sqrt3)$.

\emph{The allowance.} There are $T=\lfloor n/2\rfloor$ comparisons, each
with allowance $|\beta|$, and $T|\beta|\le2d$. For $n\ge n_0$, the
inequalities $D_N\ge2\varepsilon d$ and $d\le n$ give
\[
 (16d)^2\le256nd\le2g^2\varepsilon n^2d\le g^2n^2D_N.
\]
Taking square roots yields $T|\beta|\le2d\le gn\sqrt{D_N}/8$, the third
condition.
\end{proof}

\section{Estimates for independent pair coefficients}\label{app:membership}

For this appendix, $P_{uv}=\sum_k\beta_kt_k(u)t_k(v)$ with vectors
$t_k\in\{-1,0,1\}^n$, $N=\operatorname{offdiag}(J-P)$,
$\sigma=\sum_k|\beta_k|$,
$T=\lfloor\tau n\rfloor$, $R_{tw}=N_{tw}[w>t]$ for $t\le T$, and
$W_t=\sum_{j<t}N_{tj}^2$. Exact count grids have $s_t\le\sigma$.
Thus the following imply the sequential case of Theorem~\ref{res:counts}:
\begin{equation}
 \|R\|^2\le KD_N,\qquad
 \sum_{t\le T}\sqrt{W_t}\ge4\sqrt3\,gn\sqrt{D_N},\qquad
 \sigma\le g\sqrt{D_N}/(8\tau).\label{eq:remainder}
\end{equation}

The following criterion verifies the sequential accumulation conditions
for independent pair coefficients. We state it for integer coefficients,
as they arise from Max 2-SAT formulas; the marginal
probabilities may be arbitrary real numbers. Its random-matrix estimate and
its estimates for sums of squared entries are also used in
Appendix~\ref{app:discovery}. We write $\chi$
for the variance ratio to distinguish it from clause density. The probability event concerns $J$ alone, which is
why the linear coefficients and unit clauses can be arbitrary.

\begin{proposition}[Independent-pair membership]\label{prop:pairmembership}
Fix $0<\tau<1$, $\chi\ge1$, $C_0=10^9\chi/\tau$, and
$n\ge10^9\chi^2/\tau^2$. Suppose the integer coefficients $J_{uv}$,
$u<v$, are independent random variables on a probability space, with
\[
 |J_{uv}-\E J_{uv}|\le B,\qquad
 0<\nu\le\operatorname{Var}J_{uv}\le\chi\nu.
\]
Choose either of the following decompositions:
\begin{enumerate}
\item All means equal $\mu$, and $P_{uv}=\beta$, where $\beta$ is the
empirical mean over pairs. Put $\sigma_0=|\mu|$.
\item A fixed supplied decomposition of the form above has
$P_{uv}=\E J_{uv}$ for $u\ne v$.
Put $\sigma_0=\sum_k|\beta_k|$. In particular, if all means vanish,
take $P=0$ and $\sigma_0=0$.
\end{enumerate}
If $n\nu\ge C_0(B^2\log n+\chi\sigma_0^2)$, then with probability at least
$1-1/n$, conditions~\eqref{eq:remainder} hold with
$K=9\chi$ and $g=\tau^{3/2}/(12\sqrt\chi)$.
The lower bound on $n$ in Theorem~\ref{res:counts} also holds.
\end{proposition}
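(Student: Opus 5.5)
The plan is to split the deviation matrix into a centered random part and a deterministic or nearly deterministic remainder. Let $X=\operatorname{offdiag}(J-\E J)$, with independent entries above the diagonal, $|X_{uv}|\le B$, and $\nu\le\E X_{uv}^2\le\chi\nu$. Then $N=X+E$ with $E=\operatorname{offdiag}(\E J-P)$.
\begin{itemize}
\item In the second decomposition $E=0$.
\item In the first decomposition every off-diagonal entry of $E$ equals $\mu-\beta$. Here $\mu-\beta$ is the negated average of the $\binom n2$ centered entries. By the bounded-differences case of Lemma~\ref{lem:random_ranges}, with probability $1-O(n^{-2})$ we have $|\mu-\beta|\le 2B\sqrt{\log n}/n$. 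Hence $\|E\|\le n|\mu-\beta|\le 2B\sqrt{\log n}$ and $\|E\|_F^2=O(B^2\log n)$.
\end{itemize}
The hypothesis $n\nu\ge C_0B^2\log n$ makes both quantities negligible against $\sqrt{n\nu}$ and $n\nu$. So it suffices to prove the three conditions for $X$ with a small multiplicative margin, and then absorb $E$ by the triangle inequality.

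\emph{Step 1: the scale $D_N$.} The sum $\sum_{u\ne v}X_{uv}^2$ has mean between $n(n-1)\nu$ and $n(n-1)\chi\nu$. Each pair changes it by at most $2B^2$. Lemma~\ref{lem:random_ranges} gives deviation $O(B^2 n\sqrt{\log n})$ with probability $1-n^{-2}$. Since $\nu n\gg B^2\log n$, this deviation is at most $\delta n^2\nu$ for a small absolute $\delta$, say $\delta=10^{-3}$. This yields
\[
(1-\delta)(n-1)\nu\le D_N\le(1+\delta)\chi n\nu ,
\]
and adding $E$ preserves these bounds with $2\delta$ in place of $\delta$.

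\emph{Step 2: the norm condition.} This is the main obstacle and the place where the published random-matrix estimate enters as a hypothesis. The matrix $R$ keeps the entries $N_{tw}$ with $t\le T$ and $w>t$ and zeroes the rest. Its random part is therefore again a matrix with independent bounded entries of variance at most $\chi\nu$, some of them deterministically zero. I would apply a Bandeira--van Handel type bound to the symmetric dilation $\begin{pmatrix}0&R_X\\R_X^{\mathsf T}&0\end{pmatrix}$. This gives, with probability $1-n^{-2}$,
\[
\|R_X\|\le(2+\delta)\sqrt{\chi\nu n}+C'B\sqrt{\log n}\le 2.05\sqrt{\chi\nu n}.
\]
The last step uses $C_0$ large. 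The $E$ part adds at most $\|E\|$. Then $\|R\|^2\le 4.3\chi n\nu\le 9\chi D_N$, using the lower bound on $D_N$ from Step 1.

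\emph{Step 3: the gap condition.} For each $t$, the sum $W_t=\sum_{j<t}N_{tj}^2$ has mean at least $(t-1)\nu$. By Lemma~\ref{lem:random_ranges} and a union bound over $t$, we get $W_t\ge(1-\delta)(t-1)\nu$ for all $t\ge t_0=C''(B^2/\nu)^2\log n$, which is $o(n)$ by the hypotheses. Discarding $t<t_0$ leaves
\[
\sum_{t\le T}\sqrt{W_t}\ge(1-2\delta)\tfrac23 T^{3/2}\sqrt\nu .
\]
The right-hand side of the gap condition satisfies
\[
4\sqrt3\,gn\sqrt{D_N}\le\tfrac{1}{\sqrt3}\,\tau^{3/2}n^{3/2}\sqrt{(1+\delta)\nu}.
\]
Since $T=\lfloor\tau n\rfloor$ and $\tfrac23>\tfrac1{\sqrt3}$, with room for the $\delta$ losses and for $n\ge10^9\chi^2/\tau^2$, the condition holds. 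The entries of $E$ change each $\sqrt{W_t}$ by at most $\sqrt t\,|\mu-\beta|$, which is negligible.

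\emph{Step 4: the allowance, and the final count.} We have $\sigma=\sigma_0$ in the second decomposition and $\sigma\le\sigma_0+|\mu-\beta|$ in the first. The required bound $\sigma\le g\sqrt{D_N}/(8\tau)$ holds once
\[
\sigma\lesssim\tau^{1/2}\sqrt{n\nu}/(96\sqrt\chi),
\]
and this follows from $n\nu\ge C_0\chi\sigma_0^2$ with $C_0=10^9\chi/\tau$, together with the bound on $|\mu-\beta|$. A union bound over the $O(1)$ exceptional events, each of probability $O(n^{-2})$, gives the overall probability $1-1/n$.

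Finally, the lower bound on $n$ required by Theorem~\ref{res:counts} follows by substituting $K=9\chi$ and $g=\tau^{3/2}/(12\sqrt\chi)$ into $n\ge256\pi\tau K_L/g^2$. This condition becomes $n\ge c\,\chi^2/\tau^2$ for an absolute constant $c<10^9$.
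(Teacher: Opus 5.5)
\textbf{Step~3 has a genuine gap.}

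Bounded differences applied to $V_t=\sum_{j<t}X_{tj}^2$ use only the range $B^2$ of each summand. That is why you need $t\ge t_0\approx(B^2/\nu)^2\log n$. But the hypothesis $n\nu\ge C_0B^2\log n$ only gives $B^2/\nu\le n/(C_0\log n)$. So $t_0$ can be of order $n^2/(C_0^2\log n)$, far beyond $T$; it is not $o(n)$.

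Here is a concrete case. Take $J_{uv}=-2A_{uv}$ with independent Bernoulli$(p)$ indicators and $np=2C_0\log n$. Then $B\le2$ and $\nu\approx4p$, and the hypothesis holds in both decompositions, with $\sigma_0=2p$. This is exactly the sparse block-model regime the proposition is meant to cover. For $t\le T$, every row sum $V_t$ has mean at most $8C_0\log n$. Its Hoeffding fluctuation scale is of order $\sqrt{t\log n}$, and the ratio of mean to fluctuation is at most $2C_0\sqrt{\tau\log n/n}\to0$. So for large $n$ your tail bound is vacuous for every $t\le T$, and it gives no lower bound on $\sum_{t\le T}\sqrt{W_t}$ at all.

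The missing ingredient is variance sensitivity. Since $0\le X_{tj}^2\le B^2$, you have $\operatorname{Var}(X_{tj}^2)\le\E X_{tj}^4\le B^2\E X_{tj}^2\le B^2\chi\nu$. A Bernstein-type estimate then gives $\Prb(V_t\le(1-\eta)(t-1)\nu)\le\exp(-\eta^2(t-1)\nu/(4\chi B^2))$. This is at most $n^{-3}$ for every $t>\tau n/10$, by the budget $10^9\chi B^2\log n\le\tau n\nu$. Discarding the first tenth of the prefix costs little: the paper obtains $\sum_{t\le T}\sqrt{W_t}\ge0.61(\tau n)^{3/2}\sqrt\nu$ against the required $0.60(\tau n)^{3/2}\sqrt\nu$.

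An alternative fix keeps bounded differences but applies them to the aggregate $\sum_{t\le T}\sqrt{W_t}$. It has about $T^2/2$ inputs, each moving it by at most $B$. You would still need to bound $\E\sqrt{W_t}$ from below, using the same estimate $\E X^4\le B^2\E X^2$. Your Step~1 and your bound on $|\mu-\beta|$ do survive with plain Hoeffding, because there the number of summands is of order $n^2$.

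\textbf{Two smaller corrections.}

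In Step~2, the leading constant $2$ (or $\sigma_1+\sigma_2$) in the Bandeira--van Handel bound holds for Gaussian or symmetrically distributed entries. Your $X_{uv}$ are only centered and bounded. Symmetrizing with an independent copy, as the paper does, doubles the variance proxy. It yields about $2\sqrt2(1+\eta)\sqrt{n\chi\nu}\le2.91\sqrt{n\chi\nu}$ instead of $2.05\sqrt{n\chi\nu}$. This still leaves room, since $2.91^2\approx8.47<9(1-2\delta)$, but your displayed bound is not what the cited estimate gives.

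Also, Hoeffding at threshold $2B\sqrt{\log n}/n$ bounds the failure probability for $|\mu-\beta|$ only by about $2/n$. That alone exceeds the $1/n$ budget. The threshold $3B\sqrt{\log n}/n$ gives $O(n^{-9/4})$ and remains negligible everywhere you use it.
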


\begin{proof}
Each bounded integer coefficient takes only finitely many values with
probability one. Independence identifies their joint distribution with
the product of their marginal distributions. We may therefore compute
events and moments using this finite product, even when the original
probability space is infinite. This replacement preserves the original
expectations and variances exactly.

Put $X_{uv}=J_{uv}-\E J_{uv}$, and let $\delta=\beta-\mu$ in the first
case and $\delta=0$ in the second. Then $N_{uv}=X_{uv}-\delta$ off the
diagonal. Set $\eta=1/100$.

The scalar concentration steps use the elementary variance-sensitive
estimate proved after this proof; it is a sufficient form of Bernstein
concentration~\cite{BLM13}. For the matrix norm we use the Rademacher
extension of Bandeira and van Handel's rectangular expectation estimate
\cite[Theorem~3.1, Corollary~3.2, and the paragraph following it]{BvH16},
and the upper-tail inequality for convex Lipschitz functions in
\cite[Corollary~3.3 and its interval rescaling, pp.~165--166]{Ledoux99}.
The last estimate, for independent random signs $\xi_{tw}$ and a fixed
coefficient array $a$, is
\[
 \E\|(\xi_{tw}a_{tw})\|
 \le(1+\vartheta)\left(\sigma_1+\sigma_2+
       \frac{5\max_{t,w}|a_{tw}|\sqrt{\log n}}
            {\sqrt{\log(1+\vartheta)}}\right),\quad
 0<\vartheta\le1/2,
\]
where $\sigma_1,\sigma_2$ are the largest row and column Euclidean norms.
The published logarithmic factor is $\sqrt{\log(\min\{T,n\})}$ for a
$T\times n$ array. We may instead pad the array with zero rows and use the
$n\times n$ version displayed above; padding does not change its operator
norm. The cited expectation estimate applies to independent random signs
multiplying fixed entries. We will apply it only after symmetrization.

For a convex function $f$ that is $1$-Lipschitz in Euclidean distance,
Ledoux's inequality, applied to independent coordinates in $[-B,B]$, gives
\[
 \Prb(f>\E f+r)\le \exp\!\left(-\frac{r^2}{8B^2}\right),\qquad r>0.
\]
This is an upper tail around the expectation, so no conversion from a
median is needed. In our application $f$ is the operator norm of the
upper-triangular array. It is convex, and changing the independent pair
coordinates changes this array's Frobenius norm by at most their Euclidean
distance: each coordinate is used at most once. The triangle inequality
for the operator norm, together with its bound by the Frobenius norm,
therefore gives the required Lipschitz constant.

\paragraph{Centering and total variance.}
First observe that $\nu\le B^2$: the variance of each centered
entry is its second moment, and its square is always at most $B^2$.
The assumption $n\nu\ge C_0(B^2\log n+\chi\sigma_0^2)$ therefore gives the
following useful budgets:
\begin{equation}
 10^9\chi\log n\le\tau n,\qquad
 10^9\chi B^2\log n\le\tau n\nu,\qquad
 10^9\chi^2\sigma_0^2\le\tau n\nu.
 \label{eq:membership-budgets}
\end{equation}
The size assumption implies $n\ge100$, $\tau n\ge1000$, and $\chi\le n$.
These deliberately loose consequences suffice for all rounding and
constant estimates below.

Set $\delta_0=6\sqrt{\chi\nu\log n}/n$. The scalar estimate applies
because $\delta_0 B\le2\chi\nu$, as follows from the second budget.
In the empirical case it gives $|\delta|\le\delta_0$ except with
probability at most $n^{-3}$. Indeed, its two-sided bound is
$2\exp[-4.5(1-1/n)\log n]\le2n^{-4}\le n^{-3}$ for $n\ge100$.
The identity
\[
 n^2\delta_0^2=36\chi\nu\log n
\]
and the first budget imply
$\delta_0\le10^{-3}\sqrt\nu$ and
$n\delta_0\le0.002\sqrt{n\chi\nu}$.
In the supplied-mean case $\delta=0$, so both conclusions hold directly.

Applying the squared-entry estimate separately to the lower and upper
tails gives
\[
 (1-\eta)n(n-1)\nu\le\sum_{u\ne v}X_{uv}^2
           \le(1+\eta)n(n-1)\chi\nu.
\]
We call $\sum_{u\ne v}X_{uv}^2$ the \emph{total energy}. Each tail has
probability at most $n^{-3}$ under the second budget.
In the empirical case,
$D_N=n^{-1}\sum_{u\ne v}X_{uv}^2-(n-1)\delta^2$; in the supplied-mean
case, $D_N=n^{-1}\sum_{u\ne v}X_{uv}^2$. Consequently, whenever the
mean and energy bounds hold,
\begin{equation}
 (1-2\eta)(n-1)\nu\le D_N\le(1+\eta)(n-1)\chi\nu. \label{eq:random-D}
\end{equation}

\paragraph{The operator norm.}
Let $R_X$ have entries $X_{tw}$ for $t\le T$, $w>t$, and $0$ otherwise.
The entries in this array are independent. Symmetrize with an independent
copy $X'$ and put $\widehat X=X-X'$. Then
$\E\|R_X\|\le\E\|R_{\widehat X}\|$.
Independently changing the sign of any coordinate of $\widehat X$ preserves
its joint law: exchanging $X_{uv}$ and $X'_{uv}$ negates their difference,
and the pairs of copies are independent. We may therefore introduce
independent signs, condition on the differences, and apply the displayed
Rademacher estimate with $\vartheta=\eta$. This yields
\[
 \E\|R_X\|\le
 (1+\eta)(\E\sigma_1(\widehat X)+\E\sigma_2(\widehat X))
       +102B\sqrt{\log n}.
\]
Here $|\widehat X_{uv}|\le2B$ and
$\E\widehat X_{uv}^2=2\operatorname{Var}J_{uv}\le2\chi\nu$.
To bound both maxima, consider the incident squared mass
$H_i=\sum_{j\ne i}\widehat X_{ij}^2$ at each vertex. Every row-square
sum and every column-square sum of $R_{\widehat X}$ is bounded by the
corresponding $H_i$. The scalar estimate and a union bound give
\[
 \Prb\!\left(\max_i H_i>2(1+\eta)n\chi\nu\right)
 \le q_n:=n\exp\!\left(-\frac{\eta^2n\chi\nu}{8B^2}\right).
\]
Each $H_i$ is always at most $4nB^2$. Splitting the expectation according
to this exceptional event consequently gives, for $j=1,2$,
\[
 \E\sigma_j(\widehat X)^2
 \le2(1+\eta)n\chi\nu+4nB^2q_n
 \le2(1+2\eta)n\chi\nu.
\]
For completeness, the sufficient budget
$n\chi\nu\ge10^9B^2\log n$, which follows from the proposition's
assumptions, implies both
$4nB^2q_n\le2\eta n\chi\nu$ and
$102B\sqrt{\log n}\le0.004\sqrt{n\chi\nu}$.
The Cauchy--Schwarz inequality gives
$\E\sigma_j\le\sqrt{\E\sigma_j^2}$, so
\[
 \E\|R_X\|
 \le\bigl(2(1.01)\sqrt{2.04}+0.004\bigr)\sqrt{n\chi\nu}
 \le2.89\sqrt{n\chi\nu}.
\]
Ledoux's upper-tail inequality with $r=\eta\sqrt{n\chi\nu}$ now gives
$\|R_X\|\le2.90\sqrt{n\chi\nu}$ except with probability at most
$\exp(-n\chi\nu/(80000B^2))\le n^{-3}$.
After empirical centering, the entries need not be independent; instead
we use the deterministic perturbation bound
$\|R\|\le\|R_X\|+n|\delta|$. On the previously established event for
$\delta$, this yields
\[
 \|R\|\le2.902\sqrt{n\chi\nu},\qquad
 \|R\|^2\le8.43n\chi\nu\le9\chi D_N.
\]
In the supplied-mean case the same argument applies with $\delta=0$.

\paragraph{The earlier-neighbor mass.}
Put $a=\lceil\tau n/10\rceil$. For $t>a$, the squared-entry estimate gives
$V_t=\sum_{j<t}X_{tj}^2\ge(1-\eta)(t-1)\nu$ with failure probability at most
$e^{-\eta^2\tau n\nu/(40B^2\chi)}\le n^{-3}$.
On these events and the mean event, the triangle inequality gives
\[
 \sqrt{W_t}\ge\sqrt{V_t}-|\delta|\sqrt{t-1}
 \ge0.993\sqrt{(t-1)\nu}.
\]
We give the summation details, including the integer endpoints. Write
$H=\tau n$, so $T=\lfloor H\rfloor$ and $a=\lceil H/10\rceil$.
For every $x\ge0$,
\[
 (x+1)\sqrt{x+1}-x\sqrt x\le\tfrac32\sqrt{x+1}.
\]
For example, this follows by expanding the nonnegative expression
$(\sqrt{x+1}-\sqrt x)^2(\sqrt{x+1}+2\sqrt x)$.
Telescoping, and then bounding each of the first $a$ summands by
$\sqrt a$, yields
\[
 \sum_{j=a}^{T-1}\sqrt j
 \ge \tfrac23 T\sqrt T-\sqrt T-a\sqrt a.
\]
Since $H\ge1000$, we have $0.99H\le T\le H$,
$\sqrt T\ge0.994\sqrt H$, $a\le0.101H$, and
$\sqrt a\le0.318\sqrt H$. Substitution in the preceding display gives
\[
 \sum_{j=a}^{T-1}\sqrt j\ge0.62H\sqrt H,
 \qquad
 \sum_{t\le T}\sqrt{W_t}\ge0.61\,H\sqrt H\sqrt\nu.
\]
The shift from $t$ to $j=t-1$ is why the upper summation index is $T-1$.
To compare with the required bound, square the nonnegative expression
$4\sqrt3\,gn\sqrt{D_N}$ and use~\eqref{eq:random-D}:
\[
 (4\sqrt3\,gn\sqrt{D_N})^2
 =\frac{\tau^3n^2D_N}{3\chi}
 \le\frac{1.01}{3}\tau^3n^3\nu
 \le(0.60\,H\sqrt H\sqrt\nu)^2.
\]
This proves the gap bound in~\eqref{eq:remainder} with room to spare.

\paragraph{The comparison allowance.}
It suffices to prove $100000\chi\sigma^2\le\tau n\nu$.
Indeed, by~\eqref{eq:random-D} and $n\ge100$,
\[
 \left(\frac{g\sqrt{D_N}}{8\tau}\right)^2
 =\frac{\tau D_N}{9216\chi}
 \ge\frac{0.98\tau(n-1)\nu}{9216\chi}
 \ge\frac{\tau n\nu}{100000\chi}.
\]
The last budget in~\eqref{eq:membership-budgets} gives
$400000\chi\sigma_0^2\le\tau n\nu$.
For the empirical error, use the first budget, the identity for
$\delta_0^2$, and $\chi\le n\le n^2$ to obtain
$400000\chi\delta_0^2\le\tau n\nu$.
In either case $\sigma\le\sigma_0+\delta_0$; hence
\[
 100000\chi\sigma^2
 \le200000\chi(\sigma_0^2+\delta_0^2)
 \le\tau n\nu.
\]
All quantities being compared are nonnegative, so taking square roots
establishes the allowance condition.

There are at most $n+4$ failure events: one for the empirical mean,
two for total energy, at most $n$ for the rows, and one for the norm.
Their individual bounds are $n^{-3}$, so the union bound gives a total
failure probability at most $(n+4)n^{-3}\le1/n$. No independence among
these events is needed. Finally, using $\pi\le4$,
\[
 \frac{256\pi\tau K}{g^2}
 =331776\pi\,\frac{\chi^2}{\tau^2}
 \le10^9\frac{\chi^2}{\tau^2}\le n.
\]

\end{proof}

\paragraph{Scalar estimates used in the proof.}
Here is a direct proof of the scalar bounds, separating them from the
random-matrix estimate. Let $Z_1,\ldots,Z_r$ be independent centered random
variables with $|Z_i|\le b$ and $\sum_i\E Z_i^2\le v$, where $v>0$.
The elementary inequality $e^z\le1+z+z^2$ for $|z|\le1$ gives
\[
 \E e^{sZ_i}\le1+s^2\E Z_i^2\le e^{s^2\E Z_i^2}
 \qquad\text{when }|s|b\le1.
\]
Independence lets us multiply these inequalities, so
$\E e^{s\sum_iZ_i}\le e^{s^2v}$ on the same range of $s$.
If $a>0$ and $ab\le2v$, exponential Markov inequality with
$s=a/(2v)$ gives
\[
 \Prb\!\left(\sum_iZ_i\ge a\right)\le e^{-a^2/(4v)},\qquad
 \Prb\!\left(\sum_iZ_i\le-a\right)\le e^{-a^2/(4v)}.
\]
The second inequality follows by replacing every $Z_i$ by $-Z_i$.
The two-sided bound is at most twice this quantity. For arbitrary $a>0$,
enlarge the variance budget to $v+ba/2$; this yields the sufficient
upper-tail bound $e^{-a^2/(4v+2ba)}$.

For sums of squared entries, apply this argument to
$Y_i=X_i^2-\E X_i^2$. Since $0\le X_i^2\le B^2$, we have
\[
 |Y_i|\le B^2,\qquad
 \E Y_i^2=\E X_i^4-(\E X_i^2)^2\le B^2\E X_i^2.
\]
Thus the variance budget depends on the expected squared mass, rather
than simply on the number of terms times $B^4$. In particular, if
$\nu\le\E X_i^2\le\chi\nu$ and $0<\eta\le2$, then
\[
 \Prb\!\left(\sum_{i=1}^rX_i^2\le(1-\eta)r\nu\right)
 \le e^{-\eta^2r\nu/(4\chi B^2)},\qquad
 \Prb\!\left(\sum_{i=1}^rX_i^2\ge(1+\eta)r\chi\nu\right)
 \le e^{-\eta^2r\chi\nu/(4B^2)}.
\]
For the lower tail, use deviation $\eta r\nu$ and variance budget
$B^2r\chi\nu$; the actual expected sum is at least $r\nu$.
For the upper tail, use deviation $\eta r\chi\nu$ and the same budget;
the actual expected sum is at most $r\chi\nu$.
These give the total-energy and earlier-neighbor estimates above.
For the independent-copy differences, use
$|X_i-X_i'|\le2B$ and $\E(X_i-X_i')^2=2\E X_i^2$.
Both row and column deviations of $2\eta n\chi\nu$ can use the common
variance budget $8B^2n\chi\nu$, giving the displayed exponent with
denominator $8B^2$. The larger constants in the proposition absorb these
bounds without changing its input assumptions or conclusion.

All concentration is applied to the original independent variables.
Subtracting their empirical mean generally makes them dependent. The
subsequent centered-energy identity and row-norm inequality are
deterministic, and a union bound combines their required events; no
independence between those events is needed.

\section{Spectral preprocessing for hidden communities}\label{app:discovery}

\subsection{A discovery lemma for the stochastic block model}

\begin{lemma}[Discovering a bounded-rank mean]\label{cor:planted}
Fix a rank bound $r$ and variance ratio $\chi\ge1$. Suppose the integer
coefficients $J_{uv}$ are independent over $u<v$,
\[
 |J_{uv}-\E J_{uv}|\le B,\quad
 0<\nu\le\operatorname{Var}J_{uv}\le\chi\nu,
\]
where we take $B\ge1$. Suppose their means are the off-diagonal entries
of a symmetric matrix $\bar A$ with
$\rank\bar A\le r$, $|\bar A_{uu}|\le B_0$ and $\|\bar A\|\le\mu n$.
For a sufficiently large $C=C(r,\chi)$, if
\[
 n\nu\ge C(B^2\log n+B_0^2+\mu^2),
\]
there is a deterministic exact polynomial-space algorithm, given the fixed
class parameters $r,\chi$ but not $\bar A$ or $\nu,B,B_0,\mu$, taking
$\Ostar(2^{(1-\gamma)n})$ time with probability $1-o(1)$, for a fixed
$\gamma>0$. Arbitrary units, even chosen after the binary draw, are allowed.
\end{lemma}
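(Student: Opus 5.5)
The algorithm will be \textsc{CommunityComparisons}, run with $k=r$, and exactness comes from Lemma~\ref{lem:counts-exact}. The time bound will follow from the two-round case of Theorem~\ref{res:counts}, with $\ell=2$ and $\tau=1/2$. What remains is to show that, with probability $1-o(1)$ over $J$, three things hold for the computed $P=\sum_{i\le r}\lambda_iv_iv_i^{\mathsf T}$ and $N=\operatorname{offdiag}(J-P)$: the norm condition $\|N\|^2\le KD_N$, the gap bound and the allowance bound of~\eqref{eq:profile-slack}, and $|\mathcal Y|=\mathrm{poly}(n)$. The constants $K$ and $g$ may depend only on $r$ and $\chi$. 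The event concerns $J$ alone, so arbitrary units chosen afterwards are covered.

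Write $X=\operatorname{offdiag}(J-\bar A)$ for the centered noise. I first collect the random-matrix facts from the proof of Proposition~\ref{prop:pairmembership}, applied to the full symmetric array rather than the upper-triangular one:
\begin{itemize}
\item the norm bound $\|X\|\le c\sqrt{n\chi\nu}$;
\item the energy bounds $\sum X_{uv}^2\in[(1-\eta)n(n-1)\nu,(1+\eta)n(n-1)\chi\nu]$;
\item the row-mass bounds $\sum_{w\in B_1}X_{tw}^2\ge(1-\eta)|B_1|\nu$ for all $t$.
\end{itemize}
All of these hold with probability $1-O(1/n)$ once $n\nu\ge CB^2\log n$. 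Then $J=\bar A'+X$, where $\bar A'$ is $\bar A$ with its diagonal removed. This gives $\|J-\bar A\|\le\|X\|+B_0$, and $B_0\le\sqrt{n\nu/C}$.

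The spectral step uses the approximate diagonalization. Let $P$ be the best rank-$r$ approximation of $J$ in operator norm, up to a rounding error of $n^{-3}$. Since $\bar A$ has rank at most $r$, Weyl's inequality gives $\|J-P\|\le\lambda_{r+1}(J)+n^{-3}\le\|J-\bar A\|+n^{-3}$. Hence
\[
\|N\|\le\|J-P\|+\max_u|P_{uu}|\le 2c\sqrt{n\chi\nu}+\max_u|P_{uu}|.
\]
The diagonal term $|P_{uu}|\le\sum_i|\lambda_i|v_i(u)^2$ is the delicate part, because the top eigenvalues can be of order $\mu n$.

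For the lower bound on $D_N$, I would write $N=\operatorname{offdiag}(X+E)$ with $E=\bar A-P$. Here $E$ has rank at most $2r$ and $\|E\|\le\|X\|+\|J-P\|+B_0=O(\sqrt{n\chi\nu})$, so $\|E\|_F^2\le 2r\|E\|^2=O(rn\chi\nu)$. Then $nD_N\ge(\|X\|_F-\|E\|_F)^2\ge(1-o(1))n^2\nu$, and symmetrically $D_N\le(1+o(1))n\chi\nu$. This yields the norm condition with $K=O(\chi)$.

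The same rank argument handles the gap bound. The vector of row-parts $E_{t,B_1}$ has total squared mass $O(rn\chi\nu)$. By Cauchy--Schwarz it reduces $\sum_{t\in B_2}\sqrt{W_t}$ by at most $\sqrt{n}\cdot O(\sqrt{rn\chi\nu})=o(n^{3/2}\sqrt\nu)$ only if the constants work out. If they do not, I would discard the $O(r\chi/\epsilon)$ rows carrying excess mass and absorb them with a Markov-type count. The remaining rows give $\sum_{t\in B_2}\sqrt{W_t}\ge c'n\sqrt{n\nu}\ge4\sqrt3\,gn\sqrt{D_N}$ with $g=g(r,\chi)$.

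The allowance bound and the grid size come from Corollary~\ref{cor:spectral} with spacing $\delta=\max(n^{-2},\cdot)$. We have $\sum_ts_t\le\sum_i|\lambda_i|(\delta\sqrt n+1)$ and $|\lambda_i|\le\|J\|\le\mu n+O(\sqrt{n\chi\nu})$. This is the main obstacle: a large eigenvalue $\mu n$ of $\bar A$ makes the allowance $\mu n$ per direction, compared with $gn\sqrt{D_N}\approx gn\sqrt{n\nu}$. The hypothesis $n\nu\ge C\mu^2$ exactly gives $\mu n\le n\sqrt{n\nu/C}$. So the allowance bound holds if $\delta\sqrt n\le1$, that is, $\delta=\Theta(n^{-1/2})$, which is at least $n^{-2}$; the grid then has $(2\sqrt n/\delta+3)^r=O(n^r)$ points. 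The same bound $|\lambda_i|\le\mu n+\dots$ controls the diagonal term: $\max_u|P_{uu}|\le\sum_i|\lambda_i|\|v_i\|_\infty^2$. I would try first to bound $\|v_i\|_\infty$ crudely by $1$ and show that an $O(1)$ number of coordinates with large $|P_{uu}|$ changes $\|N\|$ only by an $O(\sqrt{n\nu})$ term. This should work because $\operatorname{offdiag}$ of a rank-$r$ matrix $P$ has norm at most $2\|P\|$ only in the worst case, so I would instead bound $\|\operatorname{diag}(P)\|\le\|P\|$ and use $\|N\|\le\|J-P\|+\|\operatorname{diag}P\|$. If $\mu n\gg\sqrt{n\nu}$, this bound is insufficient. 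In that case I would remove the diagonal of $P$ before the Weyl comparison, using that $J$ has zero diagonal and $\|\operatorname{diag}(J-P)\|\le\|J-P\|$ entrywise. That gives $\|N\|\le2\|J-P\|$ and settles the norm condition. Theorem~\ref{res:counts} then yields time $\Ostar(n^r(2^{n/2}+2^ne^{-c_*n}))$.
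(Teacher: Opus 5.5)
Your proposal is correct and is essentially the paper's proof. It uses the same split $N=X+\operatorname{offdiag}(\bar A-P)$ with the rank-$2r$ error bounded by $\|\bar A-P\|_F\le\sqrt{2r}\,\|\bar A-P\|=O(\sqrt{r\chi n\nu})$, the bound $\|N\|\le2\|J-P\|$ from the zero diagonal of $J$, the Cauchy--Schwarz loss $\sqrt{|B_2|}\,\|\bar A-P\|_F=O(n\sqrt{r\chi\nu})=o(n^{3/2}\sqrt\nu)$ in the cross-round row norms (so your fallback of discarding rows is never needed), and $\sum_i|\lambda_i|\le r\|J\|\le r(\mu n+O(\sqrt{\chi n\nu}))$ with $n\nu\ge C\mu^2$ for the allowance. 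The one difference is that you fix the grid spacing at $\Theta(n^{-1/2})$, whereas the paper computes an adaptive spacing from $D_N$ and $\Lambda=\sum_i|\lambda_i|$; your choice also works, since on the good event the total allowance is at most $2\Lambda\le gn\sqrt{D_N}/8$ and the grids still have $O(n)^r$ tuples.
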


We give the computational detail separately so that the general theorem
requires only rational comparisons, with no exact-real arithmetic oracle.

\paragraph{Rational approximate diagonalization.}
We only need accuracy $\zeta=n^{-4}$ for the matrix $J$ of pair
coefficients of an explicitly listed formula. The following finite grid construction suffices.
Its polynomial factors are deliberately generous; they do not affect the
exponential saving in the theorem.

Suppose first that $n\ge1$. For the rational symmetric input matrix $J$, set
\[
 S_0=\sum_{u,v}J_{uv}^2,\qquad
 M_0=\lceil S_0\rceil+1,\qquad H=32n^5M_0.
\]
Initialize $Q=I$ and $A=J$, and maintain $A=Q^{\mathsf T}JQ$.
At each step compute the off-diagonal energy
$E=\sum_{i\ne j}A_{ij}^2$. If $E\le n^{-8}$, stop.
Otherwise choose a largest off-diagonal entry in absolute value, say
$A_{ij}$, and write $a=A_{ii}$, $b=A_{ij}$ and $d=A_{jj}$.
For each $t\in\{0,1/H,\ldots,1\}$, consider the plane rotation whose
block on coordinates $i,j$ is
\[
 R(t)=\begin{pmatrix}c(t)&-s(t)\\s(t)&c(t)\end{pmatrix},\qquad
 c(t)=\frac{1-t^2}{1+t^2},\quad s(t)=\frac{2t}{1+t^2}.
\]
Here and below this block is extended by the identity on all other
coordinates. Its entries are rational and $c(t)^2+s(t)^2=1$, so it is
exactly orthogonal. The updated pivot in $R(t)^{\mathsf T}AR(t)$ is
\[
 f(t)=b\bigl(c(t)^2-s(t)^2\bigr)+(d-a)c(t)s(t).
\]
Choose a grid point minimizing $|f(t)|$, breaking ties by a fixed order,
and replace $Q$ by $QR(t)$ and $A$ by $R(t)^{\mathsf T}AR(t)$.
Allow at most $M_0n^{10}$ such steps. We next show that this number
always suffices to reach the stopping condition.

Orthogonal changes of coordinates preserve the sum of squared entries.
Consequently every intermediate matrix has squared Frobenius norm $S_0$,
and $|a|,|b|,|d|\le M_0$, since $S_0\le M_0\le M_0^2$.
The continuous function $f$ satisfies $f(0)=b$ and $f(1)=-b$.
The intermediate value theorem gives a real $t_*\in[0,1]$ with
$f(t_*)=0$. This real number is used only in the analysis.
To see why the rational grid approximates it sufficiently well, note that
\[
 (c(t)-c(u))^2+(s(t)-s(u))^2
 =\frac{4(t-u)^2}{(1+t^2)(1+u^2)}\le4(t-u)^2.
\]
Thus each coordinate changes by at most $2|t-u|$. Since $|c|,|s|\le1$,
the differences of $c^2$, $s^2$ and $cs$ are each at most $4|t-u|$.
Using $|b|\le M_0$ and $|d-a|\le2M_0$, we obtain
\[
 |f(t)-f(u)|\le16M_0|t-u|.
\]
A grid point lies within $1/H$ of $t_*$. Therefore the point selected
by the algorithm satisfies
\[
 |f(t)|\le\frac{16M_0}{H}=\frac{1}{2n^5}.
\]

Only two diagonal entries change. A direct calculation in their symmetric
$2$-by-$2$ block, together with preservation of the squared Frobenius norm,
gives the exact energy identity
\[
 E_{\mathrm{new}}=E-2b^2+2f(t)^2.
\]
The largest-pivot choice gives $E\le n^2b^2$. Hence whenever $E>n^{-8}$,
\[
 E-E_{\mathrm{new}}
 =2b^2-2f(t)^2
 \ge\frac{2E}{n^2}-\frac{1}{2n^{10}}
 >\frac{1}{n^{10}}.
\]
Initially $E\le S_0\le M_0$, and energy is always nonnegative.
It follows that at most $M_0n^{10}$ updates suffice. At termination, put
$D=\operatorname{diag}(A)$. Orthogonal invariance now gives
\[
 \|J-QDQ^{\mathsf T}\|_F
 =\|A-D\|_F=\sqrt E\le n^{-4}=\zeta.
\]
For $n=0$ the empty identity matrix needs no updates.

For a formula with $m$ explicitly listed clauses, $|J_{uv}|\le2m$:
changing another variable changes a score gain by at most $2m$.
Thus $M_0\le4n^2m^2+1$, so both the grid size and the number of updates
are polynomial in the input size. Each rotation at $t=j/H$ has common
denominator $H^2+j^2\le2H^2$. An update of $A$ multiplies its common denominator by at most
$(H^2+j^2)^2$, and an update of $Q$ multiplies its common denominator by
at most $H^2+j^2$. Numerator lengths also remain polynomial, since
$\|A\|_F^2=S_0$ and every entry of $Q$ has absolute value at most $1$.
The bit lengths therefore grow by only $O(\log H)$ per update, in addition
to the input bit lengths. Evaluating each finite grid and updating the
matrices takes polynomial time and space. This establishes the numerical
preprocessing bound needed here for explicitly listed unweighted formulas.

\paragraph{Retaining the largest diagonal entries.}
For $n<r$, use the zero representation $P=0$; these bounded dimensions
do not affect the asymptotic estimate for fixed $r$. Suppose now that
$n\ge r$.
Let $P$ retain the $r$ entries of $D$ of largest absolute value and the
corresponding columns of $Q$. It has rank
at most $r$ and a rational orthonormal representation. We explain directly
why this truncation preserves a good approximation. Let $A_0$ be any matrix
of rank at most $r$, and put $A'=Q^{\mathsf T}A_0Q$ and
$e=\|D-A'\|$. There can be at most $r$ diagonal entries of $D$ with
absolute value greater than $e$. Indeed, if there were more, the restriction
of the linear map $A'$ to the span of those coordinates would have rank
at most $r$, and hence a nonzero vector $x$ with $A'x=0$. Since $x$ is supported on these coordinates, we would have
\[
 \|(D-A')x\|_2=\|Dx\|_2>e\|x\|_2,
\]
contradicting the definition of $e$. Therefore every discarded entry has
absolute value at most $e$. Orthogonal changes of coordinates preserve the
operator norm, and the operator norm is at most the Frobenius norm. Thus
\[
 e=\|QDQ^{\mathsf T}-A_0\|
 \le\zeta+\|J-A_0\|,
\]
and the triangle inequality gives
\begin{equation}
 \|J-P\|\le\|J-A_0\|+2\zeta
 \qquad\text{whenever }\rank A_0\le r.\label{eq:spectral-tail}
\end{equation}
This argument needs no estimate for individual eigenvectors or separation
between eigenvalues. The retained coefficients are exactly the selected
diagonal entries of $Q^{\mathsf T}JQ$, and the retained directions are the
corresponding rational unit columns of $Q$.

\paragraph{Setup for the comparison search.}
Given the fixed class parameters $r,\chi$, choose a positive rational
$g$ between $1/(512\sqrt\chi)$ and $1/(256\sqrt\chi)$.
Fix the original variable order and two prefix rounds of sizes
$\lfloor n/4\rfloor$ and
$\lfloor n/2\rfloor-\lfloor n/4\rfloor$. If $n<4$, use exhaustive
search. Otherwise, compute $P$ by the preceding construction with
$\zeta=n^{-4}$, then form
$N=\operatorname{offdiag}(J-P)$ and
$D_N=n^{-1}\sum_{u\ne v}N_{uv}^2$.
If $D_N=0$, use exhaustive search.

When $D_N>0$, let $\Lambda=\sum_{k\le r}|\lambda_k|$ for the retained
directions. If $\Lambda=0$, take $P=0$ and use one run with no projection
guesses. Otherwise, the required spacing can be computed using only an
integer square root. Write the positive rational number
\[
 q=\frac{g^2nD_N}{(64\Lambda)^2}=\frac{u}{v},\qquad u,v\in\mathbb N_{>0},
\]
and put $\delta=\lfloor\sqrt{uv}\rfloor/v$. For every positive integer
$m$, we have $\sqrt m/2\le\lfloor\sqrt m\rfloor\le\sqrt m$. Thus
\begin{equation}
 \frac{g\sqrt{nD_N}}{128\Lambda}\le\delta
 \le\frac{g\sqrt{nD_N}}{64\Lambda}.\label{eq:discovery-spacing}
\end{equation}
Replace $\delta$ by $\max\{\delta,n^{-2}\}$. Construct the grids of
Corollary~\ref{cor:spectral} with this spacing and their comparison
allowances. Now execute the comparison routine of
Appendix~\ref{app:accumulation} with these grids and rounds.
Since each direction has Euclidean norm one, the cap bounds the total
number of tuples by $(2n^3+3)^r$ on arbitrary inputs. Each grid still
covers every Boolean projection, so
the routine remains exact whether or not the input satisfies the
probabilistic guarantees.

\begin{proof}[Proof of Lemma~\ref{cor:planted}]
As in Proposition~\ref{prop:pairmembership}, bounded integer coefficients
can be represented exactly by their finite marginal distributions on any
original probability space. All probability estimates below concern only
these binary coefficients. Once the resulting event holds, the deterministic
node bound holds for every choice of the unit clauses, including choices
made after observing the draw.

We verify the running-time guarantee for this construction. We may assume
$n\ge\max\{4,r\}$, since the remaining input sizes are bounded in terms
of the fixed class parameter $r$.
Put $X=\operatorname{offdiag}(J-\bar A)$. With $\zeta=n^{-4}$, write
$E_0=\|X\|+B_0$, so $J=\bar A+X-\operatorname{diag}(\bar A)$ and
$\|J-\bar A\|\le E_0$. Apply~\eqref{eq:spectral-tail} with
$A_0=\bar A$ to obtain
\[
 \|J-P\|\le E_0+2\zeta,\qquad
 \|N\|\le2E_0+4\zeta,
 \quad N=\operatorname{offdiag}(J-P).
\]
For $H=\bar A-P$, subadditivity of rank gives $\rank H\le2r$,
and the triangle inequality gives $\|H\|\le2E_0+2\zeta$.
To pass to the Frobenius norm, let $\theta_1,\ldots,\theta_n$ be the
eigenvalues of the symmetric matrix $H$, counted with multiplicity.
At most $2r$ are nonzero, and each has absolute value at most $\|H\|$.
The spectral theorem therefore gives
\[
 \|H\|_F^2=\sum_{i=1}^n\theta_i^2\le2r\|H\|^2,
 \qquad
 \|H\|_F\le\sqrt{2r}(2E_0+2\zeta).
\]

We use the already established upper-triangular matrix estimate from
Appendix~\ref{app:membership}. If $U$ is the strict upper triangle of
$X$, then $X=U+U^{\mathsf T}$, and consequently
\[
 \|X\|\le2\|U\|\le5.8\sqrt{\chi n\nu}
\]
except with probability at most $n^{-3}$, for sufficiently large $C$.
This uses the same applications of Bandeira and van Handel and Ledoux
as the membership proof. Bernstein's inequality and a union bound also
give, simultaneously with probability $1-o(1)$,
\[
 .99n(n-1)\nu\le\|X\|_F^2\le1.01\chi n(n-1)\nu,
 \qquad
 \sum_{s\in B_1}X_{ts}^2\ge.99|B_1|\nu\quad(t\in B_2).
\]
To make the probability accounting explicit, it suffices to require
$n\ge20$ and $n\nu\ge10^9\chi B^2\log n$. Each cross-round row
contains $|B_1|\ge n/5$ distinct independent pair entries. The scalar
lower-tail bound with relative deviation $.01$ gives failure probability
at most $n^{-3}$ for each row. Each of the two total-energy tails also
costs at most $n^{-3}$. Together with the matrix event, a union bound gives
at most $(|B_2|+3)n^{-3}\le1/n$ failure probability. No independence
between these events is needed. The displayed density requirement is
absorbed by the choice of $C=C(r,\chi)$.

Here $B_0^2\le n\nu/C$, and $B\ge1$ ensures that the numerical error
$\zeta$ is negligible on this scale. In particular, for sufficiently
large $n$ we can use the convenient bounds
\[
 E_0\le7\sqrt{\chi n\nu},\qquad
 \|N\|\le18\sqrt{\chi n\nu},\qquad
 \|H\|_F\le16\sqrt{2r\chi n\nu}.
\]
For $n\ge5{,}120{,}000r\chi$, the last quantity is at most
$.01n\sqrt\nu$, as follows by squaring both nonnegative sides.
This is a fixed size cutoff because $r$ and $\chi$ are fixed.

We next explain why this error is small enough for the search analysis.
Deleting diagonal entries cannot increase the Frobenius norm. Hence
$N=X+\operatorname{offdiag}H$ gives
\[
 \bigl|\|N\|_F-\|X\|_F\bigr|\le\|H\|_F\le.01n\sqrt\nu.
\]
Squaring the resulting lower and upper bounds, and using $n\ge100$,
yields
\[
 .9n\nu\le D_N\le1.1\chi n\nu,\qquad \|N\|^2\le400\chi D_N.
\]
The same error estimate controls the sum of cross-round row norms.
By the triangle inequality in each row and the Cauchy--Schwarz inequality,
its decrease is at most
\[
 \sum_{t\in B_2}\left(\sum_{s\in B_1}H_{ts}^2\right)^{1/2}
 \le\sqrt{|B_2|}\left(\sum_{t\in B_2}\sum_{s\in B_1}H_{ts}^2\right)^{1/2}
 \le\sqrt{|B_2|}\,\|H\|_F.
\]
For $n\ge20$, both rounds have at least $n/5$ variables. Their raw
row sum is at least $|B_2|\sqrt{.99|B_1|\nu}$, whereas the displayed
loss is at most $.01n\sqrt{n\nu}$. Therefore
\[
 \sum_{t\in B_2}\left(\sum_{s\in B_1}N_{ts}^2\right)^{1/2}
 \ge .07n\sqrt{n\nu}
 \ge4\sqrt3\,gn\sqrt{D_N}.
\]
The last inequality uses $D_N\le1.1\chi n\nu$ and
$g\le1/(256\sqrt\chi)$. This is precisely the gap bound
in~\eqref{eq:profile-slack} for two rounds: the first round has no
earlier-round entries, while the second round sees the first. We can consequently take $K=400\chi$.

The retained diagonal entries are Rayleigh quotients of $J$ on unit
vectors. The Cauchy--Schwarz inequality gives
$|v^{\mathsf T}Jv|\le\|v\|_2\|Jv\|_2\le\|J\|$ for each such
vector, even if it is not an exact eigenvector. Therefore
\[
 \Lambda\le r\|J\|\le r(\mu n+E_0)
             \le r\bigl(\mu n+7\sqrt{\chi n\nu}\bigr).
\]
For clarity, sufficient conditions for the desired budget are
\[
 (256r\mu)^2\le g^2n\nu,\qquad
 1792r\sqrt\chi\le gn.
\]
The first follows by increasing $C$ in the assumed bound on $\mu^2$;
the second holds for sufficiently large $n$, since $r,\chi,g$ are fixed.
They bound each of the two terms in the last display by
$gn\sqrt{n\nu}/256$. Using $D_N\ge.9n\nu$, we obtain
\[
 \Lambda\le\frac{gn\sqrt{n\nu}}{128}
          \le\frac{gn\sqrt{D_N}}{64}.
\]
If $\Lambda=0$, the comparison allowance is $0$ and the algorithm
executes one unadjusted comparison run. The same two-round norm and
gap estimates apply to this run, so the round case of
Theorem~\ref{res:counts} bounds its nodes by $(2+14n)c^n$ for the same
base $c<2$ used below. Otherwise, before the lower cap is applied, the chosen
spacing satisfies
$\delta\ge g\sqrt{nD_N}/(128\Lambda)\ge1/(2\sqrt n)$.
Thus the cap is inactive for sufficiently large $n$ on this event.
Corollary~\ref{cor:spectral} bounds the total comparison allowance
by $gn\sqrt{D_N}/32$ and gives
\[
 |\mathcal Y|\le(4n+3)^r=n^{O(r)}.
\]
This proves the promised bound by the round case of
Theorem~\ref{res:counts}. The constant $K$ can be replaced by a
slightly larger rational bound without affecting the conclusion.

We finish by choosing constants that work for the entire family, including
bounded input sizes. With the rational $g$ fixed as above, take
\[
 C_0=10^9\chi+1+(256r/g)^2
\]
and choose an integer $N_0$ at least
\[
 \max\{100,r,5{,}120{,}000r\chi,1792r\sqrt\chi/g\}.
\]
Suppose $n\ge N_0$ and the lemma's assumption
$n\nu\ge C(B^2\log n+B_0^2+\mu^2)$ holds with $C=C_0$.
Since $C_0\ge10^9\chi$, it implies the required noise density bound.
Since $C_0\ge1$, it gives $B_0^2\le n\nu\le\chi n\nu$.
Moreover, $\log n\ge1/2$, $B\ge1$, and $C_0\ge2$ imply $n\nu\ge1$;
hence $\zeta=n^{-4}\le1\le\sqrt{\chi n\nu}$.
Finally, $C_0\ge(256r/g)^2$ gives
$(256r\mu)^2\le g^2n\nu$. The choice of $N_0$ supplies the two
remaining size conditions. Thus none of these conditions depends on an
additional assumption on the input.

To make the bounded-size argument explicit, put
\[
 p_r(n)=(2n^3+4n+3)^r(2+14n).
\]
On arbitrary inputs, the capped grids and the enumeration bound give at
most $p_r(n)2^n$ search nodes, including the exhaustive-search and
zero-coefficient branches. On the useful event the stronger bound is
$(4n+3)^r(2+14n)c^n\le p_r(n)c^n$, where the round analysis supplies
one fixed $1\le c<2$. If $n<N_0$, the arbitrary-input bound is at most
$2^{N_0}p_r(n)c^n$. Therefore the same constants give
\[
 \Prb\!\left(\text{search nodes}\le2^{N_0}p_r(n)c^n\right)
 \ge1-1/n\qquad(n\ge1).
\]
The polynomial preprocessing and arithmetic overhead preserve the claimed
$\Ostar(c^n)$ bound, and all phases use polynomial space.

The algorithm uses only $J$, the fixed class parameters $r,\chi$, and
its computed $D_N,\Lambda$. No planted mean, variance parameter, or
spectral cutoff has to be supplied.
\end{proof}

\subsection{Proof of the block-model theorem}\label{app:sbm_proof}

\begin{proof}[Proof of Theorem~\ref{thm:sbm}]
Use Algorithm~\ref{alg:communities}, which is exact by
Lemma~\ref{lem:community_correctness}. Its running time follows once we
verify the hypotheses of the discovery lemma, Lemma~\ref{cor:planted}. In the Max 2-SAT encoding,
$J_{uv}=-2A_{uv}$ for the adjacency indicators $A_{uv}$. Conditional on
the partition, these coefficients are independent, with
\[
 \E J_{uv}=-2\pi_{ab},\qquad
 \operatorname{Var}J_{uv}=4\pi_{ab}(1-\pi_{ab}),\qquad
 |J_{uv}-\E J_{uv}|\le2.
\]
Let $Z_{ua}=[u\in V_a]$. The matrix
$\bar A=-2Z\Pi Z^{\mathsf T}$ agrees with the means off the diagonal,
has rank at most $k$: the map factors through the $k$ columns of $Z$.
Every entry has absolute value at most $2$, including the diagonal.
Consequently $\|\bar A\|_F^2\le4n^2$ and
$\|\bar A\|\le\|\bar A\|_F\le2n$.
In Lemma~\ref{cor:planted}, take
\[
 r=k,\quad \nu=4v_{\min},\quad B=B_0=\mu=2.
\]
Its threshold becomes
$4n v_{\min}\ge C_0(4\log n+8)$. For $n\ge100$ we have
$\log n\ge1/2$, so it suffices to take $C=5C_0$ in
\eqref{eq:sbmvariance}. The bounded sizes are covered by the uniform
multiplier in the discovery proof.
The lemma's algorithm does not use $\bar A$ or its probabilistic
parameters. Its good event depends only on the pair coefficients and its
bound is uniform over linear fields, proving the assertion about units.
\end{proof}

\section{Additional simplification rules}\label{app:simplify}

The algorithms analyzed in this paper use only preprocessing, substitution,
cancellation, and the sum test of Section~\ref{sec:framework}. Branching
algorithms for Max-SAT commonly use further rules, such as the domination
rule of Lemma~\ref{lem:unit_dom} and values forced by a constraint. We
record a procedure that combines them with pruning constraints. None of our
results needs it.

When pruning constraints are present, an assignment made by domination
must also respect them. The procedure \textnormal{\textsc{Simplify}}, defined
below, uses domination only when the literal being set
to $1$ occurs in no current constraint.

The procedure \textnormal{\textsc{Simplify}} receives the residual formula
$F$ and current constraints $\mathcal C$. It records any new assignments
in $\alpha$ and their score contribution in $s$, both local to this call.
The instruction \emph{assign $v=p$} updates the formula, its unassigned
variable set, and $s$, substitutes the value in every constraint, and
records the corresponding variable value in $\alpha$.

Each iteration first cancels opposite units and complementary literal
pairs in the constraints, then removes constraints that are already
satisfied. It checks individual constraints and their sum for
infeasibility; an empty collection has sum $0\le0$. If neither check
rejects, it first looks for a value forced by a constraint. If none is
found, it looks for a dominating literal whose assignment preserves
the constraints, using the condition stated above. After either type
of assignment, the procedure repeats these checks on the updated state.
When neither assignment rule applies, it returns that state. All choices
between eligible assignments follow fixed orders.

\begin{algorithm}[tp]
\caption{\textnormal{\textsc{Simplify}}$(F,\mathcal C)$}\label{alg:simplify}
\begin{algorithmic}[1]
\STATE $s\gets0$; $\alpha\gets\emptyset$
\WHILE{true}
  \STATE Cancel opposite units by Lemma~\ref{lem:unit_canc}, adding their constant to $s$.
  \STATE Cancel complementary literal pairs in each $C\in\mathcal C$, decreasing its bound accordingly.
  \STATE Let $r_C$ be its remaining number of occurrences; remove $C$ if $b_C\ge r_C$.
  \IF{some $b_C<0$, or Lemma~\ref{lem:constraint_sum} proves infeasibility}
    \STATE \textbf{return} \textsc{Infeasible}
  \ENDIF
  \IF{some literal $v$ occurs more than $b_C$ times in some $C\in\mathcal C$}
    \STATE Assign $v=0$, updating $F,\mathcal C,s,\alpha$.
  \ELSIF{an unassigned literal $v$ satisfies $d_1(\neg v)=0$, $d_1(v)\ge d_2(\neg v)$,
          and occurs in no $C\in\mathcal C$}
    \STATE Assign $v=1$, updating $F,\mathcal C,s,\alpha$.
  \ELSE
    \STATE \textbf{return} $(F,\mathcal C,s,\alpha)$
  \ENDIF
\ENDWHILE
\end{algorithmic}
\end{algorithm}

The last assignment rule preserves feasibility of the current constraints:
setting $v=1$ cannot increase any left-hand side, because only $\neg v$ can
occur there. Lemma~\ref{lem:unit_dom} shows that the soft score does not
decrease. Thus this rule preserves the best score subject to the current
constraints.

\section{Details of the formal verification}\label{app:verification}

This appendix describes the formalization summarized in
Section~\ref{sec:verification}.

The accompanying Lean~4 artifact formalizes the common score identities,
pruning constraints, their arithmetic checker, and the literal-gap
algorithm of Section~\ref{sec:bias}. The paper and the formalization use
the same fixed initial preferences and the same choice of the branching
set. In particular, the verified algorithm enumerates completions of the
sum, including completions that violate individual constraints.

The checked results include the simultaneous-optimum argument in
Lemma~\ref{lem:gap_correctness}, the expectation and influence estimates
in Lemma~\ref{lem:bias_statistics}, the candidate count in
Lemma~\ref{lem:bias_count}, and the concentration inequality in
Appendix~\ref{app:bias_concentration}. For Theorem~\ref{thm:bias}, Lean
proves correctness on the original input after preprocessing and the
explicit bound $(4+2n)2^{(1-\gamma)n}$ on the search nodes. It also proves
bounds on recursion depth and the sizes of the constraint representations.
For Corollary~\ref{cor:lineargap}, the formalization additionally proves
that preprocessing preserves effective gaps and does not increase
occurrence counts, that simplicity bounds each initial occurrence count
by $4(n-1)+2$, and that these facts imply the stated candidate and
search-node bounds. Simplicity identifies binary clauses whose literals
are written in opposite orders; repeated units created by preprocessing
are permitted in the reduced formula.
The final polynomial bit-overhead and space analysis uses the ordinary
implementation argument given in the proof; it is not a formal machine-cost
or compiler verification.

The deterministic correctness statement in Lemma~\ref{lem:exactA} is also
formalized. The verified procedure computes its observed graph, variable
order, rational weights, and stopping depth from the formula alone.
The proof includes independent optimization of variables occurring only
in unit clauses, score offsets under substitution and preprocessing,
the earliest-optimum argument with changing preferences, and exact
enumeration of the weighted sum. It covers repeated clauses, empty
support, and the case of zero variables.

In this formal implementation, a bit vector and a list of unassigned
variables represent a partial assignment. Residual formulas are
recomputed from the fixed preprocessed formula when needed; inherited
constraints retain the bounds they had when created, and assigned values
are substituted when the weighted sum is tested. The proved score and
minimum identities justify these representations. The executable search
streams the candidates and keeps a best assignment; the full decision
tree and output lists used in the proofs are not constructed by it.

For the random-input analysis, the artifact defines the finite sampling
models and verifies their endpoint symmetries and second moments. It also
formalizes Lemma~\ref{lem:observed_pairs}: conditioning on the observed
support preserves independence and symmetry and cannot decrease the
second moments. The sampled pair multisets are explicitly converted to
the formula representation, and their nonempty support is proved to agree
with the neighbor relation used by the algorithm.
Lemma~\ref{lem:random_exposure} is also formalized. The auxiliary experiment
has exactly the prescribed joint distribution of the formula's pair
multisets and an independent uniform assignment. The proof identifies
$\Delta_t$ both with the average score difference and with half the
residual effective gap used by the algorithm, including its initialization
and preprocessing. It checks which coefficients are determined by each
history, the conditional independence of the fresh signs and assignment
bit, the conditional mean and variance, and the bound $Y_t\le m_0^2d_t$
using the algorithm's observed degree. The later-deviation identity and
the joint conditional law of its signs are checked as well. These results
do not assume independence of the accumulated pruning constraints.

Lemma~\ref{lem:random_moments} is also formalized. The checked
proof includes the exact fourth moment of a signed sum, its expected
absolute value, the truncation loss, and the symmetry argument for an
arbitrary unit-clause shift. The second-moment assertion is proved on
an arbitrary probability space, with the stated nonnegativity and
integrability assumptions. A proved connection between the finite
sampling laws and probability measures makes this result available to
the random-input analysis.

Lemma~\ref{lem:random_ranges} is formalized on finite probability spaces,
which suffice for all applications here. This includes history-dependent
interval lengths with a pathwise budget, bounded differences for
independent inputs, and the case in which every influence is $0$.
The formal proof constructs the conditional expectations from the exact
history classes and proves the exponential bound for the entire sum.

Lemmas~\ref{lem:random_requirements} and~\ref{lem:random_compensation}
are also formalized, including their applications over the complete
prefix. For accumulated requirements, the checked proof uses the direct
bound $\sqrt{Y_t}\ge Y_t/(m_0\sqrt{d_t})$, the expected squared
coefficients, the exact independent-edge representation, and its
coordinate-sensitivity budget. This simpler proof preserves the stated
constant $c_1$. For compensation, the actual score-gain differences are
identified with the incoming contributions at each later variable.
Their conditional exponential estimates are combined using a checked
bound on the total degree budget.

The formalization also combines these lemmas into the joint feasibility
bound $3e^{-c_0r}$, proves the identity between the summed comparison
and $A_T+Q_T$, and transfers the event from the auxiliary experiment to
the original pair draws and an independent uniform assignment.
The expected number of satisfying assignments and its high-probability
bound are checked on a fixed observed graph. The formalization also
bounds all recursive calls of the initialized search, including rejected
calls, by the whole-prefix candidate count. It proves that the rational
constraints actually constructed by the algorithm have exactly the score
interpretation used in the analysis, including preprocessing, tie choices,
and the strict-comparison offset.

The transfer to arbitrary original labels and the averaging over observed
supports are also checked. The formal proof keeps the original
labels throughout: inactive variables contribute $0$ to the influence
budget, and the candidate count fixes their initialized values. Deleting
empty pair draws preserves the score comparisons and the candidate count.
The conditional product law on the retained pairs is derived from the
original input law. This yields the bound in terms of the number $r$ of
active variables; the empty support is handled separately with one
terminal call. Finally, monotonicity in $r\le n$ and averaging give
\[
 \E[\operatorname{Nodes}(F)]
 \le (4+6n)\,2^n e^{-c_0n},
\]
and
\[
 \Prb\!\left(\operatorname{Nodes}(F)>(4+2n)\,2^n e^{-c_0n/2}\right)
 \le 3e^{-c_0n/2}.
\]
These are bounds on the actual initialized search, including rejected
calls. No concentration or candidate-count estimate is left as an
assumption. Both concrete clause distributions are instantiated with
$m_0=4$ and $\kappa=3/4$. Theorems~\ref{thm:random_main} and~\ref{thm:weighted}
therefore have checked correctness and explicit node bounds; polynomial
arithmetic overhead and streamed space remain conventional implementation
arguments, as for the literal-gap and count-guessing results.

For count guessing, Lemma~\ref{lem:slice} is checked as an exact gain
identity and a comparison inequality for either polarity, including its
diagonal correction and allowance. Lemmas~\ref{lem:count_correctness} and~\ref{lem:counts-exact}
are formalized against the actual clause scores. The checked procedures
use rational comparisons, retain the original variable labels through
preprocessing, and stream suffixes through a common best-assignment
accumulator. The proof checks the diagonal correction in both exact
counts and approximate projections, the dependence of preferences only
on earlier rounds, and the survival of every optimum in an approximating
run. The projection grids are constructed explicitly and their coverage
is checked.

The accumulation analysis for count and projection guessing is checked
through the final explicit recursion-node bound. This includes the actual
comparison averages, first absolute moments and clipping, creator concentration,
and sequential compensation under the Euclidean operator-norm hypothesis.
For fixed rounds, Lean checks the decomposition into later-round and
within-round contributions, the conditional joint laws of current-round bits,
and the norm bounds under restriction and sign conjugation.

Both concentration estimates in Lemma~\ref{lem:counts-concentration} have
complete Lean proofs. The first is derived from a two-point entropy inequality,
induction over the Boolean coordinates, and a finite-sum exponential-moment
argument. A smooth approximation to the absolute value then proves
Lemma~\ref{lem:counts-profile}. The second uses a random partition and the
linear sign-sum bound. No concentration inequality remains an assumed input
of the final count/projection accumulation bound.

Lean checks the norm reductions, parameter substitutions, and exceptional-event
bounds in both cases of Theorem~\ref{res:counts}. It connects them to the nodes
visited by the rational comparison procedures across all guesses and proves
that the resulting exponential base is strictly below two. The count
specialization includes preprocessing, its at most $n+1$ guesses, and the
prescribed two-round prefix. The projection specialization includes the
constructed grids, their cardinalities, and their total comparison allowances.
Thus Lemma~\ref{lem:count_accumulation}, Theorem~\ref{res:counts}, and
Corollary~\ref{cor:spectral} have checked correctness and node bounds;
the polynomial arithmetic-cost and space arguments remain the conventional
implementation arguments in this paper.

Corollary~\ref{cor:maxcut} has a checked mathematical proof and
an explicit recursion-node bound for the complete regular-graph procedure.
The formalization derives its matrix assumptions from the adjacency
matrix's eigenvalues, retaining their multiplicities. It checks the graph's
Max 2-SAT encoding and cut-score identity, the centered operator-norm bound,
row energies, the mixing inequality, rounded prefix sizes, the resulting
gap bound, and the comparison allowance. The bounded-degree branch
includes the greedy independent set, its size guarantee, optimal completion
of the remaining unit clauses, and the exact enumeration count.
The final theorem chooses integer thresholds once for the family and
bounds all three branches by a polynomial factor times $c^n$ for one
$c<2$, including arbitrary input unit clauses and small graphs.
Correctness holds even for graphs outside the stated spectral family.
As above, the polynomial bit-cost and space arguments are conventional
implementation arguments rather than formal machine-cost certificates.
The cited spectral properties of the named Paley and Ramanujan families
have not been separately formalized.

Established results used from the literature need not have their proofs
formalized in this artifact. We record their precise statements, references,
and conventions, and verify our applications to them. In Lean these results
appear as explicit theorem hypotheses, not as new axioms or as claims of
completed Lean proofs. Accordingly, deductions that use them are verified
conditional on the listed external results. The standard-axiom audit alone
does not describe these mathematical dependencies; the artifact records
them separately in \texttt{external-results.json}.

For finite input distributions, both cases of
Proposition~\ref{prop:pairmembership} have a checked membership proof,
conditional on the two explicitly stated published matrix estimates.
Lean proves the scalar concentration, empirical centering, gap-bound
summation with integer rounding, comparison allowance, and the final
probability of at least $1-1/n$. The original density and size assumptions
imply all intermediate numerical conditions; none of the needed row,
allowance, or probability bounds is an extra hypothesis of the final
formal membership theorems. The additional size requirement of
Theorem~\ref{res:counts} is checked as well.

The formal sampling model uses one independent coordinate per unordered
pair and constructs the actual symmetric matrix. The statements for the
original sampled coefficients use precisely $J-\beta$ and $|\beta|$ in the
empirical case, and $J-P$ with the supplied coefficient sum in the other
case. Instantiations cover homogeneous Bernoulli edges with empirical
centering and arbitrary independent Bernoulli edges with a supplied mean,
including the membership estimates needed for stochastic block models.

For the matrix step, Lean verifies our application of the Rademacher
expectation estimate of Bandeira and van Handel and Ledoux's convex
tail bound: symmetrization, independent sign insertion, the Euclidean
operator norm, row and column moments, and all parameter substitutions.
These two published results remain explicit hypotheses. The formalization
also connects an arbitrary original probability space to the finite product
law: bounded integer coefficients have finite ranges almost surely, and
the encoding preserves independence, event probabilities, expectations,
and the usual measure-theoretic variance. Both cases of
Proposition~\ref{prop:pairmembership}, as stated for integer coefficients,
therefore have checked proofs under the original size and density
assumptions. A statement for arbitrary continuously distributed real
coefficients is not claimed here. Membership with a supplied mean is
separate from discovery of that mean, which is verified below.

The hidden-community verification includes the exact search wrapper:
both exhaustive-search branches, the zero-coefficient branch, rational grid
spacing computed by integer square root, and the spacing cap. For any total
routine that returns rational directions and coefficients, the search returns
an optimum. Its correctness does not require an accurate approximation.
Lean also proves the Frobenius perturbation inequalities for energy and
cross-round profiles and the node bounds for both positive and zero
coefficient mass. The latter case uses one unadjusted comparison run.
The formalization identifies the matrix of coefficients of the adjusted
gains used by the search with $\operatorname{offdiag}(J-P)$ and derives the coefficient budget from the
retained Rayleigh quotients.

For the original finite product law, the cross-round row estimates, total
energy bounds, and symmetric noise estimate hold jointly with checked
probability at least $1-1/n$. The symmetric estimate uses the same two
published-result hypotheses as before. Combining this event with explicit
guarantees on the numerical approximation gives a checked probability bound
for the actual search-node count, including the zero-coefficient case.
No gap-bound or coefficient-budget estimate is assumed in this final
conditional statement.

The numerical construction is also checked. Lean defines a total
rational grid search for each plane rotation and proves exact orthogonality,
the energy decrease, termination within $M_0n^{10}$ steps, and the final
Frobenius residual $n^{-4}$. It also bounds the grid and iteration counts
polynomially in the explicitly listed formula length. The bit-operation
and space bounds use the conventional argument in Appendix~\ref{app:discovery}.
Rational sorting then selects the largest absolute diagonal entries,
with checked unit columns and exact Rayleigh coefficients. The truncation
and rank estimates derive all four approximation guarantees. The spectral
theorem used in those estimates is already formalized in mathlib.

The formalized algorithm therefore computes its representation directly
from the input formula. It returns an optimum on every well-formed input,
and its search-node bound holds with checked probability at least $1-1/n$
under explicit density, mean-rank and size conditions. No numerical routine
or approximation guarantee remains an assumed input to this result.
Lean also verifies the final parameter substitution and bounded-size
running-time argument. For fixed class parameters, one rational $g$, one
density constant, one multiplicative constant, and one base $c<2$ work
uniformly over every input size and every independent bounded-integer
input distribution satisfying the discovery assumptions. The original
probability space may be infinite.

For the stochastic block model, the original sampling law is a product of
Bernoulli distributions, one per unordered pair. Lean defines the sampled
graph and its Max-Cut formula, proves the coefficient identity, factors the
mean matrix through the community labels, and bounds its norm and diagonal.
The formal minimum and maximum block variances are exactly those in
Theorem~\ref{thm:sbm}. Substitution into the discovery bound proves the
actual algorithm's node bound with probability at least $1-1/n$.
The constants are fixed before the number of vertices, partition,
probabilities, or additional units are supplied. Units may be an arbitrary
function of the sampled graph. The algorithm receives the resulting
formula and the fixed class parameters, with no hidden partition or mean.

Thus Theorem~\ref{thm:sbm} has the same verification status as the other main
running-time results: exactness and the mathematical node bound are checked,
conditional on the two recorded published probability estimates; polynomial
bit-operation overhead and space use the conventional paper arguments.
For Lemma~\ref{cor:planted}, the probability-space translation is checked
as well. Lean constructs a measurable event, determined by the binary
coefficients, of probability at least $1-1/n$. On this event the node bound
holds for every formula with those coefficients. Consequently the units
may be chosen after the draw, including using additional randomness;
no independence assumption on them is needed. The exactness theorem
continues to apply to every well-formed input.

The artifact includes a correspondence record for every numbered result
and the proposed theorem,
with its precise verification status and Lean declarations where available.
Apart from Corollary~\ref{cor:random_regular}, every numbered proved result
has checked mathematical content, subject to the two explicitly recorded
published results where used. Lemmas~\ref{lem:expander_conditions}
and~\ref{lem:bounded_degree} restate steps of the checked proof of
Corollary~\ref{cor:maxcut}.
For running-time results, the verified bounds count actual search nodes;
the polynomial bit-operation overhead and space arguments remain
conventional proofs in the paper. No numbered proved result is recorded
as partially formalized.
The proposed extension in Appendix~\ref{app:biased_subset}
has neither a proof in this paper nor a Lean proof.
This record distinguishes proved formal statements from future targets;
it does not treat a proof of a conditional intermediate statement as a
verification of its unproved hypotheses. Supporting Lean lemmas implement
or justify statements and procedures retained in this paper.
The correspondence record and statement fingerprints track the intended
translation; they do not prove semantic equivalence between the prose
and Lean statements. Independent review of that correspondence remains
separate from checking the formal proofs.

\section{Proposed extension: a biased subset with bounded external influence}
\label{app:biased_subset}

This appendix records a possible extension of the literal-gap analysis.
The statement below is a research target: we do not provide a proof or
claim Lean verification. It is not used by any result in the main text.

The global occurrence bound in Theorem~\ref{thm:bias} can be restrictive
when a formula contains a dense part whose variables have little bias.
One possible extension is to collect pruning constraints only from a
linear-sized subset $S$ of biased variables. Clauses involving no variable
of $S$ cancel in these comparisons. This suggests bounding the influence
of a variable through its connections to $S$, rather than through all
its occurrences in the formula.

For a preprocessed formula $F_0$, let $B_v$ denote the total number of
unit and binary occurrences of variable $v$, and let
\[
 g_v=d_2^{F_0}(v)-d_2^{F_0}(\neg v)
       +2\bigl(d_1^{F_0}(v)-d_1^{F_0}(\neg v)\bigr).
\]
For distinct variables $v,w$, let $m_{vw}$ be the number of binary
clauses containing both variables, irrespective of their signs and
counting multiplicities. For $w\notin S$, write
$r_S(w)=\sum_{v\in S}m_{vw}$.

\begin{target}[A biased subset attached to an arbitrary remainder]
\label{target:biased_subset}
Fix constants $\varepsilon,\eta\in(0,1]$ and $C\ge1$.
Let $F$ be a simple unweighted Max 2-SAT formula on $n$ variables and
let $F_0$ be its preprocessed formula. Suppose a set $S$ and a number
$d\ge1$ satisfy
\[
 |S|\ge\eta n,\qquad
 B_v\le d\ \hbox{and}\ |g_v|\ge\varepsilon d\quad(v\in S),
 \qquad r_S(w)\le Cd\quad(w\notin S).
\]
Given $F$ and such a set and scale, an optimal assignment can be computed
deterministically in polynomial space and time
$\Ostar(2^{(1-\gamma)n})$, for some
$\gamma=\gamma(\varepsilon,\eta,C)>0$.
\end{target}

Here $S$ need not be a connected component, and its boundary need not
be small. Variables in $S$ may have most of their neighbors outside
$S$. The external condition limits how many of their clause occurrences
can be concentrated at a single outside variable. Clauses entirely
outside $S$ are unrestricted apart from the original simplicity
assumption. In particular, that remainder may be much denser than the
part incident to $S$. Clauses with both variables in $S$ are also allowed.

The intended algorithm would fix the preferred values on $S$, choose
the half of $S$ carrying more total absolute gap, and use the same
prefix branching and summed-constraint enumeration as in
Section~\ref{sec:bias}. The proposed analysis would count only the
occurrences that can affect these selected comparisons. We leave both
the proof and its formal verification to subsequent work.

The supplied set in the target separates the counting question from
the question of recognizing the structure. An input-only result should
also explain how to choose $S$ in polynomial time. A concrete restricted
version could take
\[
 S_d=\{v:B_v\le d,\ |g_v|\ge\varepsilon d\}
\]
and test its size and external occurrence bounds for candidate scales.
Such tests need consider only polynomially many distinct sets, since
membership changes only at the occurrence and gap thresholds. This
does not establish that the tests find every set allowed by the target.

One motivation is an existing system together with a substantial set
of new decisions whose pairwise soft requirements create a consistent
literal bias. To obtain an application, one would still need a faithful
encoding and a justification of the size, bias, and external occurrence
conditions. We leave that investigation, including any extension to
hard input requirements, to the later work on applications.

\end{document}